\def\llncs{0}											
\def\LNCSpreview{0}								
\def\pagelimit{}									 
\def\anonymous{0}								 
\def\acknowledgments{0}						 
\def\overflow{0}									 
\def\showlabels{0}								   
\def\authnotes{0}									
\def\dieordollar{0}									
\def\notxfont{0}
\def\stuffedtitlepage{0}						 
\def\abbrevref{0}									
\def\dropargs{0}									
\def\rmvtheoremspace{0}						 
\def\allowbreaks{0}								  
\def\choosebibstyle{1}							
\def\changefont{0}								  
\def\submission{0}
\def\cameraready{0}
\def\noaux{1}

\ifnum\submission=1
\def\llncs{1}
\def\anonymous{1}
\def\authnotes{1}
\def\choosebibstyle{0}
\def\noaux{0}
\else
\fi

\ifnum\cameraready=1
\def\submission{1}
\def\llncs{1}
\def\authnotes{0}
\def\choosebibstyle{0}
\def\acknowledgments{1}	
\else
\fi

\ifnum\anonymous=1
\def\authnotes{0}
\else
\fi

\ifnum\authnotes=0  
\newcommand{\fuyuki}[1]{}
\newcommand{\ryo}[1]{}
\newcommand{\ECrev}[1]{}
\else
\newcommand{\fuyuki}[1]{$\ll$\textsf{\color{blue} Fuyuki: { #1}}$\gg$}
\newcommand{\ryo}[1]{$\ll$\textsf{\color{red} Ryo: { #1}}$\gg$}
\newcommand{\ECrev}[1]{$\ll${\color{darkgreen} EC'22: { #1}}$\gg$}
\fi



\def\confvers{0}										   

\ifnum\llncs=1
\def\titletext{											
Watermarking PRFs \\ against Quantum Adversaries
}
\def\runningtitle{									
}
\else
\def\titletext{											
Watermarking PRFs against Quantum Adversaries
}
\def\runningtitle{									
}
\fi

\date{}										  


\def\choosepubinfo{5}						   

\def\pubinfoYEAR{2016}								
\def\pubinfoSUBMISSIONDATE{}		  
\def\pubinfoDOI{ }								 
\def\pubinfoBIBDATA{ }						 
\def\pubinfoCONFERENCE{CRYPTO}				

\ifnum\submission=1
\ifnum\noaux=1
\def\pubinfoindividual{}
\else
\def\pubinfoindividual{{\color{red}{\textbf{We attached the full version of this paper as a supplementary material}}}.}
\fi
\else
\def\pubinfoindividual{}
\fi

\makeatletter
\expandafter\newcommand{\createauthor}[5]{%
	\@namedef{#1name}{#2}%
	\@namedef{#1running}{#3}%
	\@namedef{#1institute}{#4}%
	\@namedef{#1thanks}{#5}%
}

\expandafter\newcommand{\createinstitute}[4]{%
	\@namedef{#1instname}{#2}%
	\@namedef{#1mail}{#3}%
	\@namedef{#1number}{#4}%
}
\makeatother

\newcounter{authorcount}
\newcommand{\newauthor}[4]{
	\stepcounter{authorcount}
	\createauthor{\theauthorcount}{#1}{#2}{#3}{#4}
}

\newcounter{institutecount}
\newcommand{\newinstitute}[3]{
	\stepcounter{institutecount}
	\createinstitute{\theinstitutecount}{#1}{#2}{#3}
}



\newauthor{Fuyuki~Kitagawa}{F.~Kitagawa}{1}{}
\newauthor{Ryo~Nishimaki}{R.~Nishimaki}{1}{}


\newinstitute{NTT Corporation, Tokyo, Japan}{\{fuyuki.kitagawa.yh,ryo.nishimaki.zk\}@hco.ntt.co.jp}{1}


\def\contactmail{}


\def

We initiate the study of software watermarking against quantum adversaries.
A quantum adversary generates a \emph{quantum state} as a pirate software that potentially removes an embedded message from a \emph{classical} marked software.
Extracting an embedded message from quantum pirate software is difficult since measurement could irreversibly alter the quantum state.
In software watermarking against classical adversaries, a message extraction algorithm crucially uses the (input-output) behavior of a classical pirate software to extract an embedded message. Even if we instantiate existing watermarking PRFs with quantum-safe building blocks, it is not clear whether they are secure against quantum adversaries due to the quantum-specific property above.
Thus, we need entirely new techniques to achieve software watermarking against quantum adversaries.

In this work, we define secure watermarking PRFs for quantum adversaries (unremovability against quantum adversaries). We also present two watermarking PRFs as follows.
\begin{itemize}
\item We construct a privately extractable watermarking PRF against quantum adversaries from the quantum hardness of the learning with errors (LWE) problem. The marking and extraction algorithms use a public parameter and a private extraction key, respectively. The watermarking PRF is unremovable even if adversaries have (the public parameter and) access to the extraction oracle, which returns a result of extraction for a queried quantum circuit.
\item We construct a publicly extractable watermarking PRF against quantum adversaries from indistinguishability obfuscation (IO) and the quantum hardness of the LWE problem. The marking and extraction algorithms use a public parameter and a public extraction key, respectively. The watermarking PRF is unremovable even if adversaries have the extraction key (and the public parameter).
\end{itemize}

We develop a quantum extraction technique to extract information (a classical string) from a quantum state without destroying the state too much.
We also introduce the notion of extraction-less watermarking PRFs as a crucial building block to achieve the results above by combining the tool with our quantum extraction technique.

{

We initiate the study of software watermarking against quantum adversaries.
A quantum adversary generates a \emph{quantum state} as a pirate software that potentially removes an embedded message from a \emph{classical} marked software.
Extracting an embedded message from quantum pirate software is difficult since measurement could irreversibly alter the quantum state.
In software watermarking against classical adversaries, a message extraction algorithm crucially uses the (input-output) behavior of a classical pirate software to extract an embedded message. Even if we instantiate existing watermarking PRFs with quantum-safe building blocks, it is not clear whether they are secure against quantum adversaries due to the quantum-specific property above.
Thus, we need entirely new techniques to achieve software watermarking against quantum adversaries.

In this work, we define secure watermarking PRFs for quantum adversaries (unremovability against quantum adversaries). We also present two watermarking PRFs as follows.
\begin{itemize}
\item We construct a privately extractable watermarking PRF against quantum adversaries from the quantum hardness of the learning with errors (LWE) problem. The marking and extraction algorithms use a public parameter and a private extraction key, respectively. The watermarking PRF is unremovable even if adversaries have (the public parameter and) access to the extraction oracle, which returns a result of extraction for a queried quantum circuit.
\item We construct a publicly extractable watermarking PRF against quantum adversaries from indistinguishability obfuscation (IO) and the quantum hardness of the LWE problem. The marking and extraction algorithms use a public parameter and a public extraction key, respectively. The watermarking PRF is unremovable even if adversaries have the extraction key (and the public parameter).
\end{itemize}

We develop a quantum extraction technique to extract information (a classical string) from a quantum state without destroying the state too much.
We also introduce the notion of extraction-less watermarking PRFs as a crucial building block to achieve the results above by combining the tool with our quantum extraction technique.

}

\def\keywords{
watermarking, pseudorandom function, post-quantum cryptography
}


\def\acknowledgmenttext{
}


\def\authorsofpdf{
\ifcsname 5name\endcsname
	\csname 1name\endcsname~et~al.
\else
	\ifcsname 1name\endcsname
		\csname 1name\endcsname
		\ifcsname 3name\endcsname
			,\ \csname 2name\endcsname
			\ifcsname 4name\endcsname
				,\ \csname 3name\endcsname\ and \csname 4name\endcsname
			\else%
				\ and\ \csname 3name\endcsname
			\fi
		\else
			\ifcsname 2name\endcsname
				\csname 2name\endcsname
			\fi
		\fi
	\fi
\fi
}

\let\accentvec\vec

\documentclass[a4paper,11pt]{article}
		\usepackage{fullpage}

\let\vec\accentvec

\usepackage[dvipsnames]{xcolor}
\definecolor{darkblue}{rgb}{0,0,0.6}
\definecolor{darkgreen}{rgb}{0,0.5,0}
\definecolor{darkviolet}{RGB}{130,95,141}

\usepackage
[pdfpagelabels=true,
linktocpage=false,
bookmarks=true,bookmarksnumbered=false,bookmarkstype=toc,
pagebackref,
colorlinks=true,linkcolor=darkblue,urlcolor=darkblue,citecolor=darkgreen]
{hyperref}


\usepackage{amscd,amsmath,amssymb,amsfonts}

\usepackage{amsthm}

\ifnum\abbrevref=0
	\usepackage[capitalise,noabbrev]{cleveref}
\else
	\usepackage[capitalise]{cleveref}
\fi
\usepackage[absolute]{textpos}
\usepackage[final]{microtype}
\usepackage[absolute]{textpos}
\usepackage{everypage}

\ifnum\dieordollar=2
	\usepackage{tikz}
\else
	\ifnum\dieordollar=1
		\usepackage{tikz}
	\fi
\fi

\ifnum\LNCSpreview=1
	\usepackage[paperwidth=152mm,paperheight=235mm,textwidth=122mm,textheight=193mm]{geometry}
	\AtBeginDocument{
		\begin{textblock}{5.25}[-0.59,0](0,12)
			\centering\Large
			\textcolor{red}{\textbf{LNCS Preview mode active.}}
		\end{textblock}
	}
	\def\choosepubinfo{0}
\fi

\ifnum\overflow=1
	\overfullrule=2mm
\fi

\ifnum\llncs=0
	\renewcommand*{\backref}[1]{(Cited on page~#1.)}
\fi

\ifnum\allowbreaks=1
	\allowdisplaybreaks
\fi

\ifnum\showlabels=1
	\usepackage{showkeys}
\fi

\ifnum\anonymous=1
	\def\authnotes{0}
\fi

\ifnum\changefont=1
	\usepackage{times}
\fi

\ifnum\LNCSpreview=0
	\ifx\pagelimit\empty
	\else
		\newcounter{pagewarning}
		\setcounter{pagewarning}{\pagelimit}
		\AddEverypageHook{
			\ifnum\thepage>\thepagewarning
				\begin{textblock}{1}[0.5,0](0,-13)
					\centering\Large
					\textcolor{red}{\textbf{This page is exceeding the page limit.}}
				\end{textblock}
			\fi
	}
	\fi
\fi

\newcommand{\authnote}[2]{\ifnum\authnotes=1 \begin{center}\fbox{\begin{minipage}{.98\textwidth}
\textbf{#1 says:} #2\end{minipage}}\end{center} \fi}

\newlength{\strutdepth}%
\settodepth{\strutdepth}{\strutbox}%

\newcommand{\notes}[3]{
\ifnum\authnotes=1
	\noindent{\bfseries
	\color{#1}{#3}\color{#1}}%
	\strut\vadjust{\kern-\strutdepth%
		\vtop to \strutdepth{%
			\baselineskip\strutdepth%
			\vss\llap{{\large\color{#1}\textbf{#2}\quad\color{black}}}\null%
		}%
	}%
\fi
}

\ifnum\LNCSpreview=0

\else

\fi

\mathchardef\hyphen="2D

\newtheoremstyle{thicktheorem}%
{\ifnum\rmvtheoremspace=1
	0.4
\fi
\topsep}
{\ifnum\rmvtheoremspace=1
	0.4
\fi
\topsep}
{\itshape}{}%
{\bfseries}%
{.}
{ }%
{\thmname{#1}\thmnumber{ #2}%
	\ifnum\dropargs=0
		\thmnote{ (#3)}%
	\fi
}

\newtheoremstyle{remark}
{\ifnum\rmvtheoremspace=1
	0.4
\fi
\topsep}
{\ifnum\rmvtheoremspace=1
	0.4
\fi
\topsep}
	{}
	{}
	{}
	{.}
	{ }
	{\textit{\thmname{#1}}\thmnumber{ #2}
		\ifnum\dropargs=0
			\thmnote{ (#3)}%
		\fi
	}

\ifnum\llncs=0
	\theoremstyle{thicktheorem}
	\newtheorem{theorem}{Theorem}[section]
	\newtheorem{lemma}[theorem]{Lemma}
	\newtheorem{corollary}[theorem]{Corollary}
	\newtheorem{proposition}[theorem]{Proposition}
	\newtheorem{definition}[theorem]{Definition}

	\theoremstyle{remark}
	\newtheorem{claim}[theorem]{Claim}
	\newtheorem{remark}[theorem]{Remark}

	\newtheorem{conjecture}[theorem]{Conjecture}
\fi

\theoremstyle{remark}
\newtheorem{assumption}[theorem]{Assumption}
\newtheorem{observation}[theorem]{Observation}
\newtheorem{fact}[theorem]{Fact}
\newtheorem{experiment}{Experiment}
\newtheorem{construction}[theorem]{Construction}
\newtheorem{counterexample}[theorem]{Counterexample}





\ifnum\abbrevref=0
	\crefname{assumption}{Assumption}{Assumptions}
	\crefname{construction}{Construction}{Constructions}
	\crefname{corollary}{Corollary}{Corollaries}
	\crefname{conjecture}{Conjecture}{Conjectures}
	\crefname{definition}{Definition}{Definitions}
	\crefname{exmaple}{Example}{Examples}
	\crefname{experiment}{Experiment}{Experiments}
	\crefname{counterexample}{Counterexample}{Counterexamples}
	\crefname{lemma}{Lemma}{Lemmata}
	\crefname{observation}{Observation}{Observations}
	\crefname{proposition}{Proposition}{Propositions}
	\crefname{remark}{Remark}{Remarks}
	\crefname{theorem}{Theorem}{Theorems}
\else
	\crefname{assumption}{Ass.}{Ass.}
	\crefname{construction}{Constr.}{Constr.}
	\crefname{corollary}{Cor.}{Cor.}
	\crefname{conjecture}{Conj.}{Conj.}
	\crefname{definition}{Def.}{Def.}
	\crefname{exmaple}{Ex.}{Ex.}
	\crefname{experiment}{Exp.}{Exp.}
	\crefname{counterexample}{Counterex.}{Counterex.}
	\crefname{lemma}{Lem.}{Lem.}
	\crefname{observation}{Obs.}{Obs.}
	\crefname{proposition}{Prop.}{Prop.}
	\crefname{remark}{Rem.}{Rem.}
	\crefname{theorem}{Thm.}{Thms.}
\fi

\crefname{claim}{Claim}{Claims}
\crefname{fact}{Fact}{Facts}
\crefname{note}{Note}{Notes}

\def\YYYSMcoin{\mbox{\begin{tikzpicture}[scale=0.0125]
\definecolor{coinbrown}{HTML}{D89E36}\definecolor{coindarkyellow}{HTML}{F8D81E}\definecolor{coinyellow}{HTML}{F8F800}\fill[coinyellow] (3,-1) rectangle (9,9);\fill(0,0) rectangle (1,8);\fill(1,8) rectangle (2,10);\fill(2,10) rectangle (4,11);\fill(4,11) rectangle (8,12);\fill(8,11) rectangle (10,10);\fill(10,10) rectangle (11,8);\fill(11,8) rectangle (12,0);\fill(10,-2) rectangle (11,0);\fill(8,-3) rectangle (10,-2);\fill(4,-4) rectangle (8,-3);\fill(2,-3) rectangle (4,-2);\fill(1,0) rectangle (2,-2);\fill (5,-1) rectangle (7,0);\fill (7,0) rectangle (8,8);\fill[coinbrown] (9,8) rectangle (10,10);\fill[coinbrown] (10,0) rectangle (11,8);\fill[coinbrown] (9,-2) rectangle (10,0);\fill[coinbrown] (8,-2) rectangle (9,-1);\fill[coinbrown] (4,-3) rectangle (8,-2);\fill[coindarkyellow] (2,-2) rectangle (3,8);\fill[coindarkyellow] (3,-2) rectangle (8,-1);\fill[coindarkyellow] (8,-1) rectangle (9,0);\fill[coindarkyellow] (9,0) rectangle (10,8);\fill[coindarkyellow] (8,8) rectangle (9,10);\fill[coindarkyellow] (4,9) rectangle (8,10);\fill[coindarkyellow] (3,8) rectangle (4,9);\fill[coindarkyellow] (5,0) rectangle (7,2);\fill[coindarkyellow] (6,2) rectangle (7,8);\fill[white] (4,0) rectangle (5,8);\fill[white] (5,8) rectangle (7,9);
\end{tikzpicture}}}

\def\YYYdie{\mbox{\begin{tikzpicture}[scale=0.85,x=1em,y=1em,radius=0.09]
\draw[rounded corners=1,line width=.25pt] (0,0) rectangle (1,1);\fill (0.275,0.275) circle;\fill (0.725,0.725) circle;\fill (0.5,0.5) circle;
\end{tikzpicture}}}

\newcommand{\getsr}{
	\ifnum\dieordollar=0
		\mathrel{\vbox{\offinterlineskip\ialign{
			\hfil##\hfil\cr
			\hspace{0.1em}$\scriptscriptstyle\$$\cr
			$\leftarrow$\cr
		}}}
	\fi
	\ifnum\dieordollar=1
		\mathrel{\vbox{\offinterlineskip\ialign{
			\hfil##\hfil\cr
			{\scalebox{0.5}{\hspace{0.4em}\YYYdie}}\cr
			\noalign{\kern0.05ex}
			$\leftarrow$\cr
		}}}
	\fi
	\ifnum\dieordollar=2
		\mathrel{\vbox{\offinterlineskip\ialign{
			\hfil##\hfil\cr
			\hspace{0.1em}$\YYYSMcoin$\cr
			$\leftarrow$\cr
		}}}
	\fi
}


\def\makeuppercase#1{
\expandafter\newcommand\csname sf#1\endcsname{\mathsf{#1}}
\expandafter\newcommand\csname frak#1\endcsname{\mathfrak{#1}}
\expandafter\newcommand\csname bb#1\endcsname{\mathbb{#1}}
\expandafter\newcommand\csname bf#1\endcsname{\textbf{#1}}
}

\def\makelowercase#1{
\expandafter\newcommand\csname frak#1\endcsname{\mathfrak{#1}}
\expandafter\newcommand\csname bf#1\endcsname{\textbf{#1}}
}

\newcounter{char}
\setcounter{char}{1}

\loop
	\edef\letter{\alph{char}}
	\edef\Letter{\Alph{char}}
	\expandafter\makelowercase\letter
	\expandafter\makeuppercase\Letter
	\stepcounter{char}
	\unless\ifnum\thechar>26
\repeat

\newcommand{\etal}{{et~al.\xspace}}


\usepackage{lmodern}
\usepackage[T1]{fontenc}
\usepackage[utf8]{inputenc}
\usepackage{etex}
\usepackage{graphicx,pstricks}
\usepackage{tikz}
\usetikzlibrary{matrix,arrows}
\usetikzlibrary{positioning}
\usetikzlibrary{decorations,decorations.text}
\usetikzlibrary{decorations.pathmorphing}
\usetikzlibrary{shapes}
\usepackage{centernot}
\usepackage{xspace}
\usepackage{pifont}
\usepackage{paralist}
\usepackage{booktabs}
\usepackage{dashbox}
\usepackage{multirow}
\usepackage{lscape}
\usepackage{amsmath}
\usepackage{fancybox}
\usepackage{braket}
\usepackage{physics}
\usepackage{autonum}
\usepackage{dsfont}
\DeclareMathAlphabet{\mathpzc}{OT1}{pzc}{m}{it}

\ifnum\submission=1
\renewcommand*{\backref}[1]{}
\def\notxfont{1}
\pagestyle{plain}
\else
\fi

\ifnum\llncs=1
\renewcommand{\subparagraph}{\paragraph}
\else
\ifnum\notxfont=1
\else
\usepackage{mathpazo}
\usepackage{newtxtext}
\usepackage{helvet}
\fi
\fi
\ifnum\showlabels=1
\usepackage{showkeys}
\else
\fi

\usepackage{enumitem}
\ifnum\submission=1
\setlist[itemize]{
  topsep=0.2\baselineskip,
  itemsep=0\baselineskip,
}
\setlist[description]{
  topsep=0.2\baselineskip,
  itemsep=0\baselineskip,
}
\setlist[enumerate]{
  topsep=0.2\baselineskip,
  itemsep=0\baselineskip,
}

\else
\setlist[itemize]{
  topsep=0.4\baselineskip,
  itemsep=0.1\baselineskip,
}
\setlist[description]{
  topsep=0.4\baselineskip,
  itemsep=0.1\baselineskip,
}
\setlist[enumerate]{
  topsep=0.4\baselineskip,
  itemsep=0.1\baselineskip,
}
\fi

\usepackage{fontawesome}

\makeatletter
\newcounter{game}
\def\newGames#1#2#3{%
  \xdef\gameNS{#1}\xdef\gamePrefix{#2}\setcounter{game}{#3}\addtocounter{game}{-1}%
  \immediate\write\@auxout{\string\expandafter\string\xdef\noexpand\csname game-prefix-#1\string\endcsname{#2}}%
}
\def\newGame#1{%
  \xdef\prevGame{\gamePrefix\arabic{game}}\stepcounter{game}\xdef\thisGame{\gamePrefix\arabic{game}}%
  \immediate\write\@auxout{\string\expandafter\string\xdef\noexpand\csname game-\gameNS-#1\string\endcsname{\arabic{game}}}%
}
\makeatother
\def\safecsname#1{\expandafter\ifx\csname#1\endcsname\relax\else\csname#1\endcsname\fi}

\renewcommand{\Game}[1][]{\mathcmd{\textrm{Game\if!#1!\else~\ensuremath{#1}\fi}}}



\usepackage[font=small,
format=plain,
labelformat=simple, 
singlelinecheck=false,
labelfont=bf,
up
]{caption}
\DeclareCaptionFormat{myformat}{#1#2#3\hrulefill}
\captionsetup{labelfont={color=black,bf},format=myformat}




\newcommand{\qA}{\qalgo{A}}

\DeclareFontFamily{U}{skulls}{}
\DeclareFontShape{U}{skulls}{m}{n}{ <-> skull }{}
\newcommand{\skull}{\text{\usefont{U}{skulls}{m}{n}\symbol{'101}}}

\newcommand{\pirateC}{\qstate{C}_{\skull}}
\newcommand{\qstateq}{\qstate{q}}


\usepackage{mathtools}

\newcommand{\chosen}{\leftarrow}

\renewcommand{\gets}{\leftarrow}
\newcommand{\lrun}{\leftarrow}
\newcommand{\out}{=}

\newcommand{\ra}{\rightarrow}

\newcommand{\seteq}{\coloneqq}

\newcommand{\tensor}{\otimes}
\newcommand{\concat}{\|}

\newcommand{\ceil}[1]{\left\lceil{#1}\right\rceil}

\newcommand{\setbracket}[1]{\{#1\}}

\newcommand{\setbk}[1]{\{#1\}}


\newcommand{\cA}{\mathcal{A}}
\newcommand{\cB}{\mathcal{B}}
\newcommand{\cC}{\mathcal{C}}
\newcommand{\cD}{\mathcal{D}}
\newcommand{\cE}{\mathcal{E}}
\newcommand{\cF}{\mathcal{F}}

\newcommand{\cH}{\mathcal{H}}
\newcommand{\cI}{\mathcal{I}}

\newcommand{\cM}{\mathcal{M}}
\newcommand{\cN}{\mathcal{N}}
\newcommand{\cO}{\mathcal{O}}
\newcommand{\cP}{\mathcal{P}}
\newcommand{\cQ}{\mathcal{Q}}
\newcommand{\cR}{\mathcal{R}}
\newcommand{\cS}{\mathcal{S}}

\newcommand{\cU}{\mathcal{U}}
\newcommand{\cV}{\mathcal{V}}

\newcommand{\tlC}{\widetilde{C}}
\newcommand{\tlD}{\widetilde{D}}

\newcommand{\tlp}{\widetilde{p}}

\newcommand{\oly}{\overline{y}}

\newcommand{\N}{\mathbb{N}}
\newcommand{\Z}{\mathbb{Z}}

\newcommand{\R}{\mathbb{R}}


\newcommand{\Zq}{\mathbb{Z}_q}



\newcommand{\Fs}{\mathcal{F}}

\newcommand{\Cs}{\mathcal{C}}

\newcommand{\Ps}{\mathcal{P}}

\newcommand{\randspace}{\mathcal{R}}

\newcommand{\coin}{\keys{coin}}

\newcommand{\M}{\cM}



\newcommand{\Params}{\algo{\Params}}

\newcommand{\Sampler}{\algo{Samp}}


\newcommand{\secp}{\lambda}

\newcommand{\aux}{\mathsf{aux}}

\newcommand{\ctlen}{{\ell_{\mathsf{ct}}}}
\newcommand{\ptlen}{{\ell_{\mathsf{pt}}}}
\newcommand{\ptxtlen}{{\ell_{\mathsf{p}}}}


\newcommand{\rmOWF}{\textrm{one-way function}\xspace}

\newcommand{\rmIO}{\textrm{IO}\xspace}




\newcommand{\LWE}{\textrm{LWE}}



\newcommand{\sfreal}[2]{\mathsf{Real}^{\mathsf{#1}\textrm{-}\mathsf{#2}}}

\newcommand{\sfrand}{\mathsf{Rand}}

\newcommand{\advt}[2]{\mathsf{Adv}_{#1}^{\mathsf{#2}}}
\newcommand{\adva}[2]{\mathsf{Adv}_{#1}^{\mathsf{#2}}}
\newcommand{\advb}[3]{\mathsf{Adv}_{#1}^{\mathsf{#2} \mbox{-} \mathsf{#3}}}
\newcommand{\advc}[4]{\mathsf{Adv}_{#1}^{\mathsf{#2} \mbox{-} \mathsf{#3} \mbox{-} \mathsf{#4}}}

\newcommand{\expt}[2]{\mathsf{Expt}_{#1}^{\mathsf{#2}}}

\newcommand{\expb}[3]{\mathsf{Exp}_{#1}^{ \mathsf{#2} \mbox{-} \mathsf{#3}}}
\newcommand{\expc}[4]{\mathsf{Exp}_{#1}^{ \mathsf{#2} \mbox{-} \mathsf{#3} \mbox{-} \mathsf{#4}}}

\newcommand{\hybi}[1]{\mathsf{Hyb}_{#1}}
\newcommand{\hybij}[2]{\mathsf{Hyb}_{#1}^{#2}}

\newcommand{\rhybij}[2]{\mathsf{rHyb}_{#1}^{#2}}

\newcommand*{\sk}{\keys{sk}}
\newcommand*{\pk}{\keys{pk}}

\newcommand*{\msk}{\keys{msk}}

\newcommand*{\dk}{\keys{dk}}
\newcommand*{\ck}{\keys{ck}}
\newcommand*{\ek}{\keys{ek}}

\newcommand*{\mk}{\keys{mk}}

\newcommand*{\vk}{\keys{vk}}

\newcommand*{\hk}{\keys{hk}}

\newcommand*{\xk}{\keys{xk}}

\newcommand*{\ct}{\keys{ct}}

\newcommand*{\stinfo}{\keys{st}}

\newcommand*{\peek}{\keys{pe.ek}}
\newcommand*{\pedk}{\keys{pe.dk}}
\newcommand*{\iop}{\tau}

\newcommand*{\msg}{\keys{m}}

\newcommand*{\keys}[1]{\mathsf{#1}}
\newcommand*{\qstate}[1]{\mathpzc{#1}}
\newcommand*{\qreg}[1]{{\color{gray}{\mathsf{#1}}}}

\newcommand*{\algo}[1]{\ensuremath{\mathsf{#1}}}
\newcommand*{\qalgo}[1]{\ensuremath{\mathpzc{#1}}}


\newcommand{\redline}[1]{\textcolor{red}{\underline{\textcolor{black}{#1}}}}



\newenvironment{boxfig}[2]{\begin{figure}[#1]\fbox{\begin{minipage}{0.97\linewidth}
                        \vspace{0.2em}
                        \makebox[0.025\linewidth]{}
                        \begin{minipage}{0.95\linewidth}
            {{
                        #2 }}
                        \end{minipage}
                        \vspace{0.2em}
                        \end{minipage}}}{\end{figure}}

\newcommand{\pprotocol}[4]{
\begin{boxfig}{t!}{\footnotesize 
\centering{\textbf{#1}}
    #4
} \caption{#2}
\label{#3}
\end{boxfig}
}

\newcommand{\protocol}[4]{
\pprotocol{#1}{#2}{#3}{#4} }

 \newcounter{expitem}[table]



\newcommand{\bit}{\{0,1\}}


\newcommand{\mat}[1]{\boldsymbol{#1}}

\newcommand{\mm}[1]{\boldsymbol{#1}}
\newcommand{\mv}[1]{\boldsymbol{#1}}




\newcommand{\Rand}{\algo{R}}

\newcommand{\prf}{\algo{F}}
\newcommand{\prfg}{\algo{G}}




\newcommand{\Setup}{\algo{Setup}}
\newcommand{\Gen}{\algo{Gen}}

\newcommand{\KG}{\algo{KG}}
\newcommand{\Enc}{\algo{Enc}}
\newcommand{\Dec}{\algo{Dec}}

\newcommand{\Sign}{\algo{Sign}}
\newcommand{\Vrfy}{\algo{Vrfy}}

\newcommand\PKE{\algo{PKE}}

\newcommand\SIG{\algo{SIG}}

\newcommand\PE{\algo{PE}}




\newcommand{\iO}{i\cO}

\newcommand{\io}{\mathsf{io}}

\newcommand{\PRG}{\algo{PRG}}
\newcommand{\PRF}{\algo{PRF}}

\newcommand{\prfgen}{\PRF.\Gen}
\newcommand{\Eval}{\algo{Eval}}
\newcommand{\Puncture}{\algo{Puncture}}

\newcommand{\CPRF}{\algo{CPRF}}
\newcommand{\Constrain}{\algo{Constrain}}
\newcommand{\constrain}{\algo{Constrain}}

\newcommand{\CEval}{\algo{CEval}}
\newcommand{\cprf}{\keys{cprf}}




\newcommand{\WM}{\mathsf{WM}}

\newcommand{\Mark}{\algo{Mark}}

\newcommand{\unmarked}{\mathsf{unmarked}}

\newcommand{\qExtract}{\qalgo{Extract}}
\newcommand{\API}{\qalgo{API}}

\newcommand{\shiftdis}[1]{\Delta_{\mathsf{Shift}}^{#1}}

\newcommand{\projimp}{\algo{ProjImp}}
\newcommand{\cproj}{\algo{CProj}}
\newcommand{\Live}{\mathsf{Live}}
\newcommand{\BadExt}{\mathsf{BadExt}}
\newcommand{\GoodExt}{\mathsf{GoodExt}}

\newcommand{\ketisu}{\ket{\mathds{1}_\randspace}}
\newcommand{\braisu}{\bra{\mathds{1}_\randspace}}
\newcommand{\kbisu}{\ketisu\braisu}
\newcommand{\IsU}{\algo{IsU_{\randspace}}}


\newcommand{\Commit}{\algo{Com}}

\newcommand{\com}{\keys{com}}





\newcommand{\cind}{\stackrel{\mathsf{c}}{\approx}}

\newcommand{\negl}{{\mathsf{negl}}}






\newcommand{\poly}{{\mathrm{poly}}}

\newcommand{\zo}[1]{\{0,1\}^{#1}}
\newcommand{\bin}{\{0,1\}}

\newcommand{\xor}{\oplus}









\newcommand{\calR}{\mathcal{R}}






\newcommand{\SUC}{{\tt SUC}}



\newcommand{\Sim}{\algo{Sim}}

\newcommand{\SKE}{\algo{SKE}}

\newcommand{\skekey}{\algo{ske}.\algo{k}}
\newcommand{\E}{\algo{E}}
\newcommand{\D}{\algo{D}}

\newcommand{\PuncPRF}{\algo{PPRF}}















\newcommand{\Domprf}{\mathsf{Dom}}
\newcommand{\Ranprf}{\mathsf{Ran}}




\newcommand{\qB}{\qalgo{B}}

\newcommand{\qD}{\qalgo{D}}
\newcommand{\inplen}{n}
\newcommand{\outlen}{m}
\newcommand{\msglen}{k}



\newcommand{\Drev}{D^{\mathtt{rev}}}
\newcommand{\cPrev}{\cP^{\mathtt{rev}}}
\newcommand{\ELWMPRF}{\mathsf{ELWMPRF}}
\newcommand{\QELWMPRF}{\mathsf{QELWMPRF}}


\newcommand{\Find}{\mathtt{Find}}

\newcommand{\Oracle}[1]{O_{\mathtt{#1}}}

\newcommand{\Ev}{\mathtt{Ev}}

\begin{document}

\newcount\authorcounter
\newcommand{\provideauthors}{%
		\ifnum\authorcounter<\theauthorcount
			\csname\the\authorcounter name\endcsname
			\expandafter\ifx\csname\the\authorcounter thanks\endcsname\empty
			\else
				\thanks{\csname\the\authorcounter thanks\endcsname}
			\fi%
			\inst{\csname\the\authorcounter institute\endcsname} 
			\and 
			\global\advance\authorcounter by 1 
			\provideauthors
		\else
			\csname\the\authorcounter name\endcsname 
			\expandafter\ifx\csname\the\authorcounter thanks\endcsname\empty 
			\else
				\thanks{\csname\the\authorcounter thanks\endcsname} 
			\fi%
			\inst{\csname\the\authorcounter institute\endcsname} 
		\fi
}

\def\atleastoneauthorplaced{0}
\newcommand{\providerunning}{%
	\ifnum\authorcounter<\theauthorcount%
		\expandafter\ifx\csname\the\authorcounter running\endcsname\empty
		\else
			\ifnum\authorcounter>1
				\ifnum\atleastoneauthorplaced=1
					\and%
				\fi
			\fi
			\csname\the\authorcounter running\endcsname
			\def\atleastoneauthorplaced{1}
		\fi
		\global\advance\authorcounter by 1
		\providerunning%
	\else%
		\expandafter\ifx\csname\the\authorcounter running\endcsname\empty
		\else
			\ifnum\authorcounter>1
				\ifnum\atleastoneauthorplaced=1
					\and%
				\fi
			\fi
			\csname\the\authorcounter running\endcsname
		\fi
	\fi
}

\newcount\institutecounter

\newcommand{\provideinstitutes}{%
	\ifnum\institutecounter<\theinstitutecount%
		\ifnum\llncs=0
			$^{\csname\the\institutecounter number\endcsname}$
		\fi
		\csname\the\institutecounter instname\endcsname
		
		\email{
			\ifx\contactmail\empty
				\csname\the\institutecounter mail\endcsname
			\else
				\href{mailto:\contactmail}{\csname\the\institutecounter mail\endcsname}
			\fi
		}
		
		\and%
			\global\advance\institutecounter by 1
		\provideinstitutes%
	\else%
		\ifnum\llncs=0
			\ifcsname 1name\endcsname
				$^{\csname\the\institutecounter number\endcsname}$
			\fi
		\fi
		\csname\the\institutecounter instname\endcsname
		
		\email{
			\ifx\contactmail\empty
				\csname\the\institutecounter mail\endcsname
			\else
				\href{mailto:\contactmail}{\csname\the\institutecounter mail\endcsname}
			\fi
		}
	\fi
}

\title{
	\ifnum\stuffedtitlepage=1
		\ifnum\llncs=1
			\vspace*{-7ex}
		\else
		\vspace*{-3ex}
		\fi
		\textbf{\titletext}
		\ifnum\llncs=1
			\vspace*{-2ex}
		\else
			\vspace*{-1ex}
		\fi
	\else
		\textbf{\titletext}
	\fi
}
\ifnum\anonymous=1
	\author{}
\else
	\ifnum\llncs=0
		\newcommand{\inst}[1]{{
			\ifcsname 1name\endcsname
				$^{#1}$
			\fi
			}}
	\fi
	\ifcsname 1name\endcsname
		\author{
			\global\authorcounter 1
			\provideauthors
		}
	\fi
\fi

\ifnum\llncs=1
	\titlerunning{\runningtitle}
	\ifnum\anonymous=1
		\institute{}
		\authorrunning{}
	\else
		\ifcsname 1instname\endcsname{
			\institute{
				\global\institutecounter 1
				\provideinstitutes
			}
		\fi
		\ifcsname 1name\endcsname{
			\authorrunning{
				\global \authorcounter 1
				\providerunning
			}
		\fi
	\fi
\fi
\maketitle
\ifnum\stuffedtitlepage=1
	\ifnum\llncs=0
		\vspace{-4ex}
	\fi
\fi

\ifnum\llncs=0
	\ifnum\anonymous=0
		\newcommand{\email}[1]{
			\texttt{
				\ifx\contactmail\empty
					#1
				\else
					\href{mailto:\contactmail}{#1}
				\fi
			}
		}
		\newcommand{\and}{}
		\ifnum\stuffedtitlepage=1
			\ifnum\llncs=0
				\vspace{-2ex}
			\fi
		\fi
		\begin{small}
			\begin{center}
				\global \institutecounter 1
				\provideinstitutes
			\end{center}
		\end{small}
	\fi
\fi

\ifnum\stuffedtitlepage=1
	\ifnum\llncs=1
		\vspace*{-4ex}
	\else
		\vspace*{-2ex}
	\fi
\fi

\begin{abstract}

	\vspace{1ex}
\ifnum\llncs=1
\else

	\textbf{Keywords\ifnum\llncs=1{.}\else{:}\fi}\keywords
\fi
\end{abstract}
\ifnum\stuffedtitlepage=1
	\ifnum\llncs=1
		\vspace*{-2ex}
	\fi
\fi

\ifnum\llncs=0
	\vspace{1ex}
\fi

\ifnum\choosepubinfo=1
\def\pubinfo{
	\noindent An extended abstract of this paper will appear at
	\ifx\pubinfoCONFERENCE\empty \textcolor{red}{conference missing}\else \pubinfoCONFERENCE\fi.
}
\fi

\ifnum\choosepubinfo=2
	\def\pubinfo{
		\noindent \copyright\ IACR 
		\ifx\pubinfoYEAR\empty \textcolor{red}{year missing}\else \pubinfoYEAR\fi.
		This article is the final version submitted by the author(s) to the IACR and to Springer-Verlag on
		\ifx\pubinfoSUBMISSIONDATE\empty \textcolor{red}{submission date missing}\else \pubinfoSUBMISSIONDATE\fi.
		The version published by Springer-Verlag is available at
		\ifx\pubinfoDOI\empty \textcolor{red}{DOI missing}\else \pubinfoDOI\fi.
	}
\fi

\ifnum\choosepubinfo=3
	\def\pubinfo{
		\noindent \copyright\ IACR
		\ifx\pubinfoYEAR\empty \textcolor{red}{year missing}\else \pubinfoYEAR\fi.
		This article is a minor revision of the version published by Springer-Verlag available at
		\ifx\pubinfoDOI\empty \textcolor{red}{DOI missing}\else \pubinfoDOI\fi.
	}
\fi

\ifnum\choosepubinfo=4
	\def\pubinfo{
		\noindent This article is based on an earlier article:
		\ifx\pubinfoBIBDATA\empty \textcolor{red}{bibliographic data missing}\else \pubinfoBIBDATA\fi,
		\copyright\ IACR
		\ifx\pubinfoYEAR\empty \textcolor{red}{year missing}\else \pubinfoYEAR\fi,
		\ifx\pubinfoDOI\empty \textcolor{red}{DOI missing}\else \pubinfoDOI\fi.
	}
\fi

\ifnum\choosepubinfo=5
		\def\pubinfo{
			\noindent \pubinfoindividual
		}
	\fi

\textblockorigin{0.5\paperwidth}{0.9\paperheight}
\setlength{\TPHorizModule}{\textwidth}

\newlength{\pubinfolength}
\ifnum\choosepubinfo=0
\else
	\settowidth{\pubinfolength}{\pubinfo}
	\begin{textblock}{1}[0.5,0](0,.25)
		 \ifnum\pubinfolength<\textwidth
			\centering
		\fi
		\pubinfo
	\end{textblock}
\fi
\thispagestyle{empty}
		\ifnum\confvers=1
\else
\ifnum\submission=1
\else
\newpage
  \setcounter{tocdepth}{2}      
  \setcounter{secnumdepth}{2}   
  \setcounter{page}{0}          
  \tableofcontents
  \thispagestyle{empty}
\clearpage
\fi
\fi

\section{Introduction}\label{sec:intro}

\subsection{Background}\label{sec:background}

Software watermarking is a cryptographic primitive that achieves a digital analog of watermarking.
A marking algorithm of software watermarking can embed an arbitrary message (bit string) into a computer software modeled as a circuit.
A marked software almost preserves the functionality of the original software.
An extraction algorithm of software watermarking can extract the embedded message from a marked software.
Secure software watermarking should guarantee that no adversary can remove the embedded message without significantly destroying the functionality of the original software (called unremovability).

Barak, Goldreich, Impagliazzo, Rudich, Sahai, Vadhan, and Yang~\cite{JACM:BGIRSVY12} initiate the study of software watermarking and present the first definition of cryptographically secure software watermarking. Hopper, Molnar, and Wagner~\cite{TCC:HopMolWag07} also study the definition of cryptographically secure watermarking for perceptual objects. However, both works do not present a secure concrete scheme.
A few works study secure constructions of watermarking for cryptographic primitives~\cite{PKC:NacShaSte99,IEICE:YosFuj11,EC:Nishimaki13,IEICE:Nishimaki19}, but they consider only restricted removal strategies.
Cohen, Holmgren, Nishimaki, Wichs, and Vaikuntanathan~\cite{SIAMCOMP:CHNVW18} present stronger definitions for software watermarking and the first secure watermarking schemes for cryptographic primitives \emph{against arbitrary removal strategies}. After the celebrated work, watermarking for cryptographic primitives have been extensively studied~\cite{PKC:BonLewWu17,myJC:KimWu21,TCC:QuaWicZir18,C:KimWu19,AC:YALXY19,C:GKMWW19,C:YAYX20,TCC:Nishimaki20}.

Primary applications of watermarking are identifying ownership of objects and tracing users that distribute illegal copies.
Watermarking for cryptographic primitives also has another exciting application.
 Aaronson, Liu, Liu, Zhandry, and Zhang~\cite{C:ALLZZ21} and Kitagawa, Nishimaki, and Yamakawa~\cite{myTCC:KitNisYam21} concurrently and independently find that we can construct secure software leasing schemes by combining watermarking with quantum cryptography.\footnote{Precisely speaking, Aaronson et al. achieve copy-detection schemes~\cite{C:ALLZZ21}, which are essentially the same as secure software leasing schemes.} Secure software leasing~\cite{EC:AnaLaP21} is a quantum cryptographic primitive that prevents users from generating authenticated pirated copies of leased software.\footnote{Leased software must be a quantum state since classical bit strings can be easily copied.}
Since watermarking has such an exciting application in quantum cryptography and quantum computers might be an imminent threat to cryptography due to rapid progress in research on quantum computing, it is natural and fascinating to study secure software watermarking in the quantum setting.

In quantum cryptography, building blocks must be quantum-safe such as lattice-based cryptography~\cite{JACM:Regev09}.
However, even if we replace building blocks of existing cryptographic primitives/protocols with quantum-safe ones, we do not necessarily obtain quantum-safe cryptographic primitives/protocols~\cite{AC:BDFLSZ11,FOCS:AmbRosUnr14}. We sometimes need new proof techniques which are different from classical ones due to quantum specific properties such as no-cloning and superposition access~\cite{SIAMCOMP:Watrous09,C:Zhandry12,FOCS:Zhandry12,EC:Unruh12,C:Zhandry19,myFOCS:CMSZ21}. Even worse, we must consider entirely different security models in some settings. Zhandry~\cite{TCC:Zhandry20} studies traitor tracing~\cite{C:ChoFiaNao94} in the quantum setting as such an example. In quantum traitor tracing, an adversary can output a \emph{quantum state} as a pirate decoder.
Zhandry shows that we need new techniques for achieving quantum traitor tracing because running a quantum pirate decoder to extract information may irreversibly alter the state due to measurement.

Zhandry~\cite{TCC:Zhandry20} refers to software watermarking as a cryptographic primitive that has a similar issue to quantum traitor tracing. However, his work focuses only on traitor tracing and does not study software watermarking against quantum adversaries.
If we use software watermarking in the quantum setting, an adversary can output a \emph{quantum state} as a pirate circuit where an embedded message might be removed.
However, previous works consider a setting where an adversary outputs a \emph{classical} pirate circuit. It is not clear whether watermarking schemes based on quantum-safe cryptography are secure against quantum adversaries because we need an entirely new extraction algorithm to extract an embedded message from a \emph{quantum} pirate circuit. Thus, the main question in this study is:

\begin{center}
\emph{Can we achieve secure watermarking for cryptographic primitives against quantum adversaries?}
\end{center}
We affirmatively answer this question in this work.

\subsection{Our Result}\label{sec:result}
Our main contributions are two-fold. One is the definitional work. We define watermarking for pseudorandom functions (PRFs) against quantum adversaries, where adversaries output a quantum state as a pirate circuit that distinguishes a PRF from a random function.\footnote{This definitional choice comes from the definition of traceable PRFs~\cite{myAC:GKWW21}. See~\cref{sec:technical_overview,sec:related_work} for the detail.}
The other one is constructing the first secure watermarking PRFs against quantum adversaries.
We present two watermarking PRFs as follows.
\begin{itemize}
\item We construct a privately extractable watermarking PRF against quantum adversaries from the quantum hardness of the learning with errors (LWE) problem. This watermarking PRF is secure in the presence of the extraction oracle and supports public marking. That is, the marking and extraction algorithms use a public parameter and secret extraction key, respectively. The watermarking PRF is unremovable even if adversaries have access to the extraction oracle, which returns a result of extraction for a queried quantum circuit.
\item We construct a publicly extractable watermarking PRF against quantum adversaries from indistinguishability obfuscation (IO) and the quantum hardness of the LWE problem. This watermarking PRF also supports public marking. That is, the marking and extraction algorithms use a public parameter and a public extraction key, respectively. The watermarking PRF is unremovable (we do not need to consider the mark and extraction oracles since it supports public marking and public extraction).
\end{itemize}
The former and latter PRFs satisfy weak pseudorandomness and standard (strong) pseudorandomness even against a watermarking authority, respectively.

We develop a quantum extraction algorithm to achieve the results above.
Zhandry~\cite{TCC:Zhandry20} presents a useful technique for extracting information from quantum states without destroying them too much.
However, we cannot simply apply his technique to the watermarking setting. Embedded information (arbitrary string) is chosen from an exponentially large set in the watermarking setting. On the other hand, in the traitor tracing setting, we embed a user index, which could be chosen from a polynomially large set, in a decryption key. Zhandry's technique is tailored to traitor tracing based on private linear broadcast encryption (PLBE)~\cite{EC:BonSahWat06} where user information is chosen from a polynomially large set with linear structure. Thus, we extend Zhandry's technique~\cite{TCC:Zhandry20} to extract information chosen from an exponentially large set. We also introduce the notion of extraction-less watermarking as a crucial tool to achieve watermarking against quantum adversaries. This tool is a suitable building block for our quantum extraction technique in our watermarking extraction algorithm. These are our technical contributions. See~\cref{sec:technical_overview} for the detail.

Although this paper focuses on watermarking PRFs against quantum adversaries, it is easy to extend our definitions to watermarking public-key encryption (PKE) against quantum adversaries.
In particular, our construction technique easily yields watermarking PKE (where a decryption circuit is marked) schemes. 
We will provide the detail of them in a future version.

We also focus on watermarking PRFs with public marking in this paper. However, we can easily convert our PRFs into ones with private marking.
See~\cref{rmrk:private_marking} for the detail.


\newcommand{\pp}{\mathsf{pp}}
\newcommand{\prfk}{\mathsf{prfk}}
\newcommand{\WMPRF}{\mathsf{WMPRF}}
\newcommand{\qext}{q_{\mathtt{e}}}
\newcommand{\qdis}{q_{\mathtt{d}}}
\renewcommand{\msglen}{{\ell_{\msg}}}
\newcommand{\cMrev}{\cM^{\mathtt{rev}}}
\newcommand{\Dreal}[1]{D_{\mathtt{real},#1}}

\subsection{Technical Overview}\label{sec:technical_overview}

\paragraph{Syntax of watermarking PRF.}
We first review the syntax of watermarking PRF used in this work.
A watermarking PRF scheme consists of five algorithms $(\Setup,\Gen,\Eval,\Mark,\qExtract)$.\footnote{In this paper, standard math font stands for classical algorithms, and calligraphic font stands for quantum algorithms.}
$\Setup$ outputs a public parameter $\pp$ and an extraction key $\xk$.
$\Gen$ is given $\pp$ and outputs a PRF key $\prfk$ and a public tag $\iop$.
$\Eval$ is the PRF evaluation algorithm that takes as an input $\prfk$ and $x$ in the domain and outputs $y$.
By using $\Mark$, we can generate a marked evaluation circuit that has embedded message $\msg\in \zo{\msglen}$ and can be used to evaluate $\Eval(\prfk,x')$ for almost all $x'$.
Finally, $\qExtract$ is the extraction algorithm supposed to extract the embedded message from a pirated quantum evaluation circuit generated from the marked evaluation circuit.
By default, in this work, we consider the public marking setting, where anyone can execute $\Mark$.
Thus, $\Mark$ takes $\pp$ as an input.
On the other hand, we consider both the private extraction and the public extraction settings.
Thus, the extraction key $\xk$ used by $\qExtract$ is kept secret by an authority in the private extraction setting and made public in the public extraction setting.

In this work, we allow $\qExtract$ to take the public tag $\iop$ generated with the original PRF key corresponding to the pirate circuit.
In reality, we execute $\qExtract$ for a software when a user claims that the software is illegally generated by using her/his PRF key. Thus, it is natural to expect we can use a user's public tag for extraction.
Moreover, pirate circuits are distinguishers, not predictors in this work.
As discussed by Goyal et al.~\cite{myAC:GKWW21}, security against pirate distinguishers is much preferable compared to security against pirate predictors considered in many previous works on watermarking.
In this case, it seems that such additional information fed to $\qExtract$ is unavoidable.
For a more detailed discussion on the syntax, see the discussion in \cref{sec:watermarking_syntax}.

It is also natural to focus on distinguishers breaking weak pseudorandomness of PRFs when we consider pirate distinguishers instead of pirate predictors.
Goyal et al.~\cite{myAC:GKWW21} already discussed this point.
Thus, we focus on watermarking weak PRF in this work.

\paragraph{Definition of unremovability against quantum adversaries.}
We say that a watermarking PRF scheme satisfies unremovability if given a marked evaluation circuit $\tlC$ that has an embedded message $\msg$, any adversary cannot generate a circuit such that it is a ``good enough circuit'', but the extraction algorithm fails to output $\msg$.
In this work, we basically follow the notion of ``good enough circuit'' defined by Goyal et al.~\cite{myAC:GKWW21} as stated above.
Let $D$ be the following distribution for a PRF $\Eval(\prfk,\cdot):\Domprf\ra\Ranprf$.
\begin{description}
\item[$D$:]Generate $b\gets\bit$, $x\gets\Domprf$, and $y_0\gets\Ranprf$. Compute $y_1\gets\Eval(\prfk,x)$. Output $(b,x,y_b)$.
\end{description}
A circuit is defined as good enough circuit with respect to $\Eval(\prfk,\cdot)$ if given $(x,y_b)$ output by $D$, it can correctly guess $b$ with probability significantly greater than $1/2$.
In other words, a circuit is defined as good enough if the circuit breaks weak PRF security.

Below, for a distribution $D'$ whose output is of the form $(b,x,y)$, let $\cM_{D'}=(\mat{M}_{D',0},\mat{M}_{D',1})$ be binary positive operator valued measures (POVMs) that represents generating random $(b,x,y)$ from $D'$ and testing if a quantum circuit can guess $b$ from $(x,y)$.
Then, for a quantum state $\ket{\psi}$, the overall distinguishing advantage of it for the above distribution $D$ is $\bra{\psi}\mat{M}_{D,0}\ket{\psi}$.
Thus, a natural adaptation of the above notion of goodness for quantum circuits might be to define a quantum state $\ket{\psi}$ as good if $\bra{\psi}\mat{M}_{D,0}\ket{\psi}$ is significantly greater than $1/2$.
However, this notion of goodness for quantum circuits is not really meaningful.
The biggest issue is that it does not consider the stateful nature of quantum programs.

This issue was previously addressed by Zhandry~\cite{TCC:Zhandry20} in the context of traitor tracing against quantum adversaries.
In the context of classical traitor tracing or watermarking, we can assume that a pirate circuit is stateless, or can be rewound to its original state.
This assumption is reasonable.
If we have the software description of the pirate circuit, such a rewinding is trivial.
Even if we have a hardware box in which a pirate circuit is built, it seems that such a rewinding is possible by hard reboot or cutting power.
On the other hand, in the context of quantum watermarking, we have to consider that a pirate circuit is inherently stateful since it is described as a quantum state.
Operations to a quantum state can alter the state, and in general, it is impossible to rewind the state into its original state.
Regarding the definition of good quantum circuits above, if we can somehow compute the average success probability $\bra{\psi}\mat{M}_{D,0}\ket{\psi}$ of the quantum state $\ket{\psi}$, the process can change or destroy the quantum state $\ket{\psi}$. Namely, even if we once confirm that the quantum state $\ket{\psi}$ is good by computing $\bra{\psi}\mat{M}_{D,0}\ket{\psi}$, we cannot know the success probability of the quantum state even right after the computation.
Clearly, the above notion of goodness is not the right notion, and we need one that captures the stateful nature of quantum programs.

In the work on traitor tracing against quantum adversaries, Zhandry~\cite{TCC:Zhandry20} proposed a notion of goodness for quantum programs that solves the above issue.
We adopt it. 
For the above POVMs $\cM_D$, let $\cM_D'$ be the projective measurement $\{P_p\}_{p\in[0,1]}$ that projects a state onto the eigenspaces of $\mat{M}_{D,0}$, where each $p$ is an eigenvalue of $\mat{M}_{D,0}$.
$\cM_D'$ is called projective implementation of $\cM_D$ and denoted as $\projimp(\cM_D)$.
Zhandry showed that the following process has the same output distribution as $\cM_{D}$:
\begin{enumerate}
\item Apply the projective measurement $\cM_D'=\projimp(\cM_D)$ and obtain $p$.
\item Output $0$ with probability $p$ and output $1$ with probability $1-p$.
\end{enumerate}
Intuitively, $\cM_D'$ project a state to an eigenvector of $\mat{M}_{D,0}$ with eigenvalue $p$, which can be seen as a quantum state with success probability $p$.
Using $\cM_D'$, Zhandry defined that a quantum circuit is $\Live$ if the outcome of the measurement $\cM_D'$ is significantly greater than $1/2$.
The notion of $\Live$ is a natural extension of the classical goodness since it collapses to the classical goodness for a classical decoder.
Moreover, we can ensure that a quantum state that is tested as $\Live$ still has a high success probability. On the other hand, the above notion of goodness cannot say anything about the post-tested quantum state's success probability even if the test is passed.
In this work, we use the notion of $\Live$ quantum circuits as the notion of good quantum circuits.

\paragraph{Difficulty of quantum watermarking PRF.}
From the above discussion, our goal is to construct a watermarking PRF scheme that guarantees that we can extract the embedded message correctly if a pirated quantum circuit is $\Live$.
In watermarking PRF schemes, we usually extract an embedded message by applying several tests on success probability to a pirate circuit.
When a pirate circuit is a quantum state, the set of tests that we can apply is highly limited compared to a classical circuit due to the stateful nature of quantum states.

One set of tests we can apply without destroying the quantum state is $\projimp(\cM_{D'})$ for distributions $D'$ that are indistinguishable from $D$ from the view of the pirate circuit.\footnote{In the actual extraction process, we use an approximation of projective implementation introduced by Zhandry~\cite{TCC:Zhandry20} since applying a projective implementation is inefficient. In this overview, we ignore this issue for simplicity.
}
We denote this set as $\{\projimp(\cM_{D'}) \mid D'\cind D\}$.
Zhandry showed that if distributions $D_1$ and $D_2$ are indistinguishable, the outcome of $\projimp(\cM_{D_1})$ is close to that of $\projimp(\cM_{D_2})$.
By combining this property with the projective property of projective implementations, as long as the initial quantum state is $\Live$ and we apply only tests contained in $\{\projimp(\cM_{D'})\mid D'\cind D\}$, the quantum state remains $\Live$.
On the other hand, if we apply a test outside of $\{\projimp(\cM_{D'}) \mid D'\cind D\}$, the quantum state might be irreversibly altered.
This fact is a problem since the set $\{\projimp(\cM_{D'}) \mid D'\cind D\}$ only is not sufficient to implement the existing widely used construction method for watermarking PRF schemes.

To see this, we briefly review the method.
In watermarking PRF schemes, the number of possible embedded messages is super-polynomial, and thus we basically need to extract an embedded message in a bit-by-bit manner.
In the method, such a bit-by-bit extraction is done as follows.
For every $i\in[\msglen]$, we define two distributions $S_{i,0}$ and $S_{i,1}$ whose output is of the form $(b,x,y)$ as $D$ above.
Then, we design a marked circuit with embedded message $\msg\in \zo{\msglen}$ so that it can be used to guess $b$ from $(x,y)$ with probability significantly greater than $1/2$ only for $S_{i,0}$ (resp. $S_{i,1}$) if $\msg[i]=0$ (resp. $\msg[i]=1$).
The extraction algorithm can extract $i$-th bit of the message $\msg[i]$ by checking for which distributions of $S_{i,0}$ and $S_{i,1}$ a pirate circuit has a high distinguishing advantage.

As stated above, we cannot use this standard method to extract a message from quantum pirate circuits.
The reason is that $S_{i,0}$ and $S_{i,1}$ are typically distinguishable.
This implies that at least either one of $\projimp(\cM_{S_{i,0}})$ or $\projimp(\cM_{S_{i,1}})$ is not contained in $\{\projimp(\cM_{D'}) \mid D'\cind D\}$.
Since the test outside of $\{\projimp(\cM_{D'})\mid D'\cind D\}$ might destroy the quantum state, we might not be able to perform the process for all $i$, and fail to extract the entire bits of the embedded message.

It seems that to perform the bit-by-bit extraction for a quantum state, we need to extend the set of applicable tests and come up with a new extraction method.

\paragraph{Our solution: Use of reverse projective property.}
We find that as another applicable set of tests, we have  $\projimp(\cM_{D'})$ for distributions $D'$ that are indistinguishable from $\Drev$, where $\Drev$ is the following distribution.
\begin{description}
\item[$\Drev$:]Generate $b\gets\bit$, $x\gets\Domprf$, and $y_0\gets\Ranprf$. Compute $y_1\gets\Eval(\prfk,x)$. Output $(1\oplus b,x,y_b)$.
\end{description}
We denote the set as $\{\projimp(\cM_{D'})\mid D'\cind \Drev\}$.
$\Drev$ is the distribution the first bit of whose output is flipped from that of $D$.
Then, $\cM_{\Drev}$ can be seen as POVMs that represents generating random $(b,x,y_b)$ from $D$ and testing if a quantum circuit \emph{cannot} guess $b$ from $(x,y_b)$.
Thus, we see that $\cM_{\Drev}=(\mat{M}_{D,1},\mat{M}_{D,0})$.
Recall that $\cM_D=(\mat{M}_{D,0},\mat{M}_{D,1})$.

Let $D_1\in\{\projimp(\cM_{D'})\mid D'\cind D\}$ and $\Drev_1$ be the distribution that generates $(b,x,y)\gets D_1$ and outputs $(1\oplus b,x,y)$.
$\Drev_1$ is a distribution contained in $\{\projimp(\cM_{D'})\mid D'\cind \Drev\}$.
Similarly to the relation between $D$ and $\Drev$, if $\cM_{D_1}=(\mat{M}_{D_1,0},\mat{M}_{D_1,1})$, we have $\cM_{\Drev_1}=(\mat{M}_{\Drev_1,1},\mat{M}_{\Drev_1,0})$.
Since $\mat{M}_{D_1,0}+\mat{M}_{D_1,1}=\mat{I}$, $\mat{M}_{D_1,0}$ and $\mat{M}_{D_1,1}$ share the same set of eigenvectors, and if a vector is an eigenvector of $\mat{M}_{D_1,0}$ with eigenvalue $p$, then it is also an eigenvector of $\mat{M}_{D_1,1}$ with eigenvalue $1-p$.
Thus, if apply $\projimp(\cM_{D_1})$ and $\projimp(\cM_{\Drev_1})$ successively to a quantum state and obtain the outcomes $\tlp_1$ and $\tlp_1'$, it holds that $\tlp_1'=1-\tlp_1$.
We call this property the reverse projective property of the projective implementation.

Combining projective and reverse projective properties and the outcome closeness for indistinguishable distributions of the projective implementation, we see that the following key fact holds.

\begin{description}
\item[Key fact:] As long as the initial quantum state is $\Live$ and we apply tests contained in $\{\projimp(\cM_{D'})\mid D'\cind D\}$ or $\{\projimp(\cM_{D'})|D'\cind \Drev\}$, the quantum state remains $\Live$.
Moreover, if the outcome of applying $\projimp(\cM_D)$ to the initial state is $p$, we get the outcome close to $p$ every time we apply a test in $\{\projimp(\cM_{D'})\mid D'\cind D\}$, and we get the outcome close to $1-p$ every time we apply a test in $\{\projimp(\cM_{D'})\mid D'\cind \Drev\}$.
\end{description}

In this work, we perform bit-by-bit extraction of embedded messages by using the above key fact of the projective implementation.
To this end, we introduce the new notion of extraction-less watermarking PRF as an intermediate primitive.

\paragraph{Via extraction-less watermarking PRF.}
An extraction-less watermarking PRF scheme has almost the same syntax as a watermarking PRF scheme, except that it does not have an extraction algorithm $\qExtract$ and instead has a simulation algorithm $\Sim$.
$\Sim$ is given the extraction key $\xk$, the public tag $\iop$, and an index $i\in[\msglen]$, and outputs a tuple of the form $(\gamma,x,y)$.
$\Sim$ simulates outputs of $D$ or $\Drev$ for a pirate circuit depending on the message embedded to the marked circuit corresponding to the pirate circuit.
More concretely, we require that from the view of the pirate circuit generated from a marked circuit with embedded message $\msg\in\zo{\msglen}$, outputs of $\Sim$ are indistinguishable from those of $D$ if $\msg[i]=0$ and are indistinguishable from those of $\Drev$ if $\msg[i]=1$ for every $i\in[\msglen]$.
We call this security notion simulatability for mark-dependent distributions (SIM-MDD security).

By using an extraction-less watermarking PRF scheme $\ELWMPRF$, we construct a watermarking PRF scheme $\WMPRF$ against quantum adversaries as follows.
We use $\Setup,\Gen,\Eval,\Mark$ of $\ELWMPRF$ as $\Setup,\Gen,\Eval,\Mark$ of $\WMPRF$, respectively.
We explain how to construct the extraction algorithm $\qExtract$ of $\WMPRF$ using $\Sim$ of $\ELWMPRF$.
For every $i\in[\msglen]$, we define $D_{\iop,i}$ as the distribution that outputs randomly generated $(\gamma,x,y)\gets\Sim(\xk,\iop,i)$.
Given $\xk$, $\iop$, and a quantum state $\ket{\psi}$, $\qExtract$ extracts the embedded message in the bit-by-bit manner by repeating the following process for every $i\in[\msglen]$.
\begin{itemize}
\item Apply $\projimp(\cM_{D_{\iop,i}})$ to $\ket{\psi_{i-1}}$ and obtain the outcome $\tlp_i$, where $\ket{\psi_0}=\ket{\psi}$ and $\ket{\psi_{i-1}}$ is the state after the $(i-1)$-th loop for every $i\in[\msglen]$.
\item Set $\msg'_i=0$ if $\tlp_i>1/2$ and otherwise $\msg'_i=1$.
\end{itemize}
The extracted message is set to $\msg'_1\|\cdots\|\msg'_\msglen$.

We show that the above construction satisfies unremovability.
Suppose an adversary is given marked circuit $\tlC\gets\Mark(\pp,\prfk,\msg)$ and generates a quantum state $\ket{\psi}$, where $(\pp,\xk)\gets\Setup(1^\secp)$ and $(\prfk,\iop)\gets\Gen(\pp)$.
Suppose also that $\ket{\psi}$ is $\Live$.
This assumption means that the outcome $p$ of applying $\projimp(\cM_D)$ to $\ket{\psi}$ is $1/2+\epsilon$, where $\epsilon$ is an inverse polynomial.
For every $i\in[\msglen]$, from the SIM-MDD security of $\ELWMPRF$, $D_{\iop,i}$ is indistinguishable from $D$ if $\msg[i]=0$ and is indistinguishable from $\Drev$ if $\msg[i]=1$.
This means that $D_{\iop,i}\in\{\projimp(\cM_{D'})\mid D'\cind D\}$ if $\msg[i]=0$ and $D_{\iop,i}\in\{\projimp(\cM_{D'})\mid D'\cind \Drev\}$ if $\msg[i]=1$.
Then, from the above key fact of the projective implementation, it holds that $\tlp_i$ is close to $1/2+\epsilon>1/2$ if $\msg[i]=0$ and is close to $1/2-\epsilon<1/2$ if $\msg[i]=1$.
Therefore, we see that $\qExtract$ correctly extract $\msg$ from $\ket{\psi}$.
This means that $\WMPRF$ satisfies unremovability.

The above definition, construction, and security analysis are simplified and ignore many subtleties.
The most significant point is that we use approximated projective implementations introduced by Zhandry~\cite{TCC:Zhandry20} instead of projective implementations in the actual construction since applying a projective implementation is an inefficient process.
Moreover, though the outcomes of (approximate) projective implementations for indistinguishable distributions are close, in the actual analysis, we have to take into account that the outcomes gradually change every time we apply an (approximate) projective implementation.
These issues can be solved by doing careful parameter settings.

\paragraph{Comparison with the work by Zhandry~\cite{TCC:Zhandry20}.}
Some readers familiar with Zhandry's work~\cite{TCC:Zhandry20} might think that our technique contradicts the lesson from Zhandry's work since it essentially says that once we find a large gap in success probabilities, the tested quantum pirate circuit might self-destruct. However, this is not the case. What Zhandry's work really showed is the following. Once a quantum pirate circuit itself detects that there is a large gap in success probabilities, it might self-destruct. Even if an extractor finds a large gap in success probabilities, if the tested quantum pirate circuit itself cannot detect the large gap, the pirate circuit cannot self-destruct. In Zhandry's work, whenever an extractor finds a large gap, the tested pirate circuit also detects the large gap. In our work, the tested pirate circuit cannot detect a large gap throughout the extraction process while an extractor can find it.

The reason why a pirate circuit cannot detect a large gap in our scheme even if an extractor can find it is as follows. Recall that in the above extraction process of our scheme based on an extraction-less watermarking PRF scheme, we apply $\projimp(\cM_{D_{\iop,i}})$ to the tested pirate circuit for every $i\in[\msglen]$.
Each $D_{\iop,i}$ outputs a tuple of the form $(b,x,y)$ and is indistinguishable from $D$ or $\Drev$ depending on the embedded message.
In the process, we apply $\projimp(\cM_{D_{\iop,i}})$ for every $i\in[\msglen]$, and we get the success probability $p$ if $D_{\iop,i}$ is indistinguishable from $D$ and we get $1-p$ if $D_{\iop,i}$ is indistinguishable from $\Drev$.
The tested pirate circuit needs to know which of $D$ or $\Drev$ is indistinguishable from the distribution $D_{\iop,i}$ behind the projective implementation to know which of $p$ or $1-p$ is the result of an application of a projective implementation.
However, this is impossible.
The tested pirate circuit receives only $(x,y)$ part of $D_{\iop,i}$'s output and not $b$ part. (Recall that the task of the pirate circuit is to guess $b$ from $(x,y)$.)
The only difference between $D$ and $\Drev$ is that the first-bit $b$ is flipped.
Thus, if the $b$ part is dropped, $D_{\iop,i}$ is, in fact, indistinguishable from both $D$ and $\Drev$.
 As a result, the pirate program cannot know which of $p$ or $1-p$ is the result of an application of a projective implementation. In other words, the pirate circuit cannot detect a large gap in our extraction process.

\paragraph{Instantiating extraction-less watermarking PRF.}
In the rest of this overview, we will explain how to realize extraction-less watermarking PRF.

We consider the following two settings similar to the ordinary watermarking PRF. Recall that we consider the public marking setting by default.
\begin{description}
\item[Private-simulatable:] In this setting, the extraction key $\xk$ fed into $\Sim$ is kept secret. We require that SIM-MDD security hold under the existence of the simulation oracle that is given a public tag $\iop'$ and an index $i'\in[\msglen]$ and returns $\Sim(\xk,\iop',i')$. An extraction-less watermarking PRF scheme in this setting yields a watermarking PRF scheme against quantum adversaries in private-extractable setting where unremovability holds for adversaries who can access the extraction oracle.
\item[Public-simulatable:] In this setting, the extraction key $\xk$ is publicly available.
An extraction-less watermarking PRF scheme in this setting yields a watermarking PRF scheme against quantum adversaries in the public-extractable setting.
\end{description}
We provide a construction in the first setting using private constrained PRF based on the hardness of the LWE assumption.
Also, we provide a construction in the second setting based on IO and the hardness of the LWE assumption.

To give a high-level idea behind the above constructions, in this overview, we show how to construct a public-simulatable extraction-less watermarking PRF in the token-based setting~\cite{SIAMCOMP:CHNVW18}.
In the token-based setting, we treat a marked circuit $\tlC\gets\Mark(\pp,\prfk,\msg)$ as a tamper-proof hardware token that an adversary can only access in a black-box way.

Before showing the actual construction, we explain the high-level idea.
Recall that SIM-MDD security requires that an adversary $\qA$ who is given $\tlC\gets\Mark(\pp,\prfk,\msg)$ cannot distinguish $(\gamma^\ast,x^\ast,y^\ast)\gets\Sim(\xk,\iop,i^\ast)$ from an output of $D$ if $\msg[i^\ast]=0$ and from that of $\Drev$ if $\msg[i^\ast]=1$.
This is the same as requiring that $\qA$ cannot distinguish $(\gamma^\ast,x^\ast,y^\ast)\gets\Sim(\xk,\iop,i^\ast)$ from that of the following distribution $\Dreal{i^\ast}$.
We can check that $\Dreal{i^\ast}$ is identical with $D$ if $\msg[i^\ast]=0$ and with $\Drev$ if $\msg[i^\ast]=1$.
\begin{description}
\item[$\Dreal{i^\ast}$:]Generate $\gamma\gets\bit$ and $x\gets\Domprf$. Then, if $\gamma=\msg[i^\ast]$, generate $y\gets\Ranprf$, and otherwise, compute $y\gets\Eval(\prfk,x)$. Output $(\gamma,x,y)$.
\end{description}
Essentially, the only attack that $\qA$ can perform is to feed $x^\ast$ contained in the given tuple $(\gamma^\ast,x^\ast,y^\ast)$ to $\tlC$ and compares the result $\tlC(x^\ast)$ with $y^\ast$, if we ensure that $\gamma^\ast$, $x^\ast$ are pseudorandom.
In order to make the construction immune to this attack, letting $\tlC\gets\Mark(\pp,\prfk,\msg)$ and $(\gamma^\ast,x^\ast,y^\ast)\gets\Sim(\xk,\iop,i^\ast)$, we have to design $\Sim$ and $\tlC$ so that
\begin{itemize}
\item If $\gamma=\msg[i^\ast]$, $\tlC(x^\ast)$ outputs a value different from $y^\ast$.
\item If $\gamma\ne\msg[i^\ast]$, $\tlC(x^\ast)$ outputs $y^\ast$.
\end{itemize}
We achieve these conditions as follows.
First, we set $(\gamma^\ast,x^\ast,y^\ast)$ output by $\Sim(\xk,\iop,i^\ast)$ so that $\gamma^\ast$ and $y^\ast$ is random values and $x^\ast$ is an encryption of $y^\ast\|i^\ast\|\gamma^\ast$ by a public-key encryption scheme with pseudorandom ciphertext property, where the encryption key $\pk$ is included in $\iop$.
Then, we set $\tlC$ as a token such that it has the message $\msg$ and the decryption key $\sk$ corresponding to $\pk$ hardwired, and it outputs $y^\ast$ if the input is decryptable and $\gamma^\ast\ne\msg[i^\ast]$ holds for the decrypted $y^\ast\|i^\ast\|\gamma^\ast$, and otherwise behaves as $\Eval(\prfk,\cdot)$.
The actual construction is as follows.

Let $\PRF$ be a PRF family consisting of functions $\{\prf_\prfk(\cdot):\bit^n\ra\bit^\secp|\prfk\}$, where $\secp$ is the security parameter and $n$ is sufficiently large.
Let $\PKE=(\KG,\E,\D)$ be a CCA secure public-key encryption scheme satisfying pseudorandom ciphertext property.
Using these ingredients, We construct an extraction-less watermarking PRF scheme $\ELWMPRF=(\Setup,\Gen,\Eval,\Mark,\Sim)$ as follows.
\begin{description}
\item[$\Setup(1^\secp)$:] In this construction, $\pp:=\bot$ and $\xk:=\bot$.
\item[$\Gen(\pp)$:]  It generates a fresh PRF key $\prfk$ of $\PRF$ and a key pair $(\pk,\sk)\gets\KG(1^\secp)$.
The PRF key is $(\prfk,\sk)$ and the corresponding public tag is $\pk$.
\item[$\Eval((\prfk,\sk),x)$:] It simply outputs $\prf_\prfk(x)$.
\item[$\Mark(\pp,(\prfk,\sk),\msg)$:] It generates the following taken $\tlC[\prfk,\sk,\msg]$.

\begin{figure}[!ht]
\centering{\small
	\fbox{
		\begin{minipage}[t]{0.9\textwidth}
			\textbf{Hard-Coded Constants}: $\prfk,\sk,\msg$.\\
			\textbf{Input:} $x\in\bit^n$. 
			\begin{enumerate}
				\setlength{\parskip}{0.3mm} 
				\setlength{\itemsep}{0.3mm} 
				\item Try to decrypt $y\|i\|\gamma\gets\D(\sk,x)$ with $y\in\bit^\secp$, $i\in[\msglen]$, and $\gamma\in\bit$.
				\item If decryption succeeds, output $y$ if $\gamma\ne\msg[i]$ and $\prf_\prfk(x)$ otherwise.
				\item Otherwise, output $\prf_\prfk(x)$.
			\end{enumerate}
		\end{minipage}
	}}
\end{figure}

\item[$\Sim(\xk,\iop,i)$:]It first generates $\gamma\gets\bit$ and $y\gets\bit^\secp$. Then, it parses $\iop:=\pk$ and generates $x\gets\E(\pk,y\|i\|\gamma)$. Finally, it outputs $(\gamma,x,y)$.

\end{description}

We check that $\ELWMPRF$ satisfies SIM-MDD security.
For simplicity, we fix the message $\msg\in[\msglen]$ embedded into the challenge PRF key.
Then, for any adversary $\qA$ and $i^\ast\in[\msglen]$, SIM-MDD security requires that given $\tlC[\prfk,\sk,\msg]\gets\Mark(\mk,\prfk,\msg)$ and $\iop=\pk$, $\qA$ cannot distinguish $(\gamma^\ast,x^\ast=\E(\pk,y^\ast\|i^\ast\|\gamma^\ast),y^\ast)\gets\Sim(\xk,\iop,i^\ast)$ from an output of $D$ if $\msg[i^\ast]=0$ and is indistinguishable from $\Drev$ if $\msg[i^\ast]=1$.

We consider the case of $\msg[i^\ast]=0$.
We can finish the security analysis by considering the following sequence of mutually indistinguishable hybrid games, where $\qA$ is given $(\gamma^\ast,x^\ast=\E(\pk,y^\ast\|i^\ast\|\gamma^\ast),y^\ast)\gets\Sim(\xk,\iop,i^\ast)$ in the first game, and on the other hand, is given $(\gamma^\ast,x^\ast,y^\ast)\gets D$ in the last game.
We first change the game so that $x^\ast$ is generated as a uniformly random value instead of $x^\ast\gets\E(\pk,y^\ast\|i^\ast\|\gamma^\ast)$ by using the pseudorandom ciphertext property under CCA of $\PKE$.
This is possible since the CCA oracle can simulate access to the marked token $\tlC[\prfk,\sk,\msg]$ by $\qA$.
Then, we further change the security game so that if $\gamma^\ast=1$, $y^\ast$ is generated as $\prf_\prfk(x^\ast)$ instead of a uniformly random value by using the pseudorandomness of $\PRF$.
Note that if $\gamma^\ast=0$, $y^\ast$ remains uniformly at random.
We see that if $\gamma^\ast=1$, the token $\tlC[\prfk,\sk,\msg]$ never evaluate $\prf_\prfk(x^\ast)$ since $\msg[i^\ast]\ne\gamma^\ast$.
Thus, this change is possible.
We see that now the distribution of $(\gamma^\ast,x^\ast,y^\ast)$ is exactly the same as that output by $D$.
Similarly, in the case of $\msg[i^\ast]=1$, we can show that an output of $\Sim(\xk,\iop,i^\ast)$ is indistinguishable from that output by $\Drev$. The only difference is that in the final step, we change the security game so that $y^\ast$ is generated as $\prf_\prfk(x^\ast)$ if $\gamma^\ast=0$.

In the actual public-simulatable construction, we implement this idea using iO and puncturable encryption~\cite{SIAMCOMP:CHNVW18} instead of token and CCA secure public-key encryption.
Also, in the actual secret-simulatable construction, we basically follow the same idea using private constrained PRF and secret-key encryption.

\ifnum\submission=0

\ifnum\submission=0
\subsection{More on Related Work}\label{sec:related_work}
\else
\section{More on Related Work}\label{sec:related_work}
\fi

\paragraph{Watermarking against classical adversaries.}
Cohen et al.~\cite{SIAMCOMP:CHNVW18} present a publicly extractable watermarking PRF from IO and injective OWFs. It is unremovable against adversaries who can access the mark oracle only before a target marked circuit is given. The mark oracle returns a marked circuit for a queried \emph{arbitrary} polynomial-size circuit. Suppose we additionally assume the hardness of the decisional Diffie-Hellman or LWE problem. In that case, their watermarking PRF is unremovable against adversaries that can access the mark oracle before and after a target marked circuit is given. However, adversaries can query a valid PRF key to the mark oracle in that case. They also present definitions and constructions of watermarking for public-key cryptographic primitives.

Boneh, Lewi, and Wu~\cite{PKC:BonLewWu17} present a privately extractable watermarking PRF from privately programmable PRFs, which are variants of private constrained PRFs~\cite{PKC:BonLewWu17,EC:CanChe17}. It is unremovable in the presence of the mark oracle. However, it is not secure in the presence of the extraction oracle and does not support public marking. They instantiate a privately programmable PRF with IO and OWFs, but later, Peikert and Shiehian~\cite{PKC:PeiShi18} instantiate it with the LWE assumption.

Kim and Wu~\cite{myJC:KimWu21} (KW17), Quach, Wichs, and Zirdelis~\cite{TCC:QuaWicZir18} (QWZ), and Kim and Wu~\cite{C:KimWu19} (KW19) present privately extractable watermarking PRFs from the LWE assumption. They are secure in the presence of the mark oracle. KW17 construction is not secure in the presence of the extraction oracle and does not support public marking. QWZ construction is unremovable in the presence of the extraction oracle and supports public marking. However, it does not have pseudorandomness against an authority that generates a marking and extraction key. KW19 construction is unremovable in the presence of the extraction oracle and has some restricted pseudorandomness against an authority (see the reference~\cite{C:KimWu19} for the detail). However, it does not support public marking.\footnote{Their construction supports public marking in the random oracle model.}

Yang et al.~\cite{AC:YALXY19} present a collusion-resistant watermarking PRF from IO and the LWE assumption. Collusion-resistant watermarking means unremovability holds even if adversaries receive multiple marked circuits with different embedded messages generated from one target circuit.

Goyal, Kim, Manohar, Waters, and Wu~\cite{C:GKMWW19} improve the definitions of watermarking for public-key cryptographic primitives and present constructions. In particular, they introduce collusion-resistant watermarking and more realistic attack strategies for public-key cryptographic primitives.
Nishimaki~\cite{TCC:Nishimaki20} present a general method for equipping many existing public-key cryptographic schemes with the watermarking functionality.

Goyal, Kim, Waters, and Wu~\cite{myAC:GKWW21} introduce the notion of traceable PRFs, where we can identify a user that creates a pirate copy of her/his authenticated PRF. The difference between traceable PRF and (collusion-resistant) watermarking PRF is that there is only one target original PRF and multiple authenticated copies of it with different identities in traceable PRF. In (collusion-resistant) watermarking PRF, we consider many different PRF keys.
In addition, Goyal et al. introduce a refined attack model. Adversaries in previous watermarking PRF definitions output a pirate PRF circuit that correctly computes the original PRF values for $1/2 +\epsilon$ fraction of inputs. However, adversaries in traceable PRFs output a pirate circuit that \emph{distinguishes} whether an input pair consists of a random input and a real PRF value or a random input and output value. This definition captures wide range of attacks. For example, it captures adversaries who create a pirate PRF circuit that can compute the first quarter bits of the original PRF output. Such an attack is not considered in previous watermarking PRFs. We adopt the refined attack model in our definitions.

\paragraph{Learning information from adversarial entities in the quantum setting.}
Zhandry~\cite{TCC:Zhandry20} introduces the definition of secure traitor tracing against quantum adversaries. In traitor tracing, each legitimate user receives a secret key that can decrypt broadcasted ciphertexts and where identity information is embedded. An adversary outputs a pirate decoder that can distinguish whether an input is a ciphertext of $m_0$ or $m_1$ where $m_0$ and $m_1$ are adversarially chosen plaintexts. A tracing algorithm must identify a malicious user's identity such that its secret decryption key is embedded in the pirate decoder. Thus, we need to extract information from adversarially generated objects. Such a situation also appears in security proofs of interactive proof systems~\cite{SIAMCOMP:Watrous09,EC:Unruh12,FOCS:AmbRosUnr14,myFOCS:CMSZ21} (but not in real cryptographic algorithms) since we rewind a verifier.

Zhandry presents how to estimate the success decryption probability of a \emph{quantum} pirate decoder without destroying the decoding (distinguishing) capability. He achieves a quantum tracing algorithm that extracts a malicious user identity by combining the probability approximation technique above with PLBE~\cite{EC:BonSahWat06}. However, his technique is limited to the setting where user identity spaces are only polynomially large while there are several traitor tracing schemes with exponentially large identity spaces~\cite{EC:NisWicZha16,TCC:GoyKopWat19}. As observed in previous works~\cite{C:GKMWW19,TCC:Nishimaki20,myAC:GKWW21}, traitor tracing and watermarking have similarities since an adversary outputs a pirate circuit in the watermarking setting and an extraction algorithm tries to retrieve information from it. However, a notable difference is that we must consider exponentially large message spaces by default in the (message-embedding) watermarking setting.

\paragraph{Application of (classical) watermarking.}
As we explained above, Aaronson et al.~\cite{C:ALLZZ21}, and Kitagawa et al.~\cite{myTCC:KitNisYam21} achieve secure software leasing schemes by using watermarking.
A leased software consists of a quantum state and watermarked circuit.
Although they use watermarking schemes in the quantum setting, it is sufficient for their purpose to use secure watermarking against adversaries that output a \emph{classical} pirate circuit. This is because a returned software is verified by a checking algorithm and must have a specific format in secure software leasing.\footnote{A valid software must run on a legitimate platform. For example, a video game title of Xbox must run on Xbox.} That is, a returned software is rejected if it does not have a classical circuit part that can be tested by an extraction algorithm of the building block watermarking.

\else\fi

\ifnum\submission=1
\subsection{Organization}
Due to the space limitation, we provide preliminaries including notations, basics on quantum informations, and definitions of standard cryptographic tools in \cref{sec:prelim}.
In \cref{sec:measurement_implementation}, we introduce some notions of quantum measurements.
In \cref{sec:def_Q_watermarking}, we define watermarking PRF against quantum adversaries.
In \cref{sec:def_extless_watermarking}, we define extraction-less watermarking PRF.
In \cref{sec:wmprf-from-elwmprf}, we show we can realize watermarking PRF against quantum adversaries from extraction-less watermarking PRF.
In \cref{sec:extless_watermarking_LWE}, we provide an instantiation of extraction-less watermarking PRF with private simulation based on the LWE assumption.
In \cref{sec:pub_ext_watermarking_IO}, we provide an instantiation of extraction-less watermarking PRF with public simulation based on IO and the LWE assumption.

\else\fi


\ifnum\submission=0

\section{Preliminaries}\label{sec:prelim}

\paragraph{Notations and conventions.}
In this paper, standard math or sans serif font stands for classical algorithms (e.g., $C$ or $\algo{Gen}$) and classical variables (e.g., $x$ or $\keys{pk}$).
Calligraphic font stands for quantum algorithms (e.g., $\qalgo{Gen}$) and calligraphic font and/or the bracket notation for (mixed) quantum states (e.g., $\qstateq$ or $\ket{\psi}$).
For strings $x$ and $y$, $x \concat y$ denotes the concatenation of $x$ and $y$.
Let $[\ell]$ denote the set of integers $\{1, \cdots, \ell \}$, $\secp$ denote a security parameter, and $y \seteq z$ denote that $y$ is set, defined, or substituted by $z$.

In this paper, for a finite set $X$ and a distribution $D$, $x \chosen X$ denotes selecting an element from $X$ uniformly at random, $x \chosen D$ denotes sampling an element $x$ according to $D$. Let $y \gets \algo{A}(x)$ and $y \gets \qalgo{A}(\qstate{x})$ denote assigning to $y$ the output of a probabilistic or deterministic algorithm $\algo{A}$ and a quantum algorithm $\qalgo{A}$ on an input $x$ and $\qstate{x}$, respectively. When we explicitly show that $\algo{A}$ uses randomness $r$, we write $y \gets \algo{A}(x;r)$.
PPT and QPT algorithms stand for probabilistic polynomial time algorithms and polynomial time quantum algorithms, respectively.
Let $\negl$ denote a negligible function.
\subsection{Quantum Information}\label{sec:quantum_info}

Let $\cH$ be a finite-dimensional complex Hilbert space. A (pure) quantum state is a vector $\ket{\psi}\in \cH$.
Let $\cS(\cH)$ be the space of Hermitian operators on $\cH$. A density matrix is a Hermitian operator $\qstate{X} \in \cS(\cH)$ with $\Trace(\qstate{X})=1$, which is a probabilistic mixture of pure states.
A quantum state over $\cH=\bbC^2$ is called qubit, which can be represented by the linear combination of the standard basis $\setbk{\ket{0},\ket{1}}$. More generally, a quantum system over $(\bbC^2)^{\tensor n}$ is called an $n$-qubit quantum system for $n \in \bbN \setminus \setbk{0}$.

A Hilbert space is divided into registers $\cH= \cH^{\qreg{R}_1} \tensor \cH^{\qreg{R}_2} \tensor \cdots \tensor \cH^{\qreg{R}_n}$.
We sometimes write $\qstate{X}^{\qreg{R}_i}$ to emphasize that the operator $\qstate{X}$ acts on register $\cH^{\qreg{R}_i}$.\footnote{The superscript parts are gray colored.}
When we apply $\qstate{X}^{\qreg{R}_1}$ to registers $\cH^{\qreg{R}_1}$ and $\cH^{\qreg{R}_2}$, $\qstate{X}^{\qreg{R}_1}$ is identified with $\qstate{X}^{\qreg{R}_1} \tensor \mat{I}^{\qreg{R}_2}$.

A unitary operation is represented by a complex matrix $\mat{U}$ such that $\mat{U}\mat{U}^\dagger = \mat{I}$. The operation $\mat{U}$ transforms $\ket{\psi}$ and $\qstate{X}$ into $\mat{U}\ket{\psi}$ and $\mat{U}\qstate{X}\mat{U}^\dagger$, respectively.
A projector $\mat{P}$ is a Hermitian operator ($\mat{P}^\dagger =\mat{P}$) such that $\mat{P}^2 = \mat{P}$.

For a quantum state $\qstate{X}$ over two registers $\cH^{\qreg{R}_1}$ and $\cH^{\qreg{R}_2}$, we denote the state in $\cH^{\qreg{R}_1}$ as $\qstate{X}[\qreg{R}_1]$, where $\qstate{X}[\qreg{R}_1]= \Trace_2[\qstate{X}]$ is a partial trace of $\qstate{X}$ (trace out $\qreg{R}_2$).

Given a function $F: X\ra Y$, a quantum-accessible oracle $O$ of $F$ is modeled by a unitary transformation $\mat{U}_F$ operating on two registers $\cH^{\qreg{in}}$ and $\cH^{\qreg{out}}$, in which $\ket{x}\ket{y}$ is mapped to $\ket{x}\ket{y\oplus F(x)}$, where $\oplus$ denotes XOR group operation on $Y$.
We write $\qA^{\ket{O}}$ to denote that the algorithm $\qA$'s oracle $O$ is a quantum-accessible oracle.

\ifnum\submission=0
\begin{definition}[Quantum Program with Classical Inputs and Outputs~\cite{C:ALLZZ21}]\label{def:Q_program_C_IO}
A quantum program with classical inputs is a pair of quantum state $\qstateq$ and unitaries $\setbk{\mat{U}_x}_{x\in[N]}$ where $[N]$ is the domain, such that the state of the program evaluated on input $x$ is equal to $\mat{U}_x \qstateq \mat{U}_x^\dagger$. We measure the first register of $\mat{U}_x \qstateq \mat{U}_x^\dagger$ to obtain an output. We say that $\setbk{\mat{U}_x}_{x\in[N]}$ has a compact classical description $\mat{U}$ when applying $\mat{U}_x$ can be efficiently computed given $\mat{U}$ and $x$.
\end{definition}
\else\fi

\begin{definition}[Positive Operator-Valued Measure]\label{def:POVM}
Let $\cI$ be a finite index set. A positive operator valued measure (POVM) $\cM$ is a collection $\setbk{\mat{M}_i}_{i\in\cI}$ of Hermitian positive semi-define matrices $\mat{M}_i$ such that $\sum_{i\in \cI}\mat{M}_i = \mat{I}$. When we apply POVM $\cM$ to a quantum state $\qstate{X}$, the measurement outcome is $i$ with probability $p_i =\Trace(\qstate{X}\mat{M}_i)$.
We denote by $\cM(\ket{\psi})$ the distribution obtained by applying $\cM$ to $\ket{\psi}$.
\end{definition}

\begin{definition}[Quantum Measurement]\label{def:quantum_measurement}
A quantum measurement $\cE$ is a collection $\setbk{\mat{E}_i}_{i\in\cI}$ of matrices $\mat{E}_i$ such that $\sum_{i\in\cI}\mat{E}_i^\dagger \mat{E}_i=\mat{I}$.
When we apply $\cE$ to a quantum state $\qstate{X}$, the measurement outcome is $i$ with probability $p_i =\Trace(\qstate{X}\mat{E}_i^\dagger \mat{E}_i)$. Conditioned on the outcome being $i$, the post-measurement state is $\mat{E}_i \qstate{X} \mat{E}_i^\dagger/p_i$.
\end{definition}
We can construct a POVM $\cM$ from any quantum measurement $\cE$ by setting $\mat{M}_i \seteq \mat{E}_i^\dagger \mat{E}_i$.
We say that $\cE$ is an implementation of $\cM$. The implementation of a POVM may not be unique.

\begin{definition}[Projective Measurement/POVM]\label{def:projective_measurement}
A quantum measurement $\cE=\setbk{\mat{E}_i}_{i\in \cI}$ is projective if for all $i \in \cI$, $\mat{E}_i$ is a projector.
This implies that $\mat{E}_i\mat{E}_j = \mat{0}$ for distinct $i,j\in \cI$.
In particular, two-outcome projective measurement is called a binary projective measurement, and is written as $\cE=(\mat{P},\mat{I}-\mat{P})$, where $\mat{P}$ is associated with the outcome $1$, and $\mat{I}-\mat{P}$ with the outcome $0$.
Similarly, a POVM $\cM$ is projective if for all $i\in \cI$, $\mat{M}_i$ is a projector. This also implies that $\mat{M}_i\mat{M}_j = \mat{0}$ for distinct $i,j\in \cI$.
\end{definition}

\begin{definition}[Controlled Projection]\label{def:controlled_projection}
Let $\cP = \setbk{\cM_i}_{i\in\cI}$ be a collection of projective measurement over a Hilbert space $\cH$, where $\cM_i = (\Pi_i,\mat{I}- \Pi_i)$ for $i\in \cI$.
Let $D$ be a distribution whose randomness space is $\cR$. The controlled projection $\cproj_{\cP,D} =(\cproj_{\cP,D}^1, \cproj_{\cP,D}^0)$ is defined as follows.\footnote{We use superscript $b$ to denote that it is associated with the outcome $b$ here.}
\begin{align}
\cproj_{\cP,D}^1 &\seteq \sum_{r\in \cR}\ket{r}\bra{r}\tensor \Pi_{D(r)}, && \cproj_{\cP,D}^0 \seteq \sum_{r\in \cR}\ket{r}\bra{r}\tensor (\mat{I}-\Pi_{D(r)})
\end{align}
\end{definition}
\else\fi

\ifnum\submission=0
\subsection{Measurement Implementation}\label{sec:measurement_implementation}
\else
\section{Measurement Implementation}\label{sec:measurement_implementation}
\fi

\begin{definition}[Projective Implementation]\label{def:projective_implementation}
Let:
\begin{itemize}
 \item $\cP=(\mat{P},\mat{I}-\mat{P})$ be a binary outcome POVM
 \item $D$ be a finite set of distributions over outcomes $\zo{}$
 \item $\cE = \setbk{\mat{E}_D}_{D\in\cD}$ be a projective measurement with index set $\cD$.
 \end{itemize}
 We define the following measurement.
 \begin{enumerate}
 \item Measure under the projective measurement $\cE$ and obtain a distribution $D$ over $\zo{}$.
 \item Output a bit sampled from the distribution $D$.
 \end{enumerate}
 We say this measurement is a projective implementation of $\cP$, denoted by $\projimp(\cP)$ if it is equivalent to $\cP$.
\end{definition}

\begin{theorem}[{\cite[Lemma 1]{TCC:Zhandry20}}]\label{lem:commutative_projective_implementation}
Any binary outcome POVM $\cP=(\mat{P},\mat{I}-\mat{P})$ has a projective implementation $\projimp(\cP)$.
\end{theorem}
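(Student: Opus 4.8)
The plan is to produce $\projimp(\cP)$ directly from the spectral decomposition of $\mat{P}$, so the theorem is really the spectral theorem dressed up in the language of Definition~\ref{def:projective_implementation}. First I would observe that since $\cP=(\mat{P},\mat{I}-\mat{P})$ is a POVM, both $\mat{P}$ and $\mat{I}-\mat{P}$ are Hermitian and positive semi-definite; hence every eigenvalue $p$ of $\mat{P}$ satisfies $0\le p\le 1$. Because $\cH$ is finite-dimensional, $\mat{P}$ has only finitely many distinct eigenvalues; write $S\subseteq[0,1]$ for this finite set and $\mat{E}_p$ for the orthogonal projector onto the $p$-eigenspace, so that $\mat{P}=\sum_{p\in S}p\,\mat{E}_p$, $\sum_{p\in S}\mat{E}_p=\mat{I}$, and $\mat{E}_p\mat{E}_q=\mat{0}$ for distinct $p,q\in S$. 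Thus $\setbk{\mat{E}_p}_{p\in S}$ is a projective measurement.

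Next I would package this as a projective implementation. Identify each eigenvalue $p\in S$ with the Bernoulli distribution $D_p$ over $\zo{}$ that outputs $1$ with probability $p$ and $0$ with probability $1-p$, and set $\cD=\setbk{D_p : p\in S}$, a finite set of distributions over $\zo{}$, together with the projective measurement $\cE=\setbk{\mat{E}_{D_p}}_{D_p\in\cD}$ defined by $\mat{E}_{D_p}\seteq\mat{E}_p$. The associated two-step measurement from Definition~\ref{def:projective_implementation} is then: (i) measure $\qstate{X}$ under $\cE$, obtaining some $D_p$; (ii) output a bit drawn from $D_p$.

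It remains to verify that this measurement is equivalent to $\cP$, i.e., induces the same outcome distribution on every input state. For any density matrix $\qstate{X}$, step (i) yields $D_p$ with probability $\Trace(\qstate{X}\mat{E}_p)$, so the probability of finally outputting $1$ is $\sum_{p\in S}\Trace(\qstate{X}\mat{E}_p)\cdot p=\Trace\bigl(\qstate{X}\sum_{p\in S}p\,\mat{E}_p\bigr)=\Trace(\qstate{X}\mat{P})$, which is precisely the probability that $\cP$ returns $1$ on $\qstate{X}$; the probability of outputting $0$ is then $1-\Trace(\qstate{X}\mat{P})=\Trace(\qstate{X}(\mat{I}-\mat{P}))$, again matching $\cP$. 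Hence the constructed measurement is a projective implementation of $\cP$.

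I do not expect a genuine obstacle here; the only points needing care are (a) confirming that the eigenvalues of $\mat{P}$ lie in $[0,1]$, which is exactly what makes each $D_p$ a legitimate probability distribution over $\zo{}$ and which uses that $\mat{I}-\mat{P}$ is itself a POVM element, and (b) pinning down that ``equivalent'' in Definition~\ref{def:projective_implementation} means equality of the induced outcome distributions on all input states (rather than, say, agreement of post-measurement states), so that the outcome-probability computation above genuinely closes the argument.
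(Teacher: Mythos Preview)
Your proposal is correct and is exactly the standard spectral-decomposition argument; the paper does not give its own proof but simply cites \cite[Lemma~1]{TCC:Zhandry20}, where the proof proceeds precisely as you outline. Your care about points (a) and (b) is well placed and matches what is needed.
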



\begin{definition}[Shift Distance]\label{def:shift_distance}
For two distributions $D_0,D_1$, the shift distance with parameter $\epsilon$, denoted by $\shiftdis{\epsilon}(D_0,D_1)$, is the smallest quantity $\delta$ such that for all $x \in \R$:
\begin{align}
\Pr[D_0\le x] & \le \Pr[D_1\le x + \epsilon] + \delta,&& \Pr[D_0\ge x]  \le \Pr[D_1\ge x - \epsilon] + \delta,\\
\Pr[D_1\le x] & \le \Pr[D_0\le x + \epsilon] + \delta,&& \Pr[D_1\ge x]  \le \Pr[D_0\ge x - \epsilon] + \delta.
\end{align}
For two real-valued measurements $\cM$ and $\cN$ over the same quantum system, the shift distance between $\cM$ and $\cN$ with parameter $\epsilon$ is
\[
\shiftdis{\epsilon}(\cM,\cN)\seteq \sup_{\ket{\psi}}\shiftdis{\epsilon}(\cM(\ket{\psi}),\cN(\ket{\psi})).
\]
\end{definition}

\begin{definition}[{$(\epsilon,\delta)$-Almost Projective~\cite{TCC:Zhandry20}}]\label{def:almost_projective}
A real-valued quantum measurement $\cM= \setbk{\mat{M}_i}_{i\in\cI}$ is $(\epsilon,\delta)$-almost projective if the following holds. For any quantum state $\ket{\psi}$, we apply $\cM$ twice in a row to $\ket{\psi}$ and obtain measurement outcomes $x$ and $y$, respectively. Then, $\Pr[\abs{x-y}\le \epsilon]\ge 1-\delta$.
\end{definition}

\begin{theorem}[{\cite[Theorem 2]{TCC:Zhandry20}}]\label{thm:api_property}
Let $D$ be any probability distribution and $\cP$ be a collection of projective measurements. For any $0<\epsilon,\delta<1$, there exists an algorithm of measurement $\API_{\cP,\cD}^{\epsilon,\delta}$ that satisfies the following.
\begin{itemize}
\item $\shiftdis{\epsilon}(\API_{\cP,D}^{\epsilon,\delta},\projimp(\cP_D))\le \delta$.
\item $\API_{\cP,D}^{\epsilon,\delta}$ is $(\epsilon,\delta)$-almost projective.
\item The expected running time of $\API_{\cP,D}^{\epsilon,\delta}$ is $T_{\cP,D}\cdot \poly(1/\epsilon,\log(1/\delta))$ where $T_{\cP,D}$ is the combined running time of $D$, the procedure mapping $i \ra (\mat{P}_i,\mat{I}- \mat{P}_i)$, and the running time of measurement $(\mat{P}_i,\mat{I}-\mat{P}_i)$.
\end{itemize}
\end{theorem}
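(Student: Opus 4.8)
The statement is Zhandry's Theorem~2~\cite{TCC:Zhandry20}, so the plan is to reconstruct his construction of the approximate projective implementation. The idea is to reduce the (inefficient) projective implementation $\projimp(\cP_D)$ to an eigenvalue-estimation problem for a pair of projectors obtained from $\cP$ and $D$, and then to estimate that eigenvalue efficiently by an alternating-measurement procedure whose statistics are controlled via Jordan's lemma. First I would record the two-projector picture, then define $\API_{\cP,D}^{\epsilon,\delta}$ as a bounded alternating measurement, and finally verify the three bullet points (shift distance, almost-projective, running time) against that construction.

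\textbf{Step 1: reduce to two projectors.} Recall $\cP_D$ is the binary POVM $(\mat{M},\mat{I}-\mat{M})$ with $\mat{M}=\sum_{r\in\cR}D(r)\,\Pi_{D(r)}$, built from the controlled projection $\cproj_{\cP,D}$. I would introduce an internal ancilla holding $\ket{\psi_D}\seteq\sum_{r\in\cR}\sqrt{D(r)}\ket{r}$ and set $\mat{A}\seteq\ket{\psi_D}\bra{\psi_D}\tensor\mat{I}$ and $\mat{B}\seteq\cproj_{\cP,D}^1$. A one-line computation gives $\mat{A}\mat{B}\mat{A}=\ket{\psi_D}\bra{\psi_D}\tensor\mat{M}$, so the input state $\ket{\psi}\tensor\ket{\psi_D}$ lies in $\mathrm{Im}(\mat{A})$ and $\projimp(\cP_D)$ is exactly: append $\ket{\psi_D}$, project onto a joint eigenspace of $\mat{A}$ and $\mat{B}$, read off the eigenvalue $p$ of $\mat{A}\mat{B}\mat{A}$ there, and output $p$ as a real value. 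Existence of this projective measurement is \cref{lem:commutative_projective_implementation}; what remains is an efficient approximation.

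\textbf{Step 2: the alternating-measurement estimator.} I would define $\API_{\cP,D}^{\epsilon,\delta}$ as: append $\ket{\psi_D}$; perform $(\mat{A},\mat{I}-\mat{A})$; then alternately perform $(\mat{B},\mat{I}-\mat{B})$ and $(\mat{A},\mat{I}-\mat{A})$ for $N=\Theta(\epsilon^{-2}\log(1/\delta))$ more rounds, recording all outcomes; output the empirical fraction of consecutive rounds whose outcomes agree. By Jordan's lemma the space splits into subspaces of dimension $\le 2$ invariant under both $\mat{A}$ and $\mat{B}$; in a two-dimensional block the projectors are rank one at angle $\theta_j$, $\mat{A}\mat{B}\mat{A}$ has eigenvalue $p_j=\cos^2\theta_j$ on the $\mat{A}$-direction, and a short calculation shows that, conditioned on being in this block, each indicator ``the next outcome equals the current one'' is i.i.d.\ $\mathrm{Bernoulli}(\cos^2\theta_j)$, independent of the history; one-dimensional blocks give deterministic agreement patterns matching $p_j\in\{0,1\}$. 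Crucially, alternating measurements of $\mat{A}$ and $\mat{B}$ never transfer weight between blocks.

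\textbf{Step 3: the three properties, and the main obstacle.} For the shift bound, a Hoeffding estimate shows that, conditioned on block $j$, the empirical fraction is within $\epsilon$ of $p_j$ except with probability $\delta$; since $\projimp(\cP_D)$ puts the same weight $\lVert\Pi_j\ket{\psi}\rVert^2$ on block $j$ and outputs exactly $p_j$, the two real-valued measurements are within an $\epsilon$-shift with error $\delta$, i.e.\ $\shiftdis{\epsilon}(\API_{\cP,D}^{\epsilon,\delta},\projimp(\cP_D))\le\delta$. For almost-projectivity, running $\API$ twice leaves the state in the block it ended up in, so the second estimate again concentrates within $\epsilon$ of the same $p_j$, and a union bound gives $\Pr[|x-y|\le 2\epsilon]\ge 1-2\delta$; running the procedure internally with parameters $\epsilon/2,\delta/2$ recovers the stated $(\epsilon,\delta)$ form. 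For the running time, each round costs one evaluation of $(\mat{B},\mat{I}-\mat{B})$ --- sampling $r\gets D$, computing $i\mapsto(\mat{P}_i,\mat{I}-\mat{P}_i)$ and running that measurement --- plus one reflection about $\ket{\psi_D}$, and there are $\poly(1/\epsilon,\log(1/\delta))$ rounds, giving total (expected) time $T_{\cP,D}\cdot\poly(1/\epsilon,\log(1/\delta))$. The hard part will be Step~3's quantitative bookkeeping: choosing $N$ so that a single run \emph{simultaneously} gives the shift bound against $\projimp(\cP_D)$ and the two-run stability bound with the parameters exactly as stated (no polynomial loss), together with carefully handling the degenerate one-dimensional Jordan blocks and the precise way the reflection about $\ket{\psi_D}$ is realized when $D$ is only sampleable (this is also where the ``expected'' in the running time comes from). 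By contrast, the Jordan-lemma reduction in Steps~1--2 is the clean conceptual core and is what supplies all the needed structure.
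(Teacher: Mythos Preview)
The paper does not prove this theorem; it is cited directly from Zhandry~\cite{TCC:Zhandry20}. The paper does, however, spell out Zhandry's algorithm $\API_{\cP,D}^{\epsilon,\delta}$ explicitly in~\cref{fig:API_algo} (for use in the proof of~\cref{thm:API_is_rev_almost_projective}), so one can compare your reconstruction against that.

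Your plan is essentially Zhandry's: Jordan decomposition of two projectors, alternating binary measurements, and a Hoeffding bound on the resulting i.i.d.\ Bernoulli agreement bits. The one substantive discrepancy is the choice of ancilla. You take $\ket{\psi_D}=\sum_r\sqrt{D(r)}\,\ket{r}$ and $\mat{A}=\ket{\psi_D}\bra{\psi_D}\tensor\mat{I}$; the paper (following Zhandry) instead uses the \emph{uniform} superposition $\ketisu=\frac{1}{\sqrt{|\cR|}}\sum_{r\in\cR}\ket{r}$ over the \emph{randomness space} of $D$, with the sampler absorbed into $\cproj_{\cP,D}^1=\sum_r\ket{r}\bra{r}\tensor\Pi_{D(r)}$. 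This matters precisely at the spot you flagged as the main obstacle: reflecting about $\ketisu$ is trivial, whereas reflecting about a general $\ket{\psi_D}$ for merely sampleable $D$ is not obviously efficient. Adopting the uniform-randomness formulation removes that difficulty entirely (and also cleans up your expression $\mat{M}=\sum_r D(r)\,\Pi_{D(r)}$, which conflates the two readings of $D(r)$; it should be $\frac{1}{|\cR|}\sum_r\Pi_{D(r)}$). Relatedly, the ``expected'' in the running time does not come from implementing the reflection but from the cleanup step after the $2T$ rounds: the algorithm keeps alternating until the $\IsU$ outcome is $1$, so the ancilla can be traced out without disturbing the main register; this loop terminates in $O(1)$ expected extra rounds.
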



\begin{theorem}[{\cite[Corollary 1]{TCC:Zhandry20}}]\label{cor:cind_sample_api}
Let $\qstateq$ be an efficiently constructible, potentially mixed state, and $D_0,D_1$ efficiently sampleable distributions.
If $D_0$ and $D_1$ are computationally indistinguishable, for any inverse polynomial $\epsilon$ and any function $\delta$, we have $\shiftdis{3\epsilon}(\API_{\cP,D_0}^{\epsilon,\delta},\API_{\cP,D_1}^{\epsilon,\delta}) \le 2\delta + \negl(\secp)$.
\end{theorem}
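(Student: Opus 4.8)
The plan is to derive this as a purely quantitative consequence of \cref{thm:api_property} combined with a standard hybrid-flavored reduction to the computational indistinguishability of $D_0$ and $D_1$. The key structural fact is that shift distance behaves like a relaxed metric: $\shiftdis{\epsilon_1+\epsilon_2}(A,C)\le\shiftdis{\epsilon_1}(A,B)+\shiftdis{\epsilon_2}(B,C)$ for real-valued measurements on the same system, and $\shiftdis{\cdot}$ is symmetric in its two arguments (the definition lists all four inequalities). So I would first collapse each $\API$ measurement onto its projective implementation, paying $\delta$ on each side by the first bullet of \cref{thm:api_property}, and then reduce to showing that the two projective implementations produce nearly identical outcome distributions on $\qstateq$.

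Concretely, subadditivity gives
\[
\shiftdis{3\epsilon}\bigl(\API_{\cP,D_0}^{\epsilon,\delta},\API_{\cP,D_1}^{\epsilon,\delta}\bigr)\le \delta+\shiftdis{\epsilon}\bigl(\projimp(\cP_{D_0}),\projimp(\cP_{D_1})\bigr)+\delta,
\]
so it suffices to bound $\shiftdis{\epsilon}(\projimp(\cP_{D_0})(\qstateq),\projimp(\cP_{D_1})(\qstateq))$ by a negligible function. Since $\projimp$ is in general inefficient, I cannot apply indistinguishability to it head-on; instead I would sandwich it once more, invoking \cref{thm:api_property} with a small accuracy parameter $\epsilon'=\epsilon/10$ and a negligible $\delta'$ (say $\delta'=2^{-\secp}$), which is still efficient since its running time is polynomial in $1/\epsilon'$ and $\log(1/\delta')$. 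Applying subadditivity again with the split $\epsilon=\epsilon/10+8\epsilon/10+\epsilon/10$, the remaining task becomes: show $\shiftdis{8\epsilon/10}(\API_{\cP,D_0}^{\epsilon',\delta'}(\qstateq),\API_{\cP,D_1}^{\epsilon',\delta'}(\qstateq))\le\negl(\secp)$, where now both measurements are efficient.

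For that last claim I would use a threshold distinguisher. If the shift distance exceeds a non-negligible $\mu$, then by definition there is a real number $t$ (hardwired as non-uniform advice) and, up to symmetry, $\Pr[\API_{\cP,D_0}^{\epsilon',\delta'}(\qstateq)\le t] > \Pr[\API_{\cP,D_1}^{\epsilon',\delta'}(\qstateq)\le t+8\epsilon/10]+\mu$. Define $\qB$ which, given the ability to evaluate $D_b$, constructs $\qstateq$ (possible since $\qstateq$ is efficiently constructible), runs $\API_{\cP,D_b}^{\epsilon',\delta'}$ once to obtain an outcome $p$, and outputs the guess $0$ iff $p\le t+4\epsilon/10$. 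By monotonicity of CDFs, $\Pr[\qB=0\mid b=0]\ge\Pr[\API_{\cP,D_0}^{\epsilon',\delta'}(\qstateq)\le t]$ and $\Pr[\qB=0\mid b=1]\le\Pr[\API_{\cP,D_1}^{\epsilon',\delta'}(\qstateq)\le t+8\epsilon/10]$, so $\qB$ distinguishes $D_0$ from $D_1$ with advantage greater than $\mu$, contradicting indistinguishability. Since $\API$ runs only in expected polynomial time, I would truncate $\qB$ at a $\poly(\secp)/\mu$ multiple of that expectation and use Markov's inequality to fold the truncation error into the negligible slack. Chaining the three estimates then yields $\shiftdis{3\epsilon}(\API_{\cP,D_0}^{\epsilon,\delta},\API_{\cP,D_1}^{\epsilon,\delta})\le 2\delta+\negl(\secp)$.

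The main obstacle — indeed essentially the only place care is needed — is the inefficiency of $\projimp$: computational indistinguishability cannot be applied against a measurement that is not polynomial time, which is exactly why the argument must route through the efficient approximation $\API$ with tiny parameters rather than comparing $\projimp(\cP_{D_0})$ and $\projimp(\cP_{D_1})$ directly. Two secondary technical points to handle are that $\API_{\cP,D_b}$ must evaluate $D_b$ coherently (so the statement implicitly assumes $D_0,D_1$ are efficiently \emph{computable} distributions that a reduction can run internally, which is the case in all of our applications), and the conversion of the expected-time measurement inside $\qB$ into a strict polynomial-time distinguisher; neither of these affects the final bound beyond the stated negligible term.
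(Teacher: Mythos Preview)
The paper does not prove this statement; it is imported verbatim as \cite[Corollary~1]{TCC:Zhandry20} and used as a black box. Your proposal is a correct reconstruction of the argument behind that corollary: collapse each $\API$ onto its $\projimp$ via \cref{thm:api_property} (paying $\delta$ per side), then re-expand around an efficient $\API$ with tiny parameters to make the middle comparison amenable to a threshold distinguisher, and finally reduce a non-negligible shift-distance gap to a distinguishing advantage against $D_0$ versus $D_1$. This is exactly Zhandry's route, and your handling of the two genuine technicalities---that $\projimp$ is inefficient so one must pass through an auxiliary efficient $\API$, and that $\API$ is only expected polynomial time so the reduction must truncate via Markov---matches how they are dealt with in the source.
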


Note that the indistinguishability of $D_0$ and $D_1$ needs to hold against distinguishers who can construct $\qstateq$ in the theorem above. However, this fact is not explicitly stated in \cite{TCC:Zhandry20}.
We need to care about this condition if we need secret information to construct $\qstateq$, and the secret information is also needed to sample an output from $D_0$ or $D_1$.
We handle such a situation when analyzing the unremovability of our privately extractable watermarking PRF.
In that situation, we need a secret extraction key to construct $\qstateq$ and sample an output from $D_0$ and $D_1$.

\medskip

We also define the notion of the reverse almost projective property of API.

\begin{definition}[$(\epsilon,\delta)$-Reverse Almost Projective]\label{def:rev_almost_projective}
Let $\cP=\setbk{(\Pi_i,\mat{I} -\Pi_i)}_i$ be a collection of binary outcome projective measurements.
Let $D$ be a distribution.
We also let $\cPrev=\setbk{(\mat{I} -\Pi_i,\Pi_i)}_i$.
We say $\API$ is $(\epsilon,\delta)$-reverse almost projective if the following holds. For any quantum state $\ket{\psi}$, we apply $ \API_{\cP,D}^{\epsilon ,\delta}$ and $\API_{\cPrev,D}^{\epsilon,\delta}$ in a row to $\ket{\psi}$ and obtain measurement outcomes $x$ and $y$, respectively. Then, $\Pr[\abs{(1-x)-y}\le \epsilon]\ge 1-\delta$.
\end{definition}

\ifnum\submission=0

\ifnum\submission=1
\section{The Reverse Almost Projective Property of API}\label{sec:proof_rev_almost_projective}
\else\fi

We show that the measurement algorithm $\API_{\cP,D}^{\epsilon,\delta}$ in~\cref{thm:api_property} also satisfies~\cref{def:rev_almost_projective}.
First, we describe the detail of $\API_{\cP,D}^{\epsilon,\delta}$ in~\cref{fig:API_algo}.
$\API$ uses an ancilla register $\cH^{\qreg{R}}$ besides the original Hilbert space $\cH^{\qreg{H}}$.
Let $\randspace$ be the randomness space of distribution $D$. We define $\IsU \seteq (\kbisu,\mat{I} -\kbisu)$ where
\[
\ketisu \seteq \frac{1}{\sqrt{\abs{\randspace}}}\sum_{r\in \randspace}\ket{r}.
\]


\protocol
{Algorithm $\API_{\cP,D}^{\epsilon,\delta}$}
{The description of $\API$.}
{fig:API_algo}
{
\begin{description}
\setlength{\parskip}{0.3mm} 
\setlength{\itemsep}{0.3mm} 
\item[Parameter:] Collection of projective measurement $\cP$, distribution $D$, real values $\epsilon,\delta$.
\item[Input:] A quantum state $\ket{\psi}$.
\end{description}
\begin{enumerate}
\item Initialize a state $\ketisu \ket{\psi}$.
\item Initialize a classical list $L \seteq (1)$.
\item Repeat the following $T \seteq \ceil{\frac{\ln{4/\delta}}{\epsilon^2}}$.
\begin{enumerate}
\item Apply $\cproj_{\cP,D}$ to register $\cH^{\qreg{R}} \tensor \cH^{\qreg{H}}$. Let $b_{2i-1}$ be the measurement outcome and set $L \seteq (L,b_{2i-1})$.
\item Apply $\IsU$ to register $\cH^{\qreg{R}}$. Let $b_{2i}$ be the measurement outcome and set $L \seteq (L,b_{2i})$.
\end{enumerate}
\item Let $t$ be the number of index $i$ such that $b_{i-1} = b_i$ in the list $L =(0,b_1,\ldots,b_{2T})$, and $\tlp \seteq t/2T$.
\item If $b_{2T}=0$, repeat the loop again until $b_{2i}=1$.
\item Discard $\cH^{\qreg{R}}$ register, and output $\tlp$.
\end{enumerate}
}




We use the following lemma to analyze $\API_{\cP,D}^{\epsilon,\delta}$.

\begin{lemma}[\cite{Jordan75}]\label{lem:jordan}
For any two Hermitian projectors $\Pi_v$ and $\Pi_w$ on a Hilbelt space $\cH$, there exists an orthogonal decomposition of $\cH$ into one-dimensional and two-dimensional subspaces (the Jordan subspaces) that are invariant under both $\Pi_v$ and $\Pi_w$. Moreover:
\begin{itemize}
\item in each one-dimensional space, $\Pi_v$ and $\Pi_w$ act as identity or rank-zero projectors; and
\item in each two-dimensional subspace $S_j$, $\Pi_v$ and $\Pi_w$ are rank-one projectors: there exists $\ket{v_j},\ket{w_j}\in S_j$ such that $\Pi_v$ projects onto $\ket{v_j}$ and $\Pi_w$ projects onto $\ket{w_j}$.
\end{itemize}
\end{lemma}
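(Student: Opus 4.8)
The plan is to prove Jordan's lemma entirely by linear algebra, organized around the single Hermitian operator $\mat{D}^2$ with $\mat{D}\seteq\Pi_v-\Pi_w$. First I would record the key algebraic fact that $\mat{D}^2$ commutes with both projectors: expanding $\mat{D}^2=\Pi_v+\Pi_w-\Pi_v\Pi_w-\Pi_w\Pi_v$ one checks $\Pi_v\mat{D}^2=\Pi_v-\Pi_v\Pi_w\Pi_v=\mat{D}^2\Pi_v$, and the $\Pi_w$ case follows by the $v\leftrightarrow w$ symmetry (which sends $\mat{D}\to-\mat{D}$, leaving $\mat{D}^2$ fixed). I would also note that the spectrum of $\mat{D}^2$ lies in $[0,1]$, either by citing $\norm{\Pi_v-\Pi_w}\le 1$ or directly: on any eigenspace $\cH_\lambda$ the identity above gives $\Pi_v\Pi_w\Pi_v=(1-\lambda)\Pi_v$, and since $\Pi_v\Pi_w\Pi_v=(\Pi_w\Pi_v)^\dagger(\Pi_w\Pi_v)\ge 0$ this forces $\lambda\le 1$ (the $\lambda\ge 0$ part is automatic). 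Since $\mat{D}^2$ is Hermitian, $\cH$ is the orthogonal sum of its eigenspaces $\cH_\lambda$, and because $\mat{D}^2$ commutes with $\Pi_v$ and $\Pi_w$, each $\cH_\lambda$ is invariant under both; it therefore suffices to decompose inside a fixed $\cH_\lambda$.

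Next I would handle the two degenerate eigenvalues, which contribute the one-dimensional Jordan subspaces. On $\cH_0$ we have $\mat{D}^2=\mat{0}$, hence $\mat{D}=\mat{0}$ (as $\mat{D}$ is Hermitian), so $\Pi_v=\Pi_w$ there; diagonalizing this common projector and taking one-dimensional spans of the resulting orthonormal basis, on each span both $\Pi_v$ and $\Pi_w$ act as the identity (if the vector lies in the range) or as the zero operator (if in the kernel). On $\cH_1$, the identity $\Pi_v\mat{D}^2=\Pi_v-\Pi_v\Pi_w\Pi_v$ together with $\mat{D}^2=\mat{I}$ gives $\Pi_v\Pi_w\Pi_v=\mat{0}$, hence $\Pi_w\Pi_v=\mat{0}$, and feeding this back yields $\Pi_w=\mat{I}-\Pi_v$ with mutually orthogonal ranges; one-dimensional spans of an orthonormal basis adapted to $\mathrm{range}(\Pi_v)\oplus\mathrm{range}(\Pi_w)$ again work, with one projector acting as the identity and the other as zero on each span.

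The core of the argument is the case $\lambda\in(0,1)$. Here $\Pi_v\Pi_w\Pi_v=(1-\lambda)\Pi_v$ on $\cH_\lambda$, so on $V_\lambda\seteq\mathrm{range}(\Pi_v|_{\cH_\lambda})$ we get $\Pi_v\Pi_w\Pi_v=(1-\lambda)\mat{I}_{V_\lambda}$. I would fix an orthonormal basis $\{\ket{v_j}\}$ of $V_\lambda$ and set $\ket{w_j}\seteq(1-\lambda)^{-1/2}\Pi_w\ket{v_j}$. A short computation using $\Pi_v\ket{v_j}=\ket{v_j}$ and the relation $\Pi_v\Pi_w\Pi_v=(1-\lambda)\Pi_v$ shows each $\ket{w_j}$ is a unit vector, $\Pi_v\ket{w_j}=\sqrt{1-\lambda}\,\ket{v_j}$ and $\Pi_w\ket{v_j}=\sqrt{1-\lambda}\,\ket{w_j}$, so $S_j\seteq\mathrm{span}\{\ket{v_j},\ket{w_j}\}$ is invariant under both projectors; the same identity gives $\langle v_j|v_k\rangle=\langle w_j|w_k\rangle=\delta_{jk}$ and $\langle v_j|w_k\rangle=\sqrt{1-\lambda}\,\delta_{jk}$, so the $S_j$ are mutually orthogonal, each genuinely two-dimensional (since $\sqrt{1-\lambda}\notin\{0,1\}$), and on $S_j$ one reads off $\Pi_v|_{S_j}=\ket{v_j}\bra{v_j}$ and $\Pi_w|_{S_j}=\ket{w_j}\bra{w_j}$, exactly the claimed rank-one projectors onto $\ket{v_j}$ and $\ket{w_j}$. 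Finally I would check completeness: a vector $\ket{x}\in\cH_\lambda$ orthogonal to all $S_j$ is orthogonal to $V_\lambda$ and to $\Pi_w V_\lambda$, which forces $\Pi_v\ket{x}=0$ and $\Pi_v\Pi_w\ket{x}=0$, whence $\mat{D}^2\ket{x}=\Pi_w\ket{x}$; combined with $\mat{D}^2\ket{x}=\lambda\ket{x}$ and $\lambda\notin\{0,1\}$ this gives $\ket{x}=0$. Collecting the one-dimensional pieces from $\cH_0,\cH_1$ and the two-dimensional $S_j$'s from each $\cH_\lambda$ with $\lambda\in(0,1)$ yields the stated decomposition of $\cH$.

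The main obstacle I expect is the completeness step — ruling out vectors of $\cH_\lambda$ that are "missed" by the $S_j$'s — together with the bookkeeping of verifying all the invariance and orthogonality relations cleanly; by contrast the conceptual move, recognizing that $(\Pi_v-\Pi_w)^2$ is the right commuting operator, is quick, and since everything is finite-dimensional there are no convergence issues to worry about.
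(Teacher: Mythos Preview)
The paper does not prove this lemma at all: it is stated as \cref{lem:jordan} with a citation to Jordan~\cite{Jordan75} and then immediately used to analyze the $\API$ algorithm. So there is no ``paper's own proof'' to compare against; your proposal is supplying a proof where the paper simply imports a classical result.

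Your argument is correct. The organizing idea --- diagonalize the single Hermitian operator $(\Pi_v-\Pi_w)^2$, which commutes with both projectors, and then work eigenspace by eigenspace --- is one of the standard clean routes to Jordan's lemma. The degenerate eigenvalues $0$ and $1$ are handled correctly (on $\cH_0$ the projectors coincide; on $\cH_1$ they are complementary), and for $\lambda\in(0,1)$ your construction of $\ket{w_j}=(1-\lambda)^{-1/2}\Pi_w\ket{v_j}$ together with the identity $\Pi_v\Pi_w\Pi_v=(1-\lambda)\Pi_v$ on $\cH_\lambda$ does give the invariant two-dimensional blocks with the claimed rank-one restrictions. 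The completeness step, which you flagged as the main worry, is fine: from $\Pi_v\ket{x}=0$ and $\Pi_v\Pi_w\ket{x}=0$ you get $\mat{D}^2\ket{x}=\Pi_w\ket{x}$, and since $\Pi_w$ is idempotent with eigenvalues in $\{0,1\}$ while $\lambda\notin\{0,1\}$, this forces $\ket{x}=0$.

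One minor remark on notation relative to the paper: after the lemma the paper writes $p_j=\abs{\bra{v_j}\ket{w_j}}^2$ for the principal-angle parameter of each two-dimensional block, which in your parametrization is $1-\lambda$; this is consistent with your computation $\langle v_j|w_j\rangle=\sqrt{1-\lambda}$.
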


For each two-dimensional subspace $S_j$, we call $p_j \seteq \abs{\bra{v_j}\ket{w_j}}^2$ the eigenvalue of the $j$-th subspace. It is easy to see that $\ket{v_j}$ is an eigenvector of the Hermitian matrix $\Pi_v\Pi_w\Pi_v$ with eigenvalue $p_j$.

As previous works observed~\cite{CC:MarWat05,LecNot:Regev06,myFOCS:CMSZ21}, we obtain the following by~\cref{lem:jordan}. There exists orthogonal vectors $\ket{v_j},\ket{v_j^\perp}$ that span $S_j$, such that $\Pi_v \ket{v_j}=\ket{v_j}$ and $\Pi_v \ket{v_j^\perp}=0$. Similarly, $\Pi_w \ket{w_j}=\ket{w_j}$ and $\Pi_w \ket{w_j^\perp}=0$. By setting appropriate phases, we have
\begin{align}
&& \ket{w_j} &= \sqrt{p_j}\ket{v_j} + \sqrt{1-p_j}\ket{v_j^\perp}, & \ket{w_j^\perp} &= \sqrt{1- p_j}\ket{v_j} - \sqrt{p_j}\ket{v_j^\perp},\\
&& \ket{v_j} &= \sqrt{p_j}\ket{w_j} + \sqrt{1-p_j}\ket{w_j^\perp}, & \ket{v_j^\perp} &= \sqrt{1- p_j}\ket{w_j} - \sqrt{p_j}\ket{w_j^\perp},
\end{align}
where $\ket{v_j^\perp},\ket{w_j^\perp}\in S_j$ such that $\bra{v_j}\ket{v_j^\perp}=\bra{w_j}\ket{w_j^\perp}=0$.
We also have
\begin{align}
\Pi_v \ket{w_j} &=\sqrt{p_j}\ket{v_j}, && \Pi_w \ket{v_j} =\sqrt{p_j}\ket{w_j}.
\end{align}


\usetikzlibrary{matrix}
\begin{figure}[!t]
\centering{
\begin{tikzpicture}
  \matrix (m) [matrix of math nodes,row sep=3em,column sep=5em,minimum width=4em]
  {
     \ket{v_j} & \ket{w_j} & \ket{v_j} & \ket{w_j} & \cdots\\
     \ket{v_j^\perp} & \ket{w_j^\perp} & \ket{v_j^\perp} & \ket{w_j^\perp}& \cdots\\};
  \path[-stealth]
    (m-1-1) edge [double] node [above] {$p_j$} (m-1-2)
    (m-1-1) edge [dashed,double] node [above] {$1-p_j$} (m-2-2)

    (m-2-2) edge node [above] {$1-p_j$} (m-1-3)
    (m-2-2) edge [dashed] node [below] {$p_j$} (m-2-3)
    (m-1-2) edge [dashed] node [above] {} (m-2-3)
    (m-1-2) edge node [above] {$p_j$} (m-1-3)

    (m-2-3) edge [double] node [above] {$1-p_j$} (m-1-4)
    (m-2-3) edge [dashed,double] node [below] {$p_j$} (m-2-4)
    (m-1-3) edge [dashed,double] node [above] {} (m-2-4)
    (m-1-3) edge [double] node [above] {$p_j$} (m-1-4);
\end{tikzpicture}
}
\caption{Solid lines denote that the measurement outcome is $1$. Dashed line denote that the measurement outcome is $0$. Double lines denote we apply $\cproj_{\cP,D} =(\cproj_{\cP,D}^{1},\cproj_{\cP,D}^{0})$. Single lines denote we apply $\IsU = (\kbisu,\mat{I}-\kbisu)$.}
\label{fig:jordan_bernoulli}
\end{figure}

\begin{figure}[!th]
\centering{
\begin{tikzpicture}
  \matrix (m) [matrix of math nodes,row sep=3em,column sep=5em,minimum width=4em]
  {
     \ket{v_j} & \ket{w_j^\perp} & \ket{v_j} & \ket{w_j^\perp} &  \cdots\\
     \ket{v_j^\perp} & \ket{w_j} & \ket{v_j^\perp} & \ket{w_j}& \cdots\\};
  \path[-stealth]
    (m-1-1) edge [double] node [above] {$1-p_j$} (m-1-2)
    (m-1-1) edge [dashed,double] node [above] {$p_j$} (m-2-2)

    (m-2-2) edge node [above] {$p_j$} (m-1-3)
    (m-2-2) edge [dashed] node [below] {$1-p_j$} (m-2-3)
    (m-1-2) edge [dashed] node [above] {} (m-2-3)
    (m-1-2) edge node [above] {$1-p_j$} (m-1-3)

    (m-2-3) edge [double] node [above] {$p_j$} (m-1-4)
    (m-2-3) edge [dashed,double] node [below] {$1-p_j$} (m-2-4)
    (m-1-3) edge [dashed,double] node [above] {} (m-2-4)
    (m-1-3) edge [double] node [above] {$1-p_j$} (m-1-4);
\end{tikzpicture}
}
\caption{Solid lines denote that the measurement outcome is $1$. Dashed line denote that the measurement outcome is $0$. Double lines denote we apply $\cproj_{\cP,D}^{\mathtt{rev}}=(\cproj_{\cPrev,D}^{1},\cproj_{\cPrev,D}^{0}) =(\cproj_{\cP,D}^{0},\cproj_{\cP,D}^{1})$. Single lines denote we apply $\IsU = (\kbisu,\mat{I}-\kbisu)$.}
\label{fig:jordan_bernoulli_rev}
\end{figure}

\begin{theorem}\label{thm:API_is_rev_almost_projective}
$\API_{\cP,D}^{\epsilon,\delta}$ in~\cref{fig:API_algo} is $(\epsilon,\delta)$-reverse almost projective.
\end{theorem}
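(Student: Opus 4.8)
The plan is to reduce the reverse-almost-projective property to the already-established almost-projective property of $\API_{\cP,D}^{\epsilon,\delta}$ (\cref{def:almost_projective}, \cref{thm:api_property}) by exhibiting a probability-preserving bijection between the measurement trajectories of the two API invocations. First I apply Jordan's lemma (\cref{lem:jordan}) to the two binary projective measurements that $\API$ alternates in \cref{fig:API_algo}: the measurement $\IsU=(\kbisu,\mat{I}-\kbisu)$ (acting on $\cH^{\qreg{R}}$, extended by identity on $\cH^{\qreg{H}}$), whose rank-one part I call $\Pi_v\seteq\kbisu\tensor\mat{I}$, and $\cproj_{\cP,D}$, whose rank-one part I call $\Pi_w\seteq\cproj_{\cP,D}^1$. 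This decomposes the space into one- and two-dimensional Jordan subspaces $S_j$ invariant under both $\Pi_v$ and $\Pi_w$, with eigenvalue $p_j$ in each two-dimensional block. The crucial observation is the identity recorded in the caption of \cref{fig:jordan_bernoulli_rev}: $\API_{\cPrev,D}$ alternates $\IsU$ with $\cproj_{\cPrev,D}=(\cproj_{\cPrev,D}^1,\cproj_{\cPrev,D}^0)$ where $\cproj_{\cPrev,D}^1=\cproj_{\cP,D}^0=\mat{I}-\Pi_w$, so it uses the pair $(\Pi_v,\mat{I}-\Pi_w)$. Since $\mat{I}-\Pi_w$ has the same eigenspaces as $\Pi_w$, the Jordan subspaces $S_j$ are unchanged, but in each two-dimensional block the relevant eigenvalue becomes $\abs{\braket{v_j}{w_j^\perp}}^2=1-p_j$ instead of $p_j$. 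Every measurement occurring inside either $\API_{\cP,D}$ or $\API_{\cPrev,D}$ is block-diagonal with respect to this common decomposition, and since $\ketisu\ket{\psi}$ lies in the $+1$ eigenspace of $\Pi_v$, within each $S_j$ it is proportional to $\ket{v_j}$ (with support only on $+1$-eigenvalue one-dimensional blocks); hence the outcome sequence (the ``trajectory'') determines the within-$S_j$ state path identically for all $j$. This is exactly the content of \cref{fig:jordan_bernoulli}, drawn for a single block but valid for all of them at once.

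Next I define the bijection $\tau\mapsto\tau'$ between trajectories of $\API_{\cP,D}$ on $\ketisu\ket{\psi}$ and trajectories of $\API_{\cPrev,D}$ on $\ketisu\ket{\psi}$: flip every outcome produced by a $\cproj$-step (Step~3a of \cref{fig:API_algo}) and keep every outcome produced by an $\IsU$-step (Step~3b), leaving the within-block basis labels unchanged. Using the relations between $\ket{v_j},\ket{w_j},\ket{v_j^\perp},\ket{w_j^\perp}$ displayed in the excerpt, I check block-by-block that this map is probability preserving: the transition ``$\cproj$, outcome $1$, $\ket{v_j}\to\ket{w_j}$, probability $p_j$'' corresponds to ``$\cproj_{\cPrev,D}$, outcome $0$, $\ket{v_j}\to\ket{w_j}$, probability $p_j$'', and symmetrically for the other transitions, so that summing the per-block weights against the fixed amplitudes of $\ketisu\ket{\psi}$ gives $\Pr_{\API_{\cP,D}}[\tau]=\Pr_{\API_{\cPrev,D}}[\tau']$; moreover the post-measurement states coincide, and the stopping rule (which only inspects the final $\IsU$-outcome) is preserved. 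Finally, flipping the $\cproj$-outcomes while keeping the $\IsU$-outcomes turns each consecutive agreement in the list $L$ into a consecutive disagreement and vice versa, so the count $t$ of agreements among the first $2T$ positions becomes $2T-t$, i.e.\ $\tlp(\tau')=1-\tlp(\tau)$ deterministically.

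Putting these together, running $\API_{\cP,D}^{\epsilon,\delta}$ and then $\API_{\cPrev,D}^{\epsilon,\delta}$ on $\ket{\psi}$ produces a pair $(x,y)$ whose joint distribution equals that of $(x,1-x')$, where $(x,x')$ is obtained by running $\API_{\cP,D}^{\epsilon,\delta}$ twice in a row on $\ket{\psi}$ (here I also use that the ancilla discard/re-initialization is harmless, since the post-state of the first run lies in the $+1$ eigenspace of $\Pi_v$, which is exactly $\{\ketisu\}\tensor\cH^{\qreg{H}}$, so re-initializing the ancilla restores the same state). By the $(\epsilon,\delta)$-almost-projective property of $\API_{\cP,D}^{\epsilon,\delta}$ in \cref{thm:api_property}, $\Pr[\abs{x-x'}\le\epsilon]\ge 1-\delta$, hence $\Pr[\abs{(1-x)-y}\le\epsilon]=\Pr[\abs{x'-x}\le\epsilon]\ge 1-\delta$, which is precisely the $(\epsilon,\delta)$-reverse-almost-projectivity required by \cref{def:rev_almost_projective}.

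The step I expect to be the main obstacle is making the trajectory argument rigorous for a genuine superposition over Jordan subspaces rather than a single block: one must verify that no $\cproj$- or $\IsU$-measurement ever mixes distinct Jordan subspaces, that conditioned on a fixed outcome sequence the surviving amplitude inside each $S_j$ picks up exactly matching magnitude \emph{and} phase under the bijection (so that the two post-states are literally equal, not merely equal up to block-wise phases), and that the bookkeeping of $L$ together with the ``repeat until $b_{2i}=1$'' stopping rule is symmetric under flipping the $\cproj$-outcomes. There is also a minor notational wrinkle to settle once and for all --- \cref{fig:API_algo} initializes $L$ to $(1)$ while Step~4 writes it as $(0,b_1,\dots,b_{2T})$ --- but the agreement/disagreement correspondence holds for either convention as long as the leading entry is the same in both runs.
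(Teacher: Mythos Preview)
Your proof is correct and takes a genuinely different route from the paper's. The paper re-analyzes the $\API$ run with $\cPrev$ from scratch: it uses the Jordan decomposition to reduce to a single block with eigenvalue $p_j$, argues that the successive run of $\API_{\cP,D}^{\epsilon,\delta}$ and $\API_{\cPrev,D}^{\epsilon,\delta}$ is equivalent to sampling $j$ and then flipping $2T$ biased coins with bias $p_j$ followed (after the ``repeat until $b_{2i}=1$'' phase) by $2T$ biased coins with bias $1-p_j$, and then applies two separate Hoeffding-type bounds $\Pr[\lvert\tlp_x-p_j\rvert\ge\epsilon/2]\le\delta/2$ and $\Pr[\lvert\tlp_y-(1-p_j)\rvert\ge\epsilon/2]\le\delta/2$, combining them by the triangle inequality. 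Your argument instead exhibits an exact, probability- and state-preserving bijection between trajectories of $\API_{\cP,D}$ and $\API_{\cPrev,D}$ that sends $\tlp\mapsto 1-\tlp$ deterministically, and then invokes the already-proved $(\epsilon,\delta)$-almost-projective property (\cref{thm:api_property}) as a black box. Your approach is more modular and avoids redoing the concentration analysis; the paper's approach is more self-contained and keeps the hidden eigenvalue $p_j$ explicit throughout. One simplification you might make: the bijection does not actually require the Jordan block-by-block verification or the phase bookkeeping you flag as the main obstacle. Since $\cproj_{\cPrev,D}^{b}=\cproj_{\cP,D}^{1-b}$ as operators, applying $\cproj_{\cPrev,D}$ and recording outcome $b$ is literally the same physical operation as applying $\cproj_{\cP,D}$ and recording outcome $1-b$; hence the unnormalized post-states along corresponding trajectories are identical as vectors in the full Hilbert space, with no per-block phase ambiguity to resolve. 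The Jordan machinery is therefore optional in your route (though not in the paper's).
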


\begin{proof}[Proof of~\cref{thm:API_is_rev_almost_projective}]
To analyze $\API_{\cP,D}^{\epsilon,\delta}$, 
we set $\Pi_w \seteq \cproj_{\cP,D}^1$, $\Pi_v \seteq \kbisu\tensor \mat{I}$, and apply~\cref{lem:jordan}.
Then, we have the following relationships:
\begin{align}
&& \cproj_{\cP,D}^{1} \ket{v_j} &= \sqrt{p_j}\ket{w_j}, & (\kbisu) \ket{w_j} &= \sqrt{p_j} \ket{v_j} & \\
&& \cproj_{\cP,D}^{0} \ket{v_j} &= \sqrt{1- p_j}\ket{w_j^\perp}, & (\mat{I} - \kbisu) \ket{w_j} &= \sqrt{1- p_j} \ket{v_j^\perp} &\\
&& \cproj_{\cP,D}^{1} \ket{v_j^\perp} &= \sqrt{1- p_j}\ket{w_j}, & (\kbisu) \ket{w_j^\perp} &= \sqrt{1-p_j} \ket{v_j} & \\
&& \cproj_{\cP,D}^{0} \ket{v_j^\perp} &= -\sqrt{p_j}\ket{w_j^\perp}, & (\mat{I} - \kbisu) \ket{w_j^\perp} &= -\sqrt{p_j} \ket{v_j^\perp} &,
\end{align}
where $\ket{w_j}$ and $\ket{v_j}$ are decompositions of $\Pi_w$ and $\Pi_v$, and $p_j=\abs{\bra{v_j}\ket{w_j}}^2$.

Suppose we apply $\API_{\cP,D}^{\epsilon,\delta}$ to a $\ket{\psi}$ on $\cH^{\qreg{H}}$.
We can write the initial state in~\cref{fig:API_algo} as $\ketisu\ket{\psi} = \sum_j \alpha_j \ket{v_j} +\alpha_j^\perp \ket{v_j^\perp}$ since $\setbk{\ket{v_j},\ket{v_j^\perp}}_j$ is a basis. We also have that $\kbisu^{\qreg{R}} \tensor \mat{I} =\sum_j \ket{v_j}\bra{v_j}$ since $\Pi_v$ projects onto $\ket{v_j}$ in each decomposed subspace $S_j$. It is easy to see that $(\kbisu \tensor\mat{I}) \ketisu \ket{\psi} =\ketisu\ket{\psi}$. Thus, for all $j$, $\alpha_j^\perp =0$.
Therefore, for any $\ket{\psi}$, we can write $\ketisu\ket{\psi} = \sum_j \alpha_j \ket{v_j}$.
As we see in~\cref{fig:API_algo}, when we run $\API_{\cP,D}^{\epsilon,\delta}(\ket{\psi})$, the initial state is $\ketisu \ket{\psi}= \sum_j \alpha_j \ket{v_j}$ and we apply $\cproj_{\cP,D}$ and $\IsU$ alternately.
Therefore, the quantum state $\ket{v_j}$ in each decomposed subspace $S_j$ changes as in~\cref{fig:jordan_bernoulli} when we run $\API_{\cP,D}^{\epsilon,\delta}(\ket{\psi})$.
%

Next, suppose we apply $\API_{\cPrev,D}^{\epsilon,\delta}$ to the quantum state immediately after applying $\API_{\cP,D}^{\epsilon,\delta}$ to $\ket{\psi}$.
$\API_{\cP,D}^{\epsilon,\delta}$ ensures that the final measurement is $\IsU$ and its result is $1$.
This means that the state going into the main loop of $\API_{\cPrev,D}^{\epsilon,\delta}$ (the third item in~\cref{fig:API_algo}) is identical to the state before $\cH^{\qreg{R}}$ is discarded at the application of $\API_{\cP,D}^{\epsilon,\delta}$ to $\ket{\psi}$.
By the definition of $\cPrev$, we have $\cproj_{\cP,D}^{b} = \cproj_{\cPrev,D}^{1-b}$ for $b\in\zo{}$.
Thus, the quantum state $\ket{v_j}$ in each decomposed subspace $S_j$ changes as in~\cref{fig:jordan_bernoulli_rev} when we apply $\API_{\cPrev,D}^{\epsilon,\delta}$.

From the above discussions, we can view a successive execution of $\API_{\cP,D}^{\epsilon,\delta}$ and $\API_{\cPrev,D}^{\epsilon,\delta}$ to $\ket{\psi}$ as the following single experiment.

\begin{itemize}
\item Sample $p_j$ from $\{p_j\}_j$ with the probability $\alpha_j^2$.
\item Flip $2T$ biased random coins whose probability of outputting $1$ is $p_j$.
\item Flip an even number of additional random coins until $0$ is found.
\item Flip $2T$ biased random coins whose probability of outputting $1$ is $1-p_j$.
\item Let $K$ be the overall list of coin flips.
\end{itemize}

Let $\tlp_x$ and $\tlp_y$ be the outcome of $\API_{\cP,D}^{\epsilon,\delta}$ and $\API_{\cPrev,D}^{\epsilon,\delta}$, respectively.
$\tlp_x$ is the fraction of $1$'s in the first $2T$ bits of $K$.
Also, $\tlp_y$ is the fraction of $1$'s in the last $2T$ bits of $K$.
Then, we have
\begin{align}
\Pr[\abs{\tlp_x - p_j} \ge \epsilon/2] &\le \delta/2 \label{eq:projective_bound}\\
\Pr[\abs{\tlp_y - (1-p_j)} \ge \epsilon/2] &\le \delta/2 \label{eq:reverse_projective_bound}.
\end{align}
It is easy to see that \cref{eq:reverse_projective_bound} is equivalent to
$\Pr[\abs{(1-\tlp_y) -p_j} \ge \epsilon/2] \le \delta/2$ due to $\abs{a} = \abs{-a}$. Therefore, by combining it with \cref{eq:projective_bound}, we obtain
\begin{align}
\Pr[\abs{\tlp_x - (1-\tlp_y)} \ge \epsilon] & \le \delta.
\end{align}


This completes the proof.
\end{proof}

\else
We show that the measurement algorithm$\API_{\cP,D}^{\epsilon,\delta}$ in~\cref{thm:api_property} also satisfies~\cref{def:rev_almost_projective} in \cref{sec:proof_rev_almost_projective}
\fi

\ifnum\submission=0

\subsection{Cryptographic Tools}\label{sec:crypto_tools}

\begin{definition}[Learning with Errors]\label{def:LWE}
Let $n,m,q \in \N$ be integer functions of the security parameter $\secp$. Let $\chi = \chi(\secp)$ be an error distribution over $\Z$.
The LWE problem $\LWE_{n,m,q,\chi}$ is to distinguish the following two distributions.
\[
D_0 \seteq \setbracket{(\mm{A},\mv{s}^{\intercal}\mm{A}+\mv{e}) \mid \mm{A} \chosen \Zq^{n\times m}, \mv{s}\chosen \Zq^{n}, \mv{e}\chosen \chi^m}  \text{ and } D_1 \seteq \setbracket{(\mm{A},\mv{u}) \mid \mm{A} \chosen \Zq^{n\times m}, \mv{u}\chosen \Zq^m}.
\]

When we say we assume the quantum hardness of the LWE problem or the QLWE assumption holds, we assume that for any QPT adversary $\qalgo{A}$, it holds that
\[
\abs{ \Pr[\qA (D_0)\out  1 ] - \Pr[\qA (D_1)\out 1]} \le \negl(\secp).
\]
\end{definition}


\begin{definition}[Pseudorandom Generator]\label{def:prg}
A pseudorandom generator (PRG) $\PRG: \zo{\secp} \ra \zo{\secp + \ell(\secp)}$ with stretch $\ell(\secp)$ ($\ell$ is some polynomial function) is a polynomial-time computable function that satisfies the following.
For any QPT adversary $\qA$, it holds that
\[\abs{\Pr[\qA(\PRG(s))=1 \mid s \chosen \cU_\secp]- \Pr[\qA(r)\mid r \chosen \cU_{\secp+\ell(\secp)}]}\le \negl(\secp),\]
where $\cU_{m}$ denotes the uniform distribution over $\zo{m}$.
\end{definition}

\begin{theorem}[\cite{SIAMCOMP:HILL99}]\label{thm:owf_prg}
If there exists a OWF, there exists a PRG.
\end{theorem}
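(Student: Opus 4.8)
The plan is to follow the \cite{SIAMCOMP:HILL99} blueprint (later streamlined by Holenstein and by Haitner--Reingold--Vadhan): extract from the one-way function an object whose true entropy is noticeably smaller than its computational (pseudo-)entropy, amplify that gap by parallel repetition, and then extract near-uniform bits with a universal hash function while outputting more bits than were consumed. As a warm-up, note that for a one-way \emph{permutation} $\pi$ the Goldreich--Levin hardcore predicate already gives a one-bit-stretch PRG $G(x,r)=(\pi(x),r,\langle x,r\rangle)$, which can be composed on the $\pi(x)$-block to reach any polynomial stretch; the difficulty for a general one-way function $f$ is that $f(\cU_\secp)$ need not be (close to) flat, so it may carry far fewer than $\secp$ bits of entropy and we cannot simply append hardcore bits. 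Throughout, every step is a straight-line black-box reduction that runs its adversary once without rewinding, so quantum hardness of $f$ is preserved and the resulting $\PRG$ meets \cref{def:prg}.

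Concretely I would proceed in three steps. \textbf{(1) Pseudoentropy.} Using the Goldreich--Levin machinery one shows that the block sequence $f(X),X_1,\dots,X_\secp$ with $X\gets\cU_\secp$ has next-block pseudoentropy noticeably exceeding its true Shannon entropy $\secp$; equivalently, one extracts from $f$ a ``false-entropy generator,'' an efficiently samplable distribution that is computationally indistinguishable from one with strictly larger min-entropy. \textbf{(2) Flattening and amplification.} Taking $k=\poly(\secp)$ independent copies, a Chernoff/typical-set argument makes the $k$-fold distribution essentially flat with (pseudo-)min-entropy that is known up to lower-order terms, and the pseudoentropy gap accumulates to $\omega(\log\secp)$; an extra hashing round ``equalizes'' entropy across blocks so the amount of extractable randomness is pinned down. \textbf{(3) Extraction.} The Leftover Hash Lemma then lets us hash the $k$-fold output down to its pseudo-min-entropy minus $\omega(\log\secp)$ bits, producing something computationally indistinguishable from uniform whose length strictly exceeds the input length once the universal-hash seeds are accounted for; a standard stretch-extension composition then yields $\PRG:\zo{\secp}\ra\zo{\secp+\ell(\secp)}$ for any desired polynomial $\ell$.

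The main obstacle is the entropy bookkeeping that links these steps: what one-wayness naturally provides is a gap in \emph{Shannon} pseudoentropy, whereas the Leftover Hash Lemma consumes \emph{min}-entropy, and the direct conversion between the two is far too lossy. This is precisely why the parallel-repetition/flattening step is needed, and it forces a delicate balancing of the number of copies $k$, the hash seed lengths, and the various statistical- and computational-closeness loss terms so that, at the end, the output length is provably larger than the input length. Checking that no step in this chain rewinds its adversary---so that the whole reduction remains sound against QPT distinguishers---is routine but should be verified explicitly.
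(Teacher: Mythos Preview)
The paper does not prove this theorem at all: it is stated as a citation to \cite{SIAMCOMP:HILL99} and used as a black box, with no accompanying argument. Your proposal therefore goes well beyond what the paper provides.

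As a standalone sketch of the HILL construction (with the Holenstein and Haitner--Reingold--Vadhan refinements), your outline is accurate in its broad strokes: the three-step pipeline of (1) extracting a pseudoentropy gap via Goldreich--Levin, (2) flattening by repetition, and (3) extracting with the Leftover Hash Lemma is indeed the standard route, and your identification of the Shannon-versus-min-entropy bookkeeping as the main technical obstacle is correct. Your remark that the reductions are straight-line and hence preserve quantum hardness is also a reasonable observation, though in the context of this paper it is not something the authors verify---they simply invoke the classical result and implicitly assume it holds against QPT adversaries. If you were writing this up in full you would want to be a bit more careful there, but for the purposes of matching the paper, a one-line citation is all that is expected.
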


\begin{definition}[Quantum-Accessible Pseudo-Random Function]\label{def:prf}
Let $\{\PRF_{K}: \bin^{\ell_1} \ra \allowbreak \bin^{\ell_2} \mid K \in \bin^\secp\}$ be a family of polynomially computable functions, where $\ell_1$ and $\ell_2$ are some polynomials of $\secp$.
We say that $\PRF$ is a quantum-accessible pseudo-random function (QPRF) family if for any QPT adversary $\qA$, it holds that
\begin{align}
\advt{\qA}{prf}(\secp)
= \abs{\Pr[\qA^{\ket{\PRF_{K}(\cdot)}}(1^\secp) \out 1 \mid K \gets \bit^{\secp}]
-\Pr[\qA^{\ket{\Rand(\cdot)}}(1^\secp) \out 1 \mid \Rand \gets \cU]
}\leq\negl(\secp),
\end{align}
where $\cU$ is the set of all functions from $\bit^{\ell_1}$ to $\bit^{\ell_2}$. 
\end{definition}

\begin{theorem}[\cite{FOCS:Zhandry12}]\label{thm:qprf}
If there exists a OWF, there exists a QPRF.
\end{theorem}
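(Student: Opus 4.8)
The statement to prove is \cref{thm:qprf}: if there exists a one-way function, then there exists a quantum-accessible pseudorandom function (QPRF). Let me sketch a proof plan.

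The plan is to reduce the QPRF claim to the classical GGM construction together with Zhandry's small-range distribution technique. First I would recall that by \cref{thm:owf_prg} a OWF yields a length-doubling PRG $G:\zo{\secp}\to\zo{2\secp}$, and the standard Goldreich--Goldwasser--Micali (GGM) tree construction turns $G$ into a (classically secure) PRF family $\PRF_K:\zo{\ell_1}\to\zo{\secp}$, where $K$ is the root seed and the output on input $x=x_1\cdots x_{\ell_1}$ is obtained by walking down the tree selecting the left or right half of $G$'s output according to each bit $x_i$. Output length can then be stretched to $\ell_2$ by another application of a PRG. So the only thing that needs a genuinely new argument is that GGM remains secure when the distinguisher $\qA$ is allowed \emph{superposition} queries $\ket{x}\ket{y}\mapsto\ket{x}\ket{y\oplus\PRF_K(x)}$.

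The key steps, in order, are: (1) Set up a hybrid over the $\ell_1$ levels of the GGM tree, where in hybrid $j$ the labels of all nodes at depth $j$ are replaced by truly random independent strings and the tree is evaluated honestly below depth $j$; hybrid $0$ is the real PRF and hybrid $\ell_1$ is a truly random function. (2) The difficulty is that at depth $j$ there are $2^j$ nodes, which is exponential, so one cannot naively guess which node a query ``touches.'' This is exactly where Zhandry's technique enters: instead of reasoning node-by-node, observe that a single quantum query in hybrid $j$ versus hybrid $j{+}1$ is distinguishing a function built from $2^j$ independent PRG-outputs from one built from $2^j$ independent uniform strings, and by a hybrid argument over the (at most polynomially many) queries together with the fact that a $q$-query quantum algorithm only ``sees'' a small-range distribution — i.e., its behavior is determined up to negligible error by the restriction of the function to a random set of $O(q^3)$ (or $O(q^2)$) points — one reduces to distinguishing $\poly$ many PRG images from $\poly$ many uniform strings, which is negligible by the PRG security (\cref{def:prg}). (3) Conclude by summing the $\ell_1\cdot\poly$ many negligible terms to bound $\advt{\qA}{prf}$.

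The main obstacle is step (2): making the ``exponentially many nodes'' issue go away. The clean way is simply to cite Zhandry's theorem (the GGM-is-a-QPRF result from \cite{FOCS:Zhandry12}, which is precisely \cref{thm:qprf}) and its small-range-distribution lemma, rather than re-deriving the oracle indistinguishability bound; the honest reduction is that a distinguisher for hybrids $j$ and $j{+}1$ that makes $q$ superposition queries yields, via the small-range machinery, a classical distinguisher between $r=\poly(q,\secp)$ samples of $G(\cU_\secp)$ and $r$ samples of $\cU_{2\secp}$, contradicting PRG security. Everything else — the GGM tree bookkeeping, the output-length stretch, and the final union bound over $O(\ell_1 \cdot q)$ hybrids — is routine. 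So in the write-up I would state the GGM construction explicitly, point to \cref{thm:owf_prg} for the PRG, and invoke Zhandry's analysis for the superposition-query security, noting that the number of hybrids and reduction loss are all polynomial so the overall advantage remains negligible.
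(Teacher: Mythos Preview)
The paper does not prove this theorem at all: it is stated in the preliminaries as a known result and simply cited from \cite{FOCS:Zhandry12}. So there is no ``paper's own proof'' to compare against; the paper treats \cref{thm:qprf} as a black-box fact.

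Your sketch is a faithful outline of Zhandry's original argument (GGM from a PRG, with the small-range distribution technique handling the exponential blow-up in the depth-$j$ hybrids under superposition queries), and indeed at the end you yourself say you would ``invoke Zhandry's analysis'' rather than re-derive it. That is exactly what the paper does, just more tersely. If you were writing this up for the present paper, the appropriate thing is a one-line citation, which is what the authors already have; your longer sketch is correct but unnecessary here.
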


\begin{definition}[Puncturable PRF]\label{def:pprf}
A puncturable PRF (PPRF) is a tuple of algorithms $\PuncPRF = (\prfgen, \prf,\Puncture)$ where $\{\prf_{K}: \bin^{\ell_1} \ra \zo{\ell_2} \mid K \in \zo{\secp}\}$ is a PRF family and satisfies the following two conditions. Note that $\ell_1$ and $\ell_2$ are polynomials of $\secp$.
    \begin{description}
        \item[Punctured correctness:] For any polynomial-size set $S \subseteq \zo{\ell_1}$ and any $x\in \zo{\ell_1} \setminus S$, it holds that
        \begin{align}
        \Pr[\prf_{K}(x) = \prf_{K_{\notin S}}(x)  \mid K \gets \prfgen(1^{\secp}),
        K_{\notin S} \gets \Puncture(K,S)]=1.
        \end{align}
        \item[Pseudorandom at punctured point:] For any polynomial-size set $S \subseteq\zo{\ell_1}$
        and any QPT distinguisher $\qA$, it holds that
        \begin{align}
        \vert
        \Pr[\qA(\prf_{K_{\notin S}},\{\prf_{K}(x_i)\}_{x_i\in S}) \out 1] -
        \Pr[\qA(\prf_{K_{\notin S}}, (\cU_{\ell_2})^{\abs{S}}) \out 1]
        \vert \leq \negl(\secp),
        \end{align}
        where $K\gets \prfgen(1^{\secp})$,
        $K_{\notin S} \gets \Puncture(K,S)$ and $\cU_{\ell_2}$ denotes the uniform distribution over $\zo{\ell_2}$.
    \end{description}
    If $S = \setbk{x^\ast}$ (i.e., puncturing a single point), we simply write $\prf_{\ne x^\ast}(\cdot)$ instead of $\prf_{K_{\notin S}}(\cdot)$.
\end{definition}


It is easy to see that the Goldwasser-Goldreich-Micali tree-based construction of PRFs (GGM PRF)~\cite{JACM:GolGolMic86} from \rmOWF yield puncturable PRFs where the size of the punctured key grows polynomially with the size of the set $S$ being punctured~\cite{AC:BonWat13,PKC:BoyGolIva14,CCS:KPTZ13}. Thus, we have:
\begin{theorem}[\cite{JACM:GolGolMic86,AC:BonWat13,PKC:BoyGolIva14,CCS:KPTZ13}]\label{thm:pprf-owf} If OWFs exist, then for any polynomials $\ell_1(\secp)$ and $\ell_2(\secp)$, there exists a PPRF that maps $\ell_1$-bits to $\ell_2$-bits.
\end{theorem}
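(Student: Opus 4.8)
The plan is to instantiate the puncturable PRF by the Goldreich--Goldwasser--Micali (GGM) tree construction and to observe that its tree structure supports puncturing. First I would apply \cref{thm:owf_prg} to the assumed OWF to obtain a length-doubling PRG $\prg:\zo{\secp}\ra\zo{2\secp}$, and write $\prg(s)=\prg_0(s)\concat\prg_1(s)$ with $\prg_0(s),\prg_1(s)\in\zo{\secp}$. For a key $K\in\zo{\secp}$ and an input $x=x_1\cdots x_{\ell_1}$, set $\prf_K(x)\seteq\prg_{x_{\ell_1}}(\cdots(\prg_{x_1}(K))\cdots)$, an element of $\zo{\secp}$; to reach output length $\ell_2$ I would compose with a further PRG of the needed stretch applied to the leaf label (or truncate when $\ell_2\le\secp$), using that $\prg'\circ\prf$ is again a PRF. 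Ordinary PRF security follows from a hybrid argument over the $\ell_1$ tree levels that, level by level from the root, replaces the node labels lying on the queried root-to-leaf paths by uniformly random strings, each step reducing to the security of $\prg$.

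Next I would define $\Puncture(K,S)$ for a polynomial-size $S\subseteq\zo{\ell_1}$ to output the minimal set of internal-node labels from which one can still evaluate the tree on every leaf outside $S$; concretely, for each $x\in S$ one releases the labels of the siblings of the nodes on the root-to-leaf path of $x$, pruning any released node that is an ancestor of another point of $S$. Punctured correctness is then immediate: for $x\notin S$, the path of $x$ first diverges from each path into $S$ at a node whose label is among those released, so the punctured key determines $\prg_{x_{\ell_1}}(\cdots(K)\cdots)$ and hence $\prf_K(x)$; and the number of released labels is at most $\ell_1\cdot\abs{S}=\poly(\secp)$. For pseudorandomness at the punctured points I would again run a hybrid over the tree levels, now replacing by uniform the labels on the paths to $S$ that are \emph{not} handed out, justified by $\prg$ security; in the final hybrid the values $\{\prf_K(x_i)\}_{x_i\in S}$ are uniform and independent of the punctured key, giving the required indistinguishability (and when the output is massaged by $\prg'$, the punctured-point values are pseudorandom rather than uniform, which still suffices).

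The only delicate point — and thus the main obstacle — is bookkeeping the hybrid argument when $S$ contains points whose paths share long prefixes: one must order the hybrids by $(\text{level},\text{node})$, swap out exactly one $\prg$ instance per step, and check both that the telescoping over the $\poly(\secp)\cdot\abs{S}$ hybrids is valid and that each intermediate reduction only ever queries $\prg$ on strings it is allowed to. The rest is routine, and the construction accommodates arbitrary polynomials $\ell_1,\ell_2$ since the tree has depth $\ell_1$ and the output-length adjustment handles $\ell_2$; this gives the theorem, in line with the cited works~\cite{JACM:GolGolMic86,AC:BonWat13,PKC:BoyGolIva14,CCS:KPTZ13}.
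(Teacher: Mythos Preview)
Your proposal is correct and follows exactly the approach the paper has in mind: the paper does not give a proof at all but simply cites the GGM construction together with~\cite{AC:BonWat13,PKC:BoyGolIva14,CCS:KPTZ13} for the observation that the tree structure yields puncturing with polynomial-size punctured keys. Your sketch fills in precisely those details (GGM tree, sibling-label puncturing, level-by-level hybrid), so there is nothing to add.
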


\begin{definition}[SKE]\label{def:ske_syntax}
An SKE scheme with plaintext space $\Ps = \setbk{\Ps_\secp}_{\secp \in \bbN}$ and ciphertext space $\Cs=\setbk{\Cs_{\secp}}_{\secp \in \bbN}$, where $\Cs_{\secp} \subseteq \zo{\ctlen}$ for some $\ctlen=\ctlen(\secp)$, is a tuple of three algorithms.
\begin{description}
\item[$\Gen(1^\secp) \ra k$:] The key generation algorithm takes as input the security parameter $\secp$, and outputs an encryption key $k$.
\item[$\Enc(k,m) \ra \ct$:] The encryption algorithm takes as input $k$ and a plaintext $m \in \Ps_{\secp}$, and outputs a ciphertext $\ct \in \Cs_{\secp}$.
\item[$\Dec(k,\ct) \ra m^\prime$:] The decryption algorithm takes as input $k$ and $\ct \in \Cs_{\secp}$, and outputs a plaintext $m^\prime \in \Ps_{\secp} \cup \setbk{\bot}$.
\item[Correctness:] An SKE scheme is correct if for all $\secp\in \bbN$ and $m\in \Ps_{\secp}$,
\[
\Pr[\Dec(k,\ct)=m \mid k \lrun \Gen(1^\secp,1^\kappa),\ct \lrun \Enc(k,m)]=1.
\]

\item[Sparseness:]
In this work, we also require that most strings are not valid ciphertexts under a randomly generated key of an SKE scheme:
\begin{align}
\Pr \left [
\Dec(k, c) \neq \bot ~\left|~
k \lrun \Gen(1^\secp), c \chosen \zo{\ctlen} \right.
\right ] \le \negl(\secp).
\end{align}

\end{description}
\end{definition}

\begin{definition}[Ciphertext Pseudorandomness for SKE]\label{def:ske_pseudorandomct}
An SKE scheme satisfies ciphertext pseudorandomness if for any (stateful) QPT $\qA$, it holds that
\[
2\abs{\Pr\left[
\qA^{\Enc(k,\cdot)}(\ct_b)=b
 \ \middle |
\begin{array}{rl}
 &1^\kappa \lrun \qA(1^\secp), k\lrun \Gen(1^\secp,1^\kappa),\\
 &m \lrun \qA^{\Enc(k,\cdot)}, b \chosen \zo{}, \\
 &\ct_0 \lrun \Enc(k,m), \ct_1 \chosen \zo{\ctlen}
\end{array}
\right] -\frac{1}{2}} \le  \negl(\secp).
\]
\end{definition}

\begin{theorem}\label{thm:pseudorandom_ske}
If OWFs exist, there exists an SKE scheme with sparseness and ciphertext pseudorandomness.
\end{theorem}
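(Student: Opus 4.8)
The plan is to build the SKE scheme from a PRF in the standard counter/nonce-based way, but with a structure that makes ciphertexts look uniform and simultaneously makes random strings almost never decrypt. By \cref{thm:pprf-owf} (or \cref{thm:qprf}), OWFs give us a quantum-accessible PRF family $\{\prf_K : \bit^n \ra \bit^{\secp+\ell}\}$ for parameters to be fixed. Define $\Gen(1^\secp)$ to sample $K \gets \bit^\secp$; to encrypt $m$, sample a random nonce $r \gets \bit^n$, compute $\prf_K(r) = (z\|\pad)$ where $z \in \bit^{\ell}$ masks $m$ (assume plaintexts have length $\ell$, padding shorter ones) and $\pad \in \bit^\secp$ is a ``tag'' portion, and output $\ct = (r, z \oplus m, \pad)$. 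Decryption recomputes $\prf_K(r)$, checks that the last $\secp$ bits equal $\pad$, and if so outputs $(z\oplus m)\oplus z$; otherwise outputs $\bot$. The ciphertext length is $\ctlen = n + \ell + \secp$.

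\textbf{Sparseness.} For a fixed key $k=K$, a string $c = (r, w, t) \in \bit^{\ctlen}$ decrypts to non-$\bot$ only if $t$ equals the last $\secp$ bits of $\prf_K(r)$. For each of the $2^n$ choices of $r$ there is exactly one good $t$, so the number of valid ciphertexts is at most $2^n \cdot 2^{\ell} = 2^{\ctlen - \secp}$, giving a $2^{-\secp} = \negl(\secp)$ fraction. Since this holds for every fixed $K$, it holds in particular over random $K$; no PRF security is even needed here, just counting.

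\textbf{Ciphertext pseudorandomness.} This is the step I expect to be the main (though still routine) obstacle, because of the encryption oracle. I would do a two-hybrid argument. In the first hybrid, replace $\prf_K$ everywhere — in the challenge encryption and in answering the adversary's $\Enc(k,\cdot)$ queries — by a truly random function $\Rand(\cdot)$; indistinguishability follows from \cref{thm:qprf} / QPRF security (the reduction simulates the oracle using its own oracle to $\prf_K$ or $\Rand$, and note the adversary only needs \emph{classical} oracle access to $\Enc$, which is easily simulated). In the second hybrid, observe that the challenge nonce $r^\ast$ collides with any nonce used to answer an oracle query only with probability at most $q/2^n = \negl(\secp)$ for polynomially many queries $q$; conditioned on no collision, $\Rand(r^\ast)$ is a freshly uniform $(\secp+\ell)$-bit string independent of the adversary's view, so $(r^\ast, z^\ast \oplus m, \pad^\ast)$ is distributed exactly uniformly over $\bit^{\ctlen}$ — identical to $\ct_1 \gets \bit^{\ctlen}$. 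Hence the adversary's guessing advantage is $\negl(\secp)$, which is the claimed bound. The main care point is the order of quantifiers in the collision bound and ensuring the oracle-query nonces are sampled independently of $r^\ast$, which they are by construction. Putting the three pieces together (construction, sparseness, pseudorandomness) yields the theorem.

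Finally, I would remark that correctness is immediate from the correctness of $\prf_K$: honest decryption always recomputes the same $\prf_K(r)$, the tag check passes, and the mask cancels. (One could alternatively invoke a generic transformation, but the direct PRF-based construction keeps sparseness transparent, which is why I prefer it here.)
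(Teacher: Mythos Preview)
Your proposal is correct and takes essentially the same approach as the paper: the paper also uses the standard PRF-based SKE augmented with a redundancy check for sparseness, writing the ciphertext as $(r,\PRF_k(r)\oplus(0^{\ell-\ell_\mu}\concat m))$ and having decryption verify that the first $\ell-\ell_\mu$ bits of the unmasked value are zero. Your variant that separates the PRF output into an explicit mask $z$ and tag $\pad$ is equivalent (your $\pad$ plays the role of the XOR of the zero-padding with the corresponding PRF-output bits), and your sparseness counting and two-hybrid pseudorandomness arguments are exactly what the paper has in mind when it says ``the well-known PRF-based SKE satisfies ciphertext pseudorandomness'' and that padding suffices for sparseness.
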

The well-known PRF-based SKE satisfies ciphertext pseudorandomness. However, we need padding for sparseness.
That is, a ciphertext is $(r,\PRF_k (r)\xor 0^{\ell -\ell_\mu}\concat m)$ where $r \in \zo{n}$ is randomness of encryption, $k$ is a PRF key, $\PRF: \zo{n}\ra \zo{\ell}$ is a PRF, and $\abs{m}=\ell_\mu$. We check that the first $\ell-\ell_\mu$ bits of $m^\prime = \Dec(k,(c_1,c_2))$ equals to $0^{\ell-\ell_\mu}$. If $\ell$ is sufficiently long, the scheme has sparseness.


%
%

\begin{definition}[Constrained PRF (Syntax)]\label{def:cprf_syntax}
A constrained PRF (CPRF) with domain $\Domprf$, range $\Ranprf$, and constraint family $\Fs = \setbk{\Fs_{\secp,\kappa}}_{\secp,\kappa\in \bbN}$ where $\Fs_{\secp,\kappa}= \setbk{f \colon \Domprf \ra \zo{}}$ is a tuple of four algorithms.
\begin{description}
\item[$\Setup(1^\secp,1^\kappa) \ra \msk$:] The setup algorithm takes as input the security parameter $\secp$ and a constraint-family parameter $\kappa$, and outputs a master PRF key $\msk$.
\item[$\constrain(\msk,f)\ra \sk_f$:] The constrain algorithm takes as input $\secp$ and a constraint $f\in \Fs_{\secp,\kappa}$, and outputs a constrained key $\sk_f$.
\item[$\Eval(\msk,x) \ra y$:] The evaluation algorithm takes as input $\msk$ and an input $x \in \Domprf$, and outputs a value $y\in \Ranprf$.
\item[$\CEval(\sk_f,x) \ra y$:] The constrained evaluation algorithm takes as input $\sk_f$ and $x \in \Domprf$, and outputs a value $y \in \Ranprf$.
\end{description}
\end{definition}

\begin{definition}[Security for CPRF]\label{def:cprf_security}
A private CPRF should satisfy correctness, pseudorandomness, and privacy.
\begin{description}
\item[Correctness:] A CPRF is correct if for any (stateful) QPT adversary $\qA$, it holds that
\[
\Pr\left[
\begin{array}{rl}
&\Eval(\msk,x) \ne \CEval(\sk_f,x)\\
&\land \ x\in\Domprf \land f(x)=0
\end{array}
 \ \middle |
\begin{array}{rl}
 &(1^\kappa,f) \lrun \qA(1^\secp),\\
 & \msk\lrun \Setup(1^\secp,1^\kappa),\\
 &\sk_f \lrun \Constrain(\msk,f), \\
 &x \lrun \qA^{\Eval(\msk,\cdot)}(\sk_f)
\end{array}
\right]\le  \negl(\secp).
\]
\item[Selective single-key pseudorandomness:]
A CPRF is selectively single-key pseudorandom if for any (stateful) QPT adversary $\qA$, it hods that
\[
2\abs{\Pr\left[
\begin{array}{rl}
&\qA^{\Eval(\msk,\cdot)}(y_b)=b\\
&\land \ x\notin \cQ_e \\
&\land f(x)\ne0
\end{array}
 \ \middle |
\begin{array}{rl}
 &(1^\kappa,f) \lrun \qA(1^\secp),\\
 & \msk\lrun \Setup(1^\secp,1^\kappa),\\
 & sk_f \lrun \Constrain(\msk,f)\\
 &x \lrun \qA^{\Eval(\msk,\cdot)}(\sk_f)\\
 &y_0 \seteq \Eval(\msk,x),y_1\chosen \Ranprf,\\
 &b\chosen \zo{}
\end{array}
\right] -\frac{1}{2}} \le \negl(\secp),
\]
where $\cQ_e$ is the sets of queries to $\Eval(\msk,\cdot)$.

\item[Selective single-key privacy:]
A CPRF is selectively single-key private if for any (stateful) QPT adversary $\qA$, there exists a stateful PPT simulator $\Sim= (\Sim_1,\Sim_2)$ that satisfying that
\[
2\abs{\Pr\left[
\qA^{\cO_b(\cdot)}(\sk_b)=b
 \ \middle |
\begin{array}{rl}
 &(1^\kappa,f) \lrun \qA(1^\secp),\\
 & \msk\lrun \Setup(1^\secp,1^\kappa), b\chosen \zo{},\\
 &\sk_0 \lrun \Constrain(\msk,f),\\
 &(\stinfo_\Sim,\sk_1) \lrun \Sim_1(1^\kappa,1^\secp)
\end{array}
\right] -\frac{1}{2}} \le  \negl(\secp),
\]
where $\cO_0(\cdot) \seteq \Eval(\msk,\cdot)$ and $\cO_1(\cdot) \seteq \Sim_2(\stinfo_\Sim,\cdot,f(\cdot))$.
\end{description}
We say that a CPRF is a selectively single-key private CPRF if it satisfies correctness, selective single-key pseudorandomness, and selective single-key privacy.
\end{definition}

\begin{theorem}[\cite{TCC:BTVW17,PKC:PeiShi18}]\label{thm:pcprf_lwe}
If the QLWE assumption holds, there exists a selectively signle-key private CPRF for polynomial-size classical circuits.
\end{theorem}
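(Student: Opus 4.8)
The statement is quoted from prior work (Brakerski--Tsabary--Vaikuntanathan--Wee and Peikert--Shiehian), so the plan is to recall how one builds a constraint-hiding single-key constrained PRF from LWE, organized around the \emph{shift-hiding shiftable function} (SHSF) abstraction. An SHSF has a master key $\msk$ defining a function $F_{\msk}\colon\Domprf\to\Zq^{w}$ together with, for every efficiently computable \emph{shift} $H\colon\Domprf\to\Zq^{w}$, a shifted key $\sk_{H}$ such that: (i) near-correctness, $F_{\sk_{H}}(x)\approx F_{\msk}(x)+H(x)$ for all $x$ up to small additive error; (ii) shift-hiding, $\sk_{H}\cind\sk_{0}$; and (iii) pseudorandomness of the rounded value $\lfloor F_{\msk}(x)\rceil_{p}$ even given the public description. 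I would first establish the reduction from private CPRF to SHSF, then construct the SHSF from $\LWE$ (\cref{def:LWE}).

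\emph{From SHSF to private CPRF.} To constrain on a circuit $f\in\Fs_{\secp,\kappa}$, sample a puncturable PRF key $K$ (which exists by \cref{thm:pprf-owf,thm:qprf} since $\LWE$ implies OWFs) and define the shift $H_{f}$ by $H_{f}(x)=0$ if $f(x)=0$ and $H_{f}(x)=\prf_{K}(x)$ otherwise; set $\sk_{f}:=\sk_{H_{f}}$, $\Eval(\msk,x):=\lfloor F_{\msk}(x)\rceil_{p}$, and $\CEval(\sk_{f},x):=\lfloor F_{\sk_{H_{f}}}(x)\rceil_{p}$. Near-correctness of the SHSF plus a rounding argument gives $\CEval(\sk_{f},x)=\Eval(\msk,x)$ whenever $f(x)=0$, which is the correctness required by \cref{def:cprf_security}. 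For selective single-key pseudorandomness, note that at the challenge point $x^{\ast}$ (with $f(x^{\ast})\ne 0$) the only $f$-dependent information is $F_{\msk}(x^{\ast})+\prf_{K}(x^{\ast})$; pseudorandomness of $\prf_{K}$ at the punctured point $x^{\ast}$ (\cref{def:pprf}) together with property (iii) then hides $\lfloor F_{\msk}(x^{\ast})\rceil_{p}$. For single-key privacy, the simulator $\Sim_{1}$ runs an honest setup, outputs the zero-shift key $\sk_{0}$ as the fake constrained key, and $\Sim_{2}$ answers $\Eval$ queries with the honestly generated $\msk$; since $\sk_{0}$ is independent of $f$ and shift-hiding makes it indistinguishable from $\sk_{H_{f}}$, the simulated game is indistinguishable from the real one.

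\emph{From LWE to SHSF.} Here I would use the lattice key-homomorphic machinery: uniform public matrices $\mathbf{B}_{1},\dots,\mathbf{B}_{t}$ encoding the $t$-bit description of the shift, matrices $\mathbf{A}_{1},\dots,\mathbf{A}_{n}$ encoding the $n$ input bits, a gadget matrix $\mathbf{G}$, and a secret $\mathbf{s}\in\Zq^{m}$. Evaluating the universal circuit $U$ (which on a description $d$ and input $x$ outputs $H_{d}(x)$) homomorphically on the matrices with the $\mathbf{B}$-slots programmed to $0^{t}$ yields a matrix $\mathbf{A}_{U,x,0}$, and one sets $F_{\msk}(x):=\mathbf{s}^{\intercal}\mathbf{A}_{U,x,0}$. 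The shifted key $\sk_{H}$ is the public matrices together with LWE samples $\mathbf{s}^{\intercal}(\mathbf{B}_{i}-d_{i}\mathbf{G})+\mathbf{e}_{i}$ for the description $d$ of $H$; running the same homomorphic evaluation on these encodings produces $\mathbf{s}^{\intercal}(\mathbf{A}_{U,x,d}-H(x)\mathbf{G})+\mathbf{e}'$, from which the shifted value $F_{\msk}(x)+H(x)$ is recovered up to the accumulated error $\mathbf{e}'$, giving (i). Shift-hiding (ii) is a direct $\LWE$ argument: $\{\mathbf{s}^{\intercal}(\mathbf{B}_{i}-d_{i}\mathbf{G})+\mathbf{e}_{i}\}_{i}$ is pseudorandom, hence indistinguishable from the $d=0^{t}$ encodings. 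Pseudorandomness (iii) follows by replacing $\mathbf{s}^{\intercal}\mathbf{A}_{U,x,0}+\mathbf{e}$ with uniform via $\LWE$ and then applying the standard lemma that LWE samples stay pseudorandom after rounding, using the tree-like structure of the input encodings to argue over the exponentially many inputs.

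\emph{Main obstacle.} The delicate point is noise control: the error $\mathbf{e}'$ accumulated during homomorphic evaluation scales with the depth of $U$, so to keep the rounding in (i) correct while keeping the $\LWE$ modulus-to-noise ratio polynomial---so that the plain QLWE assumption of \cref{def:LWE} suffices---one represents constraint circuits by low-depth objects (branching programs via Barrington, i.e.\ $\mathsf{NC}^{1}$) and exploits the asymmetric noise growth of right-multiplication by $\mathbf{G}^{-1}(\cdot)$; reaching all of $\mathsf{P}/\mathsf{poly}$ rather than $\mathsf{NC}^{1}$ requires an additional bootstrapping-type argument or a larger modulus. Verifying that this blow-up stays within the parameters, and that the extra error introduced when programming the shift breaks neither shift-hiding nor correctness, is the main technical work; the remaining steps are routine bookkeeping over the reductions above.
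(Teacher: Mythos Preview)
The paper does not prove this statement; it is imported wholesale from \cite{TCC:BTVW17,PKC:PeiShi18}, so there is no in-paper argument to compare against. Your outline correctly reconstructs the Peikert--Shiehian route via shift-hiding shiftable functions, and both the SHSF-to-CPRF reduction and the LWE-to-SHSF sketch are broadly faithful to those works.

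There is, however, a genuine gap in your privacy argument. You have $\Sim_1$ output the zero-shift key and $\Sim_2$ answer every query with $\Eval(\msk',\cdot)$, \emph{ignoring} the bit $f(x)$ that the definition hands to $\Sim_2$. This simulation is distinguishable: in the simulated world $\CEval(\sk_0,x)=\lfloor F_{\msk'}(x)\rceil_p$ coincides with the oracle on every input, whereas in the real world $\CEval(\sk_{H_f},x)=\lfloor F_{\msk}(x)+\prf_K(x)\rceil_p$ disagrees with $\Eval(\msk,x)=\lfloor F_{\msk}(x)\rceil_p$ at any $x$ with $f(x)=1$; an adversary that checks this equality at one constrained point wins. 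Shift-hiding alone does not close the gap, because the shift-hiding challenger withholds $\msk$, so a reduction cannot simulate $\lfloor F_{\msk}(\cdot)\rceil_p$ directly. The fix (and what the cited works do) is to let $\Sim_2$ actually use $f(x)$: $\Sim_1$ samples its own $(\msk',K')$ and still outputs a zero-shift key, but $\Sim_2(x,b)$ returns $\lfloor F_{\msk'}(x)-b\cdot\prf_{K'}(x)\rceil_p$, reproducing the constrained-point inconsistency; the reduction to shift-hiding then simulates the oracle from the challenge shift key via near-correctness as $\lfloor F_{\sk}(x)-f(x)\cdot\prf_K(x)\rceil_p$, using its own $K$. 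A related subtlety appears in your pseudorandomness sketch: invoking punctured-PRF security for $K$ at $x^\ast$ presupposes that $\sk_{H_f}$ computationally hides $K$, which itself requires a shift-hiding step you left implicit.
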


\begin{definition}[PKE]\label{def:pke_syntax}
A PKE with plaintext space $\Ps =\setbk{\Ps_\secp}_{\secp\in \bbN}$, ciphertext space $\Cs= \setbk{\Cs_\secp}_{\secp\in\bbN}$ is a tuple of three algorithms.
\begin{description}
\item[$\Gen(1^\secp)\ra (\pk,\sk)$:] The key generation algorithm takes as input the security parameter $\secp$ and outputs a key pair $(\pk,\sk)$.
\item[$\Enc(\pk,m)\ra \ct$:] The encryption algorithm takes as input $\pk$, a plaintext $m\in\Ps$, and outputs a ciphertext $\ct\in \Cs$.
\item[$\Dec(\sk,\ct)\ra m^\prime/\bot$:] The decryption algorithm takes as input $\sk$ and $\ct \in \Cs$, and outputs a plaintext $m^\prime \in \Ps$ or $\bot$.
\item[Correctness:] A PKE scheme is correct if for all $\secp \in \bbN$ and $m\in \Ps_{\secp}$, it holds that
\[
\Pr[\Dec(\sk,\ct)=m \mid (\pk,\sk) \lrun \Gen(1^\secp),\ct \lrun \Enc(\pk,m)]=1.
\]
\end{description}
\end{definition}

\begin{definition}[CCA Security for PKE]\label{def:pke_CCA}
A PKE scheme is CCA secure if for any (stateful) QPT adversary $\qA$, it holds that
\[
2\abs{\Pr\left[
\begin{array}{rl}
&b^\ast=b\\
\land &\ct_b\notin\cQ
\end{array}
 \ \middle |
\begin{array}{rl}
 &(\pk,\sk) \lrun \Gen(1^\secp),\\
 & (m_0,m_1)\lrun \qA^{\Dec(\sk,\cdot)}(1^\secp,\pk),\\
 &b\chosen \zo{}, \ct_b \lrun \Enc(\pk,m_b),\\
 & b' \lrun \qA^{\Dec(\sk,\cdot)}(1^\secp,\ct_b)
\end{array}
\right] -\frac{1}{2}} \le  \negl(\secp),
\]
where $\cQ$ is the set of queries to $\Dec(\sk,\cdot)$ after $\qA$ is given $\ct_b$.

\end{definition}

\begin{theorem}[\cite{STOC:Peikert09}]\label{thm:CCA_QLWE}
If the QLWE assumption holds, there exists a PKE scheme that satisfies CCA security.
\end{theorem}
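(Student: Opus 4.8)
\textbf{The plan is} to follow the by-now standard route from the quantum-hard LWE assumption to CCA-secure public-key encryption, and then to observe that the entire security reduction is classical, so it composes with a QPT adversary. Concretely, I would first build a family of all-but-one lossy trapdoor functions (equivalently, a tag-based encryption scheme) from $\LWE_{n,m,q,\chi}$ in the style of Peikert--Waters: for a tag $t$ from an exponentially large tag space, the encryption procedure outputs $(\mat{A}\mathbf{s} + \mathbf{e}\,,\, (\mat{B} + t\cdot\mat{G})\mathbf{s} + \mathbf{e}' + \Encode(m))$, where $\mat{A}$ carries a lattice trapdoor, $\mat{G}$ is a fixed full-rank matrix, and $\mat{B}$ is arranged so that exactly one ``punctured'' tag $t^\ast$ makes the second block lossy while every other tag admits trapdoor inversion. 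Correctness follows from the standard LWE noise-bound argument (the accumulated error is $\poly(n)\cdot\|\chi\|$ and $q$ is chosen large enough), so honest ciphertexts decrypt correctly except with negligible probability. One then applies the Canetti--Halevi--Katz/Boneh--Katz transformation: a ciphertext is $(\vk, c, \sigma)$ where $\vk$ is a fresh verification key of a strongly unforgeable one-time signature used as the tag $t=\vk$, $c$ is the tag-based ciphertext, and $\sigma$ signs $c$; decryption rejects on signature-verification failure and otherwise uses the trapdoor for tag $\vk$. The one-time signature is instantiated from one-way functions (hence from LWE) via Lamport/Winternitz, and is quantum-safe by the reductions underlying \cref{thm:owf_prg,thm:qprf} carried out against QPT adversaries.

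For CCA security (\cref{def:pke_CCA}) I would argue through a short hybrid sequence. Hybrid $0$ is the real game. In Hybrid $1$ the challenger sets up the tag-based scheme punctured at $t^\ast=\vk^\ast$, the verification key embedded in the challenge ciphertext; this is only a syntactic change, and the challenger still holds trapdoors for every tag $t\neq t^\ast$, so it answers all decryption queries --- a query reusing $\vk^\ast$ with a new valid signature is rejected by strong one-time unforgeability, which reduces to the signature scheme by a classical reduction. In Hybrid $2$ the matrix $\mat{B}+t^\ast\mat{G}$ is switched from its trapdoored/pseudorandom form to a genuinely lossy form; indistinguishability of Hybrid $1$ and Hybrid $2$ is exactly an instance of $\LWE_{n,m,q,\chi}$ (\cref{def:LWE}), via a \emph{classical} PPT reduction $\cB$ that embeds its LWE challenge into the public key and simulates everything else using the one-time signature and the trapdoors it generates itself. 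Finally, in Hybrid $2$ the challenge block at tag $t^\ast$ is lossy and statistically independent (up to negligible statistical distance) of $\Encode(m_b)$, so the adversary's advantage is negligible by a leftover-hash/smoothing argument. Chaining the bounds yields the claim.

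\textbf{The main obstacle} is the classical-to-quantum bookkeeping rather than any new idea: one must check that the reductions above never rewind or measure the adversary's internal quantum state and only ever feed it classical public keys, challenge ciphertexts, and classical answers to its (classical) decryption queries. Since that is the case, a QPT CCA-adversary $\qA$ with non-negligible advantage composes with the classical reductions into a QPT algorithm distinguishing the two distributions of \cref{def:LWE} with non-negligible advantage, contradicting the quantum hardness of LWE; likewise the one-time-signature step is broken by a QPT forger only if OWFs fail to be quantum-one-way, again contradicting LWE. The one genuinely delicate spot is calibrating $(n,m,q,\chi)$ so that the punctured/lossy trapdoor setup simultaneously supports (a) correctness on honest ciphertexts, (b) trapdoor decryption at all non-challenge tags, and (c) enough residual min-entropy at the challenge tag for the statistical step --- which is precisely what the construction of \cite{STOC:Peikert09} together with the lattice trapdoor machinery provides.
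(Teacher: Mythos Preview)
The paper does not give its own proof of this theorem: it is stated purely as a cited result from \cite{STOC:Peikert09}, with no accompanying argument. So there is nothing substantive in the paper to compare your proposal against.

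That said, your sketch is a reasonable and essentially correct account of the standard route from LWE to CCA-secure PKE, and your observation that all the reductions are black-box and non-rewinding (hence compose with a QPT adversary) is exactly the point needed to upgrade the classical result to the QLWE setting. One minor remark: the specific construction you describe, with the tag entering as $\mat{B}+t\cdot\mat{G}$ against a gadget matrix, is closer to the later Micciancio--Peikert trapdoor framework and the Agrawal--Boneh--Boyen style tag-based schemes than to Peikert's original STOC~2009 paper, which instead builds CCA security via lossy and correlation-secure trapdoor functions (in the Peikert--Waters and Rosen--Segev paradigms) instantiated from LWE. Either route yields the theorem, and your high-level hybrid argument and quantum bookkeeping apply equally well to both; but if you want to match the citation precisely, the lossy/correlated-product TDF path is the one actually taken in \cite{STOC:Peikert09}.
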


\begin{definition}[Indistinguishability Obfuscator~\cite{JACM:BGIRSVY12}]\label{def:io}
A PPT algorithm $\iO$ is a secure \rmIO for a classical circuit class $\{\cC_\secp\}_{\secp \in \bbN}$ if it satisfies the following two conditions.

\begin{description}
\item[Functionality:] For any security parameter $\secp \in \bbN$, circuit $C \in \cC_\secp$, and input $x$, we have that
\begin{align}
\Pr[C^\prime(x)=C(x) \mid C^\prime \lrun \iO(C)] = 1\enspace.
\end{align}

\item[Indistinguishability:] For any PPT $\Sampler$ and QPT distinguisher $\qD$, the following holds:

If $\Pr[\forall x,\ C_0(x)=C_1(x) \mid (C_0,C_1,\aux)\gets \Sampler(1^\secp)]> 1 - \negl(\secp)$, then we have
 \begin{align}
\adva{\iO,\qD}{io}(\secp) &\seteq \left|
 \Pr\left[\qD(\iO(C_0),\aux) = 1 \mid (C_0,C_1,\aux)\gets \Sampler(1^\secp) \right] \right.\\
&~~~~~~~ \left. - \Pr\left[\qD(\iO(C_1),\aux)= 1\mid (C_0,C_1,\aux)\gets \Sampler(1^\secp)  \right] \right|
  \leq \negl(\secp).
 \end{align}
\end{description}
\end{definition}

There are a few candidates of secure IO for polynomial-size classical circuits against quantum adversaries~\cite{TCC:BGMZ18,TCC:CHVW19,EC:AgrPel20,myTCC:DQVWW21}. In some candidates~\cite{EC:WeeWic21,STOC:GayPas21}, the assumptions behind the constructions were found to be false~\cite{C:HopJaiLin21}.

\else\fi


\section{Definition of Quantum Watermarking}\label{sec:def_Q_watermarking}
We introduce definitions for watermarking PRFs against quantum adversaries in this section.
\subsection{Syntax and Pseudorandomness}\label{sec:watermarking_syntax}
\begin{definition}[Watermarking PRF]\label{def:watermarking_prf}
A watermarking PRF $\WMPRF$ for the message space $\M \seteq \zo{\msglen}$ with domain $\Domprf$ and range $\Ranprf$ is a tuple of five algorithms $(\Setup,\Gen,\Eval,\Mark,\qExtract)$.
\begin{description}

 \item[$\Setup(1^\secp)\ra (\pp,\xk)$:] The setup algorithm takes as input the security parameter and outputs a public parameter $\pp$ and an extraction key $\xk$.
 \item[$\Gen(\pp)\ra(\prfk,\iop)$:] The key generation algorithm takes as input the public parameter $\pp$ and outputs a PRF key $\prfk$ and a public tag $\iop$.
 \item[$\Eval(\prfk,x)\ra y$:] The evaluation algorithm takes as input a PRF key $\prfk$ and an input $x \in \Domprf$ and outputs $y \in \Ranprf$.
\item[$\Mark (\pp,\prfk, \msg) \ra \tlC$:] The mark algorithm takes as input the public parameter $\pp$, a PRF key $\prfk$, and a message $\msg \in \bit^\msglen$, and outputs a marked evaluation circuit $\tlC$.

\item[$\qExtract(\xk,\iop,\qstate{C}^{\prime},\epsilon) \ra \msg^\prime$:] The extraction algorithm takes as input an extraction key $\xk$, a tag $\iop$, a quantum circuit with classical inputs and outputs $\qstate{C}'=(\qstateq,\mat{U})$, and a parameter $\epsilon$, and outputs $\msg^\prime$ where $\msg' \in \bit^\msglen \cup \{\unmarked\}$. 
\end{description}

\begin{description}
 \item[Evaluation Correctness:] For any message $\msg \in \bit^\msglen$, it holds that
\[
 \Pr\left[\widetilde{C}(x)=\Eval(\prfk,x)~\left|~ \begin{array}{c} (\pp,\xk) \chosen \Setup(1^\secp)\\ (\prfk,\iop)\gets\Gen(\pp)\\ \tlC \gets\Mark (\pp,\prfk,\msg)\\x\gets\Domprf\end{array}\right.\right] \geq 1 - \negl(\secp).
\]
\end{description}
\end{definition}

\begin{remark}[On extraction correctness]
Usually, a watermarking PRF scheme is required to satisfy extraction correctness that ensures that we can correctly extract the embedded mark from an honestly marked circuit.
However, as observed by Quach et al.~\cite{TCC:QuaWicZir18}, if we require the extraction correctness to hold for a randomly chosen PRF key, it is implied by unremovability defined below.
Note that the unremovability defined below considers a distinguisher as a pirate circuit. However, it implies the extraction correctness since we can easily transform an honestly marked circuit into a successful distinguisher.
Thus, we do not explicitly require a watermarking PRF scheme to satisfy extraction correctness in this work.
\end{remark}

\begin{remark}[On public marking]
We consider only watermarking PRFs with public marking as in~\cref{def:watermarking_prf} since we can achieve public marking by default.
The reason is as follows. Suppose that we generate $\pp$, $\xk$, and a marking key $\mk$ at the setup.
When we generate a PRF key and a public tag at $\Gen$, we can first generate $(\pp^\prime,\xk^\prime,\mk^\prime)\gets \Setup(1^\secp)$ from scratch (ignoring the original $(\pp,\xk,\mk)$) and set a PRF key $\widehat{\prfk} \seteq (\prfk^\prime,\mk^\prime)$ and a public tag $\widehat{\iop}\seteq (\pp^\prime,\xk^\prime,\iop^\prime)$ where $(\prfk^\prime,\iop^\prime) \gets \Gen(\pp^\prime)$. That is, anyone can generate a marked circuit from $\widehat{\prfk} =(\prfk^\prime,\mk^\prime)$ by $\Mark(\mk^\prime,\prfk^\prime,\msg)$. Therefore, we consider public marking by default in our model.
\end{remark}

\begin{remark}[On private marking]\label{rmrk:private_marking}
We might prefer private marking in some settings since we might want to prevent adversaries from forging a watermarked PRF.
We can convert watermarking PRFs in~\cref{def:watermarking_prf} into ones with private marking by using signatures.
Below, we assume that a PRF key $\prfk$ includes its public tag $\iop$ since it does not harm security.
At the setup phase, we also generate a signature key pair $(\vk,\sk)\gets \SIG.\Gen(1^\secp)$ and set a mark key $\mk^\prime \seteq (\vk,\sk)$ and an extraction key $\xk^\prime \seteq (\xk,\vk)$. To embed a message $\msg$ into $\prfk$, we generate a signature $\sigma \gets \SIG.\Sign(\sk,\iop \concat \msg)$ and generate $\tlC \gets \Mark(\pp,\prfk, \msg\concat \sigma)$. To extract a message, we run $\msg^\prime \gets \qExtract(\xk,\tau,\qstate{C}^\prime,\epsilon)$, parse $\msg^\prime = \msg \concat \sigma$, and run $\SIG.\Vrfy(\vk,\iop \concat \msg,\sigma)$. If the verification result is $\top$, we output $\msg$. This conversion is the same as what Goyal et al.~\cite{C:GKMWW19} proposed.
Adversaries cannot forge a signature for $\tau^\ast\concat \msg^\ast \ne \tau\concat \msg$ by the unforgeability of $\SIG$.
Intuitively, if an adversary can forge a watermarked PRF whose functionality is different from those of watermarked PRFs given from a mark oracle, $\tau^\ast \ne \tau$ should hold since public tags are related to PRF keys. This breaks the unforgeability of $\SIG$.
Thus, we expect that adversaries cannot break the unforgeability of watermarking. However, we do not formally define watermarking unforgeability against quantum adversaries since it is not a scope of this work. We leave it as future work.
\end{remark}

\paragraph{Discussion on syntax.}
\Cref{def:watermarking_prf} is a natural quantum variant of classical watermarking PRFs except that the key generation algorithm outputs a public tag $\iop$, and the extraction algorithm uses it. Such a public tag is not used in previous works on watermarking PRFs~\cite{SIAMCOMP:CHNVW18,myJC:KimWu21,TCC:QuaWicZir18,C:KimWu19,AC:YALXY19}. A public tag should not harm watermarking PRF security.
We justify using $\iop$ as follows.

First, we need to obtain many pairs of input and output to extract an embedded message from a marked PRF in almost all known (classical) watermarking constructions~\cite{SIAMCOMP:CHNVW18,PKC:BonLewWu17,myJC:KimWu21,TCC:QuaWicZir18,C:KimWu19,AC:YALXY19,C:GKMWW19,TCC:Nishimaki20}. This is because we must check whether a tested PRF circuit outputs particular values for particular inputs which \emph{depends on the target PRF} (such particular inputs are known as marked points).
Suppose marked points are fixed and do not depend on a PRF that will be marked. In that case, an adversary can easily remove an embedded message by destroying functionalities at the fixed marked points that could be revealed via a (non-target) marked PRF that an adversary generated.
Recall that we consider the public marking setting.
The attack was already observed by Cohen et al.~\cite{SIAMCOMP:CHNVW18}.

Second, we consider a stronger adversary model than that in most previous works as the definition of traceable PRFs by Goyal et al.~\cite{myAC:GKWW21}.
An adversary outputs a distinguisher-based pirate circuit in our security definition rather than a pirate circuit that computes an entire output of a PRF. This is a refined and realistic model, as Goyal et al.~\cite{myAC:GKWW21} argued (and we explain in~\cref{sec:related_work}).
In this model, we cannot obtain a valid input-output pair from a pirate circuit anymore. Such a pair is typical information related to a target PRF.
Goyal et al. resolve this issue by introducing a tracing key that is generated from a target PRF. Note that parameters of watermarking ($\pp$ and $\xk$) should \emph{not} be generated from a PRF since we consider many different PRF keys in the watermarking PRF setting.

Thus, if we would like to achieve an extraction algorithm and the stronger security notion simultaneously, an extraction algorithm should somehow take information related to a target PRF as input to correctly extract an embedded message.
In the weaker adversary model, an extraction algorithm can easily obtain many valid input and output pairs by running a tested circuit many times. However, in the stronger distinguisher-based pirate circuit model, a pirate circuit outputs a single decision bit.

To resolve this issue, we introduce public tags. We think it is natural to have information related to the original PRF key in an extraction algorithm. In reality, we check a circuit when a user claims that her/his PRF key (PRF evaluation circuit) is illegally used. Thus, it is natural to expect we can use a user's public tag for extraction. This setting resembles watermarking for public-key cryptographic primitives, where a user public key is available in an extraction algorithm. In addition, public tags do not harm PRF security in our constructions. It is unclear whether we can achieve unremovability in the stronger distinguisher-based model without any syntax change (even in the classical setting).
\footnote{Even if we consider the weaker adversary model, the same issue appears in the quantum setting in the end. If we run a quantum circuit for an input and measure the output, the measurement could irreversibly alter the quantum state and we lost the functionality of the original quantum state.
That is, there is no guarantee that we can correctly check whether a tested quantum circuit is marked or not \emph{after} we obtain a single valid pair of input and output by running the circuit. However, as we explained above, we want to obtain information related to a target PRF for extraction. Thus, we need a public tag in the syntax in either case.}

\paragraph{Extended pseudorandomness.}
We consider extended weak pseudorandomness, where weak pseudorandomness holds even if the adversary generates $\pp$.
This notion is the counterpart of extended pseudorandomness by Quach et al.~\cite{TCC:QuaWicZir18}, where pseudorandomness holds in the presence of the extraction oracle. However, our pseudorandomness holds even against an authority unlike extended pseudorandomness by Quach et al. since we allow adversaries to generate a public parameter.
\begin{definition}[Extended Weak Pseudorandomness against Authority]\label{def:extednded_weak_pseudorandomness}
To define extended weak pseudorandomness for watermarking PRFs, we define the game $\expb{\qA,\WMPRF}{ext}{wprf}(\secp)$ as follows.
\begin{enumerate}
\item $\qA$ first sends $\pp$ to the challenger.
\item The challenger generates $(\prfk,\iop) \gets \Gen(\pp)$ and sends $\iop$ to $\qA$.
\item  The challenger chooses $\coin\gets\bit$. $\qA$ can access to the following oracles.
\begin{description}
\item[$\Oracle{wprf}$:] When this is invoked (no input), it returns $(a,b)$ where $a\chosen \Domprf$ and $b\seteq \Eval(\prfk,a)$.
\item[$\Oracle{chall}$:] When this is invoked (no input), it returns:
\begin{itemize}
 \item $(a,b)$ where $a\chosen \Domprf$ and $b\seteq \Eval(\prfk,a)$ if $\coin =0$, 
 \item $(a,b)$ where $a\chosen \Domprf$ and $b\chosen \Ranprf$ if $\coin=1$.
 \end{itemize}
 This oracle is invoked only once.
\end{description}
\item When $\qA$ terminates with output $\coin^\prime$, the challenger outputs $1$ if $\coin=\coin^\prime$ and $0$ otherwise.
\end{enumerate}

We say that $\WMPRF$ is extended weak pseudorandom if for every QPT $\qA$, we have
\begin{align}
\advb{\qA,\WMPRF}{ext}{wprf}(\secp)=
2\abs{\Pr[
\expb{\qA,\WMPRF}{ext}{wprf}(\secp)=1
] -\frac{1}{2}} =\negl(\secp).
\end{align}

\end{definition}

\subsection{Unremovability against Quantum Adversaries}
We define unremovability for watermarking PRFs against quantum adversaries.
\ifnum\submission=1
We first define quantum program with classical inputs and outputs and then define unremovability.
\begin{definition}[Quantum Program with Classical Inputs and Outputs~\cite{C:ALLZZ21}]\label{def:Q_program_C_IO}
A quantum program with classical inputs is a pair of quantum state $\qstateq$ and unitaries $\setbk{\mat{U}_x}_{x\in[N]}$ where $[N]$ is the domain, such that the state of the program evaluated on input $x$ is equal to $\mat{U}_x \qstateq \mat{U}_x^\dagger$. We measure the first register of $\mat{U}_x \qstateq \mat{U}_x^\dagger$ to obtain an output. We say that $\setbk{\mat{U}_x}_{x\in[N]}$ has a compact classical description $\mat{U}$ when applying $\mat{U}_x$ can be efficiently computed given $\mat{U}$ and $x$.
\end{definition}
\else\fi

\begin{definition}[Unremovability for private extraction]\label{def:unrem_prf}
We consider the public marking and secret extraction setting here. Let $\epsilon\geq0$.
We define the game $\expt{\qA,\WMPRF}{nrmv}(\secp,\epsilon)$ as follows.
\begin{enumerate}
\item The challenger generates $(\pp,\xk) \chosen \Setup(1^\secp)$ and gives $\pp$ to the adversary $\qA$.
$\qA$ send $\msg\in\bit^\msglen$ to the challenger.
The challenger generates $(\prfk,\iop)\gets\Gen(\pp)$, computes $\tlC\gets\Mark(\pp,\prfk,\msg)$, and sends $\iop$ and $\tlC$ to $\qA$.

\item $\qA$ can access to the following oracle.
\begin{description}
\item[$\Oracle{ext}$:] On input $\iop'$ and a quantum circuit $\qstate{C}$, it returns $\qExtract(\xk,\qstate{C},\iop',\epsilon)$.
\end{description}

\item Finally, the adversary outputs a ``pirate'' quantum circuit $\pirateC =(\qstateq,\mat{U})$, where $\pirateC$ is a quantum program with classical inputs and outputs whose first register (i.e., output register) is $\bbC^2$ and $\mat{U}$ is a compact classical description of $\{\mat{U}_{x,y}\}_{x\in\Domprf,y\in\Ranprf}$.
\end{enumerate}

Let $D$ be the following distribution.
\begin{description}
\item[$D$:]Generate $b\gets\bit$, $x\gets\Domprf$, and $y_0\gets\Ranprf$. Compute $y_1\gets\Eval(\prfk,x)$. Output $(b,x,y_b)$.
\end{description}
We also let $\cP=(\mat{P}_{b,x,y},\mat{Q}_{b,x,y})_{b,x,y}$ be a collection of binary outcome projective measurements, where
\begin{align}
\mat{P}_{b,x,y}=\mat{U}_{x,y}^\dagger\ket{b}\bra{b}\mat{U}_{x,y}\textrm{~~~~and~~~~}\mat{Q}_{b,x,y}=\mat{I}-\mat{P}_{b,x,y}.
\end{align}
Moreover, we let $\cM_D=(\mat{P}_D,\mat{Q}_D)$ be binary outcome POVMs, where
\begin{align}
\mat{P}_D=\sum_{r\in\calR}\frac{1}{\abs{\calR}}\mat{P}_{D(r)}\textrm{~~~~and~~~~}\mat{Q}_D =\mat{I}-\mat{P}_D.
\end{align}

\begin{description}
\item[$\Live$:] When applying the measurement $\projimp(\cM_D)$ to $\qstateq$, we obtain a value $p$ such that $p\geq\frac{1}{2}+\epsilon$.
\item[$\GoodExt$:] When Computing $\msg^\prime \lrun \qExtract(\xk,\pirateC,\iop,\epsilon)$, it holds that $\msg^\prime \ne \unmarked$.
\item[$\BadExt$:] When Computing $\msg^\prime \lrun \qExtract(\xk, \pirateC,\iop,\epsilon)$, it holds that $\msg^\prime \notin \{\msg,\unmarked\}$.
\end{description}

We say that $\WMPRF$ satisfies unremovability if for every $\epsilon>0$ and QPT $\qA$, we have
\begin{align}
\Pr[\BadExt]\leq\negl(\secp)\textrm{~~~~and~~~~}\Pr[\GoodExt]\geq\Pr[\Live]-\negl(\secp).
\end{align}
\end{definition}

Intuitively, $(\mat{P}_{b,x,y},\mat{Q}_{b,x,y})$ is a projective measurement that feeds $(x,y)$ to $\pirateC$ and checks whether the outcome is $b$ or not (and then uncomputes).
Then, $\cM_D$ can be seen as POVMs that results in $0$ with the probability that $\pirateC$ can correctly guess $b$ from $(x,y_b)$ for $(b,x,y_b)$ generated randomly from $D$.

\begin{remark}[On attack model]\label{remark:attack_model}
We check whether $\pirateC$ correctly distinguishes a real PRF value from a random value or not by applying $\projimp(\cM_D)$ to $\qstateq$.
This attack model follows the refined and more realistic attack model by Goyal et al.~\cite{myAC:GKWW21}. The adversary outputs a pirate circuit that computes an entire PRF value in all previous works except their work.

The distinguisher-based pirate circuit model is compatible with the (quantum) pirate decoder model of traitor tracing.
Thus, our attack model also follows the attack model of quantum traitor tracing (the black box projection model) by Zhandry~\cite[Section 4.2]{TCC:Zhandry20}.\footnote{In the watermarking setting, an extraction algorithm can take the description of a pirate circuit as input (corresponding to the software decoder model~\cite[Section 4.2]{TCC:Zhandry20}), unlike the black-box tracing model of traitor tracing. However, we use a pirate circuit in the black box way for our extraction algorithms. Thus, we follow the black box projection model by Zhandry~\cite{TCC:Zhandry20}.}

As in the traitor tracing setting~\cite{TCC:Zhandry20}, $\projimp(\cM_D)$ is inefficient in general. We can handle this issue as Zhandry did. We will use an approximate version of $\projimp(\cM_D)$ to achieve an efficient reduction.
In addition, we cannot apply both $\projimp(\cM_D)$ and $\qExtract$ to $\pirateC$ simultaneously.
However, the condition $\Pr[\GoodExt]\geq\Pr[\Live]-\negl(\secp)$ claims that an embedded mark cannot be removed as long as the pirate circuit is alive.
This fits the spirit of watermarking.
See Zhandry's paper~\cite[Section 4]{TCC:Zhandry20} for more discussion on the models.
\end{remark}

\begin{remark}[On selective message]\label{remark:na_ada_message_watermarking}
As we see in~\cref{def:unrem_prf}, we consider the selective setting for private extraction case, where $\qA$ must send the target message $\msg$ to the challenger before $\qA$ accesses to the oracle $\Oracle{ext}$ and after $\pp$ is given. This is the same setting as that by Quach et al.~\cite{TCC:QuaWicZir18}. We can consider the fully adaptive setting, where $\qA$ can send the target message $\msg$ after it accesses to the oracle $\Oracle{ext}$, as Kim and Wu~\cite{C:KimWu19}. However, our privately extractable watermarking PRF satisfies only selective security. Thus, we write only the selective variant for the private extraction case.
\end{remark}

\begin{definition}[Unremovability for Public Extraction]\label{def:pub_ext_unrem_prf}
This is the same as~\cref{def:unrem_prf} except we use the game $\expc{\qA,\WMPRF}{pub}{ext}{nrmv}(\secp,\epsilon)$ defined in the same way as $\expt{\qA,\WMPRF}{nrmv}(\secp,\epsilon)$ except the following differences.
\begin{itemize}
\item In item 1, $\qA$ is given $\xk$ together with $\pp$.
\item Item 2 is removed.
\end{itemize}
\end{definition}

\newcommand{\KeyExtract}{\mathsf{KeyExt}}
\newcommand{\challenge}{\mathsf{ch}}

\section{Definition of Extraction-Less Watermarking}\label{sec:def_extless_watermarking}

We introduce the notion of extraction-less watermarking PRF as an intermediate primitive towards watermarking PRFs secure against quantum adversaries.

\subsection{Syntax and Pseudorandomness}
\begin{definition}[Extraction-Less Watermarking PRF]\label{def:extless_watermarking_prf}
An extraction-less watermarking PRF $\WMPRF$ for the message space $\bit^\msglen$ with domain $\Domprf$ and range $\Ranprf$ is a tuple of five algorithms $(\Setup,\Gen,\Eval,\Mark,\Sim)$, where the first four algorithms have the same input/output behavior as those defined in \cref{def:watermarking_prf} and $\Sim$ has the following input/output behavior.
\begin{description}



\item[$\Sim(\xk,\iop,i)\ra (\gamma,x,y)$:]The simulation algorithm $\Sim$ takes as input the extraction key $\xk$, a tag $\iop$, and an index $i$, and outputs a tuple $(\gamma,x,y)$.

\end{description}

\begin{description}
 \item[Evaluation Correctness:] It is defined in exactly the same way as the evaluation correctness for watermarking PRF defined in \cref{def:watermarking_prf}.

\end{description}
\end{definition}

\paragraph{Extended pseudorandomness.}
Extended pseudorandomness for extraction-less watermarking PRF is defined in exactly the same way as that for watermarking PRF, that is \cref{def:extednded_weak_pseudorandomness}.

\subsection{Simulatability for Mark-Dependent Distributions (SIM-MDD Security)}\label{sec:extraction-less_watermarking_simmdd}

We introduce the security notion for extraction-less watermarking PRF that we call simulatability for mark-dependent distributions.
Let $D$ and $\Drev$ be the following distributions.
\begin{description}
\item[$D$:]Generate $b\gets\bit$, $x\gets\Domprf$, and $y_0\gets\Ranprf$. Compute $y_1\gets\Eval(\prfk,x)$. Output $(b,x,y_b)$.
\item[$\Drev$:]Generate $(b,x,y)\gets D$. Output $(1\oplus b,x,y)$.
\end{description}
Namely, $D$ is the distribution that outputs a random value if the first bit $b=0$ and a PRF evaluation if the first bit $b=1$, and $\Drev$ is its opposite (i.e., a PRF evaluation if $b=0$ and a random value if $b=1$).
SIM-MDD security is a security notion that guarantees that an adversary given $\tlC\gets\Mark(\mk,\prfk,\msg)$ cannot distinguish an output of $\Sim(\xk,\iop,i)$ from that of $D$ if $\msg[i]=0$ and from that of $\Drev$ if $\msg[i]=1$.

\begin{definition}[SIM-MDD Security with Private Simulation]\label{def:SIM-MDD}
To define SIM-MDD security with private simulation, we define the game $\expt{i^\ast,\qA,\WMPRF}{sim\textrm{-}mdd}(\secp)$ as follows, where $i^\ast\in[\msglen]$.
\begin{enumerate}
\item The challenger generates $(\pp,\xk) \chosen \Setup(1^\secp)$ and sends $\pp$ to $\qA$.
$\qA$ sends $\msg\in \bit^\msglen$ to the challenger.
The challenger generates $(\prfk,\iop)\gets\Gen(\pp)$ and computes $\tlC\gets\Mark(\mk,\prfk,\msg)$.
The challenger sends $\iop$ and $\tlC$ to $\qA$.
\item $\qA$ can access to the following oracle.
\begin{description}
\item[$\Oracle{sim}$:] On input $\iop'$ and $i'\in[\msglen]$, it returns $\Sim(\xk,\iop',i')$.
\end{description}

\item Let $\Dreal{i^\ast}$ be the following distribution. Note that $\Dreal{i^\ast}$ is identical with $D$ if $\msg[i^\ast]=0$ and with $\Drev$ if $\msg[i^\ast]=1$.
\begin{description}
\item[$\Dreal{i^\ast}$:]Generate $\gamma\gets\bit$ and $x\gets\Domprf$. Then, if $\gamma=\msg[i^\ast]$, generate $y\gets\Ranprf$, and otherwise, compute $y\gets\Eval(\prfk,x)$. Output $(\gamma,x,y)$.
\end{description}
The challenger generates $\coin\gets\bit$.
If $\coin=0$, the challenger samples $(\gamma,x,y)\gets\Dreal{i^\ast}$.
If $\coin=1$, the challenger generates $(\gamma,x,y)\gets\Sim(\xk,\iop,i^\ast)$.
The challenger sends $(\gamma,x,y)$ to $\qA$.

\item When $\qA$ terminates with output $\coin^\prime$, the challenger outputs $1$ if $\coin=\coin^\prime$ and $0$ otherwise.
\end{enumerate}
Note that $\qA$ is not allowed to access to $\Oracle{sim}$ after $\qA$ is given $(\gamma,x,y)$.

We say that $\WMPRF$ is SIM-MDD secure if for every $i^\ast\in[\msglen]$ and QPT $\qA$, we have
\begin{align}
\adva{i^\ast,\qA,\WMPRF}{sim\textrm{-}mdd}(\secp)=
2\abs{\Pr[
\expt{i^\ast,\qA,\WMPRF}{sim\textrm{-}mdd}(\secp)=1
] -\frac{1}{2}} =\negl(\secp).
\end{align}
\end{definition}

We consider the selective setting above as unremovability for private extraction in~\cref{def:unrem_prf} since we use SIM-MDD security with private simulation to achieve unremovability for private simulation.


\begin{remark}[On multi challenge security]
We can prove that the above definition implies the multi-challenge variant where polynomially many outputs of $\Sim(\xk,\iop,i^\ast)$ are required to be indistinguishable from those of $\Dreal{i^\ast}$.
This is done by hybrid arguments where outputs of $\Sim(\xk,\iop,i^\ast)$ are simulated using $\Oracle{sim}$ and those of $\Dreal{i^\ast}$ are simulated using $\tlC$.
To apply \cref{cor:cind_sample_api}, we need the multi challenge variant. However, we consider the single challenge variant due to the implication above.
A similar remark is applied to the variants of SIM-MDD security introduced below.
\end{remark}

\paragraph{SIM-MDD security with private simulation under the $\API$ oracle.}
Let the $\API$ oracle be an oracle that is given $(\epsilon,\delta,\iop',i')$ and a quantum state $\qstateq$, and returns the result of $\API^{\epsilon,\delta}_{\cP,\D_{\iop',i'}}(\qstateq)$ and the post measurement state, where $\cP$ is defined in the same way as that in \cref{def:unrem_prf} and $\D_{\iop',i'}$ be the distribution that outputs randomly generated $(\gamma,x,y)\gets\Sim(\xk,\iop',i')$.
The $\API$ oracle cannot be simulated using the simulation oracle $\Oracle{sim}$ since we need superposition of outputs of $\Sim$ to compute $\API^{\epsilon,\delta}_{\cP,\D_{\iop',i'}}(\qstateq)$.
When constructing watermarking PRFs with private simulation from extraction-less watermarking PRFs, the underlying extraction-less watermarking PRF scheme needs to satisfy SIM-MDD security with private simulation under the $\API$ oracle that we call QSIM-MDD security with private simulation.
The reason is as follows.
In the security analysis of the construction, the indistinguishability guarantee provided by SIM-MDD security needs to hold for an adversary against the resulting watermarking scheme who can access the extraction oracle.
This means that it also needs to hold for an adversary who can access the $\API$ oracle since $\API$ is repeatedly invoked in the extraction algorithm of the resulting scheme.

Fortunately, as we will see, we can generically convert an extraction-less watermarking PRF scheme satisfying SIM-MDD security with private simulation into one satisfying QSIM-MDD security with private simulation, using QPRFs.
Thus, when realizing an extraction-less watermarking PRF scheme as an intermediate step towards privately extractable watermarking PRFs, we can concentrate on realizing one satisfying SIM-MDD security with private simulation.

\begin{remark}
There is a similar issue in the traitor tracing setting. If PLBE is a secret-key based one, we need a counterpart of QSIM-MDD in secret-key based PLBE to achieve traitor tracing with a secret tracing algorithm against quantum adversaries by using Zhandry's framework~\cite{TCC:Zhandry20}. Note that Zhandry focuses on public-key based PLBE in his work~\cite{TCC:Zhandry20}.
\end{remark}

\begin{definition}[QSIM-MDD Security with Private Simulation]\label{def:qsim-mdd}
Let $\D_{\iop,i}$ be a distribution defined as follows.
\begin{description}
\item[$\D_{\iop,i}$:] Output $(\gamma,x,y)\gets\Sim(\xk,\iop,i)$.
\end{description}
Then, we define the game $\expc{i^\ast,\qA,\WMPRF}{q}{sim}{mdd}(\secp)$ in the same way as $\expb{i^\ast,\qA,\WMPRF}{sim}{mdd}(\secp)$ except that in addition to $\Oracle{sim}$, $\qA$ can access to the following oracle in the step 2.
\begin{description}
\item[$\Oracle{api}$:]On input $(\epsilon,\delta,\iop',i')$ and a quantum state $\qstateq$, it returns the result of $\API^{\epsilon,\delta}_{\cP,\D_{\iop',i'}}(\qstateq)$ and the post measurement state, where $\cP$ is defined in the same way as that in \cref{def:unrem_prf}.
\end{description}

We say that $\WMPRF$ is QSIM-MDD secure with private simulation if
for every $i^\ast\in[\msglen]$ and QPT $\qA$, we have
\begin{align}
\advc{i^\ast,\qA,\WMPRF}{q}{sim}{mdd}(\secp)=
2 \abs{\Pr[
\expc{i^\ast,\qA,\WMPRF}{q}{sim}{mdd}(\secp)=1
]- \frac{1}{2}} =\negl(\secp).
\end{align}
\end{definition}

We have the following theorem.
\begin{theorem}\label{thm:qelwmprf-from-elwmprf}
Assume there exists an extraction-less watermarking PRF scheme satisfying SIM-MDD security with private simulation and a QPRF.
Then, there exists an extraction-less watermarking PRF scheme satisfying QSIM-MDD security with private simulation.
\end{theorem}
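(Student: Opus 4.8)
The goal is to upgrade SIM-MDD security (indistinguishability holds against adversaries with classical access to the simulation oracle $\Oracle{sim}$) to QSIM-MDD security (indistinguishability holds even when the adversary additionally has the $\Oracle{api}$ oracle, which runs $\API^{\epsilon,\delta}_{\cP,\D_{\iop',i'}}$ on a submitted quantum state). The obstruction is that $\Oracle{api}$ requires coherent superposition access to outputs of $\Sim$, so it cannot be simulated from $\Oracle{sim}$. The fix is the standard ``derandomize the simulator with a QPRF'' trick: replace the internal randomness of $\Sim$ by the output of a quantum-accessible PRF keyed by a secret key that is folded into $\xk$. Concretely, the new scheme keeps $\Setup,\Gen,\Eval,\Mark$ essentially unchanged, samples an extra QPRF key $K$ at $\Setup$ and appends it to $\xk$, and defines $\Sim'((\xk,K),\iop,i)$ to run $\Sim(\xk,\iop,i; \PRF_K(\iop\|i\|\mathsf{ctr}))$ for a fresh counter $\mathsf{ctr}$ (or, since the definition allows fresh internal randomness per call, a freshly sampled salt that is included in the output, so that repeated queries on the same $(\iop,i)$ still give independent-looking outputs). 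Evaluation correctness and extended pseudorandomness are inherited immediately since $\Sim$ is not used in those properties.

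**Key steps.** First I would give the construction precisely, making sure $\Sim'$ is a deterministic function of $(\xk,K,\iop,i,\mathsf{salt})$ where $\mathsf{salt}$ is either fresh randomness returned as part of the output or a call counter maintained in state; either choice works, and I would pick the stateless-with-salt version so that the $\Oracle{api}$ oracle's distribution $\D'_{\iop',i'}$ is efficiently sampleable as required by \cref{thm:api_property}. Second, for the security reduction, fix $i^\ast$ and a QPT adversary $\qA$ against $\expc{i^\ast,\qA,\WMPRF'}{q}{sim}{mdd}$. I would move through a short hybrid sequence: (i) the real QSIM-MDD game with $\Sim'$; (ii) replace $\PRF_K$ everywhere (inside answers to $\Oracle{sim}$, inside the distribution $\D'_{\iop',i'}$ used by $\Oracle{api}$, and inside the challenge generation when $\coin=1$) by a truly random function $\Rand$ — this step is justified by QPRF security (\cref{def:prf}), and it is crucial that $\Oracle{api}$ only needs \emph{oracle} access to the function computing $\Sim$'s outputs, so a QPT reduction to QPRF can implement the whole game using its own (quantum-accessible) oracle for $\PRF_K$ versus $\Rand$; (iii) now every output of $\Sim$ uses independent fresh randomness, so $\Oracle{sim}$ is answerable with ordinary (classical) calls to the original $\Sim$, and $\Oracle{api}$ is answerable because $\D'_{\iop',i'}$ is now exactly the distribution $\D_{\iop',i'}$ of the original scheme, which is an efficiently sampleable classical distribution — hence $\API$ can be run directly by the reduction without any extra secret. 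Third, in hybrid (iii) I would invoke SIM-MDD security of the underlying scheme, but against the \emph{multi-challenge} variant and moreover with the $\Oracle{api}$ oracle folded into the distinguisher: the point is that a distinguisher for hybrid (iii)'s challenge is just a QPT algorithm that internally answers all $\Oracle{sim}$ and $\Oracle{api}$ queries on its own (it has $\xk$? — no, it doesn't; see the obstacle below), so it reduces to an ordinary SIM-MDD distinguisher.

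**The main obstacle.** The subtle point — and where I expect the real work to be — is the interplay with \cref{cor:cind_sample_api}: to argue that $\Oracle{api}$ queries do not help $\qA$, one needs the indistinguishability of $\D_{\iop',i'}$ from $D$ (resp. $\Drev$) to hold against distinguishers who can also \emph{construct the quantum state} submitted to $\API$, and here that state may itself have been built by $\qA$ using its marked circuit $\tlC$ and earlier oracle answers. This is exactly the caveat flagged in the remark after \cref{cor:cind_sample_api} (``the indistinguishability of $D_0$ and $D_1$ needs to hold against distinguishers who can construct $\qstateq$''). So the SIM-MDD guarantee I invoke in hybrid (iii) must be strong enough that it holds even relative to an $\Oracle{api}$-type oracle — but that is precisely what we are \emph{trying} to prove, so there is a potential circularity. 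The resolution is that the base scheme's SIM-MDD security is proven from a falsifiable assumption (CCA-secure PKE / private constrained PRF under LWE) by a reduction that simulates $\tlC$ (via its oracle or via puncturing), and that same reduction can additionally run $\API$ on any QPT adversary's state using only public/simulatable information plus the challenge — i.e., SIM-MDD from those assumptions automatically lifts to the $\Oracle{api}$-augmented setting \emph{once the simulator is derandomized}. Hence the QPRF step must come \emph{first} (to make $\D_{\iop',i'}$ an efficiently sampleable distribution with no hidden dependence on $\xk$'s secret parts other than through $\Sim$'s public behavior), and only then is the SIM-MDD reduction applied; I would state this ordering carefully and verify that in hybrid (iii) the reduction to the base assumption genuinely needs nothing beyond what the base SIM-MDD reduction already has. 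The bookkeeping of hybrid indices (one challenge, polynomially many $\Oracle{sim}$ and $\Oracle{api}$ queries, collapsed by the multi-challenge remark and a union bound) is then routine.

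\begin{proof}[Proof sketch]
Let $\ELWMPRF=(\Setup,\Gen,\Eval,\Mark,\Sim)$ satisfy SIM-MDD security with private simulation, and let $\PRF=\{\PRF_K\}$ be a QPRF. Define $\ELWMPRF'$ with the same $\Gen,\Eval,\Mark$, with $\Setup'(1^\secp)$ running $(\pp,\xk)\gets\Setup(1^\secp)$, $K\gets\{0,1\}^\secp$ and outputting $(\pp,\xk':=(\xk,K))$, and with $\Sim'((\xk,K),\iop,i)$ sampling a fresh $\mathsf{salt}\gets\{0,1\}^\secp$, computing $(\gamma,x,y):=\Sim(\xk,\iop,i;\PRF_K(\iop\|i\|\mathsf{salt}))$, and outputting $(\gamma,x,y)$ (the salt may be appended to $y$ if the syntax requires it to be recoverable; this does not affect the argument). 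Evaluation correctness and extended pseudorandomness are inherited verbatim. Fix $i^\ast$ and QPT $\qA$. In $\Game_0=\expc{i^\ast,\qA,\WMPRF'}{q}{sim}{mdd}$, the challenger uses $\PRF_K$ in $\Oracle{sim}$, in the distribution $\D'_{\iop',i'}$ underlying $\Oracle{api}$, and in the $\coin=1$ challenge. $\Game_1$ replaces every use of $\PRF_K$ by a truly random function $\Rand$; since all uses are through evaluating $\PRF_K$ on classical points and the whole game (including running $\API$ via \cref{thm:api_property}) is QPT relative to such an oracle, $\abs{\Pr[\Game_0=1]-\Pr[\Game_1=1]}\le\negl(\secp)$ by QPRF security. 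In $\Game_1$, because all salts are distinct with overwhelming probability, every invocation of $\Sim$ uses independent fresh randomness, so $\D'_{\iop',i'}$ equals the base distribution $\D_{\iop',i'}$ (an efficiently sampleable classical distribution), $\Oracle{sim}$ is answered by classical calls to $\Sim(\xk,\cdot,\cdot)$, and $\Oracle{api}$ is answered by running $\API^{\epsilon,\delta}_{\cP,\D_{\iop',i'}}$ directly. Thus $\Game_1$ is an instance of the base SIM-MDD game in which the distinguisher has additionally been handed $\Oracle{api}$; since the SIM-MDD reduction to the underlying falsifiable assumption already simulates $\tlC$ (and hence can simulate any QPT-prepared state submitted to $\API$) and needs no further secret once $\Sim$ uses fresh randomness, the indistinguishability guarantee of \cref{cor:cind_sample_api} applies to all $\Oracle{api}$ answers and the challenge, giving $\abs{\Pr[\Game_1=1]-\tfrac12}\le\negl(\secp)$ via a hybrid over the (polynomially many) oracle calls and the multi-challenge form of SIM-MDD security. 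Combining the bounds yields $\advc{i^\ast,\qA,\WMPRF'}{q}{sim}{mdd}(\secp)=\negl(\secp)$.
\end{proof}
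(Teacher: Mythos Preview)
Your construction is essentially the paper's (derandomize $\Sim$ with a QPRF key folded into $\xk$), and your first hybrid step (replace $\PRF_K$ by a truly random function $\Rand$, using QPRF security) is correct. The gap is in the final step, where you try to reduce $\Game_1$ to the base SIM-MDD security.

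The problem is that a SIM-MDD reduction does \emph{not} have $\xk$; it only has classical access to $\Oracle{sim}$. To answer an $\Oracle{api}$ query on $(\epsilon,\delta,\iop',i',\qstateq)$ the reduction must run $\API$, and $\API$ begins by preparing a uniform superposition over the randomness register and applying $\cproj_{\cP,D}$, which requires evaluating $D(r)=\Sim(\xk,\iop',i';\Rand(\cdot\,r))$ \emph{coherently} on that superposition. This needs $\xk$. Your sentence ``$\Oracle{api}$ is answered by running $\API^{\epsilon,\delta}_{\cP,\D_{\iop',i'}}$ directly'' is exactly where the argument breaks. Relatedly, the claim ``all salts are distinct with overwhelming probability'' is a statement about classical queries and has no meaning for the superposition of salts inside $\API$. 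Your attempted fix, appealing to ``the SIM-MDD reduction to the underlying falsifiable assumption'', is not admissible: the theorem is stated black-box in SIM-MDD, and in any case the ability of some specific reduction to simulate $\tlC$ says nothing about its ability to evaluate $\Sim(\xk,\cdot,\cdot;\cdot)$ in superposition.

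The paper closes this gap with two additional ingredients you are missing. First, it uses the semi-classical one-way-to-hiding lemma to decouple the random point $r^\ast$ used for the challenge from the oracle's uses of $\Rand$ (Games~3--4), so the challenge becomes $\Sim(\xk,\iop,i^\ast;v^\ast)$ for an independent uniform $v^\ast$. Second, and crucially, it applies Zhandry's small-range distribution technique: replace $\Rand$ by $G\circ F$ with $F:\{0,1\}^\secp\to[s]$ and $G:[s]\to\cR_{\Sim}$ random (losing $O(q^3/s)$), and simulate $F$ by a $2q$-wise independent function. Now $\D'_{\iop',i'}(r)=\Sim(\xk,\iop',i';G(F(r)))$ depends only on the $s$ values $\Sim(\xk,\iop',i';G(1)),\ldots,\Sim(\xk,\iop',i';G(s))$, which are $s$ i.i.d.\ samples from $\D_{\iop',i'}$ and can be obtained by $s$ \emph{classical} queries to $\Oracle{sim}$. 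With these in hand the reduction can compute $\D'_{\iop',i'}(r)$ coherently and thus run $\API$. Finally, since the small-range step loses only an inverse polynomial, the paper proves $\advc{i^\ast,\qA,\WMPRF'}{q}{sim}{mdd}(\secp)\le 1/w$ for every polynomial $w$ (choosing $s=O(q^3 w^2)$), which yields negligibility. Your sketch omits both the O2H step and the small-range compression, and without the latter there is no way to simulate $\Oracle{api}$ from classical $\Oracle{sim}$ access.
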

We prove this theorem in \cref{sec:qelwmprf-from-elwmprf}.

\begin{definition}[SIM-MDD Security with Public Simulation]\label{def:SIM-MDD_public_simulation} 
We define the game $\expc{i^\ast,\qA,\WMPRF}{sim}{mdd}{pub}(\secp)$ in the same way as $\expt{i^\ast,\qA,\WMPRF}{sim\textrm{-}mdd}(\secp)$ except the following differences, where $i^\ast\in[\msglen]$.
\begin{itemize}
\item In item 1, $\qA$ is given $\xk$ together with $\pp$.
\item Item 2 is removed.
\end{itemize}

We say that $\WMPRF$ satisfies SIM-MDD security with public simulation if for every $i^\ast\in[\msglen]$ and QPT $\qA$, we have
\begin{align}
\advc{i^\ast,\qA,\WMPRF}{sim}{mdd}{pub}(\secp)=
2\abs{\Pr[\expc{i^\ast,\qA,\WMPRF}{sim}{mdd}{pub}(\secp)=1]-\frac{1}{2}} =\negl(\secp).
\end{align}
\end{definition}

\section{Watermarking PRF from Extraction-Less Watermarking PRF}\label{sec:wmprf-from-elwmprf}
We show how to construct watermarking PRF secure against quantum adversaries from extraction-less watermarking PRF.

Let $\ELWMPRF=(\Setup,\Gen,\Eval,\Mark,\Sim)$ be an extraction-less watermarking PRF scheme whose message space is $\bit^{\msglen+1}$.
We construct a watermarking PRF scheme $\WMPRF=(\WM.\Setup,\allowbreak\WM.\Gen,\WM.\Eval,\WM.\Mark,\qExtract)$ whose message space is $\bit^{\msglen}$ as follows. We use $\Setup$, $\Gen$, and $\Eval$ as $\WM.\Setup$, $\WM.\Gen$, and $\WM.\Eval$, respectively. Thus, the domain and range of $\WMPRF$ are the same as those of $\ELWMPRF$.
Also, we construct $\WM.\Mark$ and $\qExtract$ as follows.

\begin{description}

%
\item[$\WM.\Mark (\pp,\prfk, \msg)$:] $ $
\begin{itemize}
\item Output $\tlC \gets \Mark(\pp,\prfk,\msg\|0)$.
\end{itemize}
\item[$\qExtract(\xk,\qstate{C},\iop,\epsilon)$:]$ $
\begin{itemize}
\item Let $\epsilon'=\epsilon/4(\msglen+1)$ and $\delta'=2^{-\lambda}$.
\item Parse $(\qstateq,\mat{U})\gets\qstate{C}$.
\item Let $\cP$ be defined in the same way as that in \cref{def:unrem_prf} and $D_{\iop,i}$ be the following distribution for every $i\in[\msglen+1]$.
\begin{description}
\item[$D_{\iop,i}$:] Output $(\gamma,x,y)\gets\Sim(\xk,\iop,i)$.
\end{description}
\item Compute $\tlp_{\msglen+1} \gets \API_{\cP,D_{\iop,\msglen+1}}^{\epsilon' ,\delta'}(\qstateq)$. If $\tlp_{\msglen+1}<\frac{1}{2}+\epsilon-4\epsilon'$, return $\unmarked$. Otherwise, letting $\qstateq_0$ be the post-measurement state, go to the next step.
\item For all $i \in [\msglen]$, do the following.
\begin{enumerate}
\item Compute $\tlp_{i} \gets \API_{\cP,D_{\iop,i}}^{\epsilon' ,\delta'}(\qstateq_{i-1})$. Let $\qstateq_i$ be the post-measurement state.
\item If $\tlp_i>\frac{1}{2}+\epsilon-4(i+1)\epsilon'$, set $\msg'_i=0$. If $\tlp_i<\frac{1}{2}-\epsilon+4(i+1)\epsilon'$, set $\msg'_i=1$. Otherwise, exit the loop and output $\msg'=0^{\msglen}$.
\end{enumerate}
\item Output $\msg'=\msg'_1 \concat \cdots \concat \msg'_{\msglen}$.
\end{itemize}
\end{description}

We have the following theorems.

\begin{theorem}\label{thm:watermarking_from_extraction-less_pseudorandomness}
If $\ELWMPRF$ satisfies extended weak pseudorandomness against authority, then so does $\WMPRF$.
\end{theorem}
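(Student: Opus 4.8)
The plan is to observe that the extended weak pseudorandomness game of \cref{def:extednded_weak_pseudorandomness} never invokes the mark algorithm or the extraction algorithm: the challenger only runs $\Setup$ and $\Gen$, and the oracles $\Oracle{wprf}$ and $\Oracle{chall}$ depend on the PRF key $\prfk$ only through $\Eval$. Since $\WMPRF$ is defined by reusing $\Setup$, $\Gen$, and $\Eval$ of $\ELWMPRF$ verbatim, and only replaces $\Mark$ by $\WM.\Mark$ and adds the new $\qExtract$, the experiments $\expb{\qA,\WMPRF}{ext}{wprf}(\secp)$ and $\expb{\qA,\ELWMPRF}{ext}{wprf}(\secp)$ are the same for every adversary $\qA$.

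Concretely, I would fix an arbitrary QPT adversary $\qA$ attacking the extended weak pseudorandomness of $\WMPRF$ and build a reduction $\qB$ attacking that of $\ELWMPRF$ as a pure wrapper: $\qB$ forwards the public parameter $\pp$ chosen by $\qA$ to its own challenger, relays the returned public tag $\iop$ back to $\qA$, answers each $\Oracle{wprf}$ query and the single $\Oracle{chall}$ query of $\qA$ by passing it to the corresponding oracle in $\qB$'s own game, and finally echoes $\qA$'s guess $\coin^\prime$. All messages and oracle responses that $\qA$ receives inside $\qB$ are sampled exactly as in $\expb{\qA,\WMPRF}{ext}{wprf}(\secp)$, so $\qB$ perfectly simulates that game for $\qA$ and outputs the correct guess precisely when $\qA$ does. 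Hence $\advb{\qA,\WMPRF}{ext}{wprf}(\secp)=\advb{\qB,\ELWMPRF}{ext}{wprf}(\secp)$, which is negligible by the assumed extended weak pseudorandomness of $\ELWMPRF$; moreover $\qB$ is QPT because it only adds transparent forwarding to $\qA$. This yields the theorem.

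There is essentially no obstacle here: the argument is a syntactic reduction that exploits the fact that $\Mark$ and $\qExtract$ play no role in the pseudorandomness experiment. The only points worth noting explicitly are that the public tag $\iop$ is produced identically in both schemes (since $\Gen$ is inherited unchanged) and that the extraction oracle $\Oracle{ext}$ does not appear in \cref{def:extednded_weak_pseudorandomness} at all, so no simulation of extraction is needed in the reduction.
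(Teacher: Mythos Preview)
Your proposal is correct and matches the paper's approach exactly: the paper simply observes that the evaluation algorithm of $\WMPRF$ coincides with that of $\ELWMPRF$ and that extended weak pseudorandomness is insensitive to the marking and extraction algorithms, and then omits a formal proof. Your explicit wrapper reduction spells out precisely this trivial observation (one tiny wording slip: in \cref{def:extednded_weak_pseudorandomness} the adversary, not the challenger, supplies $\pp$, as you correctly reflect in your reduction).
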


\begin{theorem}\label{thm:watermarking_from_extraction-less}
If $\ELWMPRF$ is an extraction-less watermarking PRF that satisfies QSIM-MDD security, $\WMPRF$ is a privately extractable watermarking PRF.
\end{theorem}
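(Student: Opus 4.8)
The plan is to verify both conditions of unremovability for private extraction (\cref{def:unrem_prf}) — $\Pr[\BadExt]\le\negl(\secp)$ and $\Pr[\GoodExt]\ge\Pr[\Live]-\negl(\secp)$ — by reducing to the QSIM-MDD security of $\ELWMPRF$ (\cref{def:qsim-mdd}). The first point to record is that the extraction oracle $\Oracle{ext}$ given to $\qA$ can be answered using only the $\API$ oracle $\Oracle{api}$: the sole $\xk$-dependent step inside $\qExtract$ is a sequence of measurements of the form $\API_{\cP,D_{\iop',i'}}^{\epsilon',\delta'}$, each of which is exactly one $\Oracle{api}$ query, and the rest is classical post-processing of their outcomes. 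Hence the pirate state $\qstateq$ produced by $\qA$ and every intermediate state arising while $\qExtract$ runs on it are efficiently constructible by a machine holding $\Oracle{api}$ access; this is precisely why QSIM-MDD security (rather than plain SIM-MDD) is the hypothesis we need, matching the state-constructibility caveat attached to \cref{cor:cind_sample_api}. Throughout, $\epsilon$ is treated as an inverse polynomial (so every $\API$ runs in polynomial time by \cref{thm:api_property}), $\delta'=2^{-\secp}$, $\epsilon'=\epsilon/4(\msglen+1)$, and $M\seteq\msg\concat 0\in\bit^{\msglen+1}$ is the message $\WM.\Mark$ actually embeds, so $M[\msglen+1]=0$. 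For $i\in[\msglen+1]$ put $\widehat{D}_i\seteq D$ if $M[i]=0$ and $\widehat{D}_i\seteq\Drev$ if $M[i]=1$ (so $\widehat{D}_{\msglen+1}=D$); note that since the pirate's output register is a single qubit we have $\cM_{\Drev}=(\mat{Q}_D,\mat{P}_D)$, i.e. $\cP_{\Drev}=\cPrev_D$, hence $\API_{\cP,\Drev}^{\epsilon',\delta'}=\API_{\cPrev,D}^{\epsilon',\delta'}$.

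For $\Pr[\GoodExt]\ge\Pr[\Live]-\negl(\secp)$: since $\qExtract$ returns $\unmarked$ exactly when $\tlp_{\msglen+1}<1/2+\epsilon-4\epsilon'$, the event $\GoodExt$ coincides with $\tlp_{\msglen+1}\ge1/2+\epsilon-4\epsilon'$, where $\tlp_{\msglen+1}$ is the outcome of $\API_{\cP,D_{\iop,\msglen+1}}^{\epsilon',\delta'}$ on $\qstateq$. Now chain three measurements on $\qstateq$: by \cref{thm:api_property}, $\shiftdis{\epsilon'}(\API_{\cP,D}^{\epsilon',\delta'},\projimp(\cM_D))\le\delta'$; and since $M[\msglen+1]=0$, the multi-challenge form of QSIM-MDD security says $D_{\iop,\msglen+1}$ and $D$ are computationally indistinguishable against $\Oracle{api}$-aided distinguishers that construct $\qstateq$, so \cref{cor:cind_sample_api} gives $\shiftdis{3\epsilon'}(\API_{\cP,D_{\iop,\msglen+1}}^{\epsilon',\delta'},\API_{\cP,D}^{\epsilon',\delta'})\le2\delta'+\negl(\secp)$. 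Composing, $\shiftdis{4\epsilon'}(\API_{\cP,D_{\iop,\msglen+1}}^{\epsilon',\delta'},\projimp(\cM_D))\le3\delta'+\negl(\secp)$, so $\Pr[\tlp_{\msglen+1}\ge1/2+\epsilon-4\epsilon']\ge\Pr[\projimp(\cM_D)(\qstateq)\ge1/2+\epsilon]-3\delta'-\negl(\secp)=\Pr[\Live]-\negl(\secp)$.

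For $\Pr[\BadExt]\le\negl(\secp)$: when $\tlp_{\msglen+1}<1/2+\epsilon-4\epsilon'$ the output is $\unmarked$, so it suffices to show that, conditioned on $\tlp_{\msglen+1}\ge1/2+\epsilon-4\epsilon'$, $\qExtract$ finishes the loop without an early exit and sets $\msg'_i=\msg[i]$ for every $i\in[\msglen]$ with overwhelming probability, which forces $\msg'=\msg$. This follows from the Key Fact: with overwhelming probability, for every $i\in[\msglen]$, $\tlp_i$ lies within $O(i\epsilon')$ of $\tlp_{\msglen+1}$ if $\msg[i]=0$ and within $O(i\epsilon')$ of $1-\tlp_{\msglen+1}$ if $\msg[i]=1$ — close enough, by the choice $4(\msglen+1)\epsilon'=\epsilon$, that $\tlp_i$ lands strictly above $1/2+\epsilon-4(i+1)\epsilon'$ in the first case and strictly below $1/2-\epsilon+4(i+1)\epsilon'$ in the second, so no early exit occurs and $\msg'_i=\msg[i]$. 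To prove the Key Fact I would argue transition by transition along the execution order $\msglen+1,1,2,\dots,\msglen$: at the step that produces $\tlp_i$ ($i\in[\msglen]$) from the state $\qstateq_{i-1}$ left after the previous step (with $\qstateq_0$ the state left after the $\API_{\cP,D_{\iop,\msglen+1}}^{\epsilon',\delta'}$ measurement, $\tlp_0\seteq\tlp_{\msglen+1}$, $M[0]\seteq M[\msglen+1]$), compare $\API_{\cP,D_{\iop,i}}^{\epsilon',\delta'}$ applied to $\qstateq_{i-1}$ with the reference measurement on the same $\qstateq_{i-1}$ that is $\API_{\cP,D_{\iop,i-1}}^{\epsilon',\delta'}$ if $M[i]=M[i-1]$ and $\API_{\cPrev,D_{\iop,i-1}}^{\epsilon',\delta'}$ if $M[i]\ne M[i-1]$. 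Since $\qstateq_{i-1}$ is efficiently $\Oracle{api}$-constructible and $D_{\iop,i}$, $D_{\iop,i-1}$ are each indistinguishable from $\widehat{D}_i$, $\widehat{D}_{i-1}$ by multi-challenge QSIM-MDD security, \cref{cor:cind_sample_api} — together with the operator identities $\API_{\cP,\widehat{D}_{i-1}}^{\epsilon',\delta'}=\API_{\cP,\widehat{D}_i}^{\epsilon',\delta'}$ (when $M[i]=M[i-1]$) and $\API_{\cPrev,\widehat{D}_{i-1}}^{\epsilon',\delta'}=\API_{\cP,\widehat{D}_i}^{\epsilon',\delta'}$ (when $M[i]\ne M[i-1]$) — shows the outcome of $\API_{\cP,D_{\iop,i}}^{\epsilon',\delta'}$ on $\qstateq_{i-1}$ is shift-close to the reference outcome; and the reference outcome is within $\epsilon'$ of $\tlp_{i-1}$ by the $(\epsilon',\delta')$-almost-projectivity of $\API_{\cP,D_{\iop,i-1}}^{\epsilon',\delta'}$ (if $M[i]=M[i-1]$) or within $\epsilon'$ of $1-\tlp_{i-1}$ by its $(\epsilon',\delta')$-reverse-almost-projectivity (\cref{thm:API_is_rev_almost_projective}, when $M[i]\ne M[i-1]$). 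Unrolling these $\le\msglen$ transitions — each relation is $t\mapsto t$ or $t\mapsto 1-t$, so additive errors add and failure events are union-bounded — yields the Key Fact.

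The part that needs genuine care rather than a sketch is the quantitative bookkeeping of this chaining: pinning down the shift-distance/error budget of the base measurement and of each transition so that the error accumulated over all $\msglen+1$ steps is honestly absorbed by the way the thresholds inside $\qExtract$ tighten by $4\epsilon'$ per iteration, and making the reference-outcome comparison rigorous across the differing post-measurement states $\qstateq_{i-1}$ (this is exactly where both almost-projectivity and \cref{cor:cind_sample_api} must be applied to the correct efficiently-constructible state). The second delicate point, already flagged, is keeping the state-constructibility hypothesis of \cref{cor:cind_sample_api} satisfied at every invocation — the state at step $i$ is $\qA$'s output followed by $i$ prior $\API$ measurements, all of which a QSIM-MDD adversary reproduces through $\Oracle{api}$, and the needed indistinguishabilities of $D_{\iop,i}$ from $\widehat{D}_i$ against such adversaries are precisely the multi-challenge form of QSIM-MDD security. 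Extended weak pseudorandomness of $\WMPRF$ is orthogonal and is the content of \cref{thm:watermarking_from_extraction-less_pseudorandomness}, not needed here.
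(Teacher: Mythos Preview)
Your proposal is correct and follows essentially the same approach as the paper's proof: the $\GoodExt$ bound via $\shiftdis{}$-chaining $\projimp(\cM_D)\to\API_{\cP,D}\to\API_{\cP,D_{\iop,\msglen+1}}$, and the $\BadExt$ bound by a step-by-step transition along the execution order using $(\epsilon',\delta')$-almost-projectivity when consecutive message bits agree and $(\epsilon',\delta')$-reverse-almost-projectivity when they differ, with \cref{cor:cind_sample_api} applied at each step and the need for QSIM-MDD (via $\Oracle{api}$) to satisfy the state-constructibility hypothesis. The paper carries out the $\BadExt$ argument by an explicit four-case analysis on $(\msg_{i-1},\msg_i)$ and makes the per-step $4\epsilon'$ budget precise (almost-projectivity costs $\epsilon'$, then \cref{cor:cind_sample_api} costs $3\epsilon'$), which is exactly the bookkeeping you flag as the part needing care.
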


\begin{theorem}\label{thm:watermarking_from_extraction-less_public}
If $\ELWMPRF$ is an extraction-less watermarking PRF that satisfies SIM-MDD security with public simulation, $\WMPRF$ is a publicly extractable watermarking PRF.
\end{theorem}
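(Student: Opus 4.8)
\textbf{Proof plan (for \cref{thm:watermarking_from_extraction-less_public}).}
The goal is to verify unremovability for public extraction (\cref{def:pub_ext_unrem_prf}); evaluation correctness of $\WMPRF$ is immediate from that of $\ELWMPRF$, and extended weak pseudorandomness is exactly \cref{thm:watermarking_from_extraction-less_pseudorandomness}, so nothing new is needed there. Fix a QPT $\qA$ and $\epsilon>0$: in the public-extraction game $\qA$ receives $\pp,\xk$, outputs $\msg\in\bit^\msglen$, receives $\iop$ and $\tlC\gets\Mark(\pp,\prfk,\msg\|0)$, makes \emph{no} oracle queries, and outputs a pirate circuit $(\qstateq,\mat{U})$. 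Because there is no extraction (hence no $\API$) oracle here, plain SIM-MDD security with public simulation (\cref{def:SIM-MDD_public_simulation}) — not the QSIM-MDD notion needed in \cref{thm:watermarking_from_extraction-less} — will suffice. Write $\epsilon'=\epsilon/4(\msglen+1)$, $\delta'=2^{-\secp}$ as in $\qExtract$, let $D_{\iop,i}$ be the distribution sampling $\Sim(\xk,\iop,i)$, and let $D,\Drev$ be as in \cref{def:unrem_prf,sec:extraction-less_watermarking_simmdd}. Since we marked $\msg\|0$, we have $(\msg\|0)[\msglen+1]=0$ and $(\msg\|0)[i]=\msg[i]$ for $i\in[\msglen]$, so SIM-MDD security with public simulation (in its multi-challenge form, which follows from the single-challenge form by the hybrid noted after \cref{def:SIM-MDD}) gives $D_{\iop,\msglen+1}\cind D$, and for each $i\in[\msglen]$ gives $D_{\iop,i}\cind D$ if $\msg[i]=0$ and $D_{\iop,i}\cind\Drev$ if $\msg[i]=1$. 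Crucially these indistinguishabilities hold against distinguishers knowing $(\pp,\xk,\iop,\tlC)$, and such a distinguisher can run $\qA$ to obtain $(\qstateq,\mat{U})$ and can run any prefix of the $\API$ calls inside $\qExtract$ — it has $\xk$ (hence $\Sim$), and the $\Eval(\prfk,\cdot)$ used to sample $D$ is simulated by $\tlC$ up to negligible error via evaluation correctness — so it can construct every post-measurement state $\qstateq_i$ arising during extraction. This is precisely the hypothesis that lets us invoke \cref{cor:cind_sample_api}.

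Next I would establish $\Pr[\GoodExt]\ge\Pr[\Live]-\negl(\secp)$. Observe that once $\qExtract$ passes the first test (i.e. $\tlp_{\msglen+1}\ge\frac12+\epsilon-4\epsilon'$), it always outputs a string in $\bit^\msglen$, never $\unmarked$, so $\GoodExt$ is exactly the event $\tlp_{\msglen+1}\ge\frac12+\epsilon-4\epsilon'$, where $\tlp_{\msglen+1}\gets\API^{\epsilon',\delta'}_{\cP,D_{\iop,\msglen+1}}(\qstateq)$. Chaining $\shiftdis{\epsilon'}(\API^{\epsilon',\delta'}_{\cP,D},\projimp(\cM_D))\le\delta'$ from \cref{thm:api_property} with $\shiftdis{3\epsilon'}(\API^{\epsilon',\delta'}_{\cP,D_{\iop,\msglen+1}},\API^{\epsilon',\delta'}_{\cP,D})\le2\delta'+\negl(\secp)$ from $D_{\iop,\msglen+1}\cind D$ and \cref{cor:cind_sample_api}, and evaluating at $x=\frac12+\epsilon-4\epsilon'$, gives $\Pr[\tlp_{\msglen+1}\ge\tfrac12+\epsilon-4\epsilon']\ge\Pr[\projimp(\cM_D)(\qstateq)\ge\tfrac12+\epsilon]-3\delta'-\negl(\secp)=\Pr[\Live]-\negl(\secp)$, as claimed.

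Then I would show $\Pr[\BadExt]\le\negl(\secp)$. Since $\BadExt$ requires that the first test is passed (otherwise the output is $\unmarked$), I condition on $\tlp_{\msglen+1}\ge\frac12+\epsilon-4\epsilon'>\frac12$ and analyze the loop. The invariant to carry is that the current post-measurement state $\qstateq_i$ has ``value $\approx\tlp_{\msglen+1}$ under $\API_{\cP,D}$'', within an error $4(i{+}1)\epsilon'$ and negligible failure probability: initially this holds for $\qstateq_0$ by almost-projectivity of $\API^{\epsilon',\delta'}_{\cP,D_{\iop,\msglen+1}}$ (\cref{thm:api_property}) plus one application of \cref{cor:cind_sample_api}; at step $i$, if $\msg[i]=0$ then $D_{\iop,i}\cind D$ (indeed $D_{\iop,i}$ is indistinguishable from whatever distribution the previous call used), so by \cref{cor:cind_sample_api} and almost-projectivity $\tlp_i$ stays $\approx\tlp_{\msglen+1}>\frac12$ and the invariant is preserved; if $\msg[i]=1$ then $D_{\iop,i}\cind\Drev$, and using the reverse-almost-projective property (\cref{thm:API_is_rev_almost_projective}) together with $\API^{\epsilon',\delta'}_{\cP,\Drev}=\API^{\epsilon',\delta'}_{\cPrev,D}$ one gets $\tlp_i\approx1-\tlp_{\msglen+1}<\frac12$ while the value of $\qstateq_i$ under $\API_{\cP,D}$ returns to $\approx\tlp_{\msglen+1}$. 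The thresholds $\frac12+\epsilon-4(i{+}1)\epsilon'$ and $\frac12-\epsilon+4(i{+}1)\epsilon'$, together with the choice $\epsilon'=\epsilon/4(\msglen+1)$, are exactly calibrated so that the accumulated drift over the $\msglen+1$ measurements — one $\epsilon'$ per step from (reverse-)almost-projectivity and $3\epsilon'$ per step from each \cref{cor:cind_sample_api} invocation — never flips a decision; hence, except with negligible probability, each $\tlp_i$ decodes to $\msg'_i=\msg[i]$, the loop runs to completion, and $\qExtract$ outputs $\msg'=\msg$. Therefore $\Pr[\BadExt]\le\negl(\secp)$, which together with the previous paragraph completes the proof.

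The step I expect to be the main obstacle is the parameter bookkeeping in the last paragraph: propagating the shift-distance and almost-projectivity errors through a length-$(\msglen{+}1)$ \emph{adaptive} sequence of approximate projective measurements, switching correctly between the $\API_{\cP,D}$- and $\API_{\cP,\Drev}$-regimes as $\msg[i]$ varies, and verifying that the constants in $\epsilon'$ and in the thresholds are large enough (the ``$4$'' decomposing as $1$ from (reverse-)almost-projectivity plus $3$ from the $3\epsilon'$ of \cref{cor:cind_sample_api}) that the slack never swamps the $\epsilon$-gap. A secondary technical point, already flagged above, is justifying that every intermediate state $\qstateq_i$ occurring in $\qExtract$ is efficiently constructible from $(\pp,\xk,\iop,\tlC)$, so that the indistinguishabilities supplied by SIM-MDD security with public simulation legitimately feed into \cref{cor:cind_sample_api}.
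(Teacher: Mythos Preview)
Your proposal is correct and follows essentially the same approach as the paper: the paper explicitly states that the proof of \cref{thm:watermarking_from_extraction-less_public} is almost identical to that of \cref{thm:watermarking_from_extraction-less} and omits it, and your adaptation---dropping the extraction/$\API$ oracle and using plain SIM-MDD with public simulation to feed \cref{thm:api_property}, \cref{cor:cind_sample_api}, and \cref{thm:API_is_rev_almost_projective}---is exactly what is intended. Your invariant-based bookkeeping for $\BadExt$ is a minor presentational variant of the paper's $\BadExt_i$ decomposition (which directly compares $D_{\iop,i-1}$ to $D_{\iop,i}$ via $D$ or $\Drev$), and your explicit check that each intermediate state $\qstateq_i$ is constructible from $(\pp,\xk,\iop,\tlC)$ is precisely the point the paper flags as the reason QSIM-MDD was needed in the private case and is not needed here.
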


It is clear that \cref{thm:watermarking_from_extraction-less_pseudorandomness} holds since the evaluation algorithm of $\WMPRF$ is the same as that of $\ELWMPRF$ and extended weak pseudorandomness is insensitive to how the marking and extraction algorithms are defined.
Thus, we omit a formal proof.

The proofs of \cref{thm:watermarking_from_extraction-less,thm:watermarking_from_extraction-less_public} are almost the same.
Thus, we only provide the proof for the former, and omit the proof for the latter.

\begin{proof}[Proof of~\cref{thm:watermarking_from_extraction-less}]
Let $\epsilon>0$.
Let $\qA$ be a QPT adversary attacking the unremovability of $\WMPRF$.
The description of  $\expt{\qA,\WMPRF}{nrmv}(\secp,\epsilon)$ is as follows.

\begin{enumerate}
\item The challenger generates $(\pp,\xk) \chosen \Setup(1^\secp)$ and gives $\pp$ to the adversary $\qA$.
$\qA$ sends $\msg\in\bit^\msglen$ to the challenger.
The challenger generates $(\prfk,\iop)\gets\Gen(\pp)$, computes $\tlC\gets\Mark(\pp,\prfk,\msg\|0)$, and sends $\tlC$ to $\qA$.

\item $\qA$ can access to the following oracle.
\begin{description}
\item[$\Oracle{ext}$:]On input $\iop'$ and a quantum circuit $\qstate{C}$, it returns $\qExtract(\xk,\qstate{C},\iop',\epsilon)$.
\end{description}

\item Finally, the adversary outputs a quantum circuit $\pirateC=(\qstateq,\mat{U})$.
\end{enumerate}

We define $D$, $\cP$, $\cM_D$, and the three events $\Live$, $\GoodExt$, and $\BadExt$ in the same way as \cref{def:unrem_prf}.

\paragraph{The proof of $\Pr[\GoodExt]\geq\Pr[\Live]-\negl(\secp)$.}
$\qExtract$ outputs $\unmarked$ if and only if $\tlp_{\ell+1}<\frac{1}{2}+\epsilon-4\epsilon'$, that is we have $\Pr[\GoodExt]=\Pr[\tlp_{\ell+1}\geq\frac{1}{2}+\epsilon-4\epsilon']$.
Let $p$ the probability obtained by applying $\projimp(\cM_D)$ to $\qstateq$.
Then, we have $\Pr[\Live]=\Pr[p\geq\frac{1}{2}+\epsilon]$.
Let $\tlp$ be the outcome obtained if we apply $\API_{\cP,D}^{\epsilon' ,\delta'}$ to $\qstateq$.
From the property of $\API$, we have 
\begin{align}
\Pr[\Live]=\Pr[p\geq\frac{1}{2}+\epsilon]\leq\Pr[\tlp\geq\frac{1}{2}+\epsilon-\epsilon']+\negl(\secp).
\end{align}
$D$ and $D_{\iop,\msglen+1}$ are computationally indistinguishable from the QSIM-MDD security of $\ELWMPRF$ since outputs of $\Sim(\xk,\iop,i)$ is indistinguishable from those of $D$ if $\msg[i]=0$.
This indistinguishability holds even under the existence of $\Oracle{api}$.
Then, from \cref{cor:cind_sample_api}, we have
\begin{align}
\Pr[\tlp\geq\frac{1}{2}+\epsilon-\epsilon']\leq\Pr[\tlp_{\ell+1}\geq\frac{1}{2}+\epsilon-4\epsilon']+\negl(\secp)=\Pr[\GoodExt]+\negl(\secp).
\end{align}
By combining the above two equations, we obtain $\Pr[\GoodExt]\geq\Pr[\Live]-\negl(\secp)$.

The reason $D$ and $D_{\iop,\ell+1}$ need to be computationally indistinguishable under the existence of $\Oracle{api}$ to apply \cref{cor:cind_sample_api} is as follows.
In this application of \cref{cor:cind_sample_api}, the quantum state appeared in the statement of it is set as $\qstateq$ contained in the quantum circuit $\qstate{C}$ output by $\qA$.
Then, \cref{cor:cind_sample_api} (implicitly) requires that $D$ and $D_{\iop,\ell+1}$ be indistinguishable for distinguishers who can construct $\qstateq$.
To construct $\qstateq$, we need to execute $\qA$ who can access to $\Oracle{ext}$ in which $\API$ is repeatedly executed.
This is the reason $D$ and $D_{\iop,\ell+1}$ need to be indistinguishable under the existence of $\Oracle{api}$.

\paragraph{The proof of $\Pr[\BadExt]\leq\negl(\secp)$.}
We define the event $\BadExt_i$ as follows for every $i\in[\msglen]$.
\begin{description}
\item[$\BadExt_i$:]When Running $\qExtract(\xk,\pirateC,\iop^*,\epsilon)$, the following conditions hold.
\begin{itemize}
\item $\tlp_{\ell+1}\geq\frac{1}{2}+\epsilon-4\epsilon'$ holds.
\item $\msg'_j=\msg_j$ holds for every $j\in[i-1]$.
\item $\qExtract$ exits the $i$-th loop or $\msg'_i\neq\msg_i$ holds.
\end{itemize}
\end{description}
Then, we have $\Pr[\BadExt]\leq\sum_{i\in[\ell]}\Pr[\BadExt_i]$.
Below, we estimate $\Pr[\BadExt_i]$.

We first consider the case of $\msg_{i-1}=0$ and $\msg_i=0$.
Assume $\msg'_{i-1}=\msg_{i-1}=0$ holds.
Then, we have $\tlp_{i-1}>\frac{1}{2}+\epsilon-4i\epsilon'$.
Let $\tlp'_{i-1} \gets \API_{\cP,D_{\iop,i-1}}^{\epsilon' ,\delta'}(\qstateq_{i-1})$.
From, the almost-projective property of $\API$, we have
\begin{align}
\Pr[\tlp'_{i-1}>\frac{1}{2}+\epsilon-4i\epsilon'-\epsilon']\geq1-\delta'.
\end{align}
When $\msg_{i-1}=0$ and $\msg_i=0$, $D_{\iop,i-1}$ and $D_{\iop,i}$ are computationally indistinguishable since both of them are computationally indistinguishable from $D$ by the QSIM-MDD security of $\ELWMPRF$.
This indistinguishability holds under the existence of $\Oracle{api}$.
Thus, from \cref{cor:cind_sample_api}, we have
\begin{align}
1-\delta'\leq
\Pr[\tlp'_{i-1}>\frac{1}{2}+\epsilon-(4i+1)\epsilon']
\leq\Pr[\tlp_{i}>\frac{1}{2}+\epsilon-4(i+1)\epsilon']+\negl(\secp).
\end{align}
This means that $\Pr[\BadExt_i]=\negl(\secp)$ in this case.
Note that the reason the indistinguishability of $D_{\iop,i-1}$ and $D_{\iop,i}$ needs to hold under $\Oracle{api}$ is that \cref{cor:cind_sample_api} requires it hold for distinguishers who can construct $\qstateq_{i-1}$.

Next, we consider the case of $\msg_{i-1}=0$ and $\msg_i=1$.
Assume $\msg'_{i-1}=\msg_{i-1}=0$ holds.
Then, we have $\tlp_{i-1}>\frac{1}{2}+\epsilon-4i\epsilon'$.
We then define an additional distribution $\Drev_{\iop,i}$ as follows.
\begin{description}
\item[$\Drev_{\iop,i}$:] Generate $(\gamma,x,y)\gets\Sim(\xk,\iop,i)$. Output $(1\oplus\gamma,x,y)$.
\end{description}
That is, the first bit of the output is flipped from $\D_{\iop,i}$.
Then, for any random coin $r$, we have $(\mat{P}_{\Drev_{\iop,i}(r)},\mat{Q}_{\Drev_{\iop,i}(r)})=(\mat{Q}_{\D_{\iop,i}(r)},\mat{P}_{\D_{\iop,i}(r)})$.
This is because we have $\mat{Q}_{b,x,y}=\mat{I}-\mat{P}_{b,x,y}=\mat{P}_{1\oplus b,x,y}$ for any tuple $(b,x,y)$.
Therefore, $ \API_{\cP,\Drev_{\iop,i-1}}^{\epsilon' ,\delta'}$ is exactly the same process as  $\API_{\cPrev,\D_{\iop,i-1}}^{\epsilon' ,\delta'}$.
Let $\tlp'_{i-1} \gets \API_{\cP,\Drev_{\iop,i-1}}^{\epsilon' ,\delta'}(\qstateq_{i-1})$.
From, the reverse-almost-projective property of $\API$, we have
\begin{align}
\Pr[\tlp'_{i-1}<\frac{1}{2}-\epsilon+4i\epsilon'+\epsilon']\geq1-\delta'.
\end{align}
When $\msg_{i-1}=0$ and $\msg_i=1$, $\Drev_{\iop,i-1}$ and $D_{\iop,i}$ are computationally indistinguishable since both of them are computationally indistinguishable from the following distribution $\Drev$ by the QSIM-MDD security of $\ELWMPRF$.
\begin{description}
\item[$\Drev$:] Generate $(\gamma,x,y)\gets D$. Output $(1\oplus\gamma,x,y)$.
\end{description}
This indistinguishability holds under the existence of $\Oracle{api}$.
Thus, from \cref{cor:cind_sample_api}, we have
\begin{align}
1-\delta'\leq
\Pr[\tlp'_{i-1}<\frac{1}{2}-\epsilon+(4i+1)\epsilon']
\leq\Pr[\tlp_{i}<\frac{1}{2}-\epsilon+4(i+1)\epsilon']+\negl(\secp).
\end{align}
This means that $\Pr[\BadExt_i]=\negl(\secp)$ also in this case.
Note that the reason the indistinguishability of $\Drev_{\iop,i-1}$ and $D_{\iop,i}$ needs to hold under $\Oracle{api}$ is that \cref{cor:cind_sample_api} requires it hold for distinguishers who can construct $\qstateq_{i-1}$.

Similarly, we can prove that $\Pr[\BadExt_i]=\negl(\secp)$ holds in the case of $(\msg_{i-1},\msg_i)=(1,0)$ and $(\msg_{i-1},\msg_i)=(1,1)$.

Overall, we see that $\Pr[\BadExt]=\negl(\secp)$ holds in all cases.
\end{proof}

\newcommand{\tbectlen}{\ell_{\mathsf{outct}}}
\newcommand{\tberlen}{\ell_{\mathsf{outr}}}
\newcommand{\tlprf}{\widetilde{\prf}}
\newcommand{\sfin}{\mathsf{in}}
\newcommand{\sfout}{\mathsf{out}}
\newcommand{\inrand}{r_{\sfin}}
\newcommand{\inctlen}{\ell_\mathsf{inct}}
\renewcommand{\SKE}{\algo{SKE}}
\renewcommand{\skekey}{\keys{ske}.\mathsf{k}}
\newcommand{\aeplen}{\ell_{\mathsf{ske}}}
\newcommand{\simprf}{\widehat{\prf}}
\newcommand{\simprfg}{\widehat{\prfg}}
\newcommand{\Diop}{D_{\mathtt{auth}}}
\newcommand{\cViop}{\cV_{\mathtt{auth}}}

\section{Extraction-Less Watermarking PRF from LWE}\label{sec:extless_watermarking_LWE}

We present an extraction-less watermarking PRF, denoted by $\PRF_\cprf$, whose message space is $\zo{\msglen}$ with domain $\zo{\inplen}$ and range $\zo{\outlen}$.
We use the following tools, which can be instantiated with the QLWE assumption (See~\cref{thm:CCA_QLWE,thm:pcprf_lwe,thm:pseudorandom_ske}):
\begin{itemize}
 \item Private CPRF $\CPRF=(\CPRF.\Setup,\CPRF.\Eval,\CPRF.\Constrain,\CPRF.\CEval)$. For ease of notation, we denote CPRF evaluation circuit $\CPRF.\Eval(\msk,\cdot)$ and constrained evaluation circuits $\CPRF.\CEval(\sk_f,\cdot)$ by $\prfg: \zo{\inplen}\ra \zo{\outlen}$ and $\prfg_{\notin \cV}: \zo{\inplen}\ra\zo{\outlen}$, respectively, where $x \in \cV$ iff $f(x)=1$.
 \item SKE scheme $\SKE=(\SKE.\Gen,\SKE.\Enc,\SKE.\Dec)$. The plaintext space and ciphertext space of $\SKE$ are $\zo{\aeplen}$ and $\zo{\inplen}$, respectively, where $\aeplen = \log{\msglen}+1$.
 \item PKE scheme $\PKE=(\Gen,\Enc,\Dec)$. The plaintext space of PKE is $\zo{2\secp}$.
 \end{itemize}
 
\paragraph{Construction overview.}
We already explained the high-level idea for how to realize extraction-less watermarking PRFs in \cref{sec:technical_overview}.
However, the construction of $\PRF_\cprf$ requires some additional efforts.
Thus, before providing the actual construction, we provide a high-level overview of $\PRF_\cprf$.

Recall that letting $\tlC\gets\Mark(\pp,\prfk,\msg)$ and $(\gamma^\ast,x^\ast,y^\ast)\gets\Sim(\xk,\iop,i^\ast)$, we have to design $\Sim$ and $\tlC$ so that
\begin{itemize}
\item If $\gamma=\msg[i^\ast]$, $\tlC(x^\ast)$ outputs a value different from $y^\ast$.
\item If $\gamma\ne\msg[i^\ast]$, $\tlC(x^\ast)$ outputs $y^\ast$.
\end{itemize}
In the token-based construction idea, we achieve these conditions by setting $x^\ast$ as an encryption of $y^\ast\|i^\ast\|\gamma^\ast$ and designing $\tlC$ as a token such that it outputs $y^\ast$ if the input is decryptable and $\gamma^\ast\ne\msg[i^\ast]$ holds for the decrypted value $y^\ast\|i^\ast\|\gamma^\ast$, and otherwise behaves as the original evaluation circuit.
However, in $\PRF_\cprf$, we use a constrained evaluation circuit of $\CPRF$ as $\tlC$, and thus we cannot program output values for specific inputs.
Intuitively, it seems that $\Sim$ needs to use the original PRF key $\prfk$ to achieve the above two conditions.

To solve the issue, we adopt the idea used by Quach et al.~\cite{TCC:QuaWicZir18}.
In $\PRF_\cprf$, the setup algorithm $\Setup$ generates $(\pk,\sk)\gets\Gen(1^\secp)$ of $\PKE$, and sets $\pp=\pk$ and $\xk=\sk$.
Then, the PRF key generation algorithm is given $\pk$, generates $\prfg \lrun \CPRF.\Setup(1^\secp,1^\kappa)$ along with $\skekey\gets \SKE.\Gen(1^
\secp)$, and sets the public tag $\iop$ as an encryption of $(\prfg,\skekey)$ under $\pk$.
The evaluation algorithm of $\PRF_\cprf$ is simply that of $\CPRF$.

Now, we explain how to design $\Sim$ and $\tlC\gets\Mark(\pp,\prfk,\msg)$ to satisfy the above two conditions.
Given $\xk=\sk$, $\iop=\Enc(\pk,\prfk)$ and $i$,  $\Sim$ is able to extract $\prfk=(\prfg,\skekey)$.
Then, $\Sim$ generates $\gamma\gets\bit$ and sets $x\gets\SKE.\Enc(\skekey,i\|\gamma)$ and $y\gets\prfg(x)$.
We set $\tlC$ as a constrained version of $\prfg$ for a circuit $D$ that outputs $1$ if the input $x$ is decryptable by $\skekey$ and $\gamma=\msg[i]$ holds for decrypted value $i\|\gamma$, and otherwise outputs $0$.
For an input $x$, the constrained version of $\prfg$ outputs the correct output $\prfg(x)$ if and only if $D(x)=0$.
We can check that $\PRF_\cprf$ satisfies the above two conditions.

The above construction does not satisfy extended weak pseudorandomness against authority since the authority can extract the original CPRF key $\prfg$ by $\xk=\sk$.
However, this problem can be fixed by constraining $\prfg$.
We see that $\Sim$ needs to evaluate $\prfg$ for valid ciphertexts of $\SKE$.
Thus, to implement the above mechanism, it is sufficient to set the public tag $\iop$ as an encryption of $\skekey$ and a constrained version of $\prfg$ for a circuit $\Diop$ that output $0$ if and only if the input is decryptable by $\skekey$.
Then, the authority can only extract such a constrained key.
By requiring sparseness for $\SKE$, the constrained key cannot be used to break the pseudorandomness of $\PRF_\cprf$ for random inputs.
This means that $\PRF_\cprf$ satisfies extended weak pseudorandomness against an authority.
Note that we only need a single-key CPRF for $\PRF_\cprf$ since either a user or the authority (not both) is a malicious entity in security games.

The description of $\PRF_\cprf$ is as follows.
\begin{description}

 \item[$\Setup(1^\secp)$:] $ $
 \begin{itemize}
 \item Generate $(\pk,\sk)\lrun \Gen(1^\secp)$.
 \item Output $(\pp,\xk)\seteq (\pk,\sk)$.
 \end{itemize}
 \item[$\Gen(\pp)$:] $ $
 \begin{itemize}
 \item Parse $\pp=\pk$.
 \item Generate $\prfg \lrun \CPRF.\Setup(1^\secp,1^\kappa)$. In our construction, $\kappa$ is the size of circuit $D[\skekey,\msg]$ described in~\cref{fig:description_D_lwe}, which depends on $\msglen$ (and $\secp$).
 \item Generate $\skekey \lrun \SKE.\Gen(1^\secp)$.
 \item Construct a circuit $\Diop[\skekey]$ described in~\cref{fig:description_Diop}.
 \item Compute $\prfg_{\notin \cViop} \seteq \CPRF.\Constrain(\prfg,\Diop[\skekey])$, where $\cViop \subset \zo{\inplen}$ is a set such that $x \in \cViop$ iff $\Diop[\skekey](x)=1$.
 \item Output $\prfk \seteq (\prfg,\skekey)$ and $\iop\chosen \Enc(\pk,(\prfg_{\notin \cViop},\skekey))$.
 \end{itemize}
 \item[$\Eval(\prfk,x\in\zo{\inplen})$:] Recall that $\prfg$ is a keyed CPRF evaluation circuit.
 \begin{itemize}
\item Parse $\prfk= (\prfg,\skekey)$.
 \item Output $y\seteq \prfg(x)$.
 \end{itemize}
\item[$\Mark (\pp,\prfk, \msg)$:] $ $
\begin{itemize}
\item Parse $\pp=\pk$ and $\prfk =(\prfg,\skekey)$.
\item Construct a circuit $D[\skekey,\msg]$ described in~\cref{fig:description_D_lwe}.
\item Compute $\prfg_{\notin \cV} \gets \CPRF.\Constrain(\prfg,D[\skekey,\msg])$, where $\cV \subset \zo{\inplen}$ is a set such that $x \in \cV$ iff $D[\skekey,\msg](x)=1$.
\item Output $\tlC = \prfg_{\notin \cV}$.
\end{itemize}
\item[$\Sim(\xk,\iop,i)$:] $ $
\begin{itemize}
\item Parse $\xk=\sk$.
\item Compute $(\prfg_{\notin \cViop},\skekey) \lrun \Dec(\sk,\iop)$.
\item Choose $\gamma \chosen \zo{}$.
\item Compute $x \lrun \SKE.\Enc(\skekey, i\concat \gamma)$ and $y \gets \prfg_{\notin \cViop}(x)$.
\item Output $(\gamma,x,y)$.
\end{itemize}
\end{description}

\protocol
{Circuit $\Diop[\skekey]$}
{The description of $\Diop$}
{fig:description_Diop}
{
\begin{description}
\setlength{\parskip}{0.3mm} 
\setlength{\itemsep}{0.3mm} 
\item[Constants:] An SKE key $\skekey$, and a message $\msg$.
\item[Input:] A string $x \in \zo{\inplen}$.
\end{description}
\begin{enumerate}
\item Compute $d \lrun \SKE.\Dec(\skekey,x)$.
\item Output $0$ if $d\ne \bot$ and $1$ otherwise.
\end{enumerate}
}

\protocol
{Circuit $D[\skekey,\msg]$}
{The description of $D$}
{fig:description_D_lwe}
{
\begin{description}
\setlength{\parskip}{0.3mm} 
\setlength{\itemsep}{0.3mm} 
\item[Constants:] An SKE key $\skekey$, and a message $\msg$.
\item[Input:] A string $x \in \zo{\inplen}$.
\end{description}
\begin{enumerate}
\item Compute $d \lrun \SKE.\Dec(\skekey,x)$.
\item If $d\ne \bot$, do the following
\begin{enumerate}
\item Parse $d= i\concat \gamma$, where $i\in [\msglen]$ and $\gamma\in\zo{}$.
\item If $\gamma=\msg[i]$, output $1$. Otherwise, output $0$.
\end{enumerate}
\item Otherwise output $0$.
\end{enumerate}
}

The evaluation correctness of $\PRF_\cprf$ follows from the sparseness of $\SKE$ and the correctness of $\CPRF$.
For the security of $\PRF_\cprf$, we have the following theorems.

\begin{theorem}\label{thm:extraction-less_watermarking_pcprf}
$\SKE$ is a secure SKE scheme with pseudorandom ciphertext, $\CPRF$ is a selectively single-key private CPRF, $\PKE$ is a CCA secure PKE scheme, then $\PRF_\cprf$ is an extraction-less watermarking PRF satisfying SIM-MDD security.
\end{theorem}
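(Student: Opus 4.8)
The plan is to establish SIM-MDD security (\cref{def:SIM-MDD}) by a sequence of hybrid games that gradually turns the challenge tuple $(\gamma,x,y)\gets\Sim(\xk,\iop,i^\ast)$ (the $\coin=1$ distribution) into $\Dreal{i^\ast}$ (the $\coin=0$ distribution), all the while keeping $\pp=\pk$, $\iop$, $\tlC=\prfg_{\notin\cV}$, and the oracle $\Oracle{sim}$ available to $\qA$. Since $\Dreal{i^\ast}$ equals $D$ when $\msg[i^\ast]=0$ and $\Drev$ when $\msg[i^\ast]=1$, it suffices to fix an arbitrary target message $\msg$ and argue, uniformly in the value of $\msg[i^\ast]$, that in both worlds $y$ is a real evaluation exactly when $\gamma\ne\msg[i^\ast]$ and is uniform exactly when $\gamma=\msg[i^\ast]$; so the two branches $\gamma=\msg[i^\ast]$ and $\gamma\ne\msg[i^\ast]$ are analyzed in parallel. (Evaluation correctness is already handled via sparseness of $\SKE$ and correctness of $\CPRF$, and extended weak pseudorandomness against authority is argued separately; here I only treat SIM-MDD.)

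First I would record the structural facts that make the construction go through. Because $x=\SKE.\Enc(\skekey,i^\ast\|\gamma)$ is a genuine ciphertext, $\Diop[\skekey](x)=0$, hence $x\notin\cViop$, hence by CPRF correctness $y=\prfg_{\notin\cViop}(x)=\prfg(x)=\Eval(\prfk,x)$; moreover $D[\skekey,\msg](x)=1$ iff $\gamma=\msg[i^\ast]$, i.e.\ $x\in\cV$ iff $\gamma=\msg[i^\ast]$. Symmetrically, for a \emph{uniformly random} $x\in\zo{\inplen}$, sparseness of $\SKE$ forces $x\notin\cViop$ and $x\notin\cV$ except with negligible probability. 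These facts pin down whether $\tlC(x)=\prfg(x)$ or not in each world, and are what kill the only real attack (``feed $x$ to $\tlC$, compare with $y$'').

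The hybrid chain I would use is, roughly: (i) using CCA security of $\PKE$, replace $\iop$ by an encryption of a fixed string and patch $\Oracle{sim}$ so that on the special input $\iop$ it uses the honestly generated $(\prfg_{\notin\cViop},\skekey)$ rather than decrypting — the reduction answers $\Oracle{sim}(\iop',\cdot)$ for $\iop'\ne\iop$ with its $\Dec$ oracle, and $\Oracle{sim}(\iop,\cdot)$ with the known plaintext; this step is needed so that the single constrained key $\prfg_{\notin\cViop}$ no longer appears in $\qA$'s view, which is essential given that $\CPRF$ is only single-key. (ii) In the branch $\gamma=\msg[i^\ast]$ only, and \emph{while $x=\SKE.\Enc(\skekey,i^\ast\|\gamma)$ is still a valid ciphertext} (so $x\in\cV$ and the challenger-supplied key $\sk_f=\prfg_{\notin\cV}$ does not reveal $\prfg(x)$), invoke selective single-key pseudorandomness of $\CPRF$ at the point $x$ to replace $y=\prfg(x)$ by a uniform string; the reduction commits to $f=D[\skekey,\msg]$ right after $\qA$ outputs $\msg$ (so selectivity is fine) and answers the $\prfg$-evaluations occurring inside $\Oracle{sim}(\iop,\cdot)$ via its $\Eval$ oracle, using the ``no $\Oracle{sim}$ after the challenge'' rule to avoid ever querying $\Eval$ at $x$. (iii) Using privacy of $\CPRF$, replace $\tlC=\CPRF.\Constrain(\prfg,D[\skekey,\msg])$ by the simulated key $\sk_1\gets\Sim_1$ and route every remaining $\prfg$-evaluation (inside $\Oracle{sim}(\iop,\cdot)$ and, in the branch $\gamma\ne\msg[i^\ast]$, the challenge) through $\Sim_2(\stinfo_\Sim,\cdot,D[\skekey,\msg](\cdot))$; the point is that all these evaluation inputs are $\SKE$-ciphertexts or the challenge input whose plaintexts are known, so $D[\skekey,\msg]$ can be evaluated on them without holding $\skekey$, and after this step the whole game depends on $\skekey$ only through $\SKE.\Enc(\skekey,\cdot)$. (iv) Using ciphertext pseudorandomness of $\SKE$, replace the challenge input $x$ by a uniform string in $\zo{\inplen}$, with the $\Oracle{sim}(\iop,\cdot)$ ciphertexts produced by the $\SKE.\Enc$ oracle. (v) Undo (iii) and (i), returning to the honestly generated $\tlC$, $\iop$, and $\Oracle{sim}$; the resulting distribution is exactly $\Dreal{i^\ast}$, namely $(x\gets\zo{\inplen},\,y=\prfg(x))$ when $\gamma\ne\msg[i^\ast]$ and $(x\gets\zo{\inplen},\,y\gets\zo{\outlen})$ when $\gamma=\msg[i^\ast]$.

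The main obstacle I anticipate is getting the \emph{ordering} of these steps consistent under the single-key constraint: step (ii) is only sound while the challenge input still lies in $\cV$ — i.e.\ before it is randomized — whereas step (iv) is only sound after the constrained key $\tlC$ has been decoupled from $\skekey$ via privacy in step (iii), and step (i) must precede everything because no reduction that needs to hold $\prfg_{\notin\cV}$ can also produce the second constrained key $\prfg_{\notin\cViop}$. The technical heart is then checking, for each reduction, that the pre-challenge $\Oracle{sim}$ queries can be answered using only the reduction's own oracle (the $\Dec$ oracle for $\PKE$, the $\Eval$ oracle for CPRF pseudorandomness, the privacy oracle $\Sim_2(\stinfo_\Sim,\cdot,f(\cdot))$ for CPRF privacy, and the $\SKE.\Enc$ oracle for ciphertext pseudorandomness), that $D[\skekey,\msg]$ is only ever evaluated at inputs whose plaintexts are already known, and that collisions between the challenge input and the (freshly randomized) $\Oracle{sim}$-generated inputs occur with only negligible probability; the remainder is bookkeeping of the $\gamma=\msg[i^\ast]$ versus $\gamma\ne\msg[i^\ast]$ branches.
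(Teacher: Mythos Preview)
Your proposal is correct and matches the paper's proof essentially step for step: the paper's hybrids $\hybi{0}\to\hybi{8}$ implement exactly your sequence (i) CCA to dummy out $\iop$ and patch $\Oracle{sim}$ on input $\iop$, (ii) selective single-key CPRF pseudorandomness to randomize $y$ in the branch $\gamma=\msg[i^\ast]$ while $x$ is still a valid $\SKE$ ciphertext (hence in $\cV$), (iii) CPRF privacy to swap $\tlC$ for the simulated key and route evaluations through $\Sim_2$, (iv) $\SKE$ ciphertext pseudorandomness to randomize $x$, and (v) undo privacy and CCA, in that order. Your diagnosis of the ordering constraints imposed by the single-key restriction, and of how each reduction simulates $\Oracle{sim}$ (decryption oracle, $\Eval$ oracle, privacy oracle with known $D[\skekey,\msg](\cdot)$ values, and $\SKE.\Enc$ oracle, respectively), is exactly what the paper does.
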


\begin{theorem}\label{thm:pcprf_extended_weak_pseudorandomness}
If $\CPRF$ is a selective single-key private CPRF, $\PRF_\cprf$ satisfies extended weak pseudorandomness.
\end{theorem}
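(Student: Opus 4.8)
The plan is to reduce extended weak pseudorandomness of $\PRF_\cprf$ directly to the selective single-key pseudorandomness of $\CPRF$; this is the only property of $\CPRF$ that I will invoke (privacy, used for \cref{thm:extraction-less_watermarking_pcprf}, is not needed here because the reduction itself picks the constraint and hence ``knows'' it). The structural observation that makes this work is that in the game $\expb{\qA,\PRF_\cprf}{ext}{wprf}$ every domain element that the adversary ever sees — the $\zo{\inplen}$-elements returned by $\Oracle{wprf}$ and the single element returned by $\Oracle{chall}$ — is sampled uniformly at random by the challenger. By the sparseness of $\SKE$, such an element $x$ satisfies $\SKE.\Dec(\skekey,x)=\bot$, hence $\Diop[\skekey](x)=1$, except with negligible probability; that is, $x$ lies in the constrained set $\cViop$. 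Therefore pseudorandomness of $\prfg(x)$ at such points survives even when the adversary is handed the constrained key $\prfg_{\notin\cViop}$ (which is all that $\iop$ can possibly leak, since $\pk$ is adversarial).

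Concretely, I would build a QPT reduction $\cB$ playing the selective single-key pseudorandomness game of $\CPRF$. On receiving $\pp=\pk$ from $\qA$, $\cB$ samples $\skekey\gets\SKE.\Gen(1^\secp)$, forms the circuit $\Diop[\skekey]$, and submits $(1^\kappa,\Diop[\skekey])$ to its CPRF challenger — this is legitimate selective behaviour, since $\skekey$ (hence the constraint) is fixed before any oracle interaction, and $\kappa$ is the fixed polynomial prescribed by the construction. The challenger internally picks $\prfg\gets\CPRF.\Setup(1^\secp,1^\kappa)$ and returns $\prfg_{\notin\cViop}=\CPRF.\Constrain(\prfg,\Diop[\skekey])$, so $\cB$ can prepare $\iop\gets\Enc(\pk,(\prfg_{\notin\cViop},\skekey))$ exactly as $\Gen$ would and send it to $\qA$. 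To answer an $\Oracle{wprf}$ query $\cB$ samples $a\gets\zo{\inplen}$, obtains $\prfg(a)$ by querying its $\CPRF.\Eval(\prfg,\cdot)$ oracle, and returns $(a,\prfg(a))$. To answer the $\Oracle{chall}$ query $\cB$ samples $a^\ast\gets\zo{\inplen}$; if $\SKE.\Dec(\skekey,a^\ast)\ne\bot$ or $a^\ast$ ever coincides with an input $\cB$ feeds to its $\CPRF.\Eval$ oracle, $\cB$ aborts; otherwise it submits $a^\ast$ as its CPRF challenge point (valid since $\Diop[\skekey](a^\ast)=1$ and $a^\ast\notin\cQ_e$), receives $y_b$ with $y_0=\prfg(a^\ast)$ and $y_1\gets\Ranprf$, and returns $(a^\ast,y_b)$ to $\qA$. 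Finally $\cB$ outputs whatever bit $\qA$ outputs.

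For the analysis I would argue that, conditioned on $\cB$ not aborting, $\cB$ perfectly simulates $\expb{\qA,\PRF_\cprf}{ext}{wprf}$ with $\coin=b$: the distribution of $\iop$ is identical, every $\Oracle{wprf}$ answer matches, and the $\Oracle{chall}$ answer is $(a^\ast,\prfg(a^\ast))$ when $b=0$ and $(a^\ast,\text{uniform})$ when $b=1$, matching the $\coin=0$ and $\coin=1$ cases. Hence $\cB$ wins its game with the same probability that $\qA$ wins, up to the abort probability, which is negligible: a union bound over the polynomially many $\CPRF.\Eval$/$\Oracle{wprf}$ inputs bounds collisions with $a^\ast$ by $\poly(\secp)/2^{\inplen}$, and sparseness of $\SKE$ bounds $\Pr[\SKE.\Dec(\skekey,a^\ast)\ne\bot]\le\negl(\secp)$. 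Thus $\advb{\qA,\PRF_\cprf}{ext}{wprf}(\secp)\le\negl(\secp)$.

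I expect the only delicate points — none of them a genuine obstacle — to be: (i) checking that $\cB$'s challenge point $a^\ast$ really meets both side conditions of the CPRF game, namely $a^\ast\notin\cQ_e$ (freshness plus a union bound) and $\Diop[\skekey](a^\ast)=1$ (sparseness of $\SKE$); (ii) verifying that committing to the constraint $\Diop[\skekey]$ at the outset is consistent both with the selective CPRF game and with the timing of the extended weak pseudorandomness game, where $\qA$ publishes $\pp$ before any oracle call; and (iii) noting that the adversarial choice of $\pk$ is harmless, because $\cB$ holds the actual plaintext $(\prfg_{\notin\cViop},\skekey)$ and therefore never relies on any security property of $\PKE$. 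Once these are in place, the reduction and its analysis are routine.
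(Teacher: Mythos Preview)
Your proposal is correct and follows essentially the same approach as the paper's proof: both build a reduction to selective single-key pseudorandomness of $\CPRF$ that samples $\skekey$, commits to the constraint $\Diop[\skekey]$, simulates $\iop$ from the returned constrained key via the adversarially supplied $\pk$, answers $\Oracle{wprf}$ using the $\CPRF$ evaluation oracle, and forwards the uniformly chosen $a^\ast$ as the CPRF challenge, invoking sparseness of $\SKE$ and a union bound to ensure validity. Your explicit abort checks and the observation that no $\PKE$ security is needed are exactly the points the paper relies on (the paper simply argues the bad events are negligible rather than explicitly aborting).
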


\ifnum\submission=0

\ifnum\submission=1
\section{Security Proof for $\PRF_\cprf$}\label{sec:proofs_elwmprf_lwe}
\else\fi

\paragraph{SIM-MDD security.} First, we prove the SIM-MDD security of $\PRF_\cprf$.
\begin{proof}[Proof of~\cref{thm:extraction-less_watermarking_pcprf}]
We define a sequence of hybrid games to prove the theorem.
\begin{description}
\item[$\hybi{0}$:] This is the same as the case $\coin=1$ in $\expb{i^\ast,\qA,\PRF_\cprf}{sim}{mdd}(\secp)$. In this game, $\qA$ is given $\iop\gets \Enc(\pk,(\prfg_{\notin \cViop},\skekey))$ and $\prfg_{\notin \cV} \gets \CPRF.\Constrain(\prfg,D[\skekey,\msg])$ as a public tag and a marked circuit.
After $\iop$ and $\prfg_{\notin \cV}$ are given, $\qA$ can access to $\Oracle{sim}$.
Finally, after finishing the access to $\Oracle{sim}$, $\qA$ is given $(\gamma^\ast,x^\ast,y^\ast))$ as the challenge tuple and outputs $\coin'\in\bit$, where $\gamma^\ast \chosen \zo{}$, $x^\ast \lrun \SKE.\Enc(\skekey, i^\ast \concat \gamma^\ast)$, and $y^\ast \gets\prfg_{\notin \cViop}(x^\ast)$.
\item[$\hybi{1}$:] This is the same as $\hybi{0}$ except for the following two changes.
First, $\prfg$ is used instead of $\prfg_{\notin \cViop}$ when generating the challenge tuple $(\gamma^\ast,x^\ast,y^\ast)$.
Second, we change the behavior of $\Oracle{sim}$ as follows.
When $\qA$ sends $\iop'$ and $i'$ to $\Oracle{sim}$, if $\iop'=\iop$, $\Oracle{sim}$ performs the remaining procedures by using $(G,\skekey)$ (without decrypting $\iop'=\iop$).
 \item[$\hybi{2}$:] This is the same as $\hybi{1}$ except that we use $\iop \lrun \Enc(\pk,0^{2\secp})$ instead of $\iop \gets \Enc(\pk,(\prfg_{\notin \cViop},\skekey))$.
 
 \item[$\hybi{3}$:] This is the same as $\hybi{2}$ except that if $\msg[i^\ast] = \gamma^\ast$, we use $y^\ast \gets \zo{\outlen}$ instead of $y^\ast \gets \prfg(x^\ast)$.

 \item[$\hybi{4}$:] This is the same as $\hybi{3}$ except that we use a simulated $(\stinfo_\Sim,\simprfg) \lrun \CPRF.\Sim_1(1^\kappa,1^\secp)$ instead of $\prfg_{\notin\cV} \lrun \CPRF.\Constrain(\prfg,D[\skekey,\msg])$ for the challenge marked circuit.
 Also, if $\msg[i^\ast]\ne \gamma^\ast$, the challenger computes $y^\ast \gets \Sim_2(\stinfo_\Sim,x^\ast,0)$.
 In addition, we also change the behavior of $\Oracle{sim}$ as follows. 
Given $\iop'$ and $i'$, if $\iop'\ne\iop$, $\Oracle{sim}$ answers in the same way as $\hybi{3}$.
Otherwise, it returns $(\gamma,x,y)$, where $\gamma\gets\bit$, $x \lrun \SKE.\Enc(\skekey,i'\concat \gamma)$, and $y\gets\Sim_2(\stinfo_\Sim,x,1)$ if $\msg[i']=\gamma$ and $y\gets\Sim_2(\stinfo_\Sim,x,0)$ otherwise.




\item[$\hybi{5}$:] This is the same as $\hybi{4}$ except that we use $x^\ast \chosen \zo{\inplen}$ instead of $x^\ast \lrun \SKE.\Enc(\skekey, i^\ast \concat \gamma^\ast)$.


\item[$\hybi{6}$:]
We undo the change at $\hybi{4}$.

 \item[$\hybi{7}$:]We undo the change at $\hybi{2}$.
 \item[$\hybi{8}$:]We undo the change at $\hybi{1}$. This is the same as the case $\coin=0$ in $\expb{i^\ast,\qA,\PRF_\cprf}{sim}{mdd}(\secp)$.
\end{description}

\begin{proposition}\label{prop:pcprf_zero_one}
If $\CPRF$, $\SKE$, and $\PKE$ are correct, it holds that $\abs{\Pr[\hybi{0}=1]- \Pr[\hybi{1}=1]}\le \negl(\secp)$.
\end{proposition}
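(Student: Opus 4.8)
The plan is to show that $\hybi{0}$ and $\hybi{1}$ produce identically distributed transcripts except with negligible probability, using only the (perfect) correctness of $\PKE$ and $\SKE$ and the constrained correctness of $\CPRF$. First I would pin down precisely what differs between the two games: $\hybi{1}$ computes the challenge value as $y^\ast \gets \prfg(x^\ast)$ rather than $y^\ast \gets \prfg_{\notin\cViop}(x^\ast)$, and on an $\Oracle{sim}$-query carrying the honest tag $\iop$ it answers with $y \gets \prfg(x)$ rather than first decrypting $\iop$ to recover $(\prfg_{\notin\cViop},\skekey)$ and setting $y \gets \prfg_{\notin\cViop}(x)$; everything else, including $\Oracle{sim}$ on tags $\iop'\neq\iop$, is literally unchanged. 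In both modified places the argument fed to the PRF is a fresh $\SKE$ ciphertext under $\skekey$, namely $x^\ast \gets \SKE.\Enc(\skekey, i^\ast\|\gamma^\ast)$ and $x \gets \SKE.\Enc(\skekey, i'\|\gamma)$. By perfect $\PKE$ correctness the decryption of $\iop$ performed in $\hybi{0}$ indeed recovers $(\prfg_{\notin\cViop},\skekey)$, so the only residual gap is whether $\prfg_{\notin\cViop}$ and $\prfg$ agree on such ciphertexts.

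The heart of the argument is the claim that for every $i\in[\msglen]$ and $\gamma\in\zo{}$, with all but negligible probability over honestly sampled $\skekey$, master CPRF key $\prfg$, constrained key $\prfg_{\notin\cViop}$ for the circuit $\Diop[\skekey]$, and a fresh ciphertext $x\gets\SKE.\Enc(\skekey,i\|\gamma)$, one has $\prfg_{\notin\cViop}(x)=\prfg(x)$. I would prove this by a direct reduction to the constrained correctness of $\CPRF$ (\cref{def:cprf_security}): the reduction samples $\skekey\gets\SKE.\Gen(1^\secp)$ on its own, submits $(1^\kappa,\Diop[\skekey])$ as its constraint (with $\kappa$ as specified in $\PRF_\cprf$), receives $\prfg_{\notin\cViop}$, samples $x\gets\SKE.\Enc(\skekey,i\|\gamma)$, and outputs $x$; it never uses the master key or the $\Eval$ oracle. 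By perfect $\SKE$ correctness, $\SKE.\Dec(\skekey,x)=i\|\gamma\neq\bot$, hence $\Diop[\skekey](x)=0$ and $x\in\Domprf=\zo{\inplen}$, so the winning condition of the constrained-correctness game collapses exactly to $\prfg_{\notin\cViop}(x)\neq\prfg(x)$, which therefore occurs with only negligible probability.

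To finish, I would observe that in one execution of $\hybi{0}$ (equivalently $\hybi{1}$, up to the relabeling we are proving harmless) there are at most $q+1=\poly(\secp)$ inputs on which $\prfg_{\notin\cViop}$ is ever compared with $\prfg$ --- the challenge input and one per $\iop$-query, where $q$ is the number of $\Oracle{sim}$ queries --- and each is a fresh $\SKE$ encryption of some $i\|\gamma$. A union bound over these inputs, applying the claim with the relevant $(i,\gamma)$ each time, shows that $\prfg_{\notin\cViop}$ and $\prfg$ agree on all of them with all but negligible probability; conditioned on this event together with the (perfect) correctness of $\PKE$, the transcripts of $\hybi{0}$ and $\hybi{1}$ coincide, so $\abs{\Pr[\hybi{0}=1]-\Pr[\hybi{1}=1]}\le\negl(\secp)$.

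The only subtlety to watch --- and the closest thing to an obstacle --- is the invocation of $\CPRF$ correctness: \cref{def:cprf_security} guarantees agreement only at a single point chosen by an adversary holding $\sk_f$ and an $\Eval$-oracle, whereas here we need agreement at a whole polynomial family of points that are moreover correlated with $\skekey$ through both the constraint $\Diop[\skekey]$ and the ciphertext distribution. This is exactly why I phrase the claim per-$(i,\gamma)$ with a reduction that itself samples $\skekey$ --- making it legitimate for both $f$ and the output point to depend on $\skekey$ --- and that never has to simulate the rest of the game (in particular never needs the master key $\prfg$); the union bound then lifts it to the polynomially many points that actually arise.
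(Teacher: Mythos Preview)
Your proposal is correct and follows essentially the same approach as the paper: use $\SKE$ correctness to see that each $x$ (the challenge $x^\ast$ and the $x$'s arising in $\Oracle{sim}$ answers on $\iop'=\iop$) satisfies $\Diop[\skekey](x)=0$, invoke $\CPRF$ correctness to conclude $\prfg_{\notin\cViop}(x)=\prfg(x)$, and use $\PKE$ correctness to see that decrypting $\iop$ in $\hybi{0}$ recovers $(\prfg_{\notin\cViop},\skekey)$. The paper's own proof is just a couple of sentences and does not spell out the per-point reduction to the adversarial $\CPRF$-correctness game or the union bound over the $q+1$ points; your write-up makes these steps explicit, which is a welcome elaboration but not a different argument.
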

\begin{proof}[Proof of~\cref{prop:pcprf_zero_one}]
For the first change, $x^\ast\notin\cViop$ holds since $\bot\ne\SKE.\Dec(\skekey,x^\ast)$ from the correctness of $\SKE$.
Then, from the correctness of $\CPRF$, we have $\prfg_{\notin \cViop}(x^\ast)=\prfg(x^\ast)$, and thus the first change does not affect the view of $\qA$.
For the second change, from the correctness of $\PKE$, $(\prfg_{\notin \cViop},\skekey)\gets\Dec(\sk,\iop')$ if $\iop'=\iop$.
Then, similarly to the first change, we can see that the second change does not affect the view of $\qA$.
\end{proof}

\begin{proposition}\label{prop:pcprf_one_two}
If $\PKE$ is CCA secure, it holds that $\abs{\Pr[\hybi{1}=1]- \Pr[\hybi{2}=1]}\le \negl(\secp)$.
\end{proposition}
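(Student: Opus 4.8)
The plan is to reduce distinguishing $\hybi{1}$ and $\hybi{2}$ directly to the CCA security of $\PKE$, since the only difference between the two games is that the public tag is generated as $\iop \gets \Enc(\pk,(\prfg_{\notin \cViop},\skekey))$ in $\hybi{1}$ and as $\iop \gets \Enc(\pk,0^{2\secp})$ in $\hybi{2}$. First I would build a QPT CCA adversary $\qB$ that, on receiving the public key $\pk$, sets $\pp \seteq \pk$, runs $\qA$, and receives the target message $\msg$. Then $\qB$ generates $\prfg \gets \CPRF.\Setup(1^\secp,1^\kappa)$ and $\skekey \gets \SKE.\Gen(1^\secp)$ itself, constructs $\Diop[\skekey]$ (\cref{fig:description_Diop}), and computes $\prfg_{\notin \cViop} \gets \CPRF.\Constrain(\prfg,\Diop[\skekey])$. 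It submits the pair $\bigl((\prfg_{\notin \cViop},\skekey),\,0^{2\secp}\bigr)$ as its CCA challenge, receives $\ct_b$, and uses $\iop \seteq \ct_b$. Since $\qB$ holds $\prfg$ and $\skekey$, it honestly computes the marked circuit $\tlC = \prfg_{\notin \cV}$ via $\CPRF.\Constrain(\prfg,D[\skekey,\msg])$ (\cref{fig:description_D_lwe}) and hands $(\iop,\tlC)$ to $\qA$.

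Next I would specify how $\qB$ answers oracle queries and forms the challenge tuple. For a query $(\iop',i')$ to $\Oracle{sim}$: if $\iop' = \iop$, $\qB$ answers directly from the CPRF key and $\skekey$ it already holds, exactly as the modified $\Oracle{sim}$ of $\hybi{1}$ prescribes (no decryption needed); if $\iop' \neq \iop$, $\qB$ forwards $\iop'$ to its CCA decryption oracle, recovers the plaintext, and runs the honest $\Sim$ procedure with it. For the challenge tuple, after $\qA$ stops querying $\Oracle{sim}$, $\qB$ samples $\gamma^\ast \gets \bit$, computes $x^\ast \gets \SKE.\Enc(\skekey, i^\ast \concat \gamma^\ast)$ and $y^\ast \gets \prfg(x^\ast)$ (using $\prfg$ directly, as in both $\hybi{1}$ and $\hybi{2}$), gives $(\gamma^\ast,x^\ast,y^\ast)$ to $\qA$, and finally outputs whatever bit $\qA$ outputs. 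When the CCA bit is $0$ the simulation is syntactically identical to $\hybi{1}$, and when it is $1$ it is identical to $\hybi{2}$; hence $\Pr[\qB \to 1 \mid b=0] = \Pr[\hybi{1}=1]$ and $\Pr[\qB \to 1 \mid b=1] = \Pr[\hybi{2}=1]$, so by \cref{def:pke_CCA} the quantity $\abs{\Pr[\hybi{1}=1] - \Pr[\hybi{2}=1]}$ is negligible, which is the claim.

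The one point needing care — and the reason the step $\hybi{0}\to\hybi{1}$ was arranged as it was — is that $\qB$ must never be forced to decrypt its own challenge ciphertext $\ct_b = \iop$. This holds precisely because in $\hybi{1}$ the secret key $\sk$ is no longer used to handle $\iop$ itself: the challenge tuple is generated from $\prfg$ rather than $\prfg_{\notin \cViop}$, and $\Oracle{sim}$ short-circuits on $\iop'=\iop$; $\sk$ is invoked only to decrypt tags $\iop'\neq\iop$, all of which are admissible decryption queries (made after $\qB$ receives $\ct_b$, which is allowed by the CCA game). Beyond checking this, the argument is routine: I would just verify that $\qB$ is QPT (one execution of $\qA$ plus polynomially much extra work and polynomially many decryption queries) and conclude.
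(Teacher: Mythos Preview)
Your proof is correct and follows essentially the same approach as the paper: both construct a CCA adversary $\qB$ that embeds the challenge ciphertext as $\iop$, simulates $\Oracle{sim}$ using local data $(\prfg,\skekey)$ when $\iop'=\iop$ and the decryption oracle otherwise, and produces the challenge tuple from $\prfg$ directly. Your explicit discussion of why $\qB$ never needs to decrypt its own challenge ciphertext is a welcome clarification of a point the paper leaves implicit.
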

\begin{proof}[Proof of~\cref{prop:pcprf_one_two}]
We construct an algorithm $\qB$ that breaks CCA security of $\PKE$ by using $\qA$. 
\begin{description}
\item[$\qB$:] $\qB$ is given $\pk$ from the challenger. $\qB$ generates $\skekey \gets \SKE.\Gen(1^\secp)$, and $\prfg \gets \CPRF.\Setup(1^\secp)$, and sets $\pp \seteq \pk$. $\qB$ sends $\pp$ to $\qA$ and obtain $\msg$ from $\qA$. $\qB$ then generate $\prfg_{\notin \cViop} \seteq \CPRF.\Constrain(\prfg,\Diop[\skekey])$, sets $(m_0,m_1)\seteq ((\prfg_{\notin \cViop},\skekey), 0^{2\secp})$ as the challenge plaintext of the CCA game and receives $\iop$ from its challenger.
$\qB$ also constructs $D[\skekey,\msg]$, generates $\prfg_{\notin \cV} \gets \CPRF.\Constrain(\prfg,D[\skekey,\msg])$, and sends $\iop$ and $\prfg_{\notin \cV}$ to $\qA$ as the challenge public tag and marked circuit.
\begin{description}
\item[$\Oracle{sim}$:] When $\qA$ sends $\iop'$ and $i'$ to $\Oracle{sim}$, $\qB$ simulates the answer by using $\prfg$, $\skekey$, and the decryption oracle $\Dec(\sk,\cdot)$.
\end{description}

After finishing $\qA$'s oracle access to $\Oracle{sim}$, $\qB$ chooses $\gamma^\ast \chosen \zo{}$, generates $x^\ast \gets \SKE.\Enc(\skekey,i^\ast\concat \gamma^\ast)$ and $y^\ast\gets\prfg_{\notin \cViop}(x^\ast)=\prfg(x^\ast)$, and sends $(\gamma^\ast,x^\ast,y^\ast)$ to $\qA$. Note that $\prfg_{\notin \cViop}(x^\ast)=\prfg(x^\ast)$ holds since $\bot\ne\SKE.\Dec(\skekey,x^\ast)$ and thus $x^\ast\notin\cViop$.

Finally, when $\qA$ terminates with output $\coin'$, $\qB$ outputs $\coin'$ and terminates.
\end{description}

$\qB$ perfectly simulates if $\hybi{1}$ if $\iop\gets \Enc(\pk,(\prfg_{\notin \cViop},\skekey))$, and $\hybi{2}$ if $\iop\gets \Enc(\pk,0^{2\secp})$.
 This completes the proof.
\end{proof}

\begin{proposition}\label{prop:pcprf_two_three}
If $\CPRF$ satisfies selective pseudorandomness, it holds that
\[\abs{\Pr[\hybi{2}=1]- \Pr[\hybi{3}=1]}\le \negl(\secp).\]
\end{proposition}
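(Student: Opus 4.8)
The plan is to reduce to the selective single-key pseudorandomness of $\CPRF$ with constraint circuit $D[\skekey,\msg]$. The starting observation is that $\hybi{2}$ and $\hybi{3}$ proceed identically unless the challenge bit $\gamma^\ast$ equals $\msg[i^\ast]$; in that event $x^\ast\gets\SKE.\Enc(\skekey,i^\ast\|\gamma^\ast)$ decrypts to $i^\ast\|\gamma^\ast$ with $\gamma^\ast=\msg[i^\ast]$, so $D[\skekey,\msg](x^\ast)=1$, i.e. $x^\ast$ is a point on which the constrained key $\prfg_{\notin\cV}$ carries no information about $\prfg(x^\ast)$, and the sole change from $\hybi{2}$ to $\hybi{3}$ is replacing the true value $\prfg(x^\ast)$ by a fresh uniform string. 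Since $\gamma^\ast$ is uniform and independent, this already gives $|\Pr[\hybi{2}=1]-\Pr[\hybi{3}=1]|=\tfrac12\,|\Pr[\hybi{2}=1\mid\gamma^\ast=\msg[i^\ast]]-\Pr[\hybi{3}=1\mid\gamma^\ast=\msg[i^\ast]]|$.

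First I would build a $\CPRF$ adversary $\qB$ running $\qA$ internally. $\qB$ samples $(\pk,\sk)\gets\Gen(1^\secp)$ and $\skekey\gets\SKE.\Gen(1^\secp)$ itself, sends $\pp\seteq\pk$ to $\qA$, and upon receiving $\msg$ commits to the constraint $D[\skekey,\msg]$ in the selective game; this is exactly where selectivity suffices, since $\qB$ already knows $\skekey$ and $\msg$ and has revealed nothing depending on the $\CPRF$ key yet. It receives $\prfg_{\notin\cV}$ as the constrained key, forms $\iop\gets\Enc(\pk,0^{2\secp})$, and hands $(\iop,\prfg_{\notin\cV})$ to $\qA$. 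An $\Oracle{sim}$ query $(\iop',i')$ with $\iop'=\iop$ is answered by sampling $\gamma\gets\zo{}$, setting $x\gets\SKE.\Enc(\skekey,i'\|\gamma)$, obtaining $y$ from a query to the $\CPRF$ evaluation oracle $\Eval$ at $x$, and returning $(\gamma,x,y)$; a query with $\iop'\neq\iop$ is answered by decrypting with $\sk$ and running $\Sim$ verbatim, which needs no interaction with the $\CPRF$ challenger. Finally $\qB$ samples $\gamma^\ast\gets\zo{}$ and $x^\ast\gets\SKE.\Enc(\skekey,i^\ast\|\gamma^\ast)$: if $\gamma^\ast=\msg[i^\ast]$ (so $D[\skekey,\msg](x^\ast)=1$), it submits $x^\ast$ as its challenge point, forwards the returned value $y_b$ to $\qA$ as $y^\ast$, runs $\qA$ to the end, and outputs $\qA$'s final bit as its guess for $b$; if $\gamma^\ast\neq\msg[i^\ast]$, it instead obtains $y^\ast$ from an $\Eval$ query at $x^\ast$, runs $\qA$ to the end (discarding its output), and submits a freshly sampled dummy point $x\gets\SKE.\Enc(\skekey,1\|\msg[1])$ — which also has $D[\skekey,\msg](x)=1$ — together with a uniformly random guess bit.

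For the analysis, conditioned on any fixed $b$ and $\gamma^\ast$, $\qB$'s interaction with $\qA$ is distributed exactly as $\hybi{2}$ when $b=0$ and as $\hybi{3}$ when $b=1$, because the $\Eval$ oracle always returns the true $\prfg$ value, which is precisely what both hybrids use for every $\Oracle{sim}$ answer with $\iop'=\iop$ and for $y^\ast$ whenever $\gamma^\ast\neq\msg[i^\ast]$. Hence in the branch $\gamma^\ast=\msg[i^\ast]$ the probability that $\qB$ guesses $b$ correctly is $\tfrac12(1-\Pr[\hybi{2}=1\mid\gamma^\ast=\msg[i^\ast]]+\Pr[\hybi{3}=1\mid\gamma^\ast=\msg[i^\ast]])$ up to the negligible probability that $x^\ast$ coincides with one of the polynomially many freshly randomized $\SKE$ ciphertexts already put into $\cQ_e$, while in the branch $\gamma^\ast\neq\msg[i^\ast]$ it is $\tfrac12$ up to the analogous negligible term for the dummy point. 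A short computation then shows that the $\CPRF$ advantage $2|\Pr[\qB\text{ wins}]-\tfrac12|$ equals $|\Pr[\hybi{2}=1]-\Pr[\hybi{3}=1]|$ up to a negligible additive term, which by the assumed selective pseudorandomness of $\CPRF$ yields the claim. I expect the only mildly delicate point to be the handling of the $\gamma^\ast\neq\msg[i^\ast]$ branch: $\qB$ must still present a valid constrained challenge point (to satisfy the $f(x)\neq0$ clause of the winning predicate) and a guess that is correct with probability exactly $1/2$, so that the reduction's advantage stays cleanly proportional to the hybrid gap rather than being polluted by an additive constant; together with the routine verification that the two $\SKE$-ciphertext collision events are negligible, this completes the argument.
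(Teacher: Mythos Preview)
Your proposal is correct and follows essentially the same reduction as the paper: both commit to the constraint $D[\skekey,\msg]$ selectively, hand $\qA$ the constrained key together with $\iop\gets\Enc(\pk,0^{2\secp})$, answer $\Oracle{sim}$ queries on $\iop'=\iop$ via the $\CPRF$ evaluation oracle, and route $x^\ast$ to the challenge oracle exactly when $\gamma^\ast=\msg[i^\ast]$ so that $D[\skekey,\msg](x^\ast)=1$.

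Your write-up is in fact more careful than the paper's on two points the paper leaves implicit: you explicitly keep $\qB$ a syntactically well-formed $\CPRF$ adversary in the $\gamma^\ast\neq\msg[i^\ast]$ branch by submitting a dummy constrained point and a uniform guess (the paper simply outputs $\coin'$ without ever issuing a challenge in that branch), and you flag the negligible collision event $x^\ast\in\cQ_e$ needed for the winning predicate $x\notin\cQ_e$. Neither addition changes the argument's substance, but both make the reduction formally complete.
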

\begin{proof}[Proof of~\cref{prop:pcprf_two_three}]
We use selective single-key pseudorandomness of $\prfg$.
%
%
We construct an algorithm $\qB$ that breaks the selective single-key pseudorandomness of $\prfg$ by using $\qA$.

\begin{description}
\item[$\qB$:] $\qB$ generates $(\pk,\sk)\gets \Gen(1^\secp)$, $\skekey \gets \SKE.\Gen(1^\secp)$, and $\iop \gets \Enc(\pk,0^{2\secp})$, sets and sends $\pp \seteq \pk$ to $\qA$, and obtains $\msg$ from $\qA$.
$\qB$ constructs $D[\skekey,\msg]$, sends $D[\skekey,\msg]$ to its challenger, and receives $\prfg_{\notin \cV}$. $\qB$ sends $\iop$ and $\prfg_{\notin \cV}$ to $\qA$ as the challenge public tag and marked circuit.
\begin{description}
\item[$\Oracle{sim}$:] When $\qA$ sends $\iop'$ and $i'\in[\msglen]$ to $\Oracle{sim}$, if $\iop'\ne\iop$, $\qB$ computes $(\prfg',\skekey')\gets\Dec(\sk,\iop')$, and computes and returns the answer $(\gamma,x,y)$ by using $(\prfg',\skekey')$.
If $\iop'=\iop$, $\qB$ returns $(\gamma,x,y)$ computed as follows.
$\qB$ chooses $\gamma \chosen \zo{}$, and generates $x\gets \SKE.\Enc(\skekey,i'\concat \gamma)$. $\qB$ finally sends $x$ to its PRF evaluation oracle and receives $y\gets\prfg(x)$.
\end{description}

After finishing $\qA$'s oracle access to $\Oracle{sim}$,$\qB$ sends $(\gamma^\ast,x^\ast,y^\ast)$ computed as follows to $\qA$.
$\qB$ first chooses $\gamma^\ast \chosen \zo{}$ and generates $x^\ast \gets \SKE.\Enc(\skekey,i^\ast\concat \gamma^\ast)$. If $\msg[i^\ast] =\gamma^\ast$, $\qB$ sends $x^\ast$ to its challenge oracle and receives $y^\ast$.
If $\msg[i^\ast] \ne \gamma^\ast$, $\qB$ sends $x^\ast$ to its PRF evaluation oracle and receives $y^\ast$.

Finally, when $\qA$ terminates with output $\coin'$, $\qB$ outputs $\coin'$ and terminates.
\end{description}

$\qB$ perfectly simulates $\hybi{2}$ if the challenge oracle returns $y^\ast = \prfg(x^\ast)$, and $\hybi{3}$ if it returns $y^\ast \chosen \zo{\outlen}$.
Note that in these games, $x^\ast\gets \SKE.\Enc(\skekey,i^\ast\concat \gamma^\ast)$, and thus if $\msg[i^\ast] = \gamma^\ast$, we have $x^\ast \in \cV$ ($D[\skekey,\msg](x^\ast)=1$).
 This completes the proof.
\end{proof}

\begin{proposition}\label{prop:pcprf_three_four}
If $\CPRF$ satisfies selective single-key privacy, it holds that
\[\abs{\Pr[\hybi{3}=1]- \Pr[\hybi{4}=1]}\le \negl(\secp).\]
\end{proposition}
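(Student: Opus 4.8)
The plan is to reduce the indistinguishability of $\hybi{3}$ and $\hybi{4}$ to the selective single-key privacy of $\CPRF$ by a straightforward simulation argument. Recall that $\hybi{4}$ differs from $\hybi{3}$ only in that the constrained key $\prfg_{\notin\cV}\gets\CPRF.\Constrain(\prfg,D[\skekey,\msg])$ is replaced by the simulated key $\simprfg$ output by $\CPRF.\Sim_1(1^\kappa,1^\secp)$, and that every honest evaluation of $\prfg$ on inputs outside $\cV$ (that is, on ciphertexts $x$ with $D[\skekey,\msg](x)=0$) is replaced by a call to $\Sim_2(\stinfo_\Sim,x,D[\skekey,\msg](x))$. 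The key observation is that, by the time we reach $\hybi{3}$, the only places where the real PRF key $\prfg$ is still touched are: (i) the generation of the marked circuit, which is exactly a constrained key for $D[\skekey,\msg]$; and (ii) the answers to $\Oracle{sim}$ queries with $\iop'=\iop$ and the challenge tuple $(\gamma^\ast,x^\ast,y^\ast)$ when $\msg[i^\ast]\ne\gamma^\ast$, which are evaluations $\prfg(x)$ on ciphertexts $x=\SKE.\Enc(\skekey,i'\concat\gamma)$ for which $D[\skekey,\msg](x)=0$ (since $\gamma\ne\msg[i']$). Crucially, in $\hybi{3}$ the $y^\ast$ value in the case $\msg[i^\ast]=\gamma^\ast$ is already a fresh random string and does not invoke $\prfg$ at all; so every remaining use of $\prfg$ is on a point where the constraint evaluates to $0$, which is precisely the regime where privacy lets us swap real evaluations for $\Sim_2$ answers with the correct constraint bit.

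Concretely, I would build a reduction $\qB$ against the selective single-key privacy game of $\CPRF$. First $\qB$ generates $(\pk,\sk)\gets\Gen(1^\secp)$, $\skekey\gets\SKE.\Gen(1^\secp)$, and $\iop\gets\Enc(\pk,0^{2\secp})$, sends $\pp\seteq\pk$ to $\qA$, and receives $\msg$. It then constructs the circuit $D[\skekey,\msg]$ from~\cref{fig:description_D_lwe}, submits it as the challenge constraint to its own challenger, and receives back either $\sk_0=\CPRF.\Constrain(\msk,D[\skekey,\msg])$ or $\sk_1$ from $\CPRF.\Sim_1$; $\qB$ forwards this as the marked circuit $\tlC=\prfg_{\notin\cV}$ and $\iop$ to $\qA$. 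To answer an $\Oracle{sim}$ query $(\iop',i')$ with $\iop'\ne\iop$, $\qB$ decrypts $(\prfg',\skekey')\gets\Dec(\sk,\iop')$ and simulates honestly as in $\hybi{3}$. For $\iop'=\iop$, $\qB$ picks $\gamma\gets\bit$, sets $x\gets\SKE.\Enc(\skekey,i'\concat\gamma)$, and obtains $y$ by querying its oracle $\cO_b$ on $x$; note $\cO_0(x)=\Eval(\msk,x)=\prfg(x)$ reproduces $\hybi{3}$ while $\cO_1(x)=\Sim_2(\stinfo_\Sim,x,D[\skekey,\msg](x))$ reproduces $\hybi{4}$, and since $\gamma\ne\msg[i']$ implies the constraint bit is $0$ and $\gamma=\msg[i']$ implies it is $1$, this matches exactly the $\hybi{4}$ specification. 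Finally, for the challenge tuple, $\qB$ picks $\gamma^\ast\gets\bit$, $x^\ast\gets\SKE.\Enc(\skekey,i^\ast\concat\gamma^\ast)$; if $\msg[i^\ast]=\gamma^\ast$ it sets $y^\ast\gets\zo{\outlen}$ (identical in both games), and otherwise it sets $y^\ast$ by querying $\cO_b(x^\ast)$. Then $\qB$ outputs whatever $\qA$ outputs. When $b=0$, $\qB$ perfectly simulates $\hybi{3}$; when $b=1$, it perfectly simulates $\hybi{4}$; so any distinguishing advantage of $\qA$ translates directly into advantage against single-key privacy.

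One subtlety to check carefully is that $\qB$'s oracle queries are all consistent with the privacy game's constraint: the privacy definition puts no restriction on which points $\qA$ may query to $\cO_b$, so this is not an obstacle, but I would state explicitly that in $\hybi{3}$ and $\hybi{4}$ every invocation of $\prfg$ / $\Sim_2$ that $\qB$ routes through its oracle is well-formed. A second point is that the simulator state $\stinfo_\Sim$ is shared across all these oracle calls and the challenge, which is automatically handled because $\qB$ never touches $\stinfo_\Sim$ itself — it only relays queries to $\cO_b$, and the single privacy challenger maintains the shared state internally. The main (though still routine) obstacle is the bookkeeping: verifying that, modulo the already-random $y^\ast$ branch introduced in $\hybi{3}$, the descriptions of $\hybi{3}$ and $\hybi{4}$ agree with the $b=0$ and $b=1$ behaviors of $\qB$ on a query-by-query basis, including the $\iop'\ne\iop$ branch of $\Oracle{sim}$ where $\qB$ still uses the honestly decrypted key and genuine evaluation (so that branch is identical in the two hybrids and does not interact with the privacy challenge at all). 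Once this correspondence is laid out, the bound $\abs{\Pr[\hybi{3}=1]-\Pr[\hybi{4}=1]}\le\advt{\cdot}{priv}(\secp)=\negl(\secp)$ follows immediately.
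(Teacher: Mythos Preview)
Your proposal is correct and follows essentially the same reduction as the paper: build $\qB$ that generates $(\pk,\sk)$, $\skekey$, $\iop\gets\Enc(\pk,0^{2\secp})$, submits $D[\skekey,\msg]$ as the selective constraint, forwards the returned key as the marked circuit, routes all $\iop'=\iop$ simulation queries and the $\msg[i^\ast]\ne\gamma^\ast$ branch of the challenge through the privacy oracle $\cO_b$, and handles the $\msg[i^\ast]=\gamma^\ast$ branch with fresh randomness. One small wording slip in your first paragraph: the $\Oracle{sim}$ queries with $\iop'=\iop$ hit points where the constraint bit can be either $0$ or $1$ (depending on $\gamma$), not only $0$ as you state, but this is harmless since the privacy oracle $\cO_1(\cdot)=\Sim_2(\stinfo_\Sim,\cdot,f(\cdot))$ handles both cases, and your actual reduction in the second paragraph treats it correctly.
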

\begin{proof}[Proof of~\cref{prop:pcprf_three_four}]
We use selective single-key privacy of $\prfg$.
We construct an algorithm $\qB$ that breaks the selective privacy of $\prfg$ by using $\qA$.

\begin{description}
\item[$\qB$:] $\qB$ generates $(\pk,\sk)\gets \Gen(1^\secp)$, $\skekey \gets \SKE.\Gen(1^\secp)$, and $\iop\gets \Enc(\pk,0^{2\secp})$, sends $\pp \seteq \pk$ to $\qA$, and obtains $\msg$ from $\qA$.
$\qB$ constructs $D[\skekey,\msg]$, sends $D[\skekey,\msg]$ to its challenger, and receives $\prfg^\ast$. $\qB$ sends $\iop$ and $\prfg^{\ast}$ to $\qA$ as the challenge public tag and marked circuit.
\begin{description}
\item[$\Oracle{sim}$:] When $\qA$ sends $\iop'$ and $i'\in[\msglen]$ to $\Oracle{sim}$, if $\iop'\ne \iop$, $\qB$ computes $(\prfg',\skekey')\gets\Dec(\sk,\iop')$ and returns the answer $(\gamma,x,y)$ computed by using $(\prfg',\skekey')$. If $\iop'= \iop$, $\qB$ returns the answer $(\gamma,x,y)$ computed as follows.
$\qB$ chooses $\gamma \chosen \zo{}$, and generates $x\gets \SKE.\Enc(\skekey,i'\concat \gamma)$. $\qB$ sends $x$ to its oracle and receives $y$.
\end{description}

After finishing $\qA$'s oracle access to $\Oracle{sim}$, $\qB$ sends $(\gamma^\ast,x^\ast,y^\ast)$ computed as follows to $\qA$.
$\qB$ chooses $\gamma^\ast \chosen \zo{}$ and generates $x^\ast \gets \SKE.\Enc(\skekey,i^\ast\concat \gamma^\ast)$. If $\msg[i^\ast] = \gamma^\ast$, $\qB$ chooses $y^\ast \chosen \zo{\outlen}$. If $\msg[i^\ast] \ne \gamma^\ast$, $\qB$ sends $x^\ast$ to its oracle and receives $y^\ast$.

Finally, when $\qA$ terminates with output $\coin'$, $\qB$ outputs $\coin'$ and terminates.
\end{description}

$\qB$ perfectly simulates $\hybi{3}$ if $\prfg^\ast = \CPRF.\Constrain(\prfg,D[\skekey,\msg])$ and $\qB$ has access to $\prfg(\cdot)$, and $\hybi{4}$ if $\prfg^\ast = \Sim_1(1^\kappa,1^\secp)$ and $\qB$ has access to $\Sim_2(\stinfo_\Sim,\cdot,D[\skekey,\msg](\cdot))$.
 This completes the proof.
\end{proof}

\begin{proposition}\label{prop:pcprf_five_six}
If $\SKE$ satisfies ciphertext pseudorandomness, it holds that
\[\abs{\Pr[\hybi{4}=1]- \Pr[\hybi{5}=1]}\le \negl(\secp).\]
\end{proposition}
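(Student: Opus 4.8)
The plan is to reduce the gap between $\hybi{4}$ and $\hybi{5}$ directly to the ciphertext pseudorandomness of $\SKE$. The only change from $\hybi{4}$ to $\hybi{5}$ is that the first component $x^\ast$ of the challenge tuple is switched from $\SKE.\Enc(\skekey,i^\ast\concat\gamma^\ast)$ to a uniformly random element of $\zo{\inplen}$, which is exactly the SKE ciphertext space. The key observation that makes a reduction possible is that, by the time we reach $\hybi{4}$, the SKE key $\skekey$ is accessed \emph{only} as the key of $\SKE.\Enc$: the public tag has become $\Enc(\pk,0^{2\secp})$ (change at $\hybi{2}$), the marked circuit is the simulated $\simprfg\gets\CPRF.\Sim_1(1^\kappa,1^\secp)$ whose size parameter $\kappa$ depends on $\msglen$ but not on the bits of $\skekey$ (change at $\hybi{4}$), the $\iop'=\iop$ branch of $\Oracle{sim}$ is answered via $\Sim_2(\stinfo_\Sim,\cdot,\cdot)$ after one call to $\SKE.\Enc(\skekey,\cdot)$, and the challenge $y^\ast$ is either uniform or $\Sim_2(\stinfo_\Sim,x^\ast,0)$. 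Hence an SKE encryption oracle together with one SKE challenge ciphertext suffices to simulate everything that depends on $\skekey$.

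Concretely, I would build a QPT distinguisher $\qB$ for ciphertext pseudorandomness of $\SKE$ as follows. $\qB$ samples a PKE key pair $(\pk,\sk)\gets\Gen(1^\secp)$ itself, runs $(\stinfo_\Sim,\simprfg)\gets\CPRF.\Sim_1(1^\kappa,1^\secp)$, sets $\iop\gets\Enc(\pk,0^{2\secp})$, sends $\pp\seteq\pk$ to $\qA$, receives $\msg$, and returns $(\iop,\simprfg)$ as the challenge public tag and marked circuit. On an $\Oracle{sim}$-query $(\iop',i')$ with $\iop'\ne\iop$, $\qB$ decrypts $\iop'$ using $\sk$ and answers exactly as the honest $\Sim$ (this needs no secret beyond $\sk$, which $\qB$ holds); on a query with $\iop'=\iop$, $\qB$ picks $\gamma\gets\bit$, obtains $x\gets\SKE.\Enc(\skekey,i'\concat\gamma)$ from its own encryption oracle, and returns $(\gamma,x,y)$ with $y\gets\Sim_2(\stinfo_\Sim,x,1)$ if $\msg[i']=\gamma$ and $y\gets\Sim_2(\stinfo_\Sim,x,0)$ otherwise. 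Once $\qA$ has finished querying $\Oracle{sim}$, $\qB$ picks $\gamma^\ast\gets\bit$, submits $i^\ast\concat\gamma^\ast$ as its SKE challenge message, receives $\ct_b$, sets $x^\ast\seteq\ct_b$, computes $y^\ast\gets\zo{\outlen}$ if $\msg[i^\ast]=\gamma^\ast$ and $y^\ast\gets\Sim_2(\stinfo_\Sim,x^\ast,0)$ otherwise, sends $(\gamma^\ast,x^\ast,y^\ast)$ to $\qA$, and outputs whatever bit $\qA$ outputs.

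For the analysis, when the SKE challenge bit is $0$ we have $\ct_b=\SKE.\Enc(\skekey,i^\ast\concat\gamma^\ast)$ and $\qB$'s interaction with $\qA$ is distributed exactly as in $\hybi{4}$; when the bit is $1$ we have $\ct_b\chosen\zo{\ctlen}=\zo{\inplen}$ and the interaction is distributed exactly as in $\hybi{5}$. Therefore $\qB$'s distinguishing advantage equals $\frac{1}{2}\abs{\Pr[\hybi{4}=1]-\Pr[\hybi{5}=1]}$, and ciphertext pseudorandomness of $\SKE$ bounds this by $\negl(\secp)$, which gives the claim.

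I expect the only real work to be bookkeeping rather than anything cryptographic: one must double-check (i) that the rewrites in $\hybi{1}$--$\hybi{4}$ genuinely eliminated every occurrence of $\skekey$ outside of $\SKE.\Enc$---in particular, that the marked circuit and the $\iop'=\iop$ simulation-oracle responses no longer depend on $\skekey$ except through an encryption, and that $\kappa$ leaks nothing about $\skekey$---(ii) that $\qA$ makes no $\Oracle{sim}$ queries after receiving the challenge, so $\qB$ never needs its SKE encryption oracle after the challenge ciphertext is delivered, and (iii) that the SKE ciphertext length is exactly $\inplen$, so that the uniformly random string in $\hybi{5}$ and a uniformly random SKE challenge ciphertext are identically distributed. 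All three hold by construction, so once stated explicitly the reduction incurs no loss beyond the SKE advantage.
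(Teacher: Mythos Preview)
Your proposal is correct and follows essentially the same reduction as the paper: both build a QPT distinguisher $\qB$ against the ciphertext pseudorandomness of $\SKE$ that generates the PKE key pair itself, uses $\iop\gets\Enc(\pk,0^{2\secp})$ and the simulated $(\stinfo_\Sim,\simprfg)\gets\Sim_1(1^\kappa,1^\secp)$, answers $\Oracle{sim}$ queries via the SKE encryption oracle (for $\iop'=\iop$) or via $\sk$ (for $\iop'\ne\iop$), and embeds the SKE challenge as $x^\ast$. Your explicit bookkeeping checks (i)--(iii) are all valid and make the argument slightly more self-contained than the paper's version, but the underlying idea is identical.
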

\begin{proof}[Proof of~\cref{prop:pcprf_five_six}]
We construct an algorithm $\qB$ that breaks the ciphertext pseudorandomness of $\SKE$ by using $\qA$.

\begin{description}
\item[$\qB$:] $\qB$ generates $(\pk,\sk)\gets \Gen(1^\secp)$ and sends $\pp \seteq \pk$ to $\qA$, and obtains $\msg$ from $\qA$.
$\qB$ then generates $\iop\gets \Enc(\pk,0^{2\secp})$ and $(\stinfo_\Sim,\simprfg) \gets \Sim_1(1^\kappa,1^\secp)$, and sends $\iop$ and $\simprfg$ to $\qA$ as the challenge public tag and marked circuit.
\item[$\Oracle{sim}$:] When $\qA$ sends $\iop'$ and $i'\in[\msglen]$ to $\Oracle{sim}$, if $\iop'\ne \iop$, $\qB$ computes $(\prfg',\skekey')\gets\Dec(\sk,\iop)$ and returns the answer $(\gamma,x,y)$ computed by using $(\prfg',\skekey')$. If $\iop'= \iop$, $\qB$ returns the answer $(\gamma,x,y)$ computed as follows.
$\qB$ chooses $\gamma \chosen \zo{}$, sends $i'\concat \gamma$ to its encryption oracle, and receives $x \gets \SKE.\Enc(\skekey,i'\concat \gamma)$.
$\qB$ computes $y \gets\Sim_2(\stinfo_\Sim,x,1)$ if $\gamma= \msg[i']$ and $y \gets \Sim_2(\stinfo_\Sim,x,0)$ otherwise.

After finishing $\qA$'s oracle access to $\Oracle{sim}$, $\qB$ sends $(\gamma^\ast,x^\ast,y^\ast)$ computed as follows to $\qA$.
$\qB$ chooses $\gamma^\ast \chosen \zo{}$, sends $i^\ast\concat\gamma^\ast$ to its challenger as the challenge plaintext, and receives $x^\ast$.
$\qB$ generates $y^\ast\chosen \zo{\outlen}$ if $\msg[i^\ast] = \gamma^\ast$ and $y^\ast \gets \Sim_2(\stinfo_\Sim,x,0)$ otherwise.

Finally, when $\qA$ terminates with output $\coin'$, $\qB$ outputs $\coin'$ and terminates.
\end{description}

$\qB$ perfectly simulates $\hybi{4}$ if $x^\ast \gets \SKE.\Enc(\skekey,i^\ast\concat \gamma^\ast)$, and $\hybi{5}$ if $x^\ast \chosen \zo{\inplen}$.
 This completes the proof.
\end{proof}
\begin{proposition}\label{prop:pcprf_six_seven}
If $\CPRF$ satisfies selective single-key privacy, it holds that
\[\abs{\Pr[\hybi{5}=1]- \Pr[\hybi{6}=1]}\le \negl(\secp).\]
\end{proposition}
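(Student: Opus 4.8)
The plan is to reduce to the selective single-key privacy of $\CPRF$ by essentially rerunning the reduction of \cref{prop:pcprf_three_four} with a single cosmetic change. The key observation is that $\hybi{6}$ is obtained from $\hybi{5}$ precisely by reversing the modification introduced in the $\hybi{3}\to\hybi{4}$ step: in $\hybi{5}$ the challenge marked circuit is the simulated key $\simprfg\gets\CPRF.\Sim_1(1^\kappa,1^\secp)$, and the $\Oracle{sim}$ answers to queries with $\iop'=\iop$ together with the challenge value $y^\ast$ in the branch $\msg[i^\ast]\ne\gamma^\ast$ are produced via $\Sim_2(\stinfo_\Sim,\cdot,D[\skekey,\msg](\cdot))$; in $\hybi{6}$ the marked circuit is the honest constrained key $\CPRF.\Constrain(\prfg,D[\skekey,\msg])$ and those same values are recomputed via $\prfg(\cdot)$. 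The only difference from the setting of \cref{prop:pcprf_three_four} is that in both $\hybi{5}$ and $\hybi{6}$ the challenge input $x^\ast$ is a uniformly random string in $\zo{\inplen}$ rather than $\SKE.\Enc(\skekey,i^\ast\concat\gamma^\ast)$.

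Concretely, I would build a distinguisher $\qB$ against selective single-key privacy as follows. $\qB$ generates $(\pk,\sk)\gets\Gen(1^\secp)$, $\skekey\gets\SKE.\Gen(1^\secp)$, $\iop\gets\Enc(\pk,0^{2\secp})$, sends $\pp\seteq\pk$ to $\qA$, and receives $\msg$. It then constructs $D[\skekey,\msg]$, submits it to the $\CPRF$ challenger as the (single, selective) constraint, receives the challenge key $\prfg^\ast$, and forwards $(\iop,\prfg^\ast)$ to $\qA$ as the public tag and marked circuit. To answer a query $(\iop',i')$ to $\Oracle{sim}$: if $\iop'\ne\iop$, $\qB$ decrypts $\iop'$ with $\sk$ and answers the query honestly using the recovered key pair; if $\iop'=\iop$, $\qB$ picks $\gamma\chosen\zo{}$, sets $x\gets\SKE.\Enc(\skekey,i'\concat\gamma)$, queries its $\CPRF$ oracle on $x$ to obtain $y$, and returns $(\gamma,x,y)$. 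For the challenge, $\qB$ picks $\gamma^\ast\chosen\zo{}$ and $x^\ast\chosen\zo{\inplen}$; it sets $y^\ast\chosen\zo{\outlen}$ if $\msg[i^\ast]=\gamma^\ast$, and otherwise queries its $\CPRF$ oracle on $x^\ast$ to obtain $y^\ast$; it hands $(\gamma^\ast,x^\ast,y^\ast)$ to $\qA$ and outputs $\qA$'s final bit. When the challenger returns the honest constrained key with oracle access to $\prfg(\cdot)$, $\qB$ reproduces $\hybi{6}$; when it returns $\Sim_1(1^\kappa,1^\secp)$ with oracle access to $\Sim_2(\stinfo_\Sim,\cdot,D[\skekey,\msg](\cdot))$, it reproduces $\hybi{5}$ — here I use that every valid $\SKE$ ciphertext $x=\SKE.\Enc(\skekey,i'\concat\gamma)$ satisfies $D[\skekey,\msg](x)=1$ iff $\gamma=\msg[i']$, so $\Sim_2(\stinfo_\Sim,x,D[\skekey,\msg](x))$ matches the $\hybi{5}$ prescription for $\Oracle{sim}$.

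The one delicate point, and the step I expect to require the most care, is the challenge value $y^\ast$ in the branch $\msg[i^\ast]\ne\gamma^\ast$: $\hybi{5}$ defines it as $\Sim_2(\stinfo_\Sim,x^\ast,0)$ with a hardwired bit $0$, whereas $\qB$'s oracle returns $\Sim_2(\stinfo_\Sim,x^\ast,D[\skekey,\msg](x^\ast))$. Since $x^\ast$ is now uniformly random, the sparseness of $\SKE$ gives $\SKE.\Dec(\skekey,x^\ast)=\bot$, hence $D[\skekey,\msg](x^\ast)=0$, except with negligible probability, so the two agree up to a negligible statistical defect; the same sparseness bound is what makes the $\hybi{5}\to\hybi{6}$ rewriting meaningful in the first place (a random $x^\ast$ is almost never constrained out, so replacing $\Sim_2(\stinfo_\Sim,x^\ast,0)$ by $\prfg(x^\ast)$ is a legitimate use of the privacy oracle). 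Folding in this defect yields $\abs{\Pr[\hybi{5}=1]-\Pr[\hybi{6}=1]}\le\negl(\secp)$ under the selective single-key privacy of $\CPRF$. The rest of the argument is a mechanical mirror of \cref{prop:pcprf_three_four} and involves no new ideas.
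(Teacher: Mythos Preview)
Your proposal is correct and follows the same approach as the paper, which simply states that the proof is almost the same as that of \cref{prop:pcprf_three_four}. You have in fact been more careful than the paper: the sparseness argument you give for why $D[\skekey,\msg](x^\ast)=0$ with overwhelming probability (so that the privacy oracle's output $\Sim_2(\stinfo_\Sim,x^\ast,D[\skekey,\msg](x^\ast))$ coincides with the hardwired $\Sim_2(\stinfo_\Sim,x^\ast,0)$ of $\hybi{5}$) is precisely the point where the $\hybi{5}$--$\hybi{6}$ transition differs from the $\hybi{3}$--$\hybi{4}$ transition, and the paper leaves this implicit.
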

\begin{proof}[Proof of~\cref{prop:pcprf_six_seven}]
This proof is almost the same as that of~\cref{prop:pcprf_three_four}.
\end{proof}




\begin{proposition}\label{prop:pcprf_seven_eight}
If $\PKE$ is CCA secure, it holds that $\abs{\Pr[\hybi{6}=1]- \Pr[\hybi{7}=1]}\le \negl(\secp)$.
\end{proposition}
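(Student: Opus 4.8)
The plan is to reduce to the CCA security of $\PKE$, in direct analogy with the proof of~\cref{prop:pcprf_one_two}. First I would observe that $\hybi{6}$ and $\hybi{7}$ are syntactically identical except for the plaintext encrypted under $\pk$ to form the public tag $\iop$: in $\hybi{6}$ it is $0^{2\secp}$, whereas $\hybi{7}$ (which undoes the change at $\hybi{2}$) uses $(\prfg_{\notin\cViop},\skekey)$. Crucially, in both games every other object handed to $\qA$ is produced without ever decrypting $\iop$: the marked circuit is $\prfg_{\notin\cV}=\CPRF.\Constrain(\prfg,D[\skekey,\msg])$; the simulation oracle on a query $(\iop',i')$ with $\iop'=\iop$ answers directly from $\skekey$ and $\prfg$ (this is exactly the re-routing introduced at $\hybi{1}$ and preserved through the undoing at $\hybi{6}$); and the challenge tuple $(\gamma^\ast,x^\ast,y^\ast)$ is built from a random $x^\ast\chosen\zo{\inplen}$ together with $\skekey$ and $\prfg$ only.

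Next I would build a CCA adversary $\qB$. On input $\pk$, $\qB$ samples $\skekey\gets\SKE.\Gen(1^\secp)$ and $\prfg\gets\CPRF.\Setup(1^\secp,1^\kappa)$ itself, sets $\pp\seteq\pk$, sends it to $\qA$, and receives $\msg$. It computes $\prfg_{\notin\cViop}\seteq\CPRF.\Constrain(\prfg,\Diop[\skekey])$, submits the challenge pair $(m_0,m_1)\seteq(0^{2\secp},(\prfg_{\notin\cViop},\skekey))$, and obtains $\iop$. Using $\skekey$ and $\prfg$ it constructs $D[\skekey,\msg]$ and $\prfg_{\notin\cV}\seteq\CPRF.\Constrain(\prfg,D[\skekey,\msg])$, and forwards $(\iop,\prfg_{\notin\cV})$ to $\qA$. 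For an $\Oracle{sim}$ query $(\iop',i')$ with $\iop'\neq\iop$, $\qB$ queries its own decryption oracle on $\iop'$ to recover $(\prfg',\skekey')$ and answers exactly as $\hybi{6}/\hybi{7}$ prescribes; for $\iop'=\iop$ it answers with $\skekey$ and $\prfg$ as above. After $\qA$ finishes its oracle access, $\qB$ samples $\gamma^\ast\gets\bit$, $x^\ast\chosen\zo{\inplen}$, sets $y^\ast\chosen\zo{\outlen}$ if $\msg[i^\ast]=\gamma^\ast$ and $y^\ast\gets\prfg(x^\ast)$ otherwise, sends $(\gamma^\ast,x^\ast,y^\ast)$ to $\qA$, and outputs $\qA$'s bit. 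When the CCA challenger encrypts $m_0$, $\qB$'s interaction with $\qA$ is distributed exactly as $\hybi{6}$; when it encrypts $m_1$, exactly as $\hybi{7}$. Hence $\abs{\Pr[\hybi{6}=1]-\Pr[\hybi{7}=1]}$ is bounded by the CCA advantage of $\qB$, which is negligible.

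The one point that really needs checking — and the place I expect the only subtlety to lie — is that $\qB$ is a legal CCA adversary, i.e., that it never queries its decryption oracle on the challenge ciphertext $\iop$. This holds precisely because the games $\hybi{6}$ and $\hybi{7}$ (via the modifications to $\Oracle{sim}$ made at $\hybi{1}$ and $\hybi{4}$ and only partially undone) never require decrypting $\iop$ anywhere, and the challenge tuple generation is likewise independent of $\iop$. Beyond this, the extra changes distinguishing $\hybi{6},\hybi{7}$ from $\hybi{1},\hybi{2}$ (the constrained key in place of the simulated CPRF, and the random $x^\ast$) do not obstruct the simulation, since $\qB$ holds $\prfg$ and $\skekey$ and can therefore run $\Mark$, $\Sim$, and the challenge-tuple procedure unchanged — exactly as in~\cref{prop:pcprf_one_two}.
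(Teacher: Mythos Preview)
Your proposal is correct and follows exactly the approach the paper intends: the paper's own proof simply states that it is ``almost the same as that of~\cref{prop:pcprf_one_two},'' and you have faithfully spelled out that reduction, correctly identifying that the only difference between $\hybi{6}$ and $\hybi{7}$ is the plaintext underlying $\iop$, that the $\Oracle{sim}$ re-routing from $\hybi{1}$ (restored by undoing $\hybi{4}$ at $\hybi{6}$) means $\iop$ is never decrypted, and that the remaining changes from $\hybi{3}$ and $\hybi{5}$ (random $x^\ast$, random $y^\ast$ when $\msg[i^\ast]=\gamma^\ast$) are simulable from $\prfg$ and $\skekey$ alone.
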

\begin{proof}[Proof of~\cref{prop:pcprf_seven_eight}]
This proof is almost the same as that of~\cref{prop:pcprf_one_two}.
\end{proof}

\begin{proposition}\label{prop:pcprf_eight_nine}
If $\CPRF$, $\SKE$, and $\PKE$ are correct, it holds that $\abs{\Pr[\hybi{7}=1]- \Pr[\hybi{8}=1]}\le \negl(\secp)$.
\end{proposition}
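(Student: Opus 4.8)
\textbf{Proof plan for \cref{prop:pcprf_eight_nine}.} The plan is to observe that the step $\hybi{7}\to\hybi{8}$ merely reverts the modification of the simulation oracle that was introduced in the step $\hybi{0}\to\hybi{1}$ (the intermediate changes of $\hybi{2}$--$\hybi{6}$ having already been undone at $\hybi{6}$ and $\hybi{7}$), so the argument is the mirror image of the proof of \cref{prop:pcprf_zero_one} and needs no computational assumption. Concretely, $\hybi{7}$ and $\hybi{8}$ are identical except in how $\Oracle{sim}$ treats a query $(\iop',i')$ with $\iop'=\iop$: in $\hybi{8}$ the oracle again decrypts $\iop$, recovers $(\prfg_{\notin \cViop},\skekey)$, and evaluates the constrained key $\prfg_{\notin \cViop}$ on the resulting point $x=\SKE.\Enc(\skekey,i'\concat\gamma)$ (i.e. it runs the honest $\Sim$), whereas in $\hybi{7}$ it evaluates the master key $\prfg$ on that point without touching $\iop$; everything else — $\pp$, the marked circuit $\prfg_{\notin\cV}$, the challenge tuple $(\gamma^\ast,x^\ast,y^\ast)$, and the answers of $\Oracle{sim}$ on tags $\iop'\ne\iop$ — is generated identically in the two games (and agrees with the $\coin=0$ distribution $\Dreal{i^\ast}$). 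It therefore suffices to show that these two behaviours of $\Oracle{sim}$ produce statistically identical transcripts.

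First I would invoke correctness of $\PKE$: since in both games $\iop=\Enc(\pk,(\prfg_{\notin \cViop},\skekey))$, we have $\Dec(\sk,\iop)=(\prfg_{\notin \cViop},\skekey)$ with probability $1$, so the decryption performed in $\hybi{8}$ returns exactly the pair that $\hybi{7}$ uses directly. Next, for each such $\Oracle{sim}$-query the point at which the key is evaluated is a fresh ciphertext $x=\SKE.\Enc(\skekey,i'\concat\gamma)$; by correctness of $\SKE$, $\SKE.\Dec(\skekey,x)\ne\bot$, hence $\Diop[\skekey](x)=0$, i.e. $x\notin\cViop$. Then by correctness of $\CPRF$ applied at such an input, $\prfg_{\notin \cViop}(x)=\prfg(x)$ except with negligible probability over the coins of the setup and constrain algorithms; a union bound over the polynomially many $\Oracle{sim}$-queries shows that all of these constrained evaluations coincide with the corresponding master-key evaluations except with negligible probability. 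Combining the three observations yields $\abs{\Pr[\hybi{7}=1]-\Pr[\hybi{8}=1]}\le\negl(\secp)$.

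The only point that requires a little care is the last one: $\CPRF$-correctness is a per-input, overwhelming-probability guarantee, so one must be explicit that every input at which $\hybi{7}$ and $\hybi{8}$ could differ is an honestly generated $\SKE$ ciphertext — so that its constraint value under $\Diop[\skekey]$ is $0$ and correctness is applicable — sampled with coins independent of the adversary, which is precisely what licenses the union bound (an entirely analogous remark already appears in the proof of \cref{prop:pcprf_zero_one}). Apart from carefully tracking which of the $\hybi{1}$--$\hybi{6}$ modifications have already been reverted by $\hybi{7}$, there is no genuine obstacle here; in fact the write-up can simply say that the proof is essentially identical to that of \cref{prop:pcprf_zero_one}.
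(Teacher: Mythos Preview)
Your proposal is correct and matches the paper's approach exactly: the paper's entire proof is the one-line remark that it is ``almost the same as that of \cref{prop:pcprf_zero_one},'' and you have spelled out precisely that argument (PKE correctness to recover $(\prfg_{\notin\cViop},\skekey)$ from $\iop$, SKE correctness to ensure each oracle point $x$ lies outside $\cViop$, and CPRF correctness to conclude $\prfg_{\notin\cViop}(x)=\prfg(x)$). Your explicit union bound over the polynomially many $\Oracle{sim}$ queries is a welcome bit of extra care that the paper leaves implicit.
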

\begin{proof}[Proof of~\cref{prop:pcprf_eight_nine}]
This proof is almost the same as that of~\cref{prop:pcprf_zero_one}.
\end{proof}

By~\cref{prop:pcprf_zero_one,prop:pcprf_one_two,prop:pcprf_two_three,prop:pcprf_three_four,prop:pcprf_five_six,prop:pcprf_six_seven,prop:pcprf_seven_eight,prop:pcprf_eight_nine}, we complete the proof of~\cref{thm:extraction-less_watermarking_pcprf}.
\end{proof}

\paragraph{Extended weak pseudorandomness.} Next, we prove the extended pseudorandomness of $\PRF_\cprf$.

\begin{proof}[Proof of~\cref{thm:pcprf_extended_weak_pseudorandomness}]
Let $\qA$ be an adversary attacking the extended weak pseudorandomness of $\PRF_\cprf$.
We construct $\qB$ that attacks the selective single-key pseudorandomness of $\CPRF$.

\begin{description}
\item[$\qB$:] Given $\pp$ from $\qA$, $\qB$ first generates $\skekey\gets\SKE.\Gen(1^\secp)$, sends $\Diop[\skekey]$ to its challenger, and obtains $\prfg^\ast$.
$\qB$ sets $\pp:=\pk$, generates $\iop\gets\Enc(\pk,(\prfg^\ast,\skekey))$, and sends it to $\qA$.
$\qB$ answers $\qA$'s queries as follows.
\begin{description}
\item[$\Oracle{wprf}$:] When this is invoked (no input), $\qB$ generates $a\gets\zo{\inplen}$, sends it to its evaluation oracle, and obtains $b$. Then, $\qB$ returns $(a,b)$ to $\qA$.
\item[$\Oracle{chall}$:] When this is invoked (no input), $\qB$ generates $a^\ast\gets\zo{\inplen}$, outputs $a^\ast$ as its challenge input, and obtains $b^\ast$. $\qB$ returns $(a^\ast,b^\ast)$ to $\qA$. Note that this oracle is invoked only once.
\end{description}
When $\qA$ terminates with output $b^\prime$, $\qB$ outputs $b^\prime$ and terminates.
\end{description}
Due to the sparseness of $\SKE$, without negligible probability, we have $\SKE.\Dec(\skekey,a^\ast)=\bot$ thus $\Diop[\skekey](a^\ast)=1$, and $a$ generated when answering to a query to $\Oracle{wprf}$ is different from $a^\ast$.
Therefore, without negligible probability, $\qB$ is a valid adversary against the selective single-key pseudorandomness of $\CPRF$.
When $\qB$ is valid, we see that the advantage of $\qB$ is the same as that of $\qA$.
This completes the proof.
\end{proof}

\else
We prove \cref{thm:extraction-less_watermarking_pcprf,thm:pcprf_extended_weak_pseudorandomness} in \cref{sec:proofs_elwmprf_lwe}.
\fi


\newcommand{\SIH}{\algo{SIH}}
\newcommand{\siGen}{\algo{siGen}}
\newcommand{\preGen}{\algo{PreGen}}
\newcommand{\prehk}{\hk_{\mathsf{pre}}}
\renewcommand{\inplen}{\ell_{\mathsf{in}}}
\renewcommand{\outlen}{\ell_{\mathsf{out}}}
\renewcommand{\sfreal}{\mathsf{Real}}
\newcommand{\seed}{s}
\newcommand{\PRGc}{\PRG(\seed)}
\newcommand{\hashh}{\mathsf{H}}
\newcommand{\hybG}{\mathsf{Game}}

\section{Extraction-Less Watermarking PRF with Public Simulation from IO}\label{sec:pub_ext_watermarking_IO}
We construct an extraction-less watermarking PRF satisfying SIM-MDD security with public simulation.
\ifnum\submission=0
We first introduce a tool.

\ifnum\submission=0
\subsection{Puncturable Encryption, Revisited}\label{sec:PE_revisited}
\else
\section{Puncturable Encryption, Revisited}\label{sec:PE_revisited}
\fi

Cohen et al.~\cite{SIAMCOMP:CHNVW18} introduced the notion of puncturable encryption (PE).
They used a PE scheme as a crucial building block to construct a publicly extractable watermarking PRF against classical adversaries.
We also use a PE scheme to construct an extraction-less watermarking PRF with public simulation (against quantum adversaries).
However, we find that the original PE definition is not sufficient for proving unremovability (and our purpose) since there is a subtle issue in the security proof by Cohen et al.~\cite{SIAMCOMP:CHNVW18}. However, we can fix the issue since their PE scheme satisfies a stronger security notion than what they proved.
Thus, we introduce a stronger security notion for PE in this section.

The syntax of PE is almost the same as that of the original PE.

\begin{definition}[Puncturable Encryption (Syntax)]\label{def:pe_syntax}
A puncturable encryption (PE) scheme $\PE$ for a plaintext space $\cP = \zo{\ptxtlen}$ is a triple of PPT algorithms $(\Gen, \Puncture, \Enc)$ and a deterministic algorithm $\Dec$. The ciphertext space will be $\zo{\ctlen}$ where $\ctlen = \poly(\secp,\ptxtlen)$.
\begin{description}
\item[$\Gen(1^\secp)\ra (\ek,\dk)$:] The key generation algorithm takes as input the security parameter $1^\secp$ and outputs an encryption key $\ek$ and a decryption key $\dk$.
\item[$\Puncture(\dk, \setbk{c^{\ast}}) \ra \dk_{\ne c^\ast}$:] The puncturing algorithm takes as input $\dk$ and a string $c^\ast \in \zo{\ctlen}$, and outputs a ``punctured'' decryption key $\dk_{\ne c^\ast}$. 
\item[$\Enc(\ek, m)\ra c$:]  The encryption algorithm takes as input $\ek$ and a plaintext $m \in \zo{\ptxtlen}$, and outputs a ciphertext $c$ in $\zo{\ctlen}$.
\item[$\Dec(\dk^\prime, c^\prime) \ra m^\prime \text{ or } \bot$:] The decryption algorithm takes a possibly punctured decryption key $\dk^\prime$ and a string $c^\prime \in \zo{\ctlen}$. It outputs a plaintext $m^\prime$ or the special symbol $\bot$.
\end{description}
\end{definition}

There are four security requirements on PE. Three of those are the same as those in the original PE security.
The difference is ciphertext pseudorandomness.

\begin{definition}[Puncturable Encryption Security]\label{def:pe_security}
A PE scheme $\PE = (\Gen,\Puncture, \Enc,\Dec)$ with plaintext space $\cP=\zo{\ptxtlen}$ and ciphertext space $\cC=\zo{\ctlen}$ is required to satisfy the following properties.
\begin{description}
\item[Correctness:] We require that for all plaintext $m\in \cP$ and $(\ek, \dk) \lrun \Gen(1^\secp)$, it holds that $\Dec(\dk, \Enc(\ek, m)) = m$.
\item[Punctured Correctness:] We require the same to hold for punctured keys. For all possible keys $(\ek, \dk) \lrun \Gen(1^\secp)$, all string $c^\ast \in \cC$, all punctured keys $\dk_{\ne c^\ast} \lrun \Puncture(\dk, \setbk{c^\ast})$, and all potential ciphertexts $c\in \cC \setminus \{c^\ast\}$:
\[
\Dec(\dk, c) = \Dec(\dk_{\ne c^\ast}, c).
\]

\item[Sparseness:]
We also require that most strings are not valid ciphertexts:
\begin{align}
\Pr \left [
\Dec(\dk, c) \neq \bot ~\left|~
(\ek, \dk) \lrun \Gen(1^\secp), c \chosen \zo{\ctlen} \right.
\right ] \le \negl(\secp).
\end{align}

\item[Ciphertext Pseudorandomness:] We require that PE has strong ciphertext pseudorandomness defined in~\cref{def:pe_strong_pseudorandomness}.
\end{description}
\end{definition}


\begin{definition}[Strong Ciphertext Pseudorandomness]\label{def:pe_strong_pseudorandomness}
We define the following experiment $\expb{\qA}{s}{cpr}(\secp)$.
\begin{enumerate}
\item $\qA$ sends a message $m^* \in \cP= \zo{\ptxtlen}$ to the challenger.
\item The challenger does the following:

\begin{itemize}
\item Generate $(\ek,\dk) \lrun \Gen(1^\secp)$

\item Compute a ciphertext $c^* \lrun \Enc(\ek, m^*)$.

\item Choose $r^* \chosen \cC=\zo{\ctlen}$.

\item Choose $\coin \chosen \zo{}$ and set $x_0 \seteq c^\ast$ and $x_1 \seteq r^\ast$.

\item Generate a punctured key $\dk_{\ne x_\coin} \lrun \Puncture(\dk, \{x_\coin\})$

\item Send $(x_\coin,\ek,\dk_{\ne x_\coin})$ to $\qA$:
\end{itemize}
\item $\qA$ outputs $\coin^\ast$ and the experiment outputs $1$ if $\coin = \coin^\ast$; otherwise $0$.
\end{enumerate}
We say that $\PE$ has strong ciphertext pseudorandomness if for every QPT adversary $\qA$, it holds that
\[
\advb{\qA}{s}{cpr}(\secp)\seteq 2\abs{\Pr[\expb{\cA}{s}{cpr}(\secp) \out 1] -\frac{1}{2}}  \leq \negl(\secp).
\]
\end{definition}
\begin{remark}[Difference from the original PE]\label{remark:diffrence_ct_pseudorandomness}
In the original PE definition, $\Puncture$ takes two strings $\setbk{c_0,c_1} \subset \zo{\ctlen}$ and outputs a punctured decryption key $\dk_{\notin \setbk{c_0,c_1}}$ and punctured correctness is accordingly defined.

In the original ciphertext pseudorandomness (described in~\cref{sec:original_pe_ct_pseudorandomness}), a punctured decryption key is punctured at both $c^\ast$ and $r^\ast$. That is, the information about $m^\ast$ remains in $\dk_{\notin \setbk{c^\ast,r^\ast}}$ for $\coin \in \zo{}$. This is an issue for our purpose (and the proof by Cohen et al.~\cite{SIAMCOMP:CHNVW18}). Thus, we introduce the strong ciphertext pseudorandomness, where the information about $m^\ast$ disappears in the case $\coin=1$ since the punctured decryption key is $\dk_{\ne r^\ast}$ when $\coin =1$.
\end{remark}

In fact, the PE scheme $\PE$ by Cohen~\etal~\cite{SIAMCOMP:CHNVW18} satisfies strong ciphertext pseudorandomness (and thus, we can also fix the issue in the proof by Cohen et al.\footnote{See~\cref{sec:original_pe_ct_pseudorandomness} for the detail of the issue.}).
\begin{theorem}\label{thm:CHNVW_PE_strong_ct_pseudorandom}
If there exists secure IO for circuits and the QLWE assumption holds, there exists secure PE that satisfies strong ciphertext pseudorandomness.
\end{theorem}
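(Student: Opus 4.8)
The plan is to instantiate the abstract $\PE$ notion with the concrete construction of Cohen et al.~\cite{SIAMCOMP:CHNVW18} and re-prove ciphertext pseudorandomness in the stronger form of \cref{def:pe_strong_pseudorandomness}, using only post-quantum building blocks. Since the QLWE assumption implies post-quantum one-way functions, it yields a post-quantum PRG (\cref{thm:owf_prg}) and a post-quantum puncturable PRF (\cref{thm:pprf-owf}); together with $\iO$ for polynomial-size circuits secure against QPT distinguishers (which the theorem assumes), these are exactly the ingredients of the CHNVW scheme. Recall that there $\ek = \iO(\mathsf{EncProg}[K])$ and $\dk = \iO(\mathsf{DecProg}[K])$, a ciphertext of $m$ under randomness $s$ has the self-certifying form $c = (c_1,c_2)$ with $c_1 = \PRG(s)$ and $c_2 = F(K,c_1)\oplus(m\concat s)$, decryption recovers $m\concat s \gets c_2 \oplus F(K,c_1)$ and outputs $m$ only if $c_1 = \PRG(s)$, and the punctured key $\dk_{\ne c^*} = \iO(\mathsf{DecProg}_{\ne c^*}[K])$ hardwires rejection of $c^*$. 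Correctness, punctured correctness and sparseness of this scheme are inherited verbatim from \cite{SIAMCOMP:CHNVW18}, so only strong ciphertext pseudorandomness needs a new argument.

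I would establish strong ciphertext pseudorandomness through a constant-length hybrid sequence starting from the $\coin=0$ world $(c^*,\ek,\dk_{\ne c^*})$ with $c^* = \Enc(\ek,m^*)$. \textbf{Step 1:} replace $c_1^* = \PRG(s^*)$ by a uniformly random $t^*$; indistinguishable by PRG security, the reduction sampling $K$ itself and forming $\ek$, $\dk_{\ne c^*}$, $c^*$ from the PRG challenge. Crucially, with overwhelming probability $t^* \notin \mathrm{Image}(\PRG)$, so \emph{every} string whose first component is $t^*$ fails the consistency check and is rejected by $\mathsf{DecProg}$, irrespective of the value $F(K,t^*)$. \textbf{Step 2:} using this, puncture $K$ at $t^*$ inside both $\mathsf{EncProg}$ (honest encryption never outputs first component $t^*$) and $\mathsf{DecProg}_{\ne c^*}$ (replace its $t^*$-branch by "output $\bot$"); each pair of circuits is functionally equivalent whp over $t^*$, so the switch is hidden by $\iO$ security. \textbf{Step 3:} invoke puncturable-PRF security at $t^*$ to replace the masking value $F(K,t^*)$ in $c_2^*$ by a fresh uniform $u^*$; then $c_2^* = u^*\oplus(m^*\concat s^*)$ is uniform and independent of $m^*$, hence $c^* = (t^*,c_2^*)$ is a uniformly random element of $\zo{\ctlen}$ and $m^*$ no longer appears anywhere. \textbf{Step 4:} undo the puncturing, returning to $\mathsf{EncProg}[K]$ and $\mathsf{DecProg}_{\ne c^*}[K]$ with the real key; since $c^*$ is now uniform, $t^*\notin\mathrm{Image}(\PRG)$ still holds whp and the honest and punctured circuits are again functionally equivalent, so this is hidden by $\iO$. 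The final distribution is exactly $(r^*,\ek,\dk_{\ne r^*})$ with $r^* := c^*$ uniform, i.e. the $\coin=1$ world; chaining the hybrids and summing the negligible bad-event probabilities (a uniform string hitting $\mathrm{Image}(\PRG)$, trivial collisions) gives $\advb{\qA}{s}{cpr}(\secp)\le\negl(\secp)$.

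The main obstacle, and the exact spot where the original CHNVW argument needs care (cf.\ \cref{remark:diffrence_ct_pseudorandomness}), is Steps 2 and 4: showing that a decryption circuit punctured at the single ciphertext $c^*$ is functionally equivalent to one whose key $K$ is punctured on the whole "slab" $\{(t^*,\cdot)\}$. This works only because ciphertexts are self-certifying and $t^*$ lands outside the PRG image, so that \emph{all} strings $(t^*,\cdot)$ — not merely $c^*$ — are already rejected; this is what lets us strip the dependence of $\dk_{\ne c^*}$ on $F(K,t^*)$, and after Step 3 on $m^*$, while keeping the key punctured at a single point, which is precisely what the strong notion demands in the $\coin=1$ branch. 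Secondary care points are that every reduction (PRG, PPRF, $\iO$) is stated against QPT distinguishers — automatic given post-quantum security of the primitives — and that the $\iO$ indistinguishability in \cref{def:io} is applied to circuit samplers whose two outputs agree on all inputs except with negligible probability, matching the "whp over $t^*$" functional equivalences above.
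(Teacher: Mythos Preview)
Your four-step hybrid is coherent, but the scheme you describe is \emph{not} the CHNVW puncturable encryption. The actual construction (reviewed in \cref{sec:pe_scheme_description}) has three-part ciphertexts $(\alpha,\beta,\gamma)$ with $\alpha=\PRG(s)$, $\beta=F(\alpha\|m)$ for an \emph{injective} PPRF $F:\{0,1\}^{3\ell}\to\{0,1\}^{9\ell}$, and $\gamma=G(\beta)\oplus m$ for a second PPRF $G$; decryption recovers $m=G(\beta)\oplus\gamma$ and checks $\beta=F(\alpha\|m)$. Your two-part $c_1=\PRG(s)$, $c_2=F(K,c_1)\oplus(m\|s)$ with check $c_1=\PRG(s')$ is a different (simpler) design, so correctness and sparseness are not ``inherited verbatim,'' and your pseudorandomness argument---while it looks correct for \emph{your} scheme---does not address the scheme the paper actually analyzes. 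Since the theorem only asserts existence, an alternative instantiation would suffice in principle, but you should not attribute it to \cite{SIAMCOMP:CHNVW18}.

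The paper's proof for the genuine CHNVW scheme is substantially longer, and the extra work sits exactly where \cref{remark:diffrence_ct_pseudorandomness} points: the reverse direction from the random world $\sfrand$ toward $\hybi{5}=\sfrand_2$ is \emph{not} symmetric to the forward direction. After puncturing $F$ at $\alpha_1\|m^*$, the punctured decryption circuit must still handle the ``phantom'' valid ciphertext $(\alpha_1,\hat\beta,\hat\gamma)$ with $\hat\beta=F(\alpha_1\|m^*)$, $\hat\gamma=G(\hat\beta)\oplus m^*$, which hard-wires $m^*$. Killing that dependence while preserving functional equivalence is where the paper invokes an \emph{injective bit-commitment with setup} (\cref{def:injective_com}, obtained from QLWE via \cref{thm:injective_com_lwe}): the test ``$\beta=\hat\beta$'' is rewritten as ``$\Commit_\ck(0;\beta)=\hat z$'', hiding swaps $\hat z$ to a commitment to $1$, and binding lets the branch become $\textsc{False}$. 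Your scheme sidesteps all of this because its validity check $c_1=\PRG(s')$ involves only the PRG image: once $t^*$ lands outside it, the entire slab $\{(t^*,\cdot)\}$ is already dead and no $m^*$-dependent phantom survives. Both routes establish the theorem; yours is shorter and needs only IO plus OWFs, whereas the paper's faithfully tracks CHNVW and genuinely uses the commitment (hence QLWE beyond mere OWFs).
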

We prove this theorem in~\cref{sec:pe_construction}.

\subsection{Construction of Extraction-less Watermarking PRF with Public Simulation}\label{sec:extraction-less_pub_sim_from_IO}
\else
In the construction, we use puncturable encryption (PE)~\cite{SIAMCOMP:CHNVW18}.
We provide the definition of PE in \cref{sec:PE_revisited}.
\fi

We describe our extraction-less watermarking PRF $\PRF_\io$ for message space $\zo{\msglen}$ with domain $\zo{\inplen}$ and range $\zo{\outlen}$ below. We use the following tools:
\begin{itemize}
\item PPRF $\PRF=\PRF.(\Gen,\Eval,\Puncture)$. We denote a PRF evaluation circuit $\PRF.\Eval_\prfk(\cdot)$ by $\prf: \zo{\inplen}\ra \zo{\outlen}$, a PRF evaluation circuit with punctured key $\PRF.\Eval_{\prfk_{\ne x}}(\cdot)$ by $\prf_{\ne x}$ (that is, we omit $\prfk$ and simply write $\prf(\cdot)$ instead of $\prf_\prfk(\cdot)$) for ease of notations.
\item PE scheme $\PE= \PE.(\Gen,\Puncture,\Enc,\Dec)$. The plaintext and ciphertext space of PE are $\zo{\ptlen}$ and $\zo{\ctlen}$, respectively, where $\ptlen = \ell + \log{\msglen}+1$ and $\inplen \seteq \ctlen$ ($\ctlen = \poly(\ell,\log{\msglen})$).
\item Indistinguishability obfuscator $\iO$.
\item PRG $\PRG: \zo{\ell} \ra \zo{\outlen}$.
\end{itemize}
\begin{description}

 \item[$\Setup(1^\secp)$:] $ $
 \begin{itemize}
 \item Output $(\pp,\xk)\seteq (\bot,\bot)$.
 \end{itemize}
 \item[$\Gen(\pp)$:] $ $
 \begin{itemize}
 \item Parse $\pp=\bot$.
 \item Compute $\prf \lrun \PRF.\Gen(1^\secp)$.
 \item Generate $(\peek,\pedk)\gets \PE.\Gen(1^\secp)$.
 \item Output $\prfk \seteq (\prf,\pedk)$ and $\iop \seteq \peek$.
 \end{itemize}
 \item[$\Eval(\prfk,x \in \zo{\inplen})$:] $ $
 \begin{itemize}
 \item Parse $\prfk =(\prf,\pedk)$.
 \item Compute and output $y \lrun \prf(x)$.
 \end{itemize}
 \item[$\Mark (\pp,\prfk, \msg \in \zo{\msglen})$:] $ $ 
\begin{itemize}
    \item Parse $\pp =\bot$ and $\prfk=(\prf,\pedk)$.
\item Construct a circuit $D[\prf,\pedk,\msg]$ described in~\cref{fig:description_D}.
\item Compute and output $\tlC \seteq \iO(D[\prf,\pedk,\msg])$.
\end{itemize}

\item[$\Sim(\xk,\iop,i)$:] $ $
\begin{itemize}
    \item Parse $\xk=\bot$ and $\iop = \peek$.
\item Choose $\gamma \chosen \zo{}$ and $\seed \chosen \zo{\ell}$.
\item Compute $y \seteq \PRGc$.
\item Compute $x \lrun \PE.\Enc(\peek,\seed\concat i\concat \gamma)$.
\item Output $(\gamma,x,y)$
\end{itemize}
\end{description}
The size of the circuit $D$ is appropriately padded to be the maximum size of all modified circuits, which will appear in the security proof.

\protocol
{Circuit $D[\prf,\pedk,\msg]$}
{The description of $D$}
{fig:description_D}
{
\begin{description}
\setlength{\parskip}{0.3mm} 
\setlength{\itemsep}{0.3mm} 
\item[Constants:] A PRF $\prf$, a PE decryption key $\pedk$, and a message $\msg$.
\item[Input:] A string $x \in \zo{\inplen}$.
\end{description}
\begin{enumerate}
\item Compute $d \lrun \PE.\Dec(\pedk,x)$.
\item If $d \ne \bot$, do the following
\begin{enumerate}
 \item Parse $d = \seed \concat i\concat \gamma$, where $\seed\in\zo{\ell}$, $i\in [\msglen]$, and $\gamma \in \zo{}$.
 \item If $\msg[i] \ne \gamma$, output $\PRGc$. Otherwise, output $\prf(x)$.
 \end{enumerate} 
\item Otherwise, output $\prf(x)$.
\end{enumerate}
}

The evaluation correctness of $\PRF_\io$ immediately follows from the sparseness of $\PE$ and the functionality of $\iO$.\footnote{In fact, $\PRF_\io$ satisfies a stronger evaluation correctness than one written in~\cref{def:extless_watermarking_prf}. The evaluation correctness holds even for any PRF key $\prfk$ and input $x \in \Domprf$ like the statistical correctness by Cohen et al.~\cite{SIAMCOMP:CHNVW18}.}
$\PRF_\io$ trivially satisfies pseudorandomness (against an authority) since $\Setup$ outputs nothing, $\iop$ is a public key $\peek$, and $\Eval$ is independent of $(\peek,\pedk)$ ($\pedk$ is not used in $\Eval$).
Moreover, we have the following theorem.

\begin{theorem}\label{thm:io_const_extraction-less_watermarkable_PRF}
If $\PRF$ is a secure PPRF, $\PRG$ is a secure PRG, $\PE$ is a secure PE with strong ciphertext pseudorandomness, and $\iO$ is a secure IO, then $\PRF_\io$ is an extraction-less watermarking PRF satisfying SIM-MDD security with public simulation.
\end{theorem}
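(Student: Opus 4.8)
\textbf{Proof Plan for \cref{thm:io_const_extraction-less_watermarkable_PRF}.}
The plan is to prove SIM-MDD security with public simulation by a hybrid argument that morphs the ``simulated'' challenge tuple $(\gamma^\ast,x^\ast,y^\ast)\gets\Sim(\xk,\iop,i^\ast)$ into a sample from $\Dreal{i^\ast}$, one step at a time, while arguing that each change is undetectable to an adversary holding the marked circuit $\tlC=\iO(D[\prf,\pedk,\msg])$. Recall that in the public-simulation game the adversary is given $\xk=\bot$ and $\iop=\peek$, so the only secret object is $\pedk$ (inside $\tlC$) and $\prf$; the essential obstacle is that the adversary can feed $x^\ast$ to $\tlC$ and compare with $y^\ast$, so the hybrids must be arranged so that $\tlC(x^\ast)$ always agrees with the challenge answer. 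First I would fix $\msg$ and $i^\ast$ and split into the cases $\msg[i^\ast]=0$ and $\msg[i^\ast]=1$; the two are symmetric (they only differ in whether $\gamma^\ast=\msg[i^\ast]$ leads to a PRG value or a PRF value), so I will write out the case $\msg[i^\ast]=0$ and remark that the other is analogous.

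The hybrid sequence I would use, starting from $\coin=1$ (challenge is $\Sim$'s output) and ending at $\coin=0$ (challenge is $\Dreal{i^\ast}$'s output):
\begin{itemize}
\item $\hybG_0$: the real $\coin=1$ game. Here $\seed\gets\zo{\ell}$, $x^\ast\gets\PE.\Enc(\peek,\seed\|i^\ast\|\gamma^\ast)$, $y^\ast=\PRG(\seed)$.
\item $\hybG_1$: puncture the PE decryption key inside $\tlC$ at $x^\ast$, i.e.\ replace $D[\prf,\pedk,\msg]$ by an obfuscation of the circuit that uses $\pedk_{\ne x^\ast}\gets\PE.\Puncture(\pedk,\{x^\ast\})$ and has the value $d^\ast=\seed\|i^\ast\|\gamma^\ast$ together with the decision ``is input $=x^\ast$'' hardwired so that it reproduces $D$'s behavior on $x^\ast$. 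By punctured correctness of $\PE$ and the functionality of $\iO$ this is functionally identical to $\hybG_0$, so $\iO$ security gives indistinguishability.
\item $\hybG_2$: replace $x^\ast\gets\PE.\Enc(\peek,\seed\|i^\ast\|\gamma^\ast)$ by a uniformly random $x^\ast\gets\zo{\ctlen}$, using strong ciphertext pseudorandomness of $\PE$ (\cref{def:pe_strong_pseudorandomness}); crucially, in the $\coin=1$ branch of that experiment the punctured key handed out is $\pedk_{\ne r^\ast}$, which is exactly what $\hybG_1$'s obfuscated circuit needs, and the hardwired $d^\ast$ can be dropped because on a random $x^\ast$ we have (except with negligible probability, by sparseness) $\PE.\Dec(\pedk,x^\ast)=\bot$.
\item $\hybG_3$: now $x^\ast$ is not a ciphertext and (whp) $\tlC(x^\ast)=\prf(x^\ast)$; puncture the PRF at $x^\ast$ inside $\tlC$ (replace $\prf$ by $\prf_{\ne x^\ast}$ with $\prf(x^\ast)$ hardwired where needed), again using $\iO$ and punctured correctness of $\PRF$.
\item $\hybG_4$: replace $\prf(x^\ast)$ everywhere by a fresh random $u\gets\zo{\outlen}$, using pseudorandomness at the punctured point of $\PRF$. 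At this point $y^\ast=\PRG(\seed)$ for a $\seed$ that appears nowhere else.
\item $\hybG_5$: replace $y^\ast=\PRG(\seed)$ by $y^\ast\gets\zo{\outlen}$ uniform, using PRG security. Now $y^\ast$ is uniform and independent of $x^\ast$, and since $\gamma^\ast=\msg[i^\ast]=0$ this is exactly the distribution $\Dreal{i^\ast}$ produces when $\gamma=\msg[i^\ast]$ (namely $y\gets\Ranprf$). In the sub-case $\gamma^\ast\neq\msg[i^\ast]$, I would instead stop after undoing steps so that $y^\ast=\prf(x^\ast)$ and $x^\ast$ is a fresh honest encryption — then $\tlC(x^\ast)=\prf(x^\ast)$ matches and $(\gamma^\ast,x^\ast,y^\ast)$ is distributed as $\Dreal{i^\ast}$ in that sub-case.
\item $\hybG_6,\dots$: reverse the remaining bookkeeping changes ($x^\ast$ back to a real ciphertext when $\gamma^\ast\neq\msg[i^\ast]$; un-puncture the PRF and the PE key in $\tlC$), landing exactly in the $\coin=0$ game.
\end{itemize}

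For each adjacent pair I would give the explicit reduction: the reduction generates everything it needs ($\prf$, $\PE$ keys, $\msg$) itself, embeds the external challenge into $x^\ast$ or $y^\ast$ or into the obfuscated circuit, and perfectly simulates the SIM-MDD game for $\qA$. Because there is no extraction or simulation oracle in the public-simulation game (item 2 of \cref{def:unrem_prf} and the analogous clause are removed), these reductions are straightforward single-instance reductions, and no $\API$-oracle complication arises — this is the payoff of stating the IO construction only for the public setting. I expect the main obstacle to be the careful handling of the ``hardwired'' values in the obfuscated circuits so that functional equivalence holds \emph{exactly} (not just with high probability) wherever $\iO$ is invoked: on the negligible-probability event that a randomly chosen $x^\ast$ is a valid ciphertext, the circuits in $\hybG_2$ and $D$ could differ, so I would either condition on $\PE.\Dec(\pedk,x^\ast)=\bot$ and absorb the sparseness error into the final bound, or insert an extra hybrid that aborts on that event. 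The padding of $D$ to the maximum circuit size among all hybrids must also be fixed up front so every $\iO$ step compares equal-size circuits. Everything else is routine invocation of the four assumed primitives.
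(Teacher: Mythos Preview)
Your high-level plan matches the paper's: a hybrid argument using IO, strong ciphertext pseudorandomness of $\PE$, PRG security, and PPRF security to walk from $\Sim$'s output to a sample from $\Dreal{i^\ast}$. But your concrete sequence has a real gap in the sub-case $\gamma^\ast\ne\msg[i^\ast]$, and your stated endpoint for that sub-case is wrong.

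The problem is at $\hybG_2$, where you say the hardwired $d^\ast$ ``can be dropped because on a random $x^\ast$ we have $\PE.\Dec(\pedk,x^\ast)=\bot$.'' This IO step is sound only when the hardwired circuit's output at $x^\ast$ agrees with $D(x^\ast)$. For $\gamma^\ast=\msg[i^\ast]$ both output $\prf(x^\ast)$, so dropping works (and your $\hybG_3$--$\hybG_4$ PPRF detour is then harmless but entirely superfluous; the paper omits it). For $\gamma^\ast\ne\msg[i^\ast]$, however, your hardwired circuit outputs $\PRG(\seed)$ at $x^\ast$ while $D$ outputs $\prf(x^\ast)$; the circuits are not functionally equivalent and IO does not bridge them. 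A related complication: by hardwiring the \emph{plaintext} $d^\ast=\seed\|i^\ast\|\gamma^\ast$ rather than the output value, you leave $\seed$ inside the obfuscated circuit, which blocks a clean PRG step unless you insert yet another IO hybrid replacing the runtime computation of $\PRG(\seed)$ by a hardwired constant. Finally, your claimed endpoint ``$x^\ast$ is a fresh honest encryption'' is simply wrong: in $\Dreal{i^\ast}$ the input $x$ is uniform in $\Domprf$, never a PE ciphertext; the correct target in the $\gamma^\ast\ne\msg[i^\ast]$ sub-case is $x^\ast$ uniform and $y^\ast=\prf(x^\ast)$ with $\tlC=\iO(D)$.

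The paper sidesteps all this by (i) making the split on $\gamma^\ast$ versus $\msg[i^\ast]$ explicit from the start and (ii) hardwiring the \emph{output value} at $x^\ast$, not the plaintext. In the $=$ case the hardwired value is $\oly=\prf(x^\ast)$ throughout, so after strong ciphertext pseudorandomness and a single PRG step one can un-hardwire by IO with no PRF puncturing at all. In the $\ne$ case the hardwired value is $y^\ast$, moved from $\PRG(\seed^\ast)$ to random (PRG) to $\prf(x_0)$ (PPRF, after puncturing $\prf$ at the now-uniform $x_0$), at which point un-hardwiring by IO is legitimate. If you restructure your hybrids along these lines the argument goes through.
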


\ifnum\submission=0

\ifnum\submission=0
\begin{proof}[Proof of~\cref{thm:io_const_extraction-less_watermarkable_PRF}]
\else
\section{Security Proof for $\PRF_\io$}\label{sec:proof_elwmprf_io}
\fi
We define a sequence of hybrid games.
We sometimes omit hard-coded values when we write some circuits. For example, we simply write $D$ instead of $D[\prf,\pedk,\msg]$ when hard-coded values $(\prf,\pedk,\msg)$ are not important in arguments or clear from the context.

\begin{description}
\item[$\hybi{0}$:] This is the same as the case $b=1$ in $\expc{i^\ast,\qA,\PRF_\io}{pub}{sim}{mdd}(\secp)$.  In this game, $\qA$ is given $\iop = \peek$ and $\tlC = \iO(D[\prf,\pedk,\msg])$ as a public tag and a marked circuit, where $(\peek,\pedk)\gets \PE.\Gen(1^\secp)$, $(\prf,\pedk)$ is the target PRF key, and $\msg$ is the target message from $\qA$.
Also, $\qA$ is given $(\gamma^\ast,x^\ast,y^\ast)=(\gamma_1,x_1,y_1)$ as the challenge tuple, where $\gamma_1 \chosen \zo{}$, $x_1 \lrun \PE.\Enc(\peek,\seed^\ast \concat i^\ast \concat \gamma_1)$, $y_1 \seteq \PRG(\seed^\ast)$, and $\seed^\ast \chosen \zo{\ell}$.

\item[Case {$\gamma^\ast =\msg[i^\ast]$}:] We consider two cases separately hereafter.
First, we consider the case where $\gamma^\ast =\msg[i^\ast]$. We denote these hybrid games by $\hybij{k}{=}$. Note that we can choose $\gamma^\ast$ at any time and hard-code it into $\tlD$ in the proof since it is a uniformly random bit in all hybrid games.

\item[$\hybij{1}{=}$:] This is the same as $\hybi{0}$ except that if $\gamma_1 = \msg[i^\ast]$, we use $\tlD \lrun \iO(D_{\ne x_1}^{\$}[\prf,\pedk_{\ne x_1},\msg,\gamma_1,x_1,\oly])$, where $D_{\ne x^\ast}^{\$}$ is described in~\cref{fig:description_D_punc_random} and $\oly \seteq \prf(x_1)$. We use a punctured decryption key $\pedk_{\ne x_1}$ instead of $\pedk$. However, we do \emph{not} use a punctured key for $\prf$.

\protocol
{Circuit $D_{\ne x^\ast}^{\$}[\prf,\pedk^{\prime},\msg,\gamma^\ast,x^\ast,\oly]$}
{The description of $D_{\ne x^\ast}^{\$}$ (for $\gamma^\ast = \msg[i^\ast]$)}
{fig:description_D_punc_random}
{
\begin{description}
\setlength{\parskip}{0.3mm} 
\setlength{\itemsep}{0.3mm} 
\item[Constants:] A PRF key $\prf$, a (possibly punctured) PE decryption key $\pedk^\prime$, a message $\msg$, a bit $\gamma^\ast$, and strings $x^\ast \in \zo{\inplen}, \oly \in\zo{\outlen}$.
\item[Input:] A string $x \in \zo{\inplen}$.
\end{description}
\begin{enumerate}
\item \redline{If $x = x^\ast$, output $\oly$.}
\item Compute $d \lrun \PE.\Dec(\redline{\pedk^{\prime}},x)$.
\item If $d \ne \bot$, do the following
\begin{enumerate}
 \item Parse $d =  \seed \concat i\concat \gamma$, where $\seed \in\zo{\ell}$, $i\in [\msglen]$, and $\gamma \in \zo{}$.
 \item If $\msg[i] \ne \gamma$, output $\PRG(\seed)$. Otherwise, output $\prf(x)$.
 \end{enumerate} 
\item Otherwise, output $\prf(x)$.
\end{enumerate}
}


\item[$\hybij{2}{=}$:] This is the same as $\hybij{1}{=}$ except that we generate
\begin{itemize}
\item $x_0 \chosen \zo{\inplen}$,
\item $\tlD \lrun \iO(D_{\ne x_0}^{\$}[\prf,\pedk_{\ne x_0},\msg,\gamma_1,x_0,\oly])$.
\end{itemize}
That is, we replace $x^\ast = x_1$ and $\pedk^\prime = \pedk_{\ne x_1}$ with $x^\ast = x_0$ and $\pedk^\prime = \pedk_{\ne x_0}$, respectively.

 We also rename $\gamma_1 \chosen \zo{}$ into $\gamma_0 \chosen \zo{}$ (these distributions are the same).
\item[$\hybij{3}{=}$:] This is the same as $\hybij{2}{=}$ except that we use $y_0 \chosen\zo{\outlen}$ instead of $y_1 \seteq \PRG(\seed^\ast)$.


We describe the high-level overview of hybrid games for $\gamma^\ast = \msg[i^\ast]$ in~\cref{fig:extractionless_public_hybrid_first_step,fig:extractionless_public_hybrid_equal_case}.
\item[Case {$\gamma^\ast \ne \msg[i^\ast]$}:] Next, we consider the case where $\gamma^\ast \ne \msg[i^\ast]$. We denote these hybrid games by $\hybij{k}{\ne}$.
\item[$\hybij{1}{\ne}$:] This is the same as $\hybi{0}$ except that if $\gamma_1 \ne \msg[i^\ast]$, we generate $\tlD \lrun \iO(D_{\ne x_1}^{\mathsf{real}}[\prf,\pedk_{\ne x_1},\msg,\gamma_1,x_1,y_1])$, where $D_{\ne x^\ast}^{\mathsf{real}}$ is described in~\cref{fig:description_D_punc_real} and $y_1 \seteq \PRG(\seed^\ast)$. We use a punctured decryption key $\pedk_{\ne x_1}$ instead of $\pedk$. However, we do \emph{not} use a puncture key for $\prf$ \emph{at this point}.

\protocol
{Circuit $D_{\ne x^\ast}^{\mathsf{real}}[\prf^\prime,\pedk^{\prime},\msg,\gamma^\ast,x^\ast,y^\ast]$}
{The description of $D_{\ne x^\ast}^{\mathsf{real}}$ (for $\gamma^\ast = 1-\msg[i^\ast]$)}
{fig:description_D_punc_real}
{
\begin{description}
\setlength{\parskip}{0.3mm} 
\setlength{\itemsep}{0.3mm} 
\item[Constants:] A (possibly punctured) PRF key $\prf^{\prime}$, a (possibly punctured) PE decryption key $\pedk^\prime$, a message $\msg$, a bit $\gamma^\ast$, and strings $x^\ast \in \zo{\inplen}, y^\ast\in\zo{\outlen}$.
\item[Input:] A string $x \in \zo{\inplen}$.
\end{description}
\begin{enumerate}
\item \redline{If $x = x^\ast$, output $y^\ast$.}
\item Compute $d \lrun \PE.\Dec(\redline{\pedk^{\prime}},x)$.
\item If $d \ne \bot$, do the following
\begin{enumerate}
 \item Parse $d = \seed\concat i\concat \gamma$, where $\seed\in\zo{\ell}$, $i\in [\msglen]$, and $\gamma \in \zo{}$.
 \item If $\msg[i] \ne \gamma$, output $\PRG(\seed)$. Otherwise, output \redline{$\prf^{\prime}(x)$}.
 \end{enumerate} 
\item Otherwise, output \redline{$\prf^{\prime}(x)$}.
\end{enumerate}
}

\item[$\hybij{2}{\ne}$:] This is the same as $\hybij{1}{\ne}$ except that 
\begin{itemize}
 \item $x_0 \chosen \zo{\inplen}$,
 \item $\tlD \lrun \iO(D_{\ne x_0}^{\mathsf{real}}[\prf,\pedk_{\ne x_0},\msg,\gamma_1,x_0,y_1])$.
 \end{itemize}
  That is, we replace $x^\ast = x_1$ and $\pedk^\prime = \pedk_{\ne x_1}$ with $x^\ast = x_0$ and $\pedk^\prime = \pedk_{\ne x_0}$, respectively.
 We also rename $\gamma_1 \chosen \zo{}$ into $\gamma_0 \chosen \zo{}$ (these distributions are the same).
\item[$\hybij{3}{\ne}$:] This is the same as $\hybij{2}{\ne}$ except that we use $y_0 \chosen \zo{\outlen}$ instead of $y_1 \seteq \PRG(\seed^\ast)$.
\item[$\hybij{4}{\ne}$:] This is the same as $\hybij{3}{\ne}$ except that we use $\prf_{\ne x_0}$ instead of $\prf$.
\item[$\hybij{5}{\ne}$:] This is the same as $\hybij{4}{\ne}$ except that we use $y_0 \seteq \prf(x_0)$ instead of $y_0 \chosen \zo{\outlen}$.
\item[$\hybij{6}{\ne}$:] This is the same as $\hybij{5}{\ne}$ except that we use $\prf$ instead of $\prf_{\ne x_0}$.

We describe the high-level overview of hybrid games for $\gamma^\ast \ne \msg[i^\ast]$ in~\cref{fig:extractionless_public_hybrid_first_step,fig:extractionless_public_hybrid_ne_case}.
\item[End of case analysis:] The two case analyses end. Remaining transitions are the reverse of transitions from $\hybi{0}$ to $\hybij{1}{=}$ or $\hybij{1}{\ne}$.
\item[$\hybij{4}{=}$ and $\hybij{7}{\ne}$:] These are the same as $\hybij{3}{=}$ and $\hybij{6}{\ne}$, respectively except that
\begin{itemize}
\item if $\gamma_0 = \msg[i^\ast]$, we use $\tlD \lrun \iO(D[\prf,\pedk,\msg])$ instead of $\tlD \lrun \iO(D_{\ne x_0}^{\$}[\prf,\pedk_{\ne x_0},\msg,\gamma_0,x_0,\oly])$, where $D_{\ne x^\ast}^{\$}$ is described in~\cref{fig:description_D_punc_random} and $\oly \seteq \prf(x_0)$.

\item if $\gamma_0 \ne \msg[i^\ast]$, we use $\tlD \lrun \iO(D[\prf,\pedk,\msg])$ instead of $\tlD \lrun \iO(D_{\ne x_0}^{\mathsf{real}}[\prf,\pedk_{\ne x_0},\msg,\gamma_0,x_0,y_0])$, where $D_{\ne x^\ast}^{\mathsf{real}}$ is described in~\cref{fig:description_D_punc_real} and $y_0 \seteq \prf(x_0)$.
\end{itemize}
\end{description}
The each last hybrid is the same as the case $b=0$ in $\expc{i^\ast,\qA,\PRF_\io}{pub}{sim}{mdd}(\secp)$. That is, $\qA$ is given $\tlC = \iO(D[\prf,\pedk,\msg])$ and $(\gamma_0,x_0,y_0) \lrun \Dreal{i^\ast}$. Recall that $x_0 \gets \zo{\inplen}$ and
\begin{itemize}
\item $y_0 \chosen \zo{\outlen}$ if $\gamma_0 = \msg[i^\ast]$ (see $\hybij{3}{=}$),
\item $y_0 \seteq \prf(x_0)$ if $\gamma_0 \ne \msg[i^\ast]$ (see $\hybij{5}{\ne}$).
\end{itemize}

\begin{figure}[ht]
    \centering
    \begin{tabular}{l lll l}\toprule
 & $\prfk$ &    \multicolumn{2}{c}{$\Oracle{chall}$} &  security\\\cmidrule(r){3-4}
 & &  $\gamma_1 = \msg[i^\ast]$ &$\gamma_1 \ne \msg[i^\ast]$ &  \\ \midrule
$\hybi{0}$ &    $(\prf,\pedk)$ & $\iO(D)$ & $\iO(D)$ & \\
$\hybij{1}{=}$ &  $(\prf,\redline{\pedk_{\ne x_1}})$ &  \redline{$\iO(D_{\ne x_1}^{\$})$} & N/A &  IO \& PE p-Cor.\\
$\hybij{1}{\ne}$ &  $(\prf,\redline{\pedk_{\ne x_1}})$ & N/A & \redline{$\iO(D_{\ne x_1}^{\mathsf{real}})$} & IO \& PE p-Cor.\\\bottomrule
    \end{tabular}
\caption{High-level overview of hybrid games from $\hybi{0}$ to $\hybij{1}{=}$ and $\hybij{1}{\ne}$. Note that in these hybrid games, $(\gamma_1,x_1,y_1)\gets \Sim(\xk,\iop,i^\ast)$ and $x^\ast =x_1$. We use $\pedk_{\ne x_1}$ in $D_{\ne x_1}^{\$}$ and $D_{\ne x_1}^{\mathsf{real}}$, but $\prf$ is not punctured yet.
In ``security'' column, PE p-Cor. means PE punctured correctness.}
\label{fig:extractionless_public_hybrid_first_step}
\end{figure}

\begin{figure}[ht]
    \centering
    \begin{tabular}{l lllll l}\toprule
    &\multicolumn{5}{c}{$\gamma^\ast = \msg[i]$}\\\cmidrule(r){2-6}
 & $x^\ast$ ($x_1$/$x_0$) & $y^\ast$ ($y_1$/$y_0$) & $\oly$ & $\Oracle{chall}$ & $\prfk$ & security\\\midrule
$\hybij{1}{=}$ & $\PE.\Enc(p_{i^\ast})$ & $\PRG(\seed^\ast)$ &  $\prf(x_1)$ & $\iO(D_{\ne x_1}^{\$}[x_1 \mapsto \bar{y}])$ & $(\prf,\pedk_{\ne x_1})$ & \\
$\hybij{2}{=}$ & \redline{$x_0 \chosen \$$} & $\PRG(\seed^\ast)$ &  $\prf(x_0)$ & $\iO(D_{\ne x_0}^{\$}[x_0 \mapsto \bar{y}])$ & $(\prf,\pedk_{\ne x_0})$& S-CPR\\
$\hybij{3}{=}$ & $\$$ & \redline{$y_0\chosen \$$} &  $\prf(x_0)$ & $\iO(D_{\ne x_0}^{\$}[x_0 \mapsto \bar{y}])$ & $(\prf,\pedk_{\ne x_0})$ & PRG \\
$\hybij{4}{=}$ & $\$$ & $\$$ &  $\prf(x_0)$ & \redline{$\iO(D)$} & $(\prf,\redline{\pedk})$ & IO \& PE p-Cor.\\\bottomrule
    \end{tabular}
\caption{High-level overview of hybrid games from $\hybij{1}{=}$ to $\hybij{4}{=}$. Here, $p_{i^\ast} \seteq ( \seed^\ast \concat i^\ast\concat \gamma_1)$ and $x_1\gets \PE.\Enc(\peek,p_{i^\ast})$. Note that $\oly$ is an output of $D_{\ne x^\ast}^{\$}$ for input $x^\ast$ for $\gamma_1 = \msg[i^\ast]$ case. $D_{\ne x^\ast}^{\$}[x^\ast \mapsto \oly]$ means $D_{\ne x^\ast}^{\$}(x^\ast)$ outputs the hard-coded value $\oly$.
In ``security'' column, S-CPR means the Strong Ciphertext PseudoRandomness of PE.}
\label{fig:extractionless_public_hybrid_equal_case}
\end{figure}

\begin{figure}[ht]
    \centering
    \begin{tabular}{l llll l}\toprule
    &\multicolumn{4}{c}{$\gamma^\ast \ne \msg[i]$}\\\cmidrule(r){2-5}
 & $x^\ast$ ($x_1$/$x_0$) & $y^\ast$ ($y_1$/$y_0$) &  $\Oracle{chall}$ & $\prfk$ & security\\\midrule
$\hybij{1}{\ne}$ & $\PE.\Enc(p_{i^\ast})$ & $\PRG(\seed^\ast)$ &   $\iO(D_{\ne x_1}^{\mathsf{real}}[x_1 \mapsto y_1])$ & $(\prf,\pedk_{\ne x_1})$ & \\
$\hybij{2}{\ne}$ & \redline{$x_0\chosen \$$} & $\PRG(\seed^\ast)$ &   $\iO(D_{\ne x_0}^{\mathsf{real}}[x_0 \mapsto y_1])$ & $(\prf,\pedk_{\ne x_0})$ & S-CPR\\
$\hybij{3}{\ne}$ & $\$$ & \redline{$y_0 \chosen \$$} & $\iO(D_{\ne x_0}^{\mathsf{real}}[x_0 \mapsto y_0])$ & $(\prf,\pedk_{\ne x_0})$ & PRG \\
$\hybij{4}{\ne}$ & $\$$ & $\$$ & $\iO(D_{\ne x_0}^{\mathsf{real}}[x_0 \mapsto y_0])$ & $(\redline{\prf_{\ne x_0}},\pedk_{\ne x_0})$ & IO \& PPRF p-Cor.\\
$\hybij{5}{\ne}$ & $\$$ & \redline{$\prf(x_0)$} & $\iO(D_{\ne x_0}^{\mathsf{real}}[x_0 \mapsto y_0])$ & $(\prf_{\ne x_0},\pedk_{\ne x_0})$ & PPRF \\
$\hybij{6}{\ne}$ & $\$$ & $\prf(x_0)$ & $\iO(D_{\ne x_0}^{\mathsf{real}}[x_0 \mapsto y_0])$ & $(\redline{\prf},\pedk_{\ne x_0})$ & IO \& PPRF p-Cor. \\
$\hybij{7}{\ne}$ & $\$$ & $\prf(x_0)$ & \redline{$\iO(D)$} & $(\prf,\redline{\pedk})$ & IO \& PE p-Cor. \\\bottomrule
    \end{tabular}
\caption{High-level overview of hybrid games from $\hybij{1}{\ne}$ to $\hybij{7}{\ne}$. Here, $p_{i^\ast} \seteq (\seed^\ast \concat i^\ast\concat \gamma_1)$ and $x_1 \gets \PE.\Enc(\peek,p_{i^\ast})$.
$D_{\ne x^\ast}^{\mathsf{real}}[x^\ast \mapsto y^\ast]$ means $D_{\ne x^\ast}^{\mathsf{real}}(x^\ast)$ outputs the hard-coded value $y^\ast$.
In ``security'' column, S-CPR and PPRF p-Cor. mean the Strong Ciphertext PseudoRandomness of PE and punctured correctness of PPRF, respectively.}
\label{fig:extractionless_public_hybrid_ne_case}
\end{figure}

We prove the lemma by proving the following propositions.

\paragraph{The case $\gamma^\ast = \msg[i]$.} We first prove propositions for the case $\gamma^\ast = \msg[i]$.
\begin{proposition}\label{prop:io_const_equal_first}
If $\iO$ is a secure IO and $\PE$ satisfies punctured correctness, it holds that
\ifnum\submission=1
$\abs{\Pr[\hybi{0}=1]- \Pr[\hybij{1}{=}=1]}\le \negl(\secp)$.
\else
\[\abs{\Pr[\hybi{0}=1]- \Pr[\hybij{1}{=}=1]}\le \negl(\secp).\]
\fi
\end{proposition}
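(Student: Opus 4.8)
\textbf{Proof plan for Proposition~\ref{prop:io_const_equal_first}.}
The plan is to show that the only change between $\hybi{0}$ and $\hybij{1}{=}$ — replacing the obfuscated circuit $\iO(D[\prf,\pedk,\msg])$ with $\iO(D_{\ne x_1}^{\$}[\prf,\pedk_{\ne x_1},\msg,\gamma_1,x_1,\oly])$ where $\oly \seteq \prf(x_1)$ — leaves the obfuscated functionality unchanged, so that indistinguishability follows directly from IO security. Concretely, I would define a PPT sampler $\Sampler$ that, on input $1^\secp$, runs the setup and key generation honestly to obtain $\prf$, $(\peek,\pedk)$, takes the adversary's target message $\msg$, samples $\gamma_1 \chosen \zo{}$, $\seed^\ast \chosen \zo{\ell}$, computes $x_1 \lrun \PE.\Enc(\peek,\seed^\ast\concat i^\ast\concat\gamma_1)$ and $\oly \seteq \prf(x_1)$, punctures $\pedk_{\ne x_1} \lrun \PE.\Puncture(\pedk,\{x_1\})$, and outputs the circuit pair $(C_0,C_1) = (D[\prf,\pedk,\msg],\ D_{\ne x_1}^{\$}[\prf,\pedk_{\ne x_1},\msg,\gamma_1,x_1,\oly])$ together with auxiliary input $\aux$ consisting of everything the rest of the experiment needs ($\peek$, $\gamma_1$, $x_1$, $y_1 = \PRG(\seed^\ast)$, and the randomness to answer $\qA$'s messages). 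Then the distinguisher $\qD$ derived from $\qA$ runs $\qA$ on $\tlC = \iO(C_\coin)$ and $(\gamma_1,x_1,y_1)$ and outputs $\qA$'s guess; by construction $\qD(\iO(C_0),\aux)$ reproduces $\hybi{0}$ and $\qD(\iO(C_1),\aux)$ reproduces $\hybij{1}{=}$, so the advantage gap is exactly $\adva{\iO,\qD}{io}(\secp)$, which is negligible provided the functional-equivalence premise of IO holds with overwhelming probability.

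The heart of the argument is therefore to verify $C_0(x) = C_1(x)$ for all inputs $x \in \zo{\inplen}$, with overwhelming probability over the sampler's coins. I would split into three cases. First, $x = x_1$: then $D_{\ne x_1}^{\$}$ outputs the hard-coded $\oly = \prf(x_1)$ by its line~1, while $D$ decrypts $x_1$ to $\seed^\ast\concat i^\ast\concat\gamma_1$ (by correctness of $\PE$), finds $\msg[i^\ast] = \gamma_1 = \gamma^\ast$ — this is exactly the case we are in — and hence also outputs $\prf(x_1)$; the values agree. Second, $x \ne x_1$ and $\PE.\Dec(\pedk,x) \ne \bot$: by punctured correctness of $\PE$ we have $\PE.\Dec(\pedk_{\ne x_1},x) = \PE.\Dec(\pedk,x)$, so both circuits decrypt to the same $d = \seed\concat i\concat\gamma$ and then execute identical branch logic (output $\PRG(\seed)$ if $\msg[i]\ne\gamma$, else $\prf(x)$), so they agree. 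Third, $x \ne x_1$ and $\PE.\Dec(\pedk,x) = \bot$: again punctured correctness gives $\PE.\Dec(\pedk_{\ne x_1},x) = \bot$, so both circuits fall through to output $\prf(x)$, and they agree. Thus $C_0$ and $C_1$ are functionally identical on every input (note the padding of $D$ is chosen so both circuits have the same size, which IO requires implicitly through the circuit class).

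The main obstacle is purely a bookkeeping subtlety rather than a deep one: I must make sure the sampler $\Sampler$ is genuinely a stand-alone PPT algorithm whose output $(C_0,C_1,\aux)$ carries \emph{all} the information needed to finish simulating the SIM-MDD experiment, since \cref{def:io} quantifies over such samplers and QPT distinguishers but does not allow the distinguisher to share secret state with the sampler beyond $\aux$. In particular $\aux$ must include $y_1 = \PRG(\seed^\ast)$ (the challenge response) and enough randomness for $\qD$ to run $\qA$ — but crucially $\aux$ need \emph{not} include $\pedk$ or $\seed^\ast$ themselves, which is consistent with later hybrids. I would also remark that in this proposition we are still in the public-simulation setting, so there is no extraction oracle to worry about; $\qA$ receives $\xk = \bot$ and no $\Oracle{sim}$ access after the challenge, which keeps the reduction straightforward. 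Once the functional-equivalence check above is in place, the bound $\abs{\Pr[\hybi{0}=1] - \Pr[\hybij{1}{=}=1]} \le \adva{\iO,\qD}{io}(\secp) = \negl(\secp)$ follows immediately.
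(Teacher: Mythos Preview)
Your proposal is correct and follows essentially the same approach as the paper: show that $D[\prf,\pedk,\msg]$ and $D_{\ne x_1}^{\$}[\prf,\pedk_{\ne x_1},\msg,\gamma_1,x_1,\oly]$ are functionally equivalent (using PE correctness at $x_1$ and PE punctured correctness elsewhere, together with the case hypothesis $\gamma_1=\msg[i^\ast]$), then invoke IO security. The paper's proof is terser and does not spell out the sampler/distinguisher or the three-case split, but your more explicit treatment is exactly what is being implicitly claimed there.
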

\begin{proof}[Proof of~\cref{prop:io_const_equal_first}]
The difference between the two games is that $D_{\ne x_1}^{\$}[\prf,\pedk_{\ne x_1},\msg,\gamma,x_1,\oly]$ is used for $\Oracle{chall}$ instead of $D[\prf,\pedk,\msg]$ in the case where $\gamma_1=\msg[i^\ast]$. These two circuits are the same except that
\begin{itemize}
 \item for input $x_1$, $D_{\ne x_1}^{\$}$ directly outputs $\oly$,
 \end{itemize}
 due to the punctured correctness of $\PE$.
 Thus, if the following hold, $D_{\ne x_1}^{\$}$ and $D$ are functionally equivalent:
 \begin{itemize}
  \item $D(x_1)$ outputs $\oly=\prf(x_1)$ when $\gamma_1 =\msg[i^\ast]$.
  \end{itemize}
  This holds since $x_1 \gets \PE.\Enc(\peek, \seed^\ast \concat i^\ast \concat \gamma_1)$ and $D(x_1)$ runs the item (b) in~\cref{fig:description_D}, but $\gamma_1 \ne \msg[i^\ast]$ does \emph{not} hold in this case.

  Thus, $D_{\ne x_1}^{\$}$ and $D$ are functionally equivalent and the proposition holds due to IO security.
\end{proof}

\begin{proposition}\label{prop:io_const_equal_second}
If $\PE$ satisfies strong ciphertext pseudorandomness, it holds that
\ifnum\submission=1
$\abs{\Pr[\hybij{1}{=}=1]- \Pr[\hybij{2}{=}=1]}\le \negl(\secp)$.
\else
\[\abs{\Pr[\hybij{1}{=}=1]- \Pr[\hybij{2}{=}=1]}\le \negl(\secp).\]
\fi
\end{proposition}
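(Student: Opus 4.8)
The plan is to reduce the gap between $\hybij{1}{=}$ and $\hybij{2}{=}$ directly to the strong ciphertext pseudorandomness of $\PE$ (\cref{def:pe_strong_pseudorandomness}). The only substantive difference between the two games lies in how the challenge input $x^\ast$ and the decryption key baked into the marked circuit $\tlC$ are produced: in $\hybij{1}{=}$ we have $x^\ast = x_1 = \PE.\Enc(\peek,\seed^\ast\concat i^\ast\concat\gamma^\ast)$ together with $\tlC = \iO(D_{\ne x_1}^{\$}[\prf,\pedk_{\ne x_1},\msg,\gamma^\ast,x_1,\oly])$ where $\oly = \prf(x_1)$, while in $\hybij{2}{=}$ we have $x^\ast = x_0 \chosen \zo{\inplen}$ together with $\tlC = \iO(D_{\ne x_0}^{\$}[\prf,\pedk_{\ne x_0},\msg,\gamma^\ast,x_0,\prf(x_0)])$. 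In both worlds the decryption key occurring in $\tlC$ is punctured at the \emph{single} point $x^\ast$, and no unpunctured decryption key is used anywhere; this is exactly the pair of distributions that the strong notion of ciphertext pseudorandomness separates, and (as noted in \cref{remark:diffrence_ct_pseudorandomness}) the original weaker notion would not suffice, since it punctures at both the honest ciphertext and the random string and could leak information about $\seed^\ast$ — which also appears in $y^\ast = \PRG(\seed^\ast)$ — through the punctured key.

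First I would observe that $\hybij{1}{=}$ and $\hybij{2}{=}$ are identically distributed conditioned on the event $\gamma^\ast \ne \msg[i^\ast]$, since on that event $\tlC$ is just $\iO(D[\prf,\pedk,\msg])$ in both games and the $D^{\$}_{\ne x^\ast}$-related changes are vacuous; hence it suffices to bound the distinguishing advantage conditioned on $\gamma^\ast = \msg[i^\ast]$. Then I would build a reduction $\qB$ against $\expb{\qB}{s}{cpr}(\secp)$. Recall that in the public-simulation setting $\pp = \xk = \bot$ and there is no simulation oracle (\cref{def:SIM-MDD_public_simulation}), so the adversary $\qA$ receives only public data. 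After handing $\qA$ the pair $(\bot,\bot)$ and receiving the target message $\msg$, $\qB$ samples $\gamma^\ast \chosen \zo{}$ and $\seed^\ast \chosen \zo{\ell}$. If $\gamma^\ast \ne \msg[i^\ast]$, $\qB$ runs $\PE.\Gen$ on its own, simulates $\hybi{0}$ faithfully (which equals both hybrids on this event), and outputs a fixed bit. If $\gamma^\ast = \msg[i^\ast]$, $\qB$ submits $m^\ast = \seed^\ast\concat i^\ast\concat\gamma^\ast$ to its challenger, receives $(x^\ast,\peek,\pedk_{\ne x^\ast})$, sets $\iop \seteq \peek$, samples $\prf \gets \PRF.\Gen(1^\secp)$, computes $\oly \seteq \prf(x^\ast)$, sets $\tlC \seteq \iO(D_{\ne x^\ast}^{\$}[\prf,\pedk_{\ne x^\ast},\msg,\gamma^\ast,x^\ast,\oly])$, and gives $\qA$ the marked circuit $\tlC$ together with the challenge tuple $(\gamma^\ast,x^\ast,\PRG(\seed^\ast))$; finally $\qB$ outputs (the complement of) $\qA$'s bit, following the usual sign convention. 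Since $\qB$ uses only the public encryption key and the one-point-punctured decryption key it is given — never the full $\pedk$ — the simulation is perfect: the real-ciphertext world reproduces exactly the view of $\hybij{1}{=}$ and the random-string world reproduces exactly the view of $\hybij{2}{=}$. Consequently $\advb{\qB}{s}{cpr}(\secp) \ge \tfrac{1}{2}\abs{\Pr[\hybij{1}{=}=1] - \Pr[\hybij{2}{=}=1]}$, and the claim follows.

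The reduction is routine; the only point that genuinely needs care is the one already flagged — namely that the argument relies on the \emph{strong} form of ciphertext pseudorandomness (single-point puncturing in the random world), without which $\tlC$ could not be reconstructed in $\hybij{2}{=}$ from the challenger's punctured key alone and the clean perfect simulation would break down. A secondary, purely mechanical point is the bookkeeping that matches the challenger's hidden bit with the choice of hybrid and absorbs the factor $1/2$ coming from $\qB$'s own sampling of $\gamma^\ast$; I would carry this out exactly as in the parallel transition $\hybij{1}{\ne} \to \hybij{2}{\ne}$ (\cref{fig:extractionless_public_hybrid_ne_case}), which is justified in the same way.
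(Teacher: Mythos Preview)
Your proof is correct and follows essentially the same reduction as the paper: build a PE strong-ciphertext-pseudorandomness adversary that forwards $m^\ast = \seed^\ast\concat i^\ast\concat\gamma^\ast$, receives $(x^\ast,\peek,\pedk_{\ne x^\ast})$, and reconstructs $\iop$, $\tlC = \iO(D_{\ne x^\ast}^{\$}[\prf,\pedk_{\ne x^\ast},\msg,\gamma^\ast,x^\ast,\prf(x^\ast)])$, and the challenge tuple $(\gamma^\ast,x^\ast,\PRG(\seed^\ast))$ from this data alone. The only cosmetic differences are that the paper's reduction commits to $\gamma_1$ and queries the PE challenger before interacting with $\qA$ (harmless, since $\gamma^\ast$ is independent of $\msg$), whereas you first obtain $\msg$, sample $\gamma^\ast$, and explicitly short-circuit the branch $\gamma^\ast\ne\msg[i^\ast]$ on which the two hybrids coincide; this just makes overt the factor $1/2$ that the paper's case split absorbs implicitly.
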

\begin{proof}[Proof of~\cref{prop:io_const_equal_second}]
We construct an algorithm $\qB$ for strong ciphertext pseudorandomness by using $\qA$.
$\qB$ generates $\prf \gets \PRF.\Gen(1^\secp)$, and chooses $\seed^\ast \chosen \zo{\ell}$ and $\gamma_1 \chosen \zo{}$.
$\qB$ sends $\seed^\ast\concat i^\ast \concat \gamma_1$ to the challenger. The challenger returns $(x^\ast,\peek,\pedk_{\ne x^\ast})$ to $\qB$.

Then, $\qB$ passes $\pp\seteq\bot$ and $\xk \seteq \bot$ to $\qA$.
$\qB$ also computes $\oly \seteq \prf(x^\ast)$.
\begin{description}
\item[Challenge:] When $\qA$ sends a challenge query $\msg$, $\qB$ does the following
\begin{itemize}
 \item Construct $D_{\ne x^\ast}^{\$}[\prf,\pedk_{\ne x^\ast},\msg,\gamma_1,x^\ast,\oly]$ as described in~\cref{fig:description_D_punc_random}.
 \item Return $\tlC \seteq \iO(D_{\ne x^\ast}^{\$}[\prf,\pedk_{\ne x^\ast},\msg,\gamma_1,x^\ast,\oly])$ and $\iop \seteq \peek$ to $\qA$.
 \end{itemize} 
\end{description}

After finishing $\qA$'s challenge query, $\qB$ computes $y^\ast \seteq \PRG(\seed^\ast)$ and sends $(\gamma_1,x^\ast,y^\ast)$ to $\qA$.
Finally, when $\qA$ terminates with output $\coin'$, $\qB$ outputs $\coin'$ and terminates. $\qB$ perfectly simulates
\begin{itemize}
\item $\hybij{1}{=}$ if $x^\ast \gets \PE.\Enc(\peek, \seed^\ast\concat i^\ast \concat \gamma_1)$,
\item $\hybij{2}{=}$ if $x^\ast \chosen \zo{\inplen}$.
\end{itemize}
Thus, we see that the proposition holds.
\end{proof}

\begin{proposition}\label{prop:io_const_equal_third}
If $\PRG$ is a secure PRG, it holds that $\abs{\Pr[\hybij{2}{=}=1]- \Pr[\hybij{3}{=}=1]}\le \negl(\secp)$.
\end{proposition}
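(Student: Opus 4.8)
The plan is a single-step reduction to the pseudorandomness of $\PRG$. The sole difference between $\hybij{2}{=}$ and $\hybij{3}{=}$ lies in how the third component $y^\ast$ of the challenge tuple $(\gamma^\ast,x_0,y^\ast)$ handed to $\qA$ is sampled: in $\hybij{2}{=}$ it is $\PRG(\seed^\ast)$ for a fresh $\seed^\ast\chosen\zo{\ell}$, whereas in $\hybij{3}{=}$ it is a uniform $y_0\chosen\zo{\outlen}$. The first thing I would record is that, once the change of $\hybij{2}{=}$ is in place, the seed $\seed^\ast$ appears \emph{only} in the computation of $y^\ast$: the challenge input $x_0$ is now sampled uniformly from $\zo{\inplen}$ independently of $\seed^\ast$, and both the obfuscated circuit $\iO(D_{\ne x_0}^{\$}[\prf,\pedk_{\ne x_0},\msg,\gamma^\ast,x_0,\oly])$ and the value $\oly\seteq\prf(x_0)$ depend only on $\prf$, $\pedk_{\ne x_0}$, $\msg$, $\gamma^\ast$, and $x_0$, none of which involves $\seed^\ast$.

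Given this observation, I would construct a QPT distinguisher $\qB$ against $\PRG$. On input a challenge string $t^\ast\in\zo{\outlen}$ (distributed either as $\PRG(\seed^\ast)$ for $\seed^\ast\chosen\zo{\ell}$, or as a uniform string), $\qB$ runs $\qA$ internally exactly as in $\hybij{2}{=}$/$\hybij{3}{=}$: it generates $\prf\gets\PRF.\Gen(1^\secp)$, $(\peek,\pedk)\gets\PE.\Gen(1^\secp)$, $\gamma^\ast\chosen\bit$, and $x_0\chosen\zo{\inplen}$; it sends $\pp\seteq\bot$ and $\xk\seteq\bot$ to $\qA$; on receiving the target message $\msg$, it sets $\oly\seteq\prf(x_0)$, computes a punctured key $\pedk_{\ne x_0}\gets\PE.\Puncture(\pedk,\setbk{x_0})$, and returns $\tlC\seteq\iO(D_{\ne x_0}^{\$}[\prf,\pedk_{\ne x_0},\msg,\gamma^\ast,x_0,\oly])$ together with $\iop\seteq\peek$; it then sends $(\gamma^\ast,x_0,t^\ast)$ to $\qA$ and finally outputs $\qA$'s output bit. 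When $t^\ast=\PRG(\seed^\ast)$ the view of $\qA$ is exactly that in $\hybij{2}{=}$, and when $t^\ast$ is uniform it is exactly that in $\hybij{3}{=}$; hence $\abs{\Pr[\hybij{2}{=}=1]-\Pr[\hybij{3}{=}=1]}$ equals $\qB$'s distinguishing advantage against $\PRG$, which is $\negl(\secp)$.

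There is essentially no obstacle to this argument; the only point that warrants care is the bookkeeping fact that $\seed^\ast$ has genuinely been decoupled from the rest of the experiment by the preceding hybrid — in particular that the ciphertext $x_1=\PE.\Enc(\peek,\seed^\ast\concat i^\ast\concat\gamma^\ast)$ was already replaced by a uniform $x_0$ in $\hybij{2}{=}$ — so that embedding the $\PRG$ challenge into $y^\ast$ is the last remaining use of the seed and $\qB$'s simulation is perfect on both sides. (Note also that the game for SIM-MDD security with public simulation grants $\qA$ no oracle, so $\qB$ need not simulate any oracle calls.)
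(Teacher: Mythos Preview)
Your proposal is correct and matches the paper's argument: once $x_0$ is uniform, $\seed^\ast$ appears only in $y^\ast$, so a direct reduction to PRG security suffices. One small bookkeeping point: since the hybrids $\hybij{k}{=}$ are the branch with $\gamma^\ast = \msg[i^\ast]$, your reduction $\qB$ should set $\gamma^\ast \seteq \msg[i^\ast]$ after receiving $\msg$ rather than sampling it uniformly, but this does not affect the argument.
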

\begin{proof}[Proof of~\cref{prop:io_const_equal_third}]
The difference between the two games is that $y^\ast$ in the target triple $(\gamma_0,x_0,y^\ast)$ is $\PRG(\seed^\ast)$ or random in the case where $\gamma_0 = \msg[i]$. Recall that we rename $\gamma_1 \chosen \zo{}$ to $\gamma_0\chosen \zo{}$.
Note that we randomly choose $x_0 \chosen \zo{\inplen}$ and use $\prf$ and $\pedk_{\ne x_0}$ in these games. Thus, we can apply pseudorandomness of $\PRG$ since the value $\seed^\ast$ is never used anywhere else.
\end{proof}

\paragraph{The case where $\gamma^\ast \ne \msg[i]$.} Next, we prove propositions for the case where $\gamma^\ast \ne \msg[i]$.

\begin{proposition}\label{prop:io_const_ne_first}
If $\iO$ is a secure IO and $\PE$ satisfies punctured correctness, it holds that
\ifnum\submission=1
$\abs{\Pr[\hybi{0}=1]- \Pr[\hybij{1}{\ne}=1]}\le \negl(\secp)$.
\else
\[\abs{\Pr[\hybi{0}=1]- \Pr[\hybij{1}{\ne}=1]}\le \negl(\secp).\]
\fi
\end{proposition}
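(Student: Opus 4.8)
The plan is to reduce to the security of $\iO$, following the same template as the proof of \cref{prop:io_const_equal_first}, with the only new work being the verification of functional equivalence in the $\gamma^\ast\neq\msg[i^\ast]$ case. First I would note that $\hybi{0}$ and $\hybij{1}{\ne}$ proceed identically whenever $\gamma_1=\msg[i^\ast]$, since the modification in $\hybij{1}{\ne}$ is triggered only when $\gamma_1\neq\msg[i^\ast]$. Hence it suffices to show that, conditioned on $\gamma_1\neq\msg[i^\ast]$, the circuits $C_0\seteq D[\prf,\pedk,\msg]$ (\cref{fig:description_D}) and $C_1\seteq D_{\ne x_1}^{\mathsf{real}}[\prf,\pedk_{\ne x_1},\msg,\gamma_1,x_1,y_1]$ (\cref{fig:description_D_punc_real}), with $y_1=\PRG(\seed^\ast)$ and $x_1=\PE.\Enc(\peek,\seed^\ast\concat i^\ast\concat\gamma_1)$, compute the same function. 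Since all circuits in play are padded to a common size, a standard sampler/distinguisher pair then yields the bound: the sampler runs the classical setup and the first message of $\qA$, samples $\gamma_1$, and outputs $(C_0,C_1)$ when $\gamma_1\neq\msg[i^\ast]$ and two copies of $D[\prf,\pedk,\msg]$ otherwise, together with auxiliary information containing $\qA$'s residual state, $\iop=\peek$, and the challenge tuple $(\gamma_1,x_1,y_1)$; the distinguisher plugs the received obfuscation in as $\tlC$, resumes $\qA$, and relays its output bit, so that $\Pr[\qD(\iO(C_0))=1]=\Pr[\hybi{0}=1]$ and $\Pr[\qD(\iO(C_1))=1]=\Pr[\hybij{1}{\ne}=1]$, and $\iO$ security gives $\abs{\Pr[\hybi{0}=1]-\Pr[\hybij{1}{\ne}=1]}\le\negl(\secp)$.

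For the functional-equivalence claim I would split on the input $x$. If $x\neq x_1$, then $C_1$ bypasses its hard-coded first line and calls $\PE.\Dec(\pedk_{\ne x_1},x)$, which by punctured correctness of $\PE$ agrees with $\PE.\Dec(\pedk,x)$; since $C_1$ still uses the unpunctured key $\prf$ at this stage (puncturing of $\prf$ is introduced only later, in $\hybij{4}{\ne}$), the remaining computation of $C_1$ on $x$ coincides verbatim with that of $D$, so $C_0(x)=C_1(x)$. If $x=x_1$, then $C_1(x_1)=y_1=\PRG(\seed^\ast)$ by construction of $D_{\ne x_1}^{\mathsf{real}}$. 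On the other side, correctness of $\PE$ gives $\PE.\Dec(\pedk,x_1)=\seed^\ast\concat i^\ast\concat\gamma_1\neq\bot$, so $D(x_1)$ enters its decryptable branch and, because $\msg[i^\ast]\neq\gamma_1$ in the present case, outputs $\PRG(\seed^\ast)$; hence $C_0(x_1)=\PRG(\seed^\ast)=y_1=C_1(x_1)$, and $C_0\equiv C_1$.

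The argument is essentially routine; the only delicate point is the input $x=x_1$. Replacing $\pedk$ by $\pedk_{\ne x_1}$ inside the obfuscated circuit must not change its behaviour on $x_1$ itself, which is precisely why $C_1$ hard-codes $y_1$ there, and the hard-coded value must match what $D$ outputs on $x_1$ — a match that holds exactly because we are in the case $\gamma_1\neq\msg[i^\ast]$ (so $D$ takes the $\PRG(\seed)$ branch rather than the $\prf(x)$ branch), mirroring the role $\gamma_1=\msg[i^\ast]$ played in \cref{prop:io_const_equal_first}. As in that proposition, one should also confirm that $\qA$'s (possibly quantum) state is carried through the reduction via the auxiliary input of the $\iO$ game, which \cref{def:io} leaves unconstrained.
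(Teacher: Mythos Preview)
Your proposal is correct and follows essentially the same approach as the paper's proof: establish that $D[\prf,\pedk,\msg]$ and $D_{\ne x_1}^{\mathsf{real}}[\prf,\pedk_{\ne x_1},\msg,\gamma_1,x_1,y_1]$ are functionally equivalent (via punctured correctness of $\PE$ for $x\ne x_1$ and the branch analysis of $D$ at $x_1$ using $\gamma_1\ne\msg[i^\ast]$), then invoke IO security. Your write-up is simply more explicit than the paper's, spelling out the sampler/distinguisher and the handling of $\qA$'s quantum state as auxiliary input, which the paper leaves implicit.
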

\begin{proof}[Proof of~\cref{prop:io_const_ne_first}]
The difference between the two games is that $D_{\ne x_1}^{\mathsf{real}}[\prf,\pedk_{\ne x_1},\msg,\gamma,x_1,y_1]$ is used for the challenge query instead of $D[\prf,\pedk,\msg]$ in the case where $\gamma_1 \ne \msg[i^\ast]$. These two circuits are the same except that
\begin{itemize}
 \item for input $x_1$, $D_{\ne x_1}^{\mathsf{real}}$ directly outputs the hard-wired value $y_1 =\PRG(\seed^\ast)$,
 \end{itemize}
 due to the punctured correctness of $\PE$.
 Thus, if the following hold, $D_{\ne x_1}^{\mathsf{real}}$ and $D$ are functionally equivalent:
 \begin{itemize}
  \item $D(x_1)$ outputs $y_1 = \PRG(\seed^\ast)$ when $\gamma_1 \ne \msg[i^\ast]$,
  \end{itemize}
  This holds since $x_1 \gets \PE.\Enc(\peek, \seed^\ast \concat i^\ast \concat \gamma_1)$, $D(x_1)$ runs the item (b) in~\cref{fig:description_D}, and $\gamma_1 \ne \msg[i^\ast]$ holds in this case.
Thus, $D_{\ne x_1}^{\mathsf{real}}$ and $D$ are functionally equivalent and the proposition holds due to IO security.
\end{proof}

\begin{proposition}\label{prop:io_const_ne_second}
If $\PE$ satisfies strong pseudorandom ciphertext, it holds that
\ifnum\submission=1
$\abs{\Pr[\hybij{1}{\ne}=1]- \Pr[\hybij{2}{\ne}=1]}\le \negl(\secp)$.
\else
\[\abs{\Pr[\hybij{1}{\ne}=1]- \Pr[\hybij{2}{\ne}=1]}\le \negl(\secp).\]
\fi
\end{proposition}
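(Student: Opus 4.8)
The plan is to reduce the gap between $\hybij{1}{\ne}$ and $\hybij{2}{\ne}$ directly to the strong ciphertext pseudorandomness of $\PE$ (\cref{def:pe_strong_pseudorandomness}), following the same template as the proof of \cref{prop:io_const_equal_second}. The only distinction between the two hybrids is (i) whether the challenge input $x^\ast$ ($=x_1$ in $\hybij{1}{\ne}$, $=x_0$ in $\hybij{2}{\ne}$) is an honest encryption $\PE.\Enc(\peek,\seed^\ast\concat i^\ast\concat\gamma_1)$ or a uniformly random string of $\zo{\inplen}=\zo{\ctlen}$, and (ii) correspondingly whether the decryption key baked into the obfuscated circuit $D_{\ne x^\ast}^{\mathsf{real}}$ is $\pedk$ punctured at that honest ciphertext or at the random string. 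These are precisely the two worlds offered by the PE challenger (bit $0$ giving $(c^\ast,\peek,\pedk_{\ne c^\ast})$ and bit $1$ giving $(r^\ast,\peek,\pedk_{\ne r^\ast})$), so the reduction need only wire the received tuple into the simulation.

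Concretely, I would build a reduction $\qB$ that first samples $\prf\gets\PRF.\Gen(1^\secp)$, $\seed^\ast\chosen\zo{\ell}$, $\gamma_1\chosen\zo{}$ and sets $y^\ast\seteq\PRG(\seed^\ast)$; none of these depends on $\msg$, so $\qB$ can submit $m^\ast\seteq\seed^\ast\concat i^\ast\concat\gamma_1$ to the PE challenger and receive $(x^\ast,\peek,\pedk_{\ne x^\ast})$. Then $\qB$ hands $\pp\seteq\bot$ and $\xk\seteq\bot$ to $\qA$ (which is all that $\Setup$ of $\PRF_\io$ produces), and upon receiving $\msg$ it constructs $D_{\ne x^\ast}^{\mathsf{real}}[\prf,\pedk_{\ne x^\ast},\msg,\gamma_1,x^\ast,y^\ast]$ as in \cref{fig:description_D_punc_real}, returns $\tlC\seteq\iO(D_{\ne x^\ast}^{\mathsf{real}}[\cdots])$ and $\iop\seteq\peek$, and afterwards releases the challenge tuple $(\gamma_1,x^\ast,y^\ast)$; finally $\qB$ echoes $\qA$'s guess. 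When the PE challenger's bit is $0$ this transcript is identical to $\hybij{1}{\ne}$ (with $x_1=x^\ast$ an honest ciphertext, $\pedk_{\ne x_1}$ the matching punctured key, and $y_1=\PRG(\seed^\ast)$), and when it is $1$ it is identical to $\hybij{2}{\ne}$ (with $x_0=x^\ast$ uniform and $\pedk_{\ne x_0}$ punctured at $x_0$; the renaming $\gamma_1\to\gamma_0$ in the hybrid is purely cosmetic since $\gamma_1\chosen\zo{}$). Hence $\abs{\Pr[\hybij{1}{\ne}=1]-\Pr[\hybij{2}{\ne}=1]}\le\advb{\qB}{s}{cpr}(\secp)$, which is negligible.

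The single point requiring care --- and the reason the \emph{strong} ciphertext pseudorandomness of \cref{def:pe_strong_pseudorandomness} (rather than the original notion, cf.\ \cref{remark:diffrence_ct_pseudorandomness}) is the right hypothesis --- is the shape of the punctured key in the uniform branch: $\hybij{2}{\ne}$ requires $\pedk_{\ne x_0}$ to be punctured \emph{only} at the random $x_0$, carrying no residual structure tied to an honest encryption of $\seed^\ast\concat i^\ast\concat\gamma_1$, and this is exactly what the bit-$1$ case of \cref{def:pe_strong_pseudorandomness} provides. With the original PE pseudorandomness the key would additionally be punctured at the honest ciphertext, and the two hybrids would fail to line up. I therefore expect the main (and still minor) obstacle to be merely verifying that the hard-wired output $y^\ast$ and the embedded punctured key in $\qB$'s simulated circuit match those of $D_{\ne x^\ast}^{\mathsf{real}}$ in both hybrids; everything else is a mechanical transcription of the reduction used for \cref{prop:io_const_equal_second}.
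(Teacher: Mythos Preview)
Your proposal is correct and matches the paper's proof essentially line for line: the paper's reduction $\qB$ samples $\prf$, $\seed^\ast$, $\gamma_1$, computes $y^\ast=\PRG(\seed^\ast)$, queries the PE challenger on $\seed^\ast\concat i^\ast\concat\gamma_1$, wires the returned $(x^\ast,\peek,\pedk_{\ne x^\ast})$ into $\iO(D_{\ne x^\ast}^{\mathsf{real}}[\prf,\pedk_{\ne x^\ast},\msg,\gamma_1,x^\ast,y^\ast])$ and $\iop=\peek$, and forwards $(\gamma_1,x^\ast,y^\ast)$ and $\qA$'s output, exactly as you describe. Your added remark explaining why the \emph{strong} ciphertext-pseudorandomness notion (puncturing only at $x_\coin$) is the correct hypothesis is a helpful clarification that the paper's proof leaves implicit at this point.
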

\begin{proof}[Proof of~\cref{prop:io_const_ne_second}]
We construct an algorithm $\qB$ for strong ciphertext pseudorandomness by using a distinguisher $\qA$.
$\qB$ generates $\prf \gets \PRF.\Gen(1^\secp)$ and chooses $\seed^\ast \chosen \zo{\ell}$ and $\gamma_1 \chosen \zo{}$.
$\qB$ sends $\seed^\ast\concat i^\ast \concat \gamma_1$ to the challenger. The challenger returns $(x^\ast,\peek,\pedk_{\ne x^\ast})$ to $\qB$.

Then, $\qB$ passes $\pp\seteq\bot$ and $\xk \seteq \bot$ to $\qA$.
$\qB$ also computes $y^\ast \seteq \PRG(\seed^\ast)$.
\begin{description}
\item[Challenge:] When $\qA$ sends a challenge query $\msg$, $\qB$ does the following
\begin{itemize}
 \item Construct $D_{\ne x^\ast}^{\mathsf{real}}[\prf,\pedk_{\ne x^\ast},\msg,\gamma_1,x^\ast,y^\ast]$ where $y^\ast \seteq \PRG(\seed^\ast)$ as described in~\cref{fig:description_D_punc_real}.
 \item Return $\tlC\seteq \iO(D_{\ne x^\ast}^{\mathsf{real}})$ and $\iop \seteq \peek$ to $\qA$.
 \end{itemize} 
\end{description}

After finishing $\qA$'s challenge, $\qB$ sends $(\gamma_1,x^\ast,y^\ast)$ to $\qA$.
Finally, when $\qA$ terminates with output $\coin'$, $\qB$ outputs $\coin'$ and terminates.

$\qB$ perfectly simulates
\begin{itemize}
\item $\hybij{1}{\ne}$ if $x^\ast \gets \PE.\Enc(\peek, \seed^\ast\concat i^\ast \concat \gamma_1)$,
\item $\hybij{2}{\ne}$ if $x^\ast \chosen \zo{\inplen}$.
\end{itemize}
Thus, we see that the proposition holds.
\end{proof}

\begin{proposition}\label{prop:io_const_ne_third}
If $\PRG$ is a secure PRG, it holds that $\abs{\Pr[\hybij{2}{\ne}=1]- \Pr[\hybij{3}{\ne}=1]}\le \negl(\secp)$.
\end{proposition}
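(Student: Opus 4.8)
\textbf{Proof proposal for Proposition~\ref{prop:io_const_ne_third}.}
The claim is the standard PRG-invocation step, completely analogous to \cref{prop:io_const_equal_third}, so the plan is to reuse that argument almost verbatim. Recall the only difference between $\hybij{2}{\ne}$ and $\hybij{3}{\ne}$ is that the value $y^\ast=y_1$ in the challenge triple $(\gamma_1,x^\ast,y^\ast)$ is set to $\PRG(\seed^\ast)$ in $\hybij{2}{\ne}$ and to a uniformly random string $y_0\chosen\zo{\outlen}$ in $\hybij{3}{\ne}$ (with $\gamma_1$ renamed $\gamma_0$, which is only cosmetic since both are uniform bits). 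The key observation is that in both games the seed $\seed^\ast$ is \emph{not} used anywhere else: $x^\ast$ is now a uniformly random string in $\zo{\inplen}$ (the change made in $\hybij{2}{\ne}$), the PRF key $\prf$ and the punctured decryption key $\pedk_{\ne x_0}$ are independent of $\seed^\ast$, and the obfuscated circuit $\iO(D_{\ne x_0}^{\mathsf{real}}[\prf,\pedk_{\ne x_0},\msg,\gamma_1,x_0,y_1])$ depends on $\seed^\ast$ only through the hard-wired output value $y_1=\PRG(\seed^\ast)$ that is returned on input $x_0$ — never through $\seed^\ast$ itself.

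The plan is therefore to build a reduction $\qB$ that breaks the security of $\PRG$. Given a challenge string $z\in\zo{\outlen}$ (either $\PRG(\seed^\ast)$ for a uniform $\seed^\ast$, or uniform in $\zo{\outlen}$), $\qB$ samples everything else honestly: it sets $\pp\seteq\bot$, $\xk\seteq\bot$, generates $\prf\gets\PRF.\Gen(1^\secp)$, $(\peek,\pedk)\gets\PE.\Gen(1^\secp)$, chooses $\gamma_1\chosen\zo{}$ and $x_0\chosen\zo{\inplen}$, computes $\pedk_{\ne x_0}\gets\PE.\Puncture(\pedk,\{x_0\})$, and on $\qA$'s challenge query $\msg$ returns $\iop\seteq\peek$ together with $\tlC\seteq\iO(D_{\ne x_0}^{\mathsf{real}}[\prf,\pedk_{\ne x_0},\msg,\gamma_1,x_0,z])$, and finally hands $\qA$ the challenge triple $(\gamma_1,x_0,z)$. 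When $z=\PRG(\seed^\ast)$ this is exactly $\hybij{2}{\ne}$, and when $z$ is uniform this is exactly $\hybij{3}{\ne}$; hence $\qB$'s distinguishing advantage equals $\abs{\Pr[\hybij{2}{\ne}=1]-\Pr[\hybij{3}{\ne}=1]}$, which is therefore negligible by the security of $\PRG$.

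There is essentially no obstacle here: the delicate puncturing and IO-equivalence work has already been done in the transition $\hybij{1}{\ne}\to\hybij{2}{\ne}$, and by the time we reach this step the seed has been fully isolated. The one point to state carefully is that the reduction must be a QPT algorithm and that $\qA$ may be a quantum distinguisher, so we invoke the quantum-secure PRG of \cref{def:prg}; this is immediate since all of $\qB$'s own computations ($\PRF.\Gen$, $\PE.\Gen$, $\PE.\Puncture$, one call to $\iO$) are classical polynomial-time and $\qB$ merely runs $\qA$ as a subroutine. I would write the proof in two sentences mirroring the proof of \cref{prop:io_const_equal_third}, noting only the cosmetic renaming of $\gamma_1$ to $\gamma_0$ and the fact that $x_0$ is already uniform so $\seed^\ast$ appears nowhere outside the hard-wired value.
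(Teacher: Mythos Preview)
Your proposal is correct and matches the paper's approach exactly: the paper's own proof is the same two-sentence observation that $\seed^\ast$ is never used anywhere else once $x_0$ has been replaced by a uniform string, so PRG security applies directly. Your version simply spells out the explicit reduction $\qB$ that the paper leaves implicit; there is no substantive difference.
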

\begin{proof}[Proof of~\cref{prop:io_const_ne_third}]
The difference between the two games is that $y^\ast$ in the target triple $(\gamma_0,x_0,y^\ast)$ is $\PRG(\seed^\ast)$ or random in the case where $\gamma_0 \ne \msg[i]$. Recall that we rename $\gamma_1 \chosen \zo{}$ to $\gamma_0\chosen \zo{}$.
Note that we randomly choose $x_0 \chosen \zo{\inplen}$ and use $\prf$ and $\pedk_{\ne x_0}$ in these games. Thus, we can apply pseudorandomness of $\PRG$ since the value $\seed^\ast$ is never used anywhere else.
\end{proof}

\begin{proposition}\label{prop:io_const_ne_fourth}
If $\iO$ is a secure IO and $\prf$ satisfies punctured correctness, it holds that
\ifnum\submission=1
$\abs{\Pr[\hybij{3}{\ne}=1]- \Pr[\hybij{4}{\ne}=1]}\le \negl(\secp)$.
\else
\[\abs{\Pr[\hybij{3}{\ne}=1]- \Pr[\hybij{4}{\ne}=1]}\le \negl(\secp).\]
\fi
\end{proposition}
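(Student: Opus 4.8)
The plan is to reduce Proposition~\ref{prop:io_const_ne_fourth} to the security of $\iO$ via a functional‑equivalence argument. The only change from $\hybij{3}{\ne}$ to $\hybij{4}{\ne}$ is that the challenge obfuscation is computed from $D_{\ne x_0}^{\mathsf{real}}[\prf_{\ne x_0},\pedk_{\ne x_0},\msg,\gamma_0,x_0,y_0]$ instead of $D_{\ne x_0}^{\mathsf{real}}[\prf,\pedk_{\ne x_0},\msg,\gamma_0,x_0,y_0]$; everything else ($\pedk_{\ne x_0}$, the challenge tuple $(\gamma_0,x_0,y_0)$, and the hard‑coded output value $y_0$) is unchanged. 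So the first step is to show these two circuits compute the same function, after which the proposition follows immediately from \cref{def:io}.

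For the functional‑equivalence step I would inspect \cref{fig:description_D_punc_real}: on input $x=x_0$ the circuit halts at step~1 and returns the hard‑coded value $y_0$, so it never evaluates the (possibly punctured) PRF key at $x_0$; on any input $x\neq x_0$ the PRF key is evaluated only at the point $x$ itself (step~3(b) or step~4), which differs from the punctured point $x_0$. By the punctured correctness of $\PRF$ at a single point, $\prf_{\ne x_0}(x)=\prf(x)$ for every $x\neq x_0$, hence the two circuits agree on all inputs. I would note that this holds regardless of whether $x_0$ is a valid $\PE$ ciphertext, precisely because the $x_0\mapsto y_0$ clause short‑circuits before decryption matters — so sparseness of $\PE$ is not needed here — and that the circuit sizes match because $D$ is padded to the maximum size over all circuits appearing in the proof.

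Given functional equivalence, I would finish with a standard IO reduction: build a sampler $\Sampler$ that runs $\qA$ up to its challenge query, samples $\prf\gets\PRF.\Gen(1^\secp)$, the PE keys, $\seed^\ast\gets\zo{\ell}$, $\gamma_0\gets\bit$, $x_0\gets\zo{\inplen}$, $y_0\gets\zo{\outlen}$, derives $\pedk_{\ne x_0}$ and $\prf_{\ne x_0}$, and outputs the circuit pair above together with auxiliary input $\aux$ consisting of $\qA$'s current state, $\pp=\bot$, $\iop=\peek$, and the tuple $(\gamma_0,x_0,y_0)$. Since we are in the public‑simulation setting $\qA$ makes no oracle queries, so $\aux$ lets a QPT distinguisher $\qD$ resume $\qA$ and output its final bit; receiving $\iO(C_0)$ perfectly simulates $\hybij{3}{\ne}$ and receiving $\iO(C_1)$ perfectly simulates $\hybij{4}{\ne}$, giving $\abs{\Pr[\hybij{3}{\ne}=1]-\Pr[\hybij{4}{\ne}=1]}\le\adva{\iO,\qD}{io}(\secp)=\negl(\secp)$. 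The only part that needs genuine care is the functional‑equivalence check — verifying that the hard‑coded $x_0\mapsto y_0$ branch is always reached before the PRF would be invoked at the punctured point — but this is exactly what the structure of $D_{\ne x^\ast}^{\mathsf{real}}$ guarantees, so I expect no real obstacle.
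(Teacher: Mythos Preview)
Your proposal is correct and follows essentially the same approach as the paper: both argue that the two circuits are functionally equivalent because the hard-coded branch at $x=x_0$ returns $y_0$ before the PRF is ever evaluated at the punctured point, and punctured correctness handles all $x\neq x_0$, so IO security finishes the job. Your write-up is in fact more detailed than the paper's (you spell out the IO sampler/distinguisher explicitly and note that sparseness is not needed), but the underlying argument is identical.
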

\begin{proof}[Proof of~\cref{prop:io_const_ne_fourth}]
The difference between the two games is that $D_{\ne x_0}^{\mathsf{real}}[\prf_{\ne x_0},\pedk_{\ne x_0},\msg,\gamma_0,x_0,y_0]$ is used for the challenge query instead of $D_{\ne x_0}^{\mathsf{real}}[\prf,\pedk_{\ne x_0},\msg,\gamma_0,x_0,y_0]$ in the case where $\gamma_0 \ne \msg[i^\ast]$. These two circuits are the same except that we use $\prf_{\ne x_0}$ instead of $\prf$.
Those two circuits above are functionally equivalent since $\prf_{\ne x_0}(\cdot)$ is functionally equivalent to $\prf$ except for $x_0$ and both $D_{\ne x_0}^{\mathsf{real}}[\prf_{\ne x_0},\pedk_{\ne x_0},\msg,\gamma_0,x_0,y_0](x_0)$ and $D_{\ne x_0}^{\mathsf{real}}[\prf,\pedk_{\ne x_0},\msg,\gamma_0,x_0,y_0](x_0)$ directly outputs $y_0$ by the description of $D_{\ne x_0}^{\mathsf{real}}$. Note that $D_{\ne x_0}^{\mathsf{real}}$ does not have any ``if branch'' condition that uses $\prf$ or $\prf_{\ne x_0}$.

Thus, the proposition holds due to IO security.
\end{proof}

\begin{proposition}\label{prop:io_const_ne_fifth}
If $\prf$ satisfies punctured pseudorandomness, it holds that
\[\abs{\Pr[\hybij{4}{\ne}=1]- \Pr[\hybij{5}{\ne}=1]}\le \negl(\secp).\]
\end{proposition}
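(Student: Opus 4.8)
\textbf{Proof plan for Proposition~\ref{prop:io_const_ne_fifth}.}
The plan is to reduce the indistinguishability of $\hybij{4}{\ne}$ and $\hybij{5}{\ne}$ directly to the pseudorandomness at the punctured point of $\PRF$ (\cref{def:pprf}). Recall that in both games we are in the case $\gamma^\ast\ne\msg[i^\ast]$, and the only difference is whether the challenge output value is $y_0\chosen\zo{\outlen}$ (in $\hybij{4}{\ne}$) or $y_0\seteq\prf(x_0)$ (in $\hybij{5}{\ne}$). Crucially, by the time we reach $\hybij{4}{\ne}$, the punctured PRF key $\prf_{\ne x_0}$ is used everywhere: inside the obfuscated circuit $\tlD=\iO(D_{\ne x_0}^{\mathsf{real}}[\prf_{\ne x_0},\pedk_{\ne x_0},\msg,\gamma_0,x_0,y_0])$, and in the (nonexistent here, but in general) evaluation behaviour. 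So $\prf$ itself is evaluated at $x_0$ only to produce the challenge value $y_0$, which is exactly what the punctured-point pseudorandomness game controls.

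First I would construct a QPT reduction $\qB$ against the pseudorandomness-at-punctured-point security of $\PRF$ with the singleton puncture set $S=\{x_0\}$. Concretely: $\qB$ samples $x_0\chosen\zo{\inplen}$ (it can do this up front since in these hybrids $x_0$ is simply a uniformly random string, independent of everything else), declares $S=\{x_0\}$ to its PPRF challenger, and receives back a punctured key $\prf_{\ne x_0}$ together with a challenge value $v$ that is either $\prf(x_0)$ or a fresh uniform string in $\zo{\outlen}$. Then $\qB$ simulates the extraction-less watermarking game for $\qA$: it sets $\pp\seteq\bot$, $\xk\seteq\bot$, and upon receiving the target message $\msg$ from $\qA$, it samples $\gamma_0\chosen\zo{}$, generates $(\peek,\pedk)\gets\PE.\Gen(1^\secp)$, computes a punctured decryption key $\pedk_{\ne x_0}\gets\PE.\Puncture(\pedk,\{x_0\})$, builds the circuit $D_{\ne x_0}^{\mathsf{real}}[\prf_{\ne x_0},\pedk_{\ne x_0},\msg,\gamma_0,x_0,v]$ as in \cref{fig:description_D_punc_real}, and returns $\tlC\seteq\iO(D_{\ne x_0}^{\mathsf{real}})$ and $\iop\seteq\peek$ to $\qA$. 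For the challenge tuple, $\qB$ outputs $(\gamma_0,x_0,v)$. (The case analysis of the proposition is confined to $\gamma^\ast\ne\msg[i^\ast]$, so $\qB$ can condition on $\gamma_0\ne\msg[i^\ast]$; this only affects the bookkeeping of the advantage by a factor of at most $2$, or can be handled by having $\qB$ abort and output a random bit when $\gamma_0=\msg[i^\ast]$.) Finally $\qB$ outputs whatever bit $\qA$ outputs. When $v=\prf(x_0)$, $\qB$ perfectly simulates $\hybij{5}{\ne}$; when $v$ is uniform, it perfectly simulates $\hybij{4}{\ne}$. Hence any distinguishing advantage of $\qA$ translates into a PPRF distinguishing advantage for $\qB$, which is negligible by \cref{def:pprf}.

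The main point requiring care — and the only place where a subtlety could hide — is verifying that $\qB$ never needs to evaluate the unpunctured $\prf$ at $x_0$ anywhere except to produce the challenge value. This is where the preceding hybrid $\hybij{4}{\ne}$ does the heavy lifting: it already replaced $\prf$ with $\prf_{\ne x_0}$ inside the obfuscated circuit, and in this construction $\Eval$ is not invoked by the adversary in a way that touches $x_0$ with non-negligible probability (indeed $x_0$ is a fresh random string of length $\inplen=\ctlen$, so any of $\qA$'s oracle-free interactions are independent of it). Since $D_{\ne x_0}^{\mathsf{real}}$ has no branch that computes $\prf_{\ne x_0}(x_0)$ — on input $x_0$ it short-circuits to the hardwired value in its first line — the obfuscated circuit is well-defined from the punctured key alone, and $\qB$'s simulation goes through without ever querying $\prf(x_0)$. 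I expect the routine parts (checking that $\pedk_{\ne x_0}$ suffices inside $D_{\ne x_0}^{\mathsf{real}}$, which follows from $\PE$ punctured correctness already used in earlier hybrids) to be straightforward, and the whole proposition to be essentially a clean one-step reduction.
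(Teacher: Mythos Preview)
Your proposal is correct and follows essentially the same approach as the paper: both construct a reduction $\qB$ that samples $x_0$ uniformly, queries the PPRF challenger on $\{x_0\}$ to obtain $\prf_{\ne x_0}$ and a challenge value, builds $\iO(D_{\ne x_0}^{\mathsf{real}}[\prf_{\ne x_0},\pedk_{\ne x_0},\msg,\gamma_0,x_0,\cdot])$ with that value hard-wired, and hands $(\gamma_0,x_0,\cdot)$ to $\qA$ as the challenge tuple. Your discussion of why $\qB$ never needs to evaluate $\prf$ at $x_0$ and your explicit handling of the $\gamma_0=\msg[i^\ast]$ case are a bit more detailed than the paper's version, but the argument is the same.
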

\begin{proof}[Proof of~\cref{prop:io_const_ne_fifth}]
We construct an algorithm $\qB$ that breaks the pseudorandomness at punctured points of $\prf$ by using $\qA$.

$\qB$ generates $(\peek,\pedk)\lrun \PE.\Gen(1^\secp)$, chooses $x_0 \chosen \zo{\inplen}$ and $\gamma_0\chosen \zo{}$, sends $x_0$ as the challenge to its challenger of $\prf$, and receives $\prf_{\ne x_0}$ and $y^\ast$.
Here $x_0$ does not rely on $\msg$, so we can generate $x_0$ before $\msg$ is fixed.
$\qB$ sends $\pp\seteq \bot$ and $\xk \seteq \bot$ to $\qA$. $\qB$ also computes $\pedk_{\ne x_0} \lrun \PE.\Puncture(\pedk,x_0)$.
\ifnum\submission=0
\begin{description}
\item [Challenge:] For query $\msg$, $\qB$ can simulate the target marked circuit $\tlC=\iO(D_{\ne x_0}^{\mathsf{real}}[\prf_{\ne x_0},\pedk_{\ne x_0},\msg,\gamma_0,x_0,y^\ast])$ by using $\pedk_{\ne x_0}$, $\prf_{\ne x_0}$, $y^\ast$, and the public tag $\iop=\peek$.
\end{description}
\else
\begin{description}
\item [Challenge:] For query $\msg$, $\qB$ can simulate the target marked circuit, that is, $\tlC=\iO(D_{\ne x_0}^{\mathsf{real}}[\prf_{\ne x_0},\pedk_{\ne x_0},\msg,\gamma_0,x_0,y^\ast])$ by using $\pedk_{\ne x_0}$, $\prf_{\ne x_0}$, $y^\ast$, and the public tag $\iop=\peek$.
\end{description}
\fi
After finishing $\qA$'s challenge query, $\qB$ sends $(\gamma_0,x_0,y^\ast)$ to $\qA$.
Finally, when $\qA$ terminates with output $\coin'$, $\qB$ outputs $\coin'$ and terminates.

$\cB$ perfectly simulates
\begin{itemize}
\item $\hybij{4}{\ne}$ if $y^\ast \chosen \zo{\outlen}$,
\item $\hybij{5}{\ne}$ if $y^\ast \seteq \prf(x_0)$.
\end{itemize}
The punctured pseudorandomness of $\prf$ immediately implies this proposition.
\end{proof}

\begin{proposition}\label{prop:io_const_ne_sixth}
If $\iO$ is a secure IO and $\prf$ satisfies punctured correctness, it holds that
\ifnum\submission=1
$\abs{\Pr[\hybij{5}{\ne}=1]- \Pr[\hybij{6}{\ne}=1]}\le \negl(\secp)$.
\else
\[\abs{\Pr[\hybij{5}{\ne}=1]- \Pr[\hybij{6}{\ne}=1]}\le \negl(\secp).\]
\fi
\end{proposition}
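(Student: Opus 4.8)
The plan is to mirror the argument used for \cref{prop:io_const_ne_fourth}, since the transition between $\hybij{5}{\ne}$ and $\hybij{6}{\ne}$ simply reverses the substitution of $\prf_{\ne x_0}$ by $\prf$ inside the obfuscated circuit. First I would note that the only difference between the two games is that $\Oracle{chall}$ returns $\tlC \seteq \iO(D_{\ne x_0}^{\mathsf{real}}[\prf_{\ne x_0},\pedk_{\ne x_0},\msg,\gamma_0,x_0,y_0])$ in $\hybij{5}{\ne}$ versus $\tlC \seteq \iO(D_{\ne x_0}^{\mathsf{real}}[\prf,\pedk_{\ne x_0},\msg,\gamma_0,x_0,y_0])$ in $\hybij{6}{\ne}$; everything else, in particular the challenge tuple $(\gamma_0,x_0,y_0)$ with $y_0 \seteq \prf(x_0)$ and the public tag $\iop = \peek$, is sampled identically in both games.

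Next I would show that the two circuits $D_{\ne x_0}^{\mathsf{real}}[\prf_{\ne x_0},\pedk_{\ne x_0},\msg,\gamma_0,x_0,y_0]$ and $D_{\ne x_0}^{\mathsf{real}}[\prf,\pedk_{\ne x_0},\msg,\gamma_0,x_0,y_0]$ compute the same function. By the punctured correctness of $\PRF$, $\prf_{\ne x_0}(x) = \prf(x)$ for every $x \in \zo{\inplen}\setminus\{x_0\}$, so the circuits agree on all inputs $x \ne x_0$. On input $x = x_0$, step~1 of $D_{\ne x^\ast}^{\mathsf{real}}$ (see \cref{fig:description_D_punc_real}) immediately returns the hard-coded value $y_0$ without ever invoking the PRF key; crucially, $D_{\ne x^\ast}^{\mathsf{real}}$ contains no branching condition that evaluates $\prf$ or $\prf_{\ne x_0}$, it only emits $\prf^\prime(x)$ as a leaf, so the key is never queried at $x_0$ in either circuit. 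Hence the two circuits are functionally equivalent.

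Finally, I would invoke the security of $\iO$ against the sampler that produces this pair of functionally equivalent circuits, with the remaining randomness of the experiment (including $(\gamma_0,x_0,y_0)$ and $\peek$) as auxiliary input; the distinguisher is built from $\qA$ in the obvious way. This yields $\abs{\Pr[\hybij{5}{\ne}=1]- \Pr[\hybij{6}{\ne}=1]}\le \negl(\secp)$. I do not anticipate a real obstacle: the only mild point requiring care is confirming that the padding of $D$ forces $D_{\ne x_0}^{\mathsf{real}}$ and all intermediate circuits to share a common size so that IO is applicable, but this is already guaranteed by the padding convention in the construction of $\PRF_\io$.
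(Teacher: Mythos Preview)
Your proposal is correct and follows essentially the same approach as the paper: the paper states that this proof is the reverse of the earlier transition (\cref{prop:io_const_ne_fourth}) where $\prf$ is replaced by $\prf_{\ne x_0}$, and omits the details. Your argument spells out exactly that reverse step, showing functional equivalence of the two circuits via punctured correctness and the hard-coded output at $x_0$, then invoking IO security.
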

\begin{proof}[Proof of~\cref{prop:io_const_ne_sixth}]
The difference between the two games is that $D_{\ne x_0}^{\mathsf{real}}[\prf,\pedk_{\ne x_0},\msg,\gamma_0,x_0,y_0]$ is used for the challegne query instead of $D_{\ne x_0}^{\mathsf{real}}[\prf_{\ne x_0},\pedk_{\ne x_0},\msg,\gamma_0,x_0,y_0]$ in the case where $\gamma_0 \ne \msg[i^\ast]$. These two circuits are the same except that we use $\prf$ instead of $\prf_{\ne x_0}$.
This proof is the same as that of~\cref{prop:io_const_ne_fifth} (in a reverse manner). Thus, we omit it.
\end{proof}

\paragraph{End of case analyses.} We complete the two case analyses.

\begin{proposition}\label{prop:io_const_seventh}
If $\iO$ is a secure IO, $\prf$ satisfies punctured correctness, and $\PE$ satisfies punctured correctness, it holds that $\abs{\Pr[\hybij{3}{=}=1]- \Pr[\hybij{4}{=}=1]}\le \negl(\secp)$ and $\abs{\Pr[\hybij{6}{\ne}=1]- \Pr[\hybij{7}{\ne}=1]}\le \negl(\secp)$.
\end{proposition}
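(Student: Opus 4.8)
Both claimed indistinguishabilities can be handled by a single argument, since each is exactly the reverse of a transition already carried out: $\hybij{3}{=}\to\hybij{4}{=}$ undoes the step of \cref{prop:io_const_equal_first}, and $\hybij{6}{\ne}\to\hybij{7}{\ne}$ undoes that of \cref{prop:io_const_ne_first}. In each case the sole change between the two adjacent games is the circuit passed to $\iO$: in $\hybij{3}{=}$ it is $D_{\ne x_0}^{\$}[\prf,\pedk_{\ne x_0},\msg,\gamma_0,x_0,\oly]$ with $\oly=\prf(x_0)$, in $\hybij{6}{\ne}$ it is $D_{\ne x_0}^{\mathsf{real}}[\prf,\pedk_{\ne x_0},\msg,\gamma_0,x_0,y_0]$ with $y_0=\prf(x_0)$, i.e., a circuit built from the PE decryption key punctured at $x_0$ that returns the hard-wired value $\prf(x_0)$ on input $x_0$; in $\hybij{4}{=}$ and $\hybij{7}{\ne}$ it is the original circuit $D[\prf,\pedk,\msg]$ with the unpunctured key $\pedk$. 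The plan is: (i) show the two circuits are functionally equivalent except with negligible probability over the choice of $x_0$, and then (ii) reduce to the indistinguishability of $\iO$, whose formulation in \cref{def:io} already accommodates the sampler outputting non-equivalent circuits with negligible probability.

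For (i), fix $\prf$, $(\peek,\pedk)$, $x_0$, $\gamma_0$, $\msg$, and let $C_0$ denote the punctured circuit and $C_1=D[\prf,\pedk,\msg]$. For any input $x\ne x_0$, punctured correctness of $\PE$ gives $\PE.\Dec(\pedk_{\ne x_0},x)=\PE.\Dec(\pedk,x)$, hence both circuits obtain the same decrypted value $d$ and thereafter perform identical branching with the same (unpunctured) key $\prf$, so $C_0(x)=C_1(x)$. For $x=x_0$, $C_0(x_0)$ returns its hard-wired value $\prf(x_0)$, while $C_1(x_0)=D[\prf,\pedk,\msg](x_0)$ equals $\prf(x_0)$ as long as $\PE.\Dec(\pedk,x_0)=\bot$. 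Since in both games $x_0$ is uniform over $\zo{\inplen}$ and independent of $(\peek,\pedk)$, sparseness of $\PE$ gives $\Pr_{x_0}[\PE.\Dec(\pedk,x_0)\ne\bot]\le\negl(\secp)$, so $C_0$ and $C_1$ agree on all inputs except with negligible probability. I do not expect to need punctured correctness of $\prf$ for these two steps, since $\prf$ is never punctured on either side; it is listed in the statement only because these two claims are bundled together with the neighbouring $\prf$-puncturing propositions.

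For (ii), I would package the prefix of the game that produces the challenge circuit as an IO sampler $\Sampler$: it runs $\prf\gets\PRF.\Gen(1^\secp)$ and $(\peek,\pedk)\gets\PE.\Gen(1^\secp)$, obtains $\msg$ from $\qA$'s first stage (immediate here, as $\pp=\xk=\bot$), samples $x_0\chosen\zo{\inplen}$ and $\gamma_0\chosen\zo{}$, sets $y_0$ as the two games prescribe (uniform in the $=$ branch, $\prf(x_0)$ in the $\ne$ branch), and outputs $(C_0,C_1,\aux)$ where $\aux$ contains $\iop=\peek$, the challenge triple $(\gamma_0,x_0,y_0)$, and the residual state of $\qA$. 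By (i) the hypothesis of $\iO$ indistinguishability holds, so $\iO(C_0)$ and $\iO(C_1)$ are computationally indistinguishable given $\aux$; any QPT distinguisher separating $\hybij{3}{=}$ from $\hybij{4}{=}$ (resp. $\hybij{6}{\ne}$ from $\hybij{7}{\ne}$) thus yields a QPT distinguisher against $\iO$ with equal advantage, which is $\negl(\secp)$. I expect the only non-routine point to be the $x=x_0$ corner case, which is precisely why sparseness of $\PE$---rather than merely its punctured correctness---is invoked; as in the construction, $D$ is padded to the common maximal circuit size so that $C_0$ and $C_1$ have the same size.
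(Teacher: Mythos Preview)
Your proposal is correct and follows the same overall route as the paper, which simply says these steps are the reverse of \cref{prop:io_const_equal_first} and \cref{prop:io_const_ne_first} and omits details. Your write-up is in fact more careful than the paper's pointer: you correctly observe that the argument at the distinguished point $x_0$ is \emph{not} literally the same as in the forward transitions, because there $x_1$ was a structured PE ciphertext (so $D(x_1)$ was analyzed via correct decryption), whereas here $x_0$ is uniform and the equality $D(x_0)=\prf(x_0)$ must be obtained from \emph{sparseness} of $\PE$. You are also right that punctured correctness of $\prf$ plays no role in these two particular steps; it is listed in the hypothesis only because the proposition bundles both branches.
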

\begin{proof}
This proof is the same as that of~\cref{prop:io_const_equal_first} and~\cref{prop:io_const_ne_first}, respectively (in a reverse manner). Thus, we omit them.
\end{proof}

We complete the proof of~\cref{thm:io_const_extraction-less_watermarkable_PRF}.
\ifnum\submission=0
\end{proof}
\else\fi

\else
We prove \cref{thm:io_const_extraction-less_watermarkable_PRF} in \cref{sec:proof_elwmprf_io}
\fi

\ifnum\submission=0

\section{Putting Pieces Altogether}\label{sec:altogether}

\paragraph{Privately extractable watermarking PRF.}
We summarize how to obtain our privately extractable watermarking PRF.

By~\cref{thm:extraction-less_watermarking_pcprf,thm:pcprf_extended_weak_pseudorandomness,thm:pcprf_lwe,thm:CCA_QLWE,thm:pseudorandom_ske}, we obtain an extraction-less watermarking with private simulation from the QLWE assumption. By combining this with~\cref{thm:qelwmprf-from-elwmprf,thm:watermarking_from_extraction-less}, we obtain the following theorem.

\begin{theorem}
If the QLWE assumption holds, there exists a privately extractable watermarking PRF.
\end{theorem}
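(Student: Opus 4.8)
The plan is to assemble the claimed scheme by chaining together the reductions already established in the excerpt, each of whose hypotheses reduces to the QLWE assumption. First I would instantiate the building blocks from QLWE: by \cref{thm:pcprf_lwe} the QLWE assumption gives a selectively single-key private CPRF for polynomial-size circuits; by \cref{thm:CCA_QLWE} it gives a CCA secure PKE scheme; since QLWE implies the existence of OWFs, \cref{thm:pseudorandom_ske} gives an SKE scheme with sparseness and ciphertext pseudorandomness, and \cref{thm:qprf} gives a QPRF.

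Next I would plug these into the extraction-less construction $\PRF_\cprf$ of \cref{sec:extless_watermarking_LWE}. By \cref{thm:extraction-less_watermarking_pcprf}, $\PRF_\cprf$ is an extraction-less watermarking PRF satisfying SIM-MDD security with private simulation, and by \cref{thm:pcprf_extended_weak_pseudorandomness} it satisfies extended weak pseudorandomness against the authority. I would then apply the generic upgrade of \cref{thm:qelwmprf-from-elwmprf}, using the QPRF, to obtain an extraction-less watermarking PRF that satisfies QSIM-MDD security with private simulation. I would separately note that this transformation preserves extended weak pseudorandomness: it leaves $\Setup$, $\Gen$, and $\Eval$ essentially untouched and only reshapes $\Sim$, and extended weak pseudorandomness is insensitive to how $\Sim$ is defined (the same observation as in the proof of \cref{thm:watermarking_from_extraction-less_pseudorandomness}).

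Finally I would invoke the compiler of \cref{sec:wmprf-from-elwmprf}: \cref{thm:watermarking_from_extraction-less} turns the QSIM-MDD-secure extraction-less scheme into a watermarking PRF with unremovability for private extraction, and \cref{thm:watermarking_from_extraction-less_pseudorandomness} shows that extended weak pseudorandomness carries over. Evaluation correctness is inherited at every step since each transformation keeps the underlying $\Eval$. Collecting these properties — evaluation correctness, extended weak pseudorandomness against the authority, and unremovability for private extraction — yields a privately extractable watermarking PRF assuming only QLWE.

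There is essentially no new mathematical content in this final step; it is a composition/bookkeeping argument, so the only real care needed is in matching up the security notions across the reductions. The one place that genuinely deserves attention is verifying that the SIM-MDD $\to$ QSIM-MDD transformation does not harm extended weak pseudorandomness and that its $\Oracle{api}$-oracle formulation is exactly what is needed so that \cref{cor:cind_sample_api} can legitimately be applied inside $\qExtract$ of the resulting watermarking PRF; both points are already handled in the proofs of \cref{thm:qelwmprf-from-elwmprf} and \cref{thm:watermarking_from_extraction-less}, so here I would simply record the composition and cite them.
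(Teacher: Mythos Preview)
Your proposal is correct and follows essentially the same approach as the paper: both assemble the result by instantiating the building blocks from QLWE (\cref{thm:pcprf_lwe,thm:CCA_QLWE,thm:pseudorandom_ske}), feeding them into the $\PRF_\cprf$ construction to get SIM-MDD security and extended weak pseudorandomness (\cref{thm:extraction-less_watermarking_pcprf,thm:pcprf_extended_weak_pseudorandomness}), upgrading to QSIM-MDD via the QPRF-based transformation (\cref{thm:qelwmprf-from-elwmprf}), and finally applying the compiler of \cref{thm:watermarking_from_extraction-less}. Your extra remarks about preservation of extended weak pseudorandomness and the role of $\Oracle{api}$ are sound and slightly more explicit than the paper's terse composition, but the logical route is identical.
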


\paragraph{Publicly extractable watermarking PRF.}
We summarize how to obtain our publicly extractable watermarking PRF.

By~\cref{thm:io_const_extraction-less_watermarkable_PRF,thm:CHNVW_PE_strong_ct_pseudorandom,thm:pprf-owf,thm:owf_prg}, we obtain an extraction-less watermarking with public simulation from IO and the QLWE assumption since OWFs can be instantiated with the QLWE assumption.
By combining this with~\cref{thm:watermarking_from_extraction-less_public}, we obtain a publicly extractable watermarking PRF from IO and the QLWE assumption.
Thus, we obtain the following theorem.

\begin{theorem}
If there exists a secure IO and the QLWE assumption holds, there exists a publicly extractable watermarking PRF.
\end{theorem}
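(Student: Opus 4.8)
This is a composition theorem, so the plan is simply to assemble the components developed in the previous sections and check that the hypotheses line up. First I would fix the two assumed primitives: a secure indistinguishability obfuscator $\iO$ for polynomial-size classical circuits, and the QLWE assumption. From QLWE I get one-way functions, and hence (via \cref{thm:pprf-owf}) a puncturable PRF $\PRF$ with polynomial-size punctured keys and (via \cref{thm:owf_prg}) a PRG with arbitrary polynomial stretch $\PRG:\zo{\ell}\to\zo{\outlen}$. From $\iO$ together with QLWE I get (via \cref{thm:CHNVW_PE_strong_ct_pseudorandom}) a puncturable encryption scheme $\PE$ satisfying the strong ciphertext pseudorandomness of \cref{def:pe_strong_pseudorandomness} (as well as correctness, punctured correctness, and sparseness).

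Next I would instantiate the extraction-less watermarking PRF $\PRF_\io$ of \cref{sec:pub_ext_watermarking_IO} with these $\PRF$, $\PRG$, $\PE$, and $\iO$. By \cref{thm:io_const_extraction-less_watermarkable_PRF} the resulting $\PRF_\io$ is an extraction-less watermarking PRF satisfying SIM-MDD security with public simulation, and as observed right after the construction it also satisfies extended weak pseudorandomness against an authority trivially (its $\Setup$ outputs $\bot$, the tag is just $\peek$, and $\Eval$ does not depend on $\pedk$). Note that here I only need SIM-MDD security \emph{with public simulation} — not the QSIM-MDD variant of \cref{def:qsim-mdd} — because in the public-extractable setting (\cref{def:pub_ext_unrem_prf}) the adversary is handed $\xk$ directly and there is no extraction oracle, hence no need to argue that the SIM-MDD indistinguishability survives repeated $\API$ calls.

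Then I would apply the generic transformation of \cref{sec:wmprf-from-elwmprf} to $\ELWMPRF=\PRF_\io$, obtaining the watermarking PRF scheme $\WMPRF$ with the extraction algorithm $\qExtract$ defined there (using the parameters $\epsilon'=\epsilon/4(\msglen+1)$, $\delta'=2^{-\lambda}$ and $\msglen+1$ successive invocations of $\API_{\cP,D_{\iop,i}}^{\epsilon',\delta'}$). By \cref{thm:watermarking_from_extraction-less_public}, $\WMPRF$ is a publicly extractable watermarking PRF, i.e.\ it satisfies unremovability for public extraction; by \cref{thm:watermarking_from_extraction-less_pseudorandomness}, it inherits extended weak pseudorandomness against an authority from $\PRF_\io$. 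Efficiency of $\qExtract$ follows since each $\API_{\cP,D_{\iop,i}}^{\epsilon',\delta'}$ runs in expected time $T_{\cP,D_{\iop,i}}\cdot\poly(1/\epsilon',\log(1/\delta'))$ by \cref{thm:api_property}, $T_{\cP,D_{\iop,i}}$ is polynomial (it involves sampling $\Sim$ and running the quantum program once), $1/\epsilon'$ is polynomial in $1/\epsilon$ and $\msglen$, and $\log(1/\delta')=\lambda$. Combining these observations yields the claimed publicly extractable watermarking PRF.

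There is no genuinely hard step here; the proof is bookkeeping, and the only points that require a moment's care are (i) confirming that the public-extraction reduction needs only the public-simulation flavor of SIM-MDD, so that the extra machinery of \cref{thm:qelwmprf-from-elwmprf} is unnecessary, and (ii) tracking that all parameter choices keep $\qExtract$ efficient and keep the negligible error terms from \cref{cor:cind_sample_api} and the (reverse) almost-projective property summable over the $\msglen+1$ loop iterations — but this has already been handled inside the proofs of \cref{thm:watermarking_from_extraction-less,thm:watermarking_from_extraction-less_public}, so I would simply invoke those theorems as black boxes.
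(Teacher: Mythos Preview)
Your proposal is correct and follows essentially the same route as the paper: assemble the building blocks (PPRF and PRG from QLWE via OWFs, PE with strong ciphertext pseudorandomness from IO plus QLWE), instantiate $\PRF_\io$ and invoke \cref{thm:io_const_extraction-less_watermarkable_PRF}, then apply the generic transformation via \cref{thm:watermarking_from_extraction-less_public}. Your additional remarks on why only public-simulation SIM-MDD (not QSIM-MDD) is needed and on the efficiency of $\qExtract$ are correct elaborations that the paper leaves implicit.
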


\else\fi

	\ifnum\anonymous=0
	\ifnum\acknowledgments=1
	\paragraph{\textbf{Acknowledgments.}}	
	\acknowledgmenttext
	\fi
	\fi
	\ifnum\llncs=1
	\bibliographystyle{extreme_alpha}
	\bibliography{abbrev3,crypto,siamcomp_jacm,other-bib}
	\else
	\ifnum\choosebibstyle=0
	\else
	\ifnum\choosebibstyle=1
	\bibliographystyle{alpha}
	\else
	\bibliographystyle{abbrv}
	\fi
	\bibliography{abbrev3,crypto,siamcomp_jacm,other-bib}

\newcommand{\etalchar}[1]{$^{#1}$}
\begin{thebibliography}{GKWW21}

\bibitem[AHU19]{C:AmbHamUnr19}
Andris Ambainis, Mike Hamburg, and Dominique Unruh.
\newblock Quantum security proofs using semi-classical oracles.
\newblock In Alexandra Boldyreva and Daniele Micciancio, editors, {\em
  CRYPTO~2019, Part~II}, volume 11693 of {\em {LNCS}}, pages 269--295.
  Springer, Heidelberg, August 2019.

\bibitem[AKPW13]{C:AKPW13}
Jo{\"e}l Alwen, Stephan Krenn, Krzysztof Pietrzak, and Daniel Wichs.
\newblock Learning with rounding, revisited - new reduction, properties and
  applications.
\newblock In Ran Canetti and Juan~A. Garay, editors, {\em CRYPTO~2013, Part~I},
  volume 8042 of {\em {LNCS}}, pages 57--74. Springer, Heidelberg, August 2013.

\bibitem[AL21]{EC:AnaLaP21}
Prabhanjan Ananth and Rolando~L. {La Placa}.
\newblock Secure software leasing.
\newblock In Anne Canteaut and Fran\c{c}ois-Xavier Standaert, editors, {\em
  EUROCRYPT~2021, Part~II}, volume 12697 of {\em {LNCS}}, pages 501--530.
  Springer, Heidelberg, October 2021.

\bibitem[ALL{\etalchar{+}}21]{C:ALLZZ21}
Scott Aaronson, Jiahui Liu, Qipeng Liu, Mark Zhandry, and Ruizhe Zhang.
\newblock New approaches for quantum copy-protection.
\newblock In Tal Malkin and Chris Peikert, editors, {\em CRYPTO~2021, Part~I},
  volume 12825 of {\em {LNCS}}, pages 526--555, Virtual Event, August 2021.
  Springer, Heidelberg.

\bibitem[AP20]{EC:AgrPel20}
Shweta Agrawal and Alice {Pellet-Mary}.
\newblock Indistinguishability obfuscation without maps: Attacks and fixes for
  noisy linear {FE}.
\newblock In Anne Canteaut and Yuval Ishai, editors, {\em EUROCRYPT~2020,
  Part~I}, volume 12105 of {\em {LNCS}}, pages 110--140. Springer, Heidelberg,
  May 2020.

\bibitem[ARU14]{FOCS:AmbRosUnr14}
Andris Ambainis, Ansis Rosmanis, and Dominique Unruh.
\newblock Quantum attacks on classical proof systems: The hardness of quantum
  rewinding.
\newblock In {\em 55th FOCS}, pages 474--483. {IEEE} Computer Society Press,
  October 2014.

\bibitem[BDF{\etalchar{+}}11]{AC:BDFLSZ11}
Dan Boneh, {\"O}zg{\"u}r Dagdelen, Marc Fischlin, Anja Lehmann, Christian
  Schaffner, and Mark Zhandry.
\newblock Random oracles in a quantum world.
\newblock In Dong~Hoon Lee and Xiaoyun Wang, editors, {\em ASIACRYPT~2011},
  volume 7073 of {\em {LNCS}}, pages 41--69. Springer, Heidelberg, December
  2011.

\bibitem[BGI{\etalchar{+}}12]{JACM:BGIRSVY12}
Boaz Barak, Oded Goldreich, Russell Impagliazzo, Steven Rudich, Amit Sahai,
  Salil~P. Vadhan, and Ke~Yang.
\newblock On the (im)possibility of obfuscating programs.
\newblock {\em Journal of the {ACM}}, 59(2):6:1--6:48, 2012.

\bibitem[BGI14]{PKC:BoyGolIva14}
Elette Boyle, Shafi Goldwasser, and Ioana Ivan.
\newblock Functional signatures and pseudorandom functions.
\newblock In Hugo Krawczyk, editor, {\em PKC~2014}, volume 8383 of {\em
  {LNCS}}, pages 501--519. Springer, Heidelberg, March 2014.

\bibitem[BGMZ18]{TCC:BGMZ18}
James Bartusek, Jiaxin Guan, Fermi Ma, and Mark Zhandry.
\newblock Return of {GGH15}: Provable security against zeroizing attacks.
\newblock In Amos Beimel and Stefan Dziembowski, editors, {\em TCC~2018,
  Part~II}, volume 11240 of {\em {LNCS}}, pages 544--574. Springer, Heidelberg,
  November 2018.

\bibitem[BHH{\etalchar{+}}19]{TCC:BHHHP19}
Nina Bindel, Mike Hamburg, Kathrin H{\"o}velmanns, Andreas H{\"u}lsing, and
  Edoardo Persichetti.
\newblock Tighter proofs of {CCA} security in the quantum random oracle model.
\newblock In Dennis Hofheinz and Alon Rosen, editors, {\em TCC~2019, Part~II},
  volume 11892 of {\em {LNCS}}, pages 61--90. Springer, Heidelberg, December
  2019.

\bibitem[BLW17]{PKC:BonLewWu17}
Dan Boneh, Kevin Lewi, and David~J. Wu.
\newblock Constraining pseudorandom functions privately.
\newblock In Serge Fehr, editor, {\em PKC~2017, Part~II}, volume 10175 of {\em
  {LNCS}}, pages 494--524. Springer, Heidelberg, March 2017.

\bibitem[BSW06]{EC:BonSahWat06}
Dan Boneh, Amit Sahai, and Brent Waters.
\newblock Fully collusion resistant traitor tracing with short ciphertexts and
  private keys.
\newblock In Serge Vaudenay, editor, {\em EUROCRYPT~2006}, volume 4004 of {\em
  {LNCS}}, pages 573--592. Springer, Heidelberg, May~/~June 2006.

\bibitem[BTVW17]{TCC:BTVW17}
Zvika Brakerski, Rotem Tsabary, Vinod Vaikuntanathan, and Hoeteck Wee.
\newblock Private constrained {PRFs} (and more) from {LWE}.
\newblock In Yael Kalai and Leonid Reyzin, editors, {\em TCC~2017, Part~I},
  volume 10677 of {\em {LNCS}}, pages 264--302. Springer, Heidelberg, November
  2017.

\bibitem[BW13]{AC:BonWat13}
Dan Boneh and Brent Waters.
\newblock Constrained pseudorandom functions and their applications.
\newblock In Kazue Sako and Palash Sarkar, editors, {\em ASIACRYPT~2013,
  Part~II}, volume 8270 of {\em {LNCS}}, pages 280--300. Springer, Heidelberg,
  December 2013.

\bibitem[CC17]{EC:CanChe17}
Ran Canetti and Yilei Chen.
\newblock Constraint-hiding constrained {PRFs} for {NC}\textsuperscript{1} from
  {LWE}.
\newblock In Jean-S{\'{e}}bastien Coron and Jesper~Buus Nielsen, editors, {\em
  EUROCRYPT~2017, Part~I}, volume 10210 of {\em {LNCS}}, pages 446--476.
  Springer, Heidelberg, April~/~May 2017.

\bibitem[CFN94]{C:ChoFiaNao94}
Benny Chor, Amos Fiat, and Moni Naor.
\newblock Tracing traitors.
\newblock In Yvo Desmedt, editor, {\em CRYPTO'94}, volume 839 of {\em {LNCS}},
  pages 257--270. Springer, Heidelberg, August 1994.

\bibitem[CHN{\etalchar{+}}18]{SIAMCOMP:CHNVW18}
Aloni Cohen, Justin Holmgren, Ryo Nishimaki, Vinod Vaikuntanathan, and Daniel
  Wichs.
\newblock Watermarking cryptographic capabilities.
\newblock {\em {SIAM} Journal on Computing}, 47(6):2157--2202, 2018.

\bibitem[CHVW19]{TCC:CHVW19}
Yilei Chen, Minki Hhan, Vinod Vaikuntanathan, and Hoeteck Wee.
\newblock Matrix {PRFs}: Constructions, attacks, and applications to
  obfuscation.
\newblock In Dennis Hofheinz and Alon Rosen, editors, {\em TCC~2019, Part~I},
  volume 11891 of {\em {LNCS}}, pages 55--80. Springer, Heidelberg, December
  2019.

\bibitem[CMSZ21]{myFOCS:CMSZ21}
Alessandro Chiesa, Fermi Ma, Nicholas Spooner, and Mark Zhandry.
\newblock Post-quantum succinct arguments: Breaking the quantum rewinding
  barrier.
\newblock In Nisheeth Vishnoi, editor, {\em FOCS 2021 (to appear)}. IEEE, 2021.

\bibitem[DQV{\etalchar{+}}21]{myTCC:DQVWW21}
Lalita Devadas, Willy Quach, Vinod Vaikuntanathan, Hoeteck Wee, and Daniel
  Wichs.
\newblock Succinct lwe sasmpling, random polynomials and obfuscation.
\newblock In Kobbi Nissim and Brent Waters, editors, {\em {TCC} 2021}, LNCS.
  Springer, 2021.

\bibitem[GGM86]{JACM:GolGolMic86}
Oded Goldreich, Shafi Goldwasser, and Silvio Micali.
\newblock How to construct random functions.
\newblock {\em Journal of the {ACM}}, 33(4):792--807, 1986.

\bibitem[GKM{\etalchar{+}}19]{C:GKMWW19}
Rishab Goyal, Sam Kim, Nathan Manohar, Brent Waters, and David~J. Wu.
\newblock Watermarking public-key cryptographic primitives.
\newblock In Alexandra Boldyreva and Daniele Micciancio, editors, {\em
  CRYPTO~2019, Part~III}, volume 11694 of {\em {LNCS}}, pages 367--398.
  Springer, Heidelberg, August 2019.

\bibitem[GKW19]{TCC:GoyKopWat19}
Rishab Goyal, Venkata Koppula, and Brent Waters.
\newblock New approaches to traitor tracing with embedded identities.
\newblock In Dennis Hofheinz and Alon Rosen, editors, {\em TCC~2019, Part~II},
  volume 11892 of {\em {LNCS}}, pages 149--179. Springer, Heidelberg, December
  2019.

\bibitem[GKWW21]{myAC:GKWW21}
Rishab Goyal, Sam Kim, Brent Waters, and David~J. Wu.
\newblock Beyond software watermarking: Traitor-tracing for pseudorandom
  functions.
\newblock In Mehdi Tibouchi and Huaxiong Wang, editors, {\em {Asiacrypt} 2021
  (to appear)}, Lecture Notes in Computer Science. Springer, 2021.

\bibitem[GP21]{STOC:GayPas21}
Romain Gay and Rafael Pass.
\newblock Indistinguishability obfuscation from circular security.
\newblock In Samir Khuller and Virginia~Vassilevska Williams, editors, {\em
  {STOC} '21: 53rd Annual {ACM} {SIGACT} Symposium on Theory of Computing,
  Virtual Event, Italy, June 21-25, 2021}, pages 736--749. {ACM}, 2021.

\bibitem[HILL99]{SIAMCOMP:HILL99}
Johan H{\aa}stad, Russell Impagliazzo, Leonid~A. Levin, and Michael Luby.
\newblock A pseudorandom generator from any one-way function.
\newblock {\em {SIAM} Journal on Computing}, 28(4):1364--1396, 1999.

\bibitem[HJL21]{C:HopJaiLin21}
Samuel~B. Hopkins, Aayush Jain, and Huijia Lin.
\newblock Counterexamples to new circular security assumptions underlying {iO}.
\newblock In Tal Malkin and Chris Peikert, editors, {\em CRYPTO~2021, Part~II},
  volume 12826 of {\em {LNCS}}, pages 673--700, Virtual Event, August 2021.
  Springer, Heidelberg.

\bibitem[HMW07]{TCC:HopMolWag07}
Nicholas Hopper, David Molnar, and David Wagner.
\newblock From weak to strong watermarking.
\newblock In Salil~P. Vadhan, editor, {\em TCC~2007}, volume 4392 of {\em
  {LNCS}}, pages 362--382. Springer, Heidelberg, February 2007.

\bibitem[Jor75]{Jordan75}
Camille Jordan.
\newblock Essai sur la g\'eom\'etrie \`a $n$ dimensions.
\newblock {\em Bulletin de la Soci\'et\'e Math\'ematique de France},
  3:103--174, 1875.

\bibitem[KNY21]{myTCC:KitNisYam21}
Fuyuki Kitagawa, Ryo Nishimaki, and Takashi Yamakawa.
\newblock Secure software leasing from standard assumptions.
\newblock In Kobbi Nissim and Brent Waters, editors, {\em {TCC} 2021}, LNCS.
  Springer, 2021.

\bibitem[KPTZ13]{CCS:KPTZ13}
Aggelos Kiayias, Stavros Papadopoulos, Nikos Triandopoulos, and Thomas
  Zacharias.
\newblock Delegatable pseudorandom functions and applications.
\newblock In Ahmad-Reza Sadeghi, Virgil~D. Gligor, and Moti Yung, editors, {\em
  ACM CCS 2013}, pages 669--684. {ACM} Press, November 2013.

\bibitem[KW19]{C:KimWu19}
Sam Kim and David~J. Wu.
\newblock Watermarking {PRFs} from lattices: Stronger security via extractable
  {PRFs}.
\newblock In Alexandra Boldyreva and Daniele Micciancio, editors, {\em
  CRYPTO~2019, Part~III}, volume 11694 of {\em {LNCS}}, pages 335--366.
  Springer, Heidelberg, August 2019.

\bibitem[KW21]{myJC:KimWu21}
Sam Kim and David~J. Wu.
\newblock Watermarking cryptographic functionalities from standard lattice
  assumptions.
\newblock {\em J. Cryptol.}, 34(3):28, 2021.

\bibitem[MW05]{CC:MarWat05}
Chris Marriott and John Watrous.
\newblock Quantum arthur-merlin games.
\newblock {\em Comput. Complex.}, 14(2):122--152, 2005.

\bibitem[Nao91]{JC:Naor91}
Moni Naor.
\newblock Bit commitment using pseudorandomness.
\newblock {\em Journal of Cryptology}, 4(2):151--158, January 1991.

\bibitem[Nis13]{EC:Nishimaki13}
Ryo Nishimaki.
\newblock How to watermark cryptographic functions.
\newblock In Thomas Johansson and Phong~Q. Nguyen, editors, {\em
  EUROCRYPT~2013}, volume 7881 of {\em {LNCS}}, pages 111--125. Springer,
  Heidelberg, May 2013.

\bibitem[Nis19]{IEICE:Nishimaki19}
Ryo Nishimaki.
\newblock How to watermark cryptographic functions by bilinear maps.
\newblock {\em {IEICE} Transactions}, 102-A(1):99--113, 2019.

\bibitem[Nis20]{TCC:Nishimaki20}
Ryo Nishimaki.
\newblock Equipping public-key cryptographic primitives with watermarking (or:
  {A} hole is to watermark).
\newblock In Rafael Pass and Krzysztof Pietrzak, editors, {\em TCC~2020,
  Part~I}, volume 12550 of {\em {LNCS}}, pages 179--209. Springer, Heidelberg,
  November 2020.

\bibitem[NSS99]{PKC:NacShaSte99}
David Naccache, Adi Shamir, and Julien~P. Stern.
\newblock How to copyright a function?
\newblock In Hideki Imai and Yuliang Zheng, editors, {\em PKC'99}, volume 1560
  of {\em {LNCS}}, pages 188--196. Springer, Heidelberg, March 1999.

\bibitem[NWZ16]{EC:NisWicZha16}
Ryo Nishimaki, Daniel Wichs, and Mark Zhandry.
\newblock Anonymous traitor tracing: How to embed arbitrary information in a
  key.
\newblock In Marc Fischlin and Jean-S{\'{e}}bastien Coron, editors, {\em
  EUROCRYPT~2016, Part~II}, volume 9666 of {\em {LNCS}}, pages 388--419.
  Springer, Heidelberg, May 2016.

\bibitem[Pei09]{STOC:Peikert09}
Chris Peikert.
\newblock Public-key cryptosystems from the worst-case shortest vector problem:
  extended abstract.
\newblock In Michael Mitzenmacher, editor, {\em 41st ACM STOC}, pages 333--342.
  {ACM} Press, May~/~June 2009.

\bibitem[PS18]{PKC:PeiShi18}
Chris Peikert and Sina Shiehian.
\newblock Privately constraining and programming {PRFs}, the {LWE} way.
\newblock In Michel Abdalla and Ricardo Dahab, editors, {\em PKC~2018,
  Part~II}, volume 10770 of {\em {LNCS}}, pages 675--701. Springer, Heidelberg,
  March 2018.

\bibitem[PW11]{SIAMCOMP:PeiWat11}
Chris Peikert and Brent Waters.
\newblock Lossy trapdoor functions and their applications.
\newblock {\em {SIAM} Journal on Computing}, 40(6):1803--1844, 2011.

\bibitem[QWZ18]{TCC:QuaWicZir18}
Willy Quach, Daniel Wichs, and Giorgos Zirdelis.
\newblock Watermarking {PRFs} under standard assumptions: Public marking and
  security with extraction queries.
\newblock In Amos Beimel and Stefan Dziembowski, editors, {\em TCC~2018,
  Part~II}, volume 11240 of {\em {LNCS}}, pages 669--698. Springer, Heidelberg,
  November 2018.

\bibitem[Reg]{LecNot:Regev06}
Oded Regev.
\newblock Witness-preserveing amplification of qma (lecture notes).
\newblock
  \url{https://cims.nyu.edu/~regev/teaching/quantum_fall_2005/ln/qma.pdf}.

\bibitem[Reg09]{JACM:Regev09}
Oded Regev.
\newblock On lattices, learning with errors, random linear codes, and
  cryptography.
\newblock {\em Journal of the {ACM}}, 56(6):34:1--34:40, 2009.

\bibitem[SW21]{SIAMCOMP:SahWat21}
Amit Sahai and Brent Waters.
\newblock How to use indistinguishability obfuscation: Deniable encryption, and
  more.
\newblock {\em {SIAM} J. Comput.}, 50(3):857--908, 2021.

\bibitem[Unr12]{EC:Unruh12}
Dominique Unruh.
\newblock Quantum proofs of knowledge.
\newblock In David Pointcheval and Thomas Johansson, editors, {\em
  EUROCRYPT~2012}, volume 7237 of {\em {LNCS}}, pages 135--152. Springer,
  Heidelberg, April 2012.

\bibitem[Wat09]{SIAMCOMP:Watrous09}
John Watrous.
\newblock Zero-knowledge against quantum attacks.
\newblock {\em {SIAM} J. Comput.}, 39(1):25--58, 2009.

\bibitem[WW21]{EC:WeeWic21}
Hoeteck Wee and Daniel Wichs.
\newblock Candidate obfuscation via oblivious {LWE} sampling.
\newblock In Anne Canteaut and Fran\c{c}ois-Xavier Standaert, editors, {\em
  EUROCRYPT~2021, Part~III}, volume 12698 of {\em {LNCS}}, pages 127--156.
  Springer, Heidelberg, October 2021.

\bibitem[YAL{\etalchar{+}}19]{AC:YALXY19}
Rupeng Yang, Man~Ho Au, Junzuo Lai, Qiuliang Xu, and Zuoxia Yu.
\newblock Collusion resistant watermarking schemes for cryptographic
  functionalities.
\newblock In Steven~D. Galbraith and Shiho Moriai, editors, {\em
  ASIACRYPT~2019, Part~I}, volume 11921 of {\em {LNCS}}, pages 371--398.
  Springer, Heidelberg, December 2019.

\bibitem[YAYX20]{C:YAYX20}
Rupeng Yang, Man~Ho Au, Zuoxia Yu, and Qiuliang Xu.
\newblock Collusion resistant watermarkable {PRFs} from standard assumptions.
\newblock In Daniele Micciancio and Thomas Ristenpart, editors, {\em
  CRYPTO~2020, Part~I}, volume 12170 of {\em {LNCS}}, pages 590--620. Springer,
  Heidelberg, August 2020.

\bibitem[YF11]{IEICE:YosFuj11}
Maki Yoshida and Toru Fujiwara.
\newblock Toward digital watermarking for cryptographic data.
\newblock {\em {IEICE} Transactions}, 94-A(1):270--272, 2011.

\bibitem[Zha12a]{FOCS:Zhandry12}
Mark Zhandry.
\newblock How to construct quantum random functions.
\newblock In {\em 53rd FOCS}, pages 679--687. {IEEE} Computer Society Press,
  October 2012.

\bibitem[Zha12b]{C:Zhandry12}
Mark Zhandry.
\newblock Secure identity-based encryption in the quantum random oracle model.
\newblock In Reihaneh Safavi-Naini and Ran Canetti, editors, {\em CRYPTO~2012},
  volume 7417 of {\em {LNCS}}, pages 758--775. Springer, Heidelberg, August
  2012.

\bibitem[Zha19]{C:Zhandry19}
Mark Zhandry.
\newblock How to record quantum queries, and applications to quantum
  indifferentiability.
\newblock In Alexandra Boldyreva and Daniele Micciancio, editors, {\em
  CRYPTO~2019, Part~II}, volume 11693 of {\em {LNCS}}, pages 239--268.
  Springer, Heidelberg, August 2019.

\bibitem[Zha20]{TCC:Zhandry20}
Mark Zhandry.
\newblock {Schr{\"o}dinger}'s pirate: How to trace a quantum decoder.
\newblock In Rafael Pass and Krzysztof Pietrzak, editors, {\em TCC~2020,
  Part~III}, volume 12552 of {\em {LNCS}}, pages 61--91. Springer, Heidelberg,
  November 2020.

\end{thebibliography}
	\fi
	\fi

	
\ifnum\cameraready=0
	\ifnum\llncs=0
	\appendix
	

\newcommand{\QEL}{\mathsf{QEL}}
\newcommand{\qek}{\mathsf{qek}}
\newcommand{\qxk}{\mathsf{qxk}}

\section{Achieving QSIM-MDD from SIM-MDD}\label{sec:qelwmprf-from-elwmprf}
We prove \cref{thm:qelwmprf-from-elwmprf}, that is, we show that we can transform extraction-less watermarking PRF satisfying SIM-MDD security with private simulation into one satisfying QSIM-MDD security with private simulation, by using a QPRF.
Before the proof, we introduce semi-classical one-way to hiding (O2H) lemma. 

\subsection{Semi-Classical One-Way to Hiding (O2H) Lemma}\label{sec:qro}

We recall a few lemmas.

\begin{definition}[Punctured oracle]
Let $F:X\ra Y$ be any function, and $S\subset X$ be a set.
The oracle $F\setminus S$ (``$F$ punctured by $S$'') takes as input a value $x\in X$.
It first computes whether $x\in S$ into an auxiliary register and measures it.
Then it computes $F(x)$ and returns the result.
Let $\Find$ be the event that any of the measurements returns $1$.
\end{definition}

\begin{lemma}[{Semi-classical O2H~\cite[Theorem 1]{C:AmbHamUnr19}}]\label{lem:sco2h}
Let $G,H:X\ra Y$ be random functions, $z$ be a random value, and $S\subseteq X$ be a random set such that $G(x)=H(x)$ for every $x\notin S$.
The tuple $(G,H,S,z)$ may have arbitrary joint distribution.
Furthermore, let $\qA$ be a quantum oracle algorithm.
Let $\Ev$ be any classical event.
Then we have
\begin{align}
\abs{\Pr[\Ev: \qA^{\ket{H}}(z)]-\Pr[\Ev: \qA^{\ket{G}}(z)]} \leq 2\sqrt{(q+1)\cdot\Pr[\Find: \qA^{\ket{H\setminus S}}(z)]}
\enspace.
\end{align}
\end{lemma}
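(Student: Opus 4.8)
The plan is to reproduce the proof of \cite{C:AmbHamUnr19}, whose skeleton is a triangle inequality routed through a ``semi-classically punctured'' execution of $\qA$.

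\textbf{Reduction to a fixed instance.} First I would condition on a fixed tuple $(G,H,S,z)$. Writing $p_{\mathrm{find}}(G,H,S,z)$ for the find-probability on that fixed tuple, the overall find-probability is $\Pr[\Find:\qA^{\ket{H\setminus S}}(z)] = \mathbb{E}[p_{\mathrm{find}}]$, so by concavity of $\sqrt{\cdot}$ (Jensen) one has $\mathbb{E}\bigl[2\sqrt{(q+1)\,p_{\mathrm{find}}}\bigr]\le 2\sqrt{(q+1)\,\mathbb{E}[p_{\mathrm{find}}]}$; and since $\Pr[\Ev:\qA^{\ket H}(z)]$ and $\Pr[\Ev:\qA^{\ket G}(z)]$ are both averages over the tuple, the triangle inequality lets me pass the $\mathbb{E}$ outside the absolute value. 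Hence it suffices to prove the bound for every fixed tuple with $P_{\mathrm{find}}:=p_{\mathrm{find}}(G,H,S,z)$. For a fixed tuple I model the $q$-query algorithm as fixed unitaries $U_0,\dots,U_q$ with the oracle unitary $O_F:\ket x\ket y\mapsto\ket x\ket{y\oplus F(x)}$ applied between consecutive $U_i$'s, deferring all of $\qA$'s internal measurements to one terminal measurement that also reads off the bit defining $\Ev$. The punctured oracle $F\setminus S$ is implemented by inserting, just before each query, a controlled-$\mathsf{NOT}$ writing $[x\in S]$ into a fresh qubit $F_i$; $\Find$ is the event that some $F_i=1$, read off by a deferred measurement. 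Thus all three executions ($\ket H$, $\ket G$, $\ket{H\setminus S}$, $\ket{G\setminus S}$) are unitary up to a single final measurement.

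\textbf{The punctured $\neg\Find$ branch is oracle-independent.} Let $P$ be the projector onto computational-basis states whose query-register component lies in $X\setminus S$. On the branch where no $F_i$ has fired, the query register is in the image of $P$ at each query, and $O_HP=O_GP$ because $H$ and $G$ agree on $X\setminus S$. Propagating this inductively through $U_0,\dots,U_q$ shows that the component of the final state with all $F_i=0$ is literally the same vector whether we ran $F=H$ or $F=G$. Consequently $\Pr[\Ev\wedge\neg\Find:\qA^{\ket{H\setminus S}}(z)]=\Pr[\Ev\wedge\neg\Find:\qA^{\ket{G\setminus S}}(z)]$ and $\Pr[\Find:\qA^{\ket{H\setminus S}}(z)]=\Pr[\Find:\qA^{\ket{G\setminus S}}(z)]=P_{\mathrm{find}}$.

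\textbf{Core estimate and conclusion.} For a single fixed $F$, I would bound $\Delta_F:=\bigl|\,\sqrt{\Pr[\Ev:\qA^{\ket F}(z)]}-\sqrt{\Pr[\Ev\wedge\neg\Find:\qA^{\ket{F\setminus S}}(z)]}\,\bigr|$ by a hybrid argument over $k=0,\dots,q$: hybrid $H_k$ inserts the projector $P$ (and its flag qubit, projected to $0$) before each of the first $k$ queries and runs honest $F$-queries afterwards, so $H_0$ is the honest $\ket F$-execution and $H_q$ is the all-flags-zero component of $\qA^{\ket{F\setminus S}}$. The step $H_{k-1}\to H_k$ changes the sub-normalized state by exactly $\|(I-P)\ket{\chi_k}\|$, where $\ket{\chi_k}$ is the pre-$k$-th-query state on the ``no fire so far'' branch; since the events ``first fire at step $k$'' are disjoint, $\sum_k\|(I-P)\ket{\chi_k}\|^2=P_{\mathrm{find}}$, and Cauchy--Schwarz together with the Lipschitz bound $\bigl|\,\|\Pi_{\Ev}\ket u\|-\|\Pi_{\Ev}\ket v\|\,\bigr|\le\|\ket u-\ket v\|$ (with $\Pi_{\Ev}$ the projector defining $\Ev$) gives $\Delta_F\le\sqrt{(q+1)\,P_{\mathrm{find}}}$ after the standard bookkeeping of \cite{C:AmbHamUnr19} (this is where the $q+1$ rather than $q$ appears). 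Combining with the previous paragraph through the common $\neg\Find$ quantity yields $\bigl|\,\sqrt{P_{\mathrm{left}}}-\sqrt{P_{\mathrm{right}}}\,\bigr|\le\Delta_H+\Delta_G\le 2\sqrt{(q+1)P_{\mathrm{find}}}$, and factoring $P_{\mathrm{left}}-P_{\mathrm{right}}=(\sqrt{P_{\mathrm{left}}}-\sqrt{P_{\mathrm{right}}})(\sqrt{P_{\mathrm{left}}}+\sqrt{P_{\mathrm{right}}})$ with both roots in $[0,1]$ produces the claimed inequality. The main obstacle is precisely this core estimate: one must reason about sub-normalized conditional states of a partially measured quantum execution and verify that inserting each semi-classical membership check disturbs the state by an amount whose squares telescope exactly to $P_{\mathrm{find}}$ — i.e.\ that the cumulative disturbance of the puncturing measurements is governed exactly by the probability of ever hitting $S$, uniformly over the adversarially chosen (non-uniform) joint distribution of $(G,H,S,z)$ and over all intermediate outcomes.
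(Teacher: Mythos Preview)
The paper does not prove this lemma: it is stated with a citation to \cite[Theorem~1]{C:AmbHamUnr19} and used as a black box in \cref{sec:qelwmprf-from-elwmprf}, so there is no in-paper proof to compare against.

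Your sketch is a faithful outline of the Ambainis--Hamburg--Unruh argument: fix the tuple by Jensen, defer measurements, show the $\neg\Find$ component of the punctured run is oracle-independent, and control the drift from the honest execution to the punctured one via a hybrid over the queries together with Cauchy--Schwarz on the per-query ``first fire'' masses. One bookkeeping remark: routed through square roots exactly as you wrote it, the final factoring step uses $\sqrt{P_{\mathrm{left}}}+\sqrt{P_{\mathrm{right}}}\le 2$ and would therefore yield $4\sqrt{(q+1)P_{\mathrm{find}}}$ rather than the stated $2\sqrt{(q+1)P_{\mathrm{find}}}$. The sharp constant in \cite{C:AmbHamUnr19} comes from a slightly tighter treatment of that last comparison (bounding the probability gap directly rather than passing through square roots and back), but this is a refinement of the same skeleton and does not affect the correctness of your structural outline.
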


\begin{lemma}[{Search in semi-classical oracle~\cite[Theorem 2]{C:AmbHamUnr19}}]\label{lem:search-in-sc}
Let $H:X\ra Y$ be a random function, let $z$ be a random value, and let $S\subset X$ be a random set.
$(H,S,z)$ may have arbitrary joint distribution.
Let $\qA$ be a quantum oracle algorithm.
If for each $x\in X$, $\Pr[x\in S]\leq \epsilon$ (conditioned on $H$ and $z$), then we have
\begin{align}
\Pr[\Find:\qA^{\ket{H\setminus S}}(z)]\leq 4q\epsilon
\enspace,
\end{align}
where $q$ is the number of queries to $H$ by $\qA$.
\end{lemma}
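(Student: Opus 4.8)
The plan is to follow the two-step strategy of Ambainis, Hamburg, and Unruh~\cite{C:AmbHamUnr19}. First I would reduce to a fixed function $H$ and a fixed auxiliary value $z$. The point is that in the punctured oracle $H\setminus S$ the predicate ``$x\in S$'' is computed and measured \emph{before} the value $H(x)$ is ever touched, so the event $\Find$ is determined by the algorithm's transcript once $(H,z)$ is fixed; conditioning on $(H,z)$ and folding the now-fixed oracle unitary $O_H$ into $\qA$'s own step unitaries turns the execution of $\qA^{\ket{H\setminus S}}(z)$ into the execution of a fixed $q$-query algorithm interacting with the \emph{pure semi-classical oracle} $O^{\mathrm{sc}}_S$, which on a query register $\sum_x\alpha_x\ket{x}$ coherently writes $[x\in S]$ into a fresh ancilla and measures it. The hypothesis $\Pr[x\in S\mid H,z]\le\epsilon$ is preserved, so it suffices to prove $\Pr[\Find]\le 4q\epsilon$ for every such fixed algorithm and every distribution on $S$ with $\Pr[x\in S]\le\epsilon$ for all $x$; the general statement then follows by averaging over $(H,z)$.

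Second, I would analyze the pure semi-classical execution. Let $U_1,\dots,U_q$ group $\qA$'s operations so that the $i$-th measurement is the projection $M_S=\sum_{x\in S}\ket{x}\bra{x}\otimes I$ applied right after $U_i$, let $\ket{\psi_0}$ be the initial state, and set $V_i=(I-M_S)U_i$. The subnormalized state of the ``never-found'' branch after $j$ queries is $\ket{\alpha_j}=V_j\cdots V_1\ket{\psi_0}$ (with $\ket{\alpha_0}=\ket{\psi_0}$), and since $\ket{\alpha_j}$ is obtained from $\ket{\alpha_{j-1}}$ by applying the unitary $U_j$ and deleting its $M_S$-component, the telescoping identity
\begin{align}
\Pr[\Find\mid S]\;=\;1-\bigl\|\ket{\alpha_q}\bigr\|^2\;=\;\sum_{i=1}^{q}\bigl\|M_S\,U_i\ket{\alpha_{i-1}}\bigr\|^2
\end{align}
holds exactly, expressing $\Pr[\Find\mid S]$ as the sum over queries of the probability that the $i$-th measurement is the first to output $1$. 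The technical heart is to compare this with a ``one-shot'' game in which a uniformly random index $i^\ast\in[q]$ is chosen, the first $i^\ast-1$ queries are run \emph{without} any semi-classical measurement — so the state before query $i^\ast$ is $\ket{\psi_{i^\ast-1}}:=U_{i^\ast-1}\cdots U_1\ket{\psi_0}$, which crucially does \emph{not} depend on $S$ — and only at query $i^\ast$ is $[x\in S]$ measured, with output bit $b$. I would prove
\begin{align}
\Pr[\Find\mid S]\;\le\;4q\cdot\Pr[\,b=1\mid S\,]\qquad\text{for every fixed }S
\end{align}
by a hybrid argument that swaps the $S$-dependent branch states $\ket{\alpha_{i-1}}$ for the $S$-independent states $\ket{\psi_{i-1}}$ one query at a time, together with the elementary bounds $1-\|\ket{\alpha_q}\|^2\le 2\bigl(1-\|\ket{\alpha_q}\|\bigr)$ and a Cauchy--Schwarz bundling of the $q$ hybrid terms; these two steps are exactly where the factor $4$ enters.

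Finally I would take expectations. Writing $\ket{\psi_{i-1}}=\sum_x\ket{x}\otimes\ket{\psi_{i-1,x}}$, independence of $\ket{\psi_{i-1}}$ from $S$ gives
\begin{align}
\bbE_S\bigl[\Pr[\,b=1\mid S\,]\bigr]\;=\;\frac1q\sum_{i=1}^{q}\sum_{x}\Pr[x\in S]\,\bigl\|\ket{\psi_{i-1,x}}\bigr\|^2\;\le\;\frac1q\sum_{i=1}^{q}\epsilon\;=\;\epsilon,
\end{align}
hence $\bbE_S[\Pr[\Find\mid S]]\le 4q\epsilon$, and averaging over $(H,z)$ concludes. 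I expect the main obstacle to be precisely the uniform-in-$S$ estimate $\Pr[\Find\mid S]\le 4q\,\Pr[b=1\mid S]$: the naive hybrid that replaces each punctured query state by the corresponding unpunctured one accumulates its error compoundingly and a priori degrades to a bound of the form $O(q^2\epsilon)$ or $O(q\sqrt{\epsilon})$, so the argument must be organized so that the ``found'' amplitude removed at each step is charged against the \emph{unpunctured} per-query hit probabilities without compounding — this is the crux of~\cite[Theorem 2]{C:AmbHamUnr19}, and everything else (the reduction to fixed $H,z$, the telescoping identity, and the averaging) is routine once it is in place.
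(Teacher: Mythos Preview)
The paper does not give its own proof of this lemma: it is stated in the appendix as a recalled result, cited directly from \cite[Theorem~2]{C:AmbHamUnr19} with the remark that the formulation used is closer to \cite[Lemma~4]{TCC:BHHHP19}. There is therefore nothing in the paper to compare your proposal against.

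Your sketch tracks the Ambainis--Hamburg--Unruh argument faithfully: the reduction to fixed $(H,z)$ by absorbing $O_H$ into the algorithm, the telescoping identity for $\Pr[\Find\mid S]$, the comparison with the one-shot ``measure a random query of the undisturbed run'' experiment, and the final averaging are exactly the structure of their proof, and you have correctly isolated the pointwise-in-$S$ inequality $\Pr[\Find\mid S]\le 4q\cdot\Pr[b=1\mid S]$ as the only nontrivial step. One caution: your description of how the factor $4$ arises (``$1-\|\ket{\alpha_q}\|^2\le 2(1-\|\ket{\alpha_q}\|)$ plus a Cauchy--Schwarz bundling'') is vaguer than the actual mechanism in \cite{C:AmbHamUnr19}, which goes through an explicit norm recursion comparing $\ket{\alpha_i}$ to $\ket{\psi_i}$ and uses $\|M_S U_i\ket{\alpha_{i-1}}\|\le\|M_S U_i\ket{\psi_{i-1}}\|+\|\ket{\alpha_{i-1}}-\ket{\psi_{i-1}}\|$ together with a careful induction; a loose Cauchy--Schwarz at this point is precisely what would give you $O(q^2\epsilon)$ rather than $4q\epsilon$, so when you fill in the details you should follow their recursion rather than the heuristic you wrote.
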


Note that the above lemma is originally introduced in \cite{C:AmbHamUnr19}, but we use a variant that is closer to Lemma 4 in \cite{TCC:BHHHP19}.

\subsection{Proof}

\paragraph{Construction.}
We start with the construction.
Let $\ELWMPRF=(\Setup,\Gen,\Eval,\Mark,\Sim)$ be an extraction-less watermarking PRF scheme satisfying SIM-MDD security with private simulation.
We also let the message space of $\ELWMPRF$ is $\bit^{\msglen}$.
Let $\PRF$ be a QPRF with domain $\zo{\secp}$ and range $\cR_{\Sim}$, which is the randomness space of $\Sim$.
We construct an extraction-less watermarking PRF scheme $\QELWMPRF=(\QEL.\Setup,\QEL.\Gen,\QEL.\Eval,\QEL.\Mark,\allowbreak\QEL.\Sim)$ satisfying QSIM-MDD security with private simulation as follows.
We use $\Gen$, $\Eval$, and $\Mark$ as $\QEL.\Gen$, $\QEL.\Eval$, and $\QEL.\Mark$, respectively. The domain and range of $\QELWMPRF$ are the same as those of $\ELWMPRF$.
The mark space of $\QELWMPRF$ is $\bit^{\msglen}$.
Also, we construct $\QEL.\Setup$ and $\QEL.\Sim$ as follows.

\begin{description}
\item[$\QEL.\Setup(1^\secp)$:] $ $
\begin{itemize}
\item Generate $(\pp,\xk)\gets\Setup(1^\secp)$.
\item Genrate $K\gets\bit^\secp$.
\item Outputs $(\pp,\qxk:=(\xk,K))$.
\end{itemize}

\item[$\QEL.\Sim(\qxk,\iop, i;r)$:] $ $
\begin{itemize}
\item Parse $(\xk,K)\gets\qxk$.
\item Output $(\gamma,x,y) \gets \Sim(\xk,\iop,i;\PRF_K(r))$.
\end{itemize}
\end{description}

%
%
%
%
%
\paragraph{Security analysis.}
Let $i^\ast\in[\msglen]$ and $\qA$ be any QPT adversary for QSIM-MDD security with private simulation making total $q$ queries to $\Oracle{sim}$ and $\Oracle{api}$.
We prove that for any polynomial $w$, it holds that $\adva{i^\ast,\qA,\QELWMPRF}{q\textrm{-}sim\textrm{-}mdd}(\secp)\leq 1/w$.
We prove it using hybrid games.
Let $\SUC_X$ be the event that the final output is $1$ in Game $X$.
We define a distribution $\D_{\iop',i'}$ as
\begin{description}
\item[$D_{\iop',i'}$:] Output $(\gamma,x,y)\gets\Sim(\xk,\iop',i')$.
\end{description}

\begin{description}
\item[Game $1$:]This is $\expt{i^\ast,\qA,\QELWMPRF}{q\textrm{-}sim\textrm{-}mdd}(\secp)$. Thus, $\adva{i^\ast,\qA,\QELWMPRF}{q\textrm{-}sim\textrm{-}mdd}(\secp)= 2\abs{\Pr[\SUC_1]-1/2}$.
\begin{enumerate}
\item The challenger generates $(\pp,\xk) \chosen \Setup(1^\secp)$ and $K\gets\bit^\secp$, and gives $\pp$ to $\qA$.
$\qA$ send $\msg\in\bit^{\msglen}$ to the challenger.
The challenger generates $(\iop,\prfk)\gets\Gen(\pp)$, computes $\tlC\gets\Mark(\pp,\prfk,\msg)$, and sends $\tlC$ to $\qA$.

\item $\qA$ can access to the following oracles.
\begin{description}
\item[$\Oracle{sim}$:]On input $\iop'$ and $i'$, it returns $\Sim(\xk,\iop',i';\PRF_K(r))$, where $r\gets\bit^\secp$.
\item[$\Oracle{api}$:]On input $(\epsilon,\delta,\iop',i')$ and a quantum state $\qstateq$, it returns the result of $\API^{\epsilon,\delta}_{\cP,\D^{\PRF}_{\iop',i'}}(\qstateq)$ and the post measurement state, where $\D^{\PRF}_{\iop',i'}=\D_{\iop',i'}(\PRF_K(\cdot))$.

\end{description}

\item The challenger generates $\coin\gets\bit$.
If $\coin=0$, the challenger samples $(\gamma,x,y)\gets\Dreal{i^\ast}$.
If $\coin=1$, the challenger generates $(\gamma,x,y)\gets\Sim(\xk,\iop,i^\ast;\PRF_K(r^*))$, 
where, $r^*\gets\bit^\secp$.
The challenger sends $(\gamma,x,y)$ to $\qA$.

\item When $\qA$ terminates with output $\coin'$, the challenger outputs $1$ if $\coin=\coin'$ and $0$ otherwise.

\end{enumerate}
\end{description}

\begin{description}
\item[Game $2$:] This game is the same as Game $1$ except that $\PRF_K$ is replaced with a quantum-accessible random function $\Rand$.
\end{description}
We have $\abs{\Pr[\SUC_1]-\Pr[\SUC_2]}=\negl(\secp)$ from the the security of $\PRF$.

\begin{description}
\item[Game $3$:] This game is the same as Game $2$ except that $\Rand$ is replaced with
\begin{align}
V(r)=
\begin{cases}
v^* & (\textrm{if~~} r=r^*)\\
\Rand(r) & (\textrm{otherwise}),
\end{cases}
\end{align}
where $v^* \chosen \cR_{\Sim}$.
\end{description}
We have $\abs{\Pr[\SUC_2]-\Pr[\SUC_3]}=0$.

\begin{description}
\item[Game $4$:] This game is the same as Game $3$ except the followings.
When $\qA$ makes a query $\iop'$ and $i'$ to $\Oracle{sim}$, $\Sim(\xk,\iop',i';R(r))$ is returned instead of $\Sim(\xk,\iop',i';V(r))$.
Also, when $\qA$ makes a query $(\epsilon,\delta,\iop',i')$ to $\Oracle{api}$, $\API^{\epsilon,\delta}_{\cP,D^{\Rand}_{\iop',i'}}(\qstateq)$ is performed instead of $\API^{\epsilon,\delta}_{\cP,D^V_{\iop',i'}}(\qstateq)$, where $D^{\Rand}_{\iop',i'}=\D_{\iop',i'}(\Rand(\cdot))$ and $D^V_{\iop',i'}=\D_{\iop',i'}(V(\cdot))$.

By this change, $V$ is now used only for generating the challenge tuple $(\gamma,x,y)\gets\Sim(\xk,\iop,i^\ast;V(r^*))=\Sim(\xk,\iop,i^\ast;v^\ast)$.
\end{description}
We have $\abs{\Pr[\SUC_3]-\Pr[\SUC_4]}=O(\sqrt{\frac{q^2}{2^\secp}})$ from \cref{lem:sco2h} and \cref{lem:search-in-sc}.

\begin{description}
\item[Game $5$:] This game is the same as Game $4$ except that $\Rand$ is replaced with $G\circ F$, where $F:\bit^\secp\ra[s]$ and $G:[s]\ra\cR_\Sim$ are random functions and $s$ is a polynomial of $\secp$ specified later.
\end{description}
\begin{theorem}[Small Range Distribution~\cite{FOCS:Zhandry12}]\label{thm:SRD}
For any QPT adversary $\qB$ making $q$ quantum queries to $\Rand$ or $G\circ F$, we have $\abs{\Pr[\qB^{\ket{\Rand}}(1^\secp)=1]-\Pr[\qB^{\ket{G\circ F}}(1^\secp)=1]}\leq O(q^3/s)$.
\end{theorem}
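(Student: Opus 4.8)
This is Zhandry's small‑range distribution theorem \cite{FOCS:Zhandry12}, and the plan is to follow his proof via the (non‑Boolean) polynomial method. Fix a QPT adversary $\qB$ making $q$ quantum queries, and for a fixed oracle $H\colon\zo{\secp}\ra\cR_{\Sim}$ let $p(H)\in[0,1]$ be the probability that $\qB^{\ket{H}}(1^\secp)$ outputs $1$; it suffices to bound $\abs{\Exp_{H\gets\Rand}[p(H)]-\Exp_{F,G}[p(G\circ F)]}$. The first step is the polynomial lemma: generalizing the Beals--Buhrman--Cleve--Mosca--de~Wolf argument to arbitrary codomains, the amplitudes of $\qB$'s state after $q$ queries are polynomials of degree $\le q$ in the indicator variables $\delta_{w,z}\seteq[H(w)=z]$, so $p(H)$ is a real polynomial of total degree $\le 2q$ in the $\delta_{w,z}$ (after reducing $\delta_{w,z}^2=\delta_{w,z}$, $\delta_{w,z}\delta_{w,z'}=0$). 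Fourier‑analyzing $p$ over uniform $H$, we may write $p=\sum_{T\,:\,\abs{T}\le 2q}q_T$ where each $q_T$ depends exactly on the coordinates in $T\subseteq\zo{\secp}$, and crucially $\sum_{T}\norm{q_T}_2^2=\norm{p}_2^2=\Exp_{H\gets\Rand}[p(H)^2]\le\Exp_{H\gets\Rand}[p(H)]\le 1$, using $0\le p\le 1$. Hence $\Exp_{H\gets D}[p(H)]=\sum_{T}\Exp_{H\gets D}[q_T(H|_T)]$ depends on the oracle distribution $D$ only through the joint laws of $H$ restricted to sets of size $\le 2q$.

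The second step compares these local marginals for $D=\Rand$ against $D=G\circ F$. Fix $T$ with $\abs{T}\le 2q$. Under $\Rand$, $H|_T$ is uniform on $\cR_{\Sim}^{T}$. Under $G\circ F$, conditioned on $F$ being injective on $T$ — an event of probability at least $1-\binom{\abs{T}}{2}/s\ge 1-O(q^2/s)$ — the bins $\{F(w)\}_{w\in T}$ are distinct, so $H|_T=G\circ(F|_T)$ is again uniform on $\cR_{\Sim}^{T}$; only on the collision event (probability $O(q^2/s)$) is the law of $H|_T$ different (it is then constant on the $F$‑fibers inside $T$). Therefore, for each fixed $T$, the gap between the two marginals measured against $q_T$ is $O(q^2/s)$ times a quantity controlled by $\norm{q_T}_2$. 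Summing over all $T$ with $\abs{T}\le 2q$ (there are $\le 2q$ levels) and using Cauchy--Schwarz with the mass bound $\sum_T\norm{q_T}_2^2\le 1$ yields
\begin{align}
\abs{\Exp_{H\gets\Rand}[p(H)]-\Exp_{F,G}[p(G\circ F)]}\;\le\; O\!\left(\frac{q^{2}}{s}\right)\cdot\sqrt{2q}\cdot\Big(\textstyle\sum_{T}\norm{q_T}_2^2\Big)^{1/2}\;=\;O\!\left(\frac{q^{3}}{s}\right).
\end{align}

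\textbf{Main obstacle.} The delicate point is the final combining step. A naive monomial‑by‑monomial triangle inequality on the degree‑$2q$ expansion of $p$ loses a multiplicative factor of $\abs{\cR_{\Sim}}$: on a collision event a $k$‑point marginal of $G\circ F$ can place weight $\abs{\cR_{\Sim}}^{-(k-1)}$, rather than $\abs{\cR_{\Sim}}^{-k}$, on a given value pattern, and absorbing that extra factor would wreck the bound. This is precisely why one must stay at the level of the restriction functions $q_T$ and exploit that their total $\ell_2$‑mass is $\le 1$ independently of $\abs{\cR_{\Sim}}$ — a consequence of the $q$‑query structure — pairing it with the $O(q^2/s)$ collision estimate via Cauchy--Schwarz. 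The remaining ingredients, namely the precise non‑Boolean polynomial lemma and the level‑wise Fourier‑mass bound, are exactly what \cite{FOCS:Zhandry12} establishes, so the plan is to invoke those and assemble the above estimate; tightening the exponent is not needed since $O(q^3/s)$ is all we require (and the crude summation already gives it).
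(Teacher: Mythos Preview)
The paper does not prove this theorem; it is quoted from Zhandry and cited to \cite{FOCS:Zhandry12} without argument, so there is no in-paper proof to compare against---only Zhandry's original.

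Your plan opens correctly (polynomial method, degree $\le 2q$ in the indicators $\delta_{w,z}$, Efron--Stein decomposition $p=\sum_T q_T$, compare $\le 2q$-wise marginals), but the combining step has two genuine gaps.

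First, the per-$T$ bound ``$O(q^2/s)$ times a quantity controlled by $\|q_T\|_2$'' is false in general. Take $|T|=2$ and $q_T(v,w)=[v=w]-1/|\cR_{\Sim}|$; this is a legitimate level-$2$ Efron--Stein component with $\|q_T\|_2\approx|\cR_{\Sim}|^{-1/2}$. On the collision event $F(t_1)=F(t_2)$ the conditional law of $H|_T$ is $(v,v)$ for uniform $v$, and $\mathbb{E}_v[q_T(v,v)]=1-1/|\cR_{\Sim}|\approx 1$. So the conditional expectation exceeds $\|q_T\|_2$ by a factor $\approx\sqrt{|\cR_{\Sim}|}$; no bound purely in terms of $\|q_T\|_2$ can absorb this.

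Second, even granting a bound $\mathrm{gap}_T\le O(q^2/s)\cdot\|q_T\|_2$, your Cauchy--Schwarz plugs in ``$\le 2q$ levels'' where the sum actually ranges over all subsets $T\subseteq\{0,1\}^\secp$ with $|T|\le 2q$---exponentially many of them (a $q$-query quantum algorithm touches every input in superposition, so generically all such $q_T$ are nonzero). Cauchy--Schwarz on $\sum_T\|q_T\|_2$ then yields $\sqrt{\#\{T\}}\cdot(\sum_T\|q_T\|_2^2)^{1/2}$, and $\sqrt{\#\{T\}}$ is exponential in $\secp$, not $\sqrt{2q}$.

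Zhandry's actual argument does not decompose by subsets $T$. After the polynomial-method step he works directly with the expected acceptance probability as a function of the oracle distribution's parameters and shows this function is itself a low-degree object to which analytic polynomial bounds apply; this sidesteps both the $\sqrt{|\cR_{\Sim}|}$ blow-up and the exponential count of subsets. You should consult \cite{FOCS:Zhandry12} for the mechanism---the Efron--Stein route as you sketched it does not reach $O(q^3/s)$.
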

By the above theorem, we have $\abs{\Pr[\SUC_4]-\Pr[\SUC_5]}=O(q^3/s)$.

We can simulate $F$ using a $2q$-wise independent function $E$ by the following theorem.
\begin{theorem}[\cite{C:Zhandry12}]\label{thm:q_random_q_wise}
For any QPT adversary $\qB$ making $q$ quantum queries to $F$ or $E$, we have $\Pr[\qB^{\ket{F}}(1^\secp)=1]=\Pr[\qB^{\ket{E}}(1^\secp)=1]$.
\end{theorem}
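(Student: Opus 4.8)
\textbf{Proof proposal for \cref{thm:q_random_q_wise}.}
The plan is to use the polynomial method for quantum query algorithms (Beals--Buhrman--Cleve--Mosca--de Wolf, in the general-oracle form adapted by Zhandry). I would describe any function $H$ with domain $\bit^\secp$ and codomain $[s]$ by its truth table, encoded through the $0/1$ indicator variables $\delta_{x,v}=\mathbf{1}[H(x)=v]$ for $x\in\bit^\secp$ and $v\in[s]$; for each fixed $x$, exactly one $\delta_{x,v}$ equals $1$. Writing the standard quantum oracle unitary (on an input register, an output register, and workspace) in the computational basis, each of its entries is \emph{linear} in the family $\{\delta_{x,v}\}_v$. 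Since $\qB$ makes $q$ oracle calls interleaved with its own unitaries, which do not depend on $H$ at all, every amplitude of $\qB$'s final state is a polynomial in the $\delta_{x,v}$'s of degree at most $q$, and hence the probability that $\qB$ outputs $1$ is a real polynomial $P(\{\delta_{x,v}\})$ of degree at most $2q$.

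The next step is to multilinearize $P$: because $\delta_{x,v}^2=\delta_{x,v}$ and $\delta_{x,v}\delta_{x,v'}=0$ for $v\ne v'$, I may take $P$ to be a sum of monomials of the form $\prod_{i=1}^{k}\delta_{x_i,v_i}$ with the $x_i$ pairwise distinct and $k\le 2q$. Taking expectations over a random $H$, $\Pr[\qB^{\ket{H}}(1^\secp)=1]=\mathbb{E}_H[P]$, and the expectation of each surviving monomial is $\Pr_H[H(x_1)=v_1\wedge\cdots\wedge H(x_k)=v_k]$, which depends only on the joint distribution of $H$ at the at most $2q$ inputs $x_1,\dots,x_k$. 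Since $F$ is a uniformly random function and $E$ is $2q$-wise independent with the same domain and codomain, $E$ and $F$ induce the same joint distribution on every set of at most $2q$ inputs; therefore $\mathbb{E}_F[P]=\mathbb{E}_E[P]$, which gives $\Pr[\qB^{\ket{F}}(1^\secp)=1]=\Pr[\qB^{\ket{E}}(1^\secp)=1]$ as claimed — with exact equality and no dependence on $\qB$'s running time (so the ``QPT'' qualifier is inessential here).

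I expect the main obstacle to be the first paragraph: pinning down the degree bound rigorously. One must check that the chosen form of the oracle unitary is degree exactly $1$ in the indicator variables — this is the step where the precise oracle model (and the embedding of $[s]$ into a power-of-two register) matters — and that all of $\qB$'s query-independent operations contribute no dependence on $H$, so that $q$ queries yield amplitudes of degree $\le q$ and output probabilities of degree $\le 2q$. The multilinearization and the reduction of expectations to $\le 2q$-point marginals are then routine bookkeeping, and the conclusion drops out directly from the defining property of $2q$-wise independence.
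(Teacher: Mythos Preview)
The paper does not give its own proof of this statement; it is cited as a known result from Zhandry's CRYPTO~2012 paper. Your proposal is correct and is essentially Zhandry's original argument: the polynomial method shows the acceptance probability is a degree-$2q$ polynomial in the indicator variables $\delta_{x,v}$, and after multilinearization each monomial's expectation depends only on the marginal of $H$ at $\le 2q$ points, so a $2q$-wise independent family is indistinguishable from a truly random function.
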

We can efficiently simulate $\API^{\epsilon,\delta}_{\cP,\D_{\iop',i'}^{G\circ E}}$ in Game $5$ using $s$ samples from $\D_{\iop',i'}$ since $\D_{\iop',i'}(G(\cdot))$ can be interpreted as a mapping for $s$ samples from $\D_{\iop',i'}$.
Then, from the SIM-MDD security with private simulation of $\ELWMPRF$, we have $\abs{\Pr[\SUC_5]-1/2}=\negl(\secp)$.
From the above, we also have $\advc{i^\ast,\qA,\QELWMPRF}{q}{sim}{mdd}(\secp)\leq O(q^3/s)+2\gamma$ for some negligible function $\gamma$.
Thus, by setting $s=O(q^3\cdot w^2)$, we obtain $\advc{i^\ast,\qA,\QELWMPRF}{q}{sim}{mdd}(\secp)\leq 1/w$.

Since $w$ is any polynomial, this means that $\advc{i^\ast,\qA,\QELWMPRF}{q}{sim}{mdd}(\secp)=\negl(\secp)$.

\begin{remark}
It is easy to see that the extended weak pseudorandomness of $\ELWMPRF$ is preserved after we apply the transformation above since the evaluation algorithm is the same as that of $\ELWMPRF$ and extended weak pseudorandomness holds against adversaries that generate $\pp$. Thus, we omit a formal proof.
\end{remark}

\section{Puncturable Encryption with Strong Ciphertext Pseudorandomness}\label{sec:pe_construction}

We prove~\cref{thm:CHNVW_PE_strong_ct_pseudorandom} in this section.

\subsection{Tools for PE}\label{sec:pe_tools}

\begin{definition}[Statistically Injective PPRF]\label{def:injective_PPRF}
If a PPRF family $\cF = \{\prf_{K}: \zo{\ell_1(\secp)} \ra \zo{\ell_2(\secp)} \mid K \in\zo{\secp}\}$ satisfies the following, we call it a statistically injective PPRF family with failure probability $\epsilon(\cdot)$. With probability $1-\epsilon(\secp)$ over the random choice of $K \gets \prfgen(1^\secp)$, for all $x,x^\prime \in \zo{\ell_1(\secp)}$, if $x\ne x^\prime$, then $\prf_K(x)\ne \prf_K(x^\prime)$.
If $\epsilon(\cdot)$ is not specified, it is a negligile function.
\end{definition}
Sahai and Waters show that we can convert any PPRF into a statistically injective PPRF~\cite{SIAMCOMP:SahWat21}.

\begin{theorem}[\cite{SIAMCOMP:SahWat21}]\label{thm:statistical_injective_PPRF}
If OWFs exist, then for all efficiently computable functions $n(\secp)$, $m(\secp)$, and $e(\secp)$ such that $m(\secp) \ge 2n(\secp) + e(\secp)$, there exists a statistically injective PPRF family with failure probability $2^{-e(\secp)}$ that maps $n(\secp)$ bits to $m(\secp)$ bits.
\end{theorem}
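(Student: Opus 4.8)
The plan is to instantiate the Sahai--Waters transformation: compose a plain PPRF with a pairwise-independent hash function via XOR, observing that the hash kills collisions statistically while XOR-ing a public value preserves pseudorandomness at punctured points. First I would fix parameters: given $n = n(\secp)$, $m = m(\secp)$, $e = e(\secp)$ with $m \ge 2n + e$, invoke \cref{thm:pprf-owf} to obtain from the assumed OWF a PPRF $\prf = (\prfgen, \prf, \Puncture)$ with $\{\prf_K : \zo{n} \ra \zo{m} \mid K\in\zo{\secp}\}$. Let $\cH = \{h : \zo{n} \ra \zo{m}\}$ be a pairwise-independent (2-universal) hash family; such families exist unconditionally with $\poly(n,m)$-size descriptions and polynomial-time evaluation (e.g. affine maps over $\mathbb{F}_{2^{\max(n,m)}}$, suitably truncated and padded), so no extra assumption is needed. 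Define the new family by: key generation samples $K \gets \prfgen(1^\secp)$ and $h \gets \cH$ and outputs $K' = (K,h)$; evaluation is $\prf'_{(K,h)}(x) := \prf_K(x) \oplus h(x)$; puncturing is $\Puncture'((K,h),S) := (\Puncture(K,S), h)$, with punctured evaluation $x \mapsto \prf_{K_{\notin S}}(x) \oplus h(x)$.

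Then I would verify the three properties. Punctured correctness is inherited directly from $\prf$, since $h$ is shared between the two keys and $\prf_{K_{\notin S}}(x) = \prf_K(x)$ for $x \notin S$. For statistical injectivity, fix the key material of $\prf$ (in particular all values $\{\prf_K(x)\}_{x}$) and take the probability over $h \gets \cH$ alone: for each unordered pair $x \ne x'$, pairwise independence makes $(h(x), h(x'))$ uniform on $(\zo{m})^2$, hence $h(x)\oplus h(x')$ uniform on $\zo{m}$, so it equals the fixed value $\prf_K(x)\oplus\prf_K(x')$ with probability exactly $2^{-m}$; a union bound over at most $2^{2n}$ pairs gives collision probability at most $2^{2n-m} \le 2^{-e}$. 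For pseudorandomness at punctured points, I would reduce to the same property of $\prf$: a distinguisher against $\prf'$ on set $S$ is answered by sampling $h$ internally, forwarding $(\Puncture(K,S), h)$ as the punctured key, and XOR-ing $h(x_i)$ onto the challenge value for each $x_i \in S$ received from the $\prf$-challenger; since $h$ is independent of $K$ and each $h(x_i)$ is a fixed public string, uniform stays uniform and $\prf_K$-pseudorandom stays pseudorandom after the XOR, so the advantage is exactly preserved.

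The only place the stated parameter regime is used, and the only mildly nontrivial step, is the union bound in the injectivity argument: the failure probability is at most $\#\{\text{unordered pairs}\}\cdot 2^{-m} \le 2^{2n}\cdot 2^{-m}$, which is $\le 2^{-e}$ precisely because $m \ge 2n + e$. I do not anticipate a genuine obstacle; the points to be careful about are that 2-universality (not mere universality) is what forces $h(x)\oplus h(x')$ to be exactly uniform for distinct inputs, and that the hash description remains polynomial-size for $n,m = \poly(\secp)$ so that $\prfgen'$, $\prf'$, and $\Puncture'$ stay polynomial-time.
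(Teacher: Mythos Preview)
Your proposal is correct and is exactly the Sahai--Waters construction; the paper itself does not prove this theorem but simply cites \cite{SIAMCOMP:SahWat21}, so there is no in-paper proof to compare against. Your reconstruction of the argument (XOR a PPRF with a pairwise-independent hash, use pairwise independence plus a union bound for injectivity, and a public-bijection reduction for punctured pseudorandomness) matches the cited source.
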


\begin{definition}\label{def:injective_com}
An injective bit-commitment with setup consists of PPT algorithms $(\Gen,\Commit)$.
\begin{description}
 \item[$\Gen(1^\secp)$:] The key generation algorithm takes as input the security parameter $1^\secp$ and outputs a commitment key $\ck$.
 \item[$\Commit_\ck(b)$:] The commitment algorithm takes as input $\ck$ and a bit $b$ and outputs a commitment $\com$.
 \end{description}
 These satisfy the following properties.
\begin{description}
  \item[Computationally Hiding:] For any QPT $\qA$, it holds that
  \[
  \abs{\Pr[\qA(\Commit_\ck(0))=1 \mid \ck \gets \Gen(1^\secp)] - \Pr[\qA(\Commit_\ck(1))=1 \mid \ck \gets \Gen(1^\secp)]} \le \negl(\secp).
  \]
  \item[Statistically Binding:] It holds that
  \begin{align}
  \Pr \left [\com_0 = \com_1 \  \middle | \begin{array}{l}
  \ck \gets \Gen(1^\secp)\\
  \com_0 \gets \Commit_\ck( 0) \\
  \com_1 \gets \Commit_\ck( 1)
  \end{array}
  \right ] \le \negl(\secp).
  \end{align}
  \item[Injective:] For every security parameter $\secp$, there is a bound $\ell_r$ on the number of random bits used by $\Commit$ such that if $\ck \gets \Gen(1^\secp)$, $\Commit_\ck(\cdot\ ; \cdot)$ is an injective function on $\zo{} \times \zo{\ell_{r}}$ except negligible probability.
\end{description}
\end{definition}

\begin{theorem}\label{thm:injective_com_lwe}
If the QLWE assumption holds, there exists a secure injective bit-commitment with setup.
\end{theorem}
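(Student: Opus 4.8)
The plan is to instantiate the scheme from \emph{lossy trapdoor functions} (LTDFs), which follow from the hardness of $\LWE$ by the LWE-based construction of Peikert and Waters. Recall that an LTDF family comes with two key-generation modes: an \emph{injective} mode, producing a key $s$ for which $f_s:\zo{m}\to\{0,1\}^{*}$ is injective (and invertible with a trapdoor) with all but negligible probability, and a \emph{lossy} mode, producing a key $s'$ for which the image of $f_{s'}$ has size at most $2^{m-k}$, where the lossiness $k$ can be made as large as $\secp+2$ by an appropriate choice of parameters; moreover injective and lossy keys are computationally indistinguishable, and this indistinguishability reduces to decisional $\LWE$, so under the QLWE assumption it holds even against QPT distinguishers. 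I would additionally use a family $\cH$ of pairwise-independent hash functions from $\zo{m}$ to $\bit$. The construction is then: $\Gen(1^\secp)$ samples an injective-mode LTDF key $s$ together with $h\gets\cH$ and outputs $\ck\seteq(s,h)$, with the randomness length set to $\ell_r\seteq m$; and $\Commit_\ck(b;r)\seteq\big(f_s(r),\,h(r)\oplus b\big)$ for $b\in\bit$, $r\in\zo{m}$.

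The key steps, in order, are as follows. \textbf{(1) Injectivity, hence statistical binding.} Since $\Gen$ always outputs an injective-mode key, $f_s$ is injective except with negligible probability over $\ck$; given $\big(y,c\big)=\Commit_\ck(b;r)$ with such a $\ck$ one recovers $r$ from $y=f_s(r)$ by injectivity and then $b=c\oplus h(r)$, so $\Commit_\ck(\cdot\,;\cdot)$ is injective on $\bit\times\zo{\ell_r}$ as required. Statistical binding is then immediate: for every $r_0,r_1$ the inputs $(0,r_0)$ and $(1,r_1)$ already differ, so their images differ, and in particular $\Commit_\ck(0;r_0)\neq\Commit_\ck(1;r_1)$; the probability of a bad $\ck$ is negligible. \textbf{(2) Mode switch.} Consider the hybrid that replaces the injective key $s$ by a lossy key $s'$. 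By LTDF mode indistinguishability and the QLWE assumption, no QPT adversary distinguishes $\ck=(s,h)$ from $(s',h)$ with non-negligible advantage, so it suffices to establish hiding in this lossy hybrid. \textbf{(3) Hiding in the lossy hybrid via the leftover hash lemma.} With a lossy key, uniform $r\in\zo{m}$ has average min-entropy at least $m-(m-k)=k\geq\secp+2$ conditioned on $y=f_{s'}(r)$; since $h$ is pairwise independent, the (generalized) leftover hash lemma gives that $(h,y,h(r))$ is within negligible statistical distance of $(h,y,U)$ for a uniformly random bit $U$. Hence in this hybrid $\Commit_\ck(0;r)=(y,h(r))$ and $\Commit_\ck(1;r)=(y,h(r)\oplus 1)$ are each negligibly close to $(y,U)$, thus negligibly close to each other; combining with step (2), every QPT hiding adversary has advantage bounded by the LTDF-mode-indistinguishability advantage plus a negligible term, which is $\negl(\secp)$.

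\textbf{Main obstacle.} The only genuine subtlety is parameter bookkeeping inside the LTDF: one must choose the dimensions and modulus so that the lossiness $k$ strictly exceeds $\secp$ (to make step (3) go through) while the injective mode remains injective on the Boolean domain $\zo{m}$ with overwhelming probability; this is exactly what the LWE-based LTDF construction supplies, and the residual (negligible) failure probability of the injective mode is harmlessly absorbed by the ``except with negligible probability'' clause in the injectivity requirement. A secondary point worth spelling out is that the single computational step, step (2), is a classical reduction to decisional $\LWE$, so substituting a quantum $\LWE$-distinguisher causes no difficulty, while steps (1) and (3) are information-theoretic and therefore hold against unbounded (in particular quantum) adversaries.
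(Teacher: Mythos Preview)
Your proposal is correct and takes a somewhat different route from the paper. The paper gives essentially a one-line proof by black-box composition: it invokes the result that QLWE yields an injective one-way function equipped with a key-generation phase (citing Peikert--Waters and a follow-up), and then plugs this into Naor's bit-commitment, noting that the commitment's setup algorithm $\Gen$ absorbs the OWF's key generation and that injectivity of the OWF carries over to injectivity of Naor's commitment. Your argument instead builds the commitment directly from the lossy trapdoor function, $\Commit_{(s,h)}(b;r)=(f_s(r),\,h(r)\oplus b)$, establishing hiding via a lossy-mode switch followed by the leftover hash lemma and inheriting injectivity from the injective mode of the LTDF. Both approaches are sound. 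Yours is more self-contained and sidesteps the indirection through a PRG-based scheme; the paper's is more modular and simply points to known theorems. The one parameter constraint you flag---that the lossiness must exceed $\secp$ so the extractor step goes through---is hidden inside the cited constructions in the paper's route, so neither approach has a real obstacle.
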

This theorem follows from the following theorems.

\begin{theorem}[\cite{JC:Naor91}]\label{thm:naor_bit_com}
If there exists (injective) OWFs, there exists (injective) bit-commitment.
\end{theorem}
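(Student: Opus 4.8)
The plan is to instantiate Naor's commitment scheme~\cite{JC:Naor91}, phrased as a non-interactive commitment with setup in the sense of \cref{def:injective_com}. First I would invoke \cref{thm:owf_prg}, together with the standard stretch-amplification of pseudorandom generators, to obtain from any (post-quantum) OWF a PRG $G \colon \zo{\secp} \to \zo{3\secp}$. The scheme is then: $\Gen(1^\secp)$ samples and outputs $\ck \seteq r \chosen \zo{3\secp}$; and $\Commit_\ck(b;s)$, on seed $s \chosen \zo{\secp}$, outputs $\com \seteq G(s) \oplus (b \cdot r)$, where $b \cdot r$ denotes $r$ if $b=1$ and $0^{3\secp}$ if $b=0$. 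Thus the number of random bits used by $\Commit$ is $\ell_r = \secp$.

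For \emph{computationally hiding}, I would argue by a two-step hybrid: by PRG security $G(s)$ is computationally indistinguishable (against QPT distinguishers, by the post-quantum hardness of the OWF) from a uniform $u \chosen \zo{3\secp}$, and $u \oplus (b\cdot r)$ is distributed uniformly and independently of $b$; hence $\Commit_\ck(0)$ and $\Commit_\ck(1)$ are computationally indistinguishable even given $\ck$. For \emph{statistically binding}, observe that $\com_0 = \com_1$ can occur for some choice of seeds only when $r \in B \seteq \setbk{G(s_0)\oplus G(s_1) \mid s_0,s_1 \in \zo{\secp}}$, and $\abs{B}\le 2^{2\secp}$ while $r$ is uniform over $\zo{3\secp}$, so $\Pr[\com_0=\com_1] \le \Pr_{\ck}[r\in B] \le 2^{-\secp}=\negl(\secp)$.

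For the injective variant the one subtlety is that \cref{thm:owf_prg} does not produce an \emph{injective} PRG, so I would instead build one directly from an injective OWF $f$: using the Goldreich--Levin hard-core predicate, the map $(x,t)\mapsto(f(x),t,\innerp{x}{t})$ is a length-extending PRG that is injective because $f$ is injective and $t$ appears in the clear; the standard iteration that re-feeds the non-leading bits and collects the leading bits preserves injectivity while reaching stretch $2\secp$, giving an injective $G\colon\zo{\secp}\to\zo{3\secp}$. Plugging this $G$ into the construction above, the map $(b,s)\mapsto G(s)\oplus(b\cdot r)$ is injective on each slice $\setbk{b}\times\zo{\secp}$ by injectivity of $G$, and injectivity across the two slices holds precisely when $r\notin B$ --- the same bad event as in the binding argument --- which fails with probability at most $2^{-\secp}$ over $\ck$. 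Hence $\Commit_\ck(\cdot;\cdot)$ is injective on $\zo{}\times\zo{\secp}$ except with negligible probability, as \cref{def:injective_com} requires.

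The main obstacle I anticipate is purely the bookkeeping for the injective variant: one must ensure the PRG used there is injective with sufficient stretch (which \cref{thm:owf_prg} alone does not guarantee) and recognize that the across-slice injectivity of $\Commit_\ck(\cdot;\cdot)$ is governed by the very same set $B$ of bad commitment keys that controls statistical binding, so both are obtained from a single counting bound. Everything else --- hiding via a standard hybrid, binding via counting --- is routine.
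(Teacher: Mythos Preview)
The paper does not prove this theorem; it is simply cited as Naor's classical result and invoked as a black box (together with \cref{thm:lossy_lwe}) to obtain \cref{thm:injective_com_lwe}. Your proposal is a correct and standard reconstruction of Naor's scheme in the commitment-with-setup framework of \cref{def:injective_com}, and your handling of the injective variant---building an injective PRG from an injective OWF via Goldreich--Levin plus the output-bit iteration, then observing that across-slice injectivity and statistical binding are controlled by the same bad set $B$---is the right idea and is sound.
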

\begin{theorem}[{\cite[Adapted]{SIAMCOMP:PeiWat11,C:AKPW13}}]\label{thm:lossy_lwe}
If the QLWE assumption holds, there exists a secure injective OWF with evaluation key generation algorithms.
\end{theorem}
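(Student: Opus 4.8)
The plan is to instantiate this adapted result through lossy trapdoor functions (LTFs) in the sense of Peikert--Waters, built from LWE (or, for better parameters, from learning with rounding), and simply to discard the trapdoor, which we do not need here. First I would recall the abstract interface: an LTF family has two key-generation modes, an \emph{injective} mode that, with overwhelming probability over the key, yields an injective function $f_{\ek}$, and a \emph{lossy} mode that yields a function whose image is at most a $2^{-k}$ fraction of the domain for some $k=\omega(\log\secp)$; moreover the evaluation keys produced by the two modes are computationally indistinguishable, and this indistinguishability reduces to decision LWE. Our injective OWF with evaluation key generation is then defined by letting $\Gen(1^\secp)$ run the injective-mode key generation and output $\ek$, with the function being $f_{\ek}(\cdot)$. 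Injectivity of $f_{\ek}$ except with negligible probability over $\ek$ is immediate from the injective-mode guarantee, so the only real work is one-wayness.

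Next I would prove one-wayness by reduction to mode indistinguishability. Suppose a QPT adversary $\qA$ inverts $f_{\ek}$ with non-negligible probability $\varepsilon$, i.e., given $\ek$ and $y=f_{\ek}(x)$ for uniformly random $x$ in the domain, $\qA$ outputs some $x'\in f_{\ek}^{-1}(y)$ with probability $\varepsilon$. Build a distinguisher $\qB$ that, on input an evaluation key $\ek$, samples $x$ uniformly, computes $y=f_{\ek}(x)$, runs $x'\gets\qA(\ek,y)$, and outputs $1$ iff $x'=x$ exactly. In injective mode $f_{\ek}^{-1}(y)=\{x\}$, so whenever $\qA$ succeeds it outputs exactly $x$, giving $\Pr[\qB\text{ outputs }1\mid\text{injective}]\ge\varepsilon-\negl(\secp)$. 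In lossy mode, averaging over the random $x$ shows that the average min-entropy of $x$ given $(\ek,y)$ is at least $k$ (since $y$ lands in an image of size at most $|\text{domain}|/2^{k}$), hence no predictor — quantum or not — can recover the exact $x$ from $(\ek,y)$ with probability more than $2^{-k}=\negl(\secp)$, so $\Pr[\qB\text{ outputs }1\mid\text{lossy}]\le\negl(\secp)$. Thus $\qB$ distinguishes the two modes with advantage $\varepsilon-\negl(\secp)$, contradicting the QLWE assumption of \cref{def:LWE}. The reduction makes only black-box use of $\qA$ and involves no rewinding or measurement subtleties, since the sole computational hardness step is QLWE while injectivity and the lossy-mode entropy bound are information-theoretic; it therefore goes through verbatim against quantum adversaries.

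The main obstacle I expect is parameter selection for the underlying LWE-based LTF, not anything conceptual: one must choose the dimensions, modulus $q$, and error distribution $\chi$ so that (i) the injective mode is genuinely injective with overwhelming probability — which, for the matrix/rounding construction, amounts to the relevant lattice having minimum distance large compared with the rounding/error magnitude — (ii) the lossy mode loses $\omega(\log\secp)$ bits of the input, and (iii) mode indistinguishability reduces to $\LWE_{n,m,q,\chi}$ with only a polynomial modulus-to-noise ratio, so that it is covered by the standard (polynomial-modulus) QLWE assumption used throughout this paper. This is precisely where the ``adapted'' qualifier bites: I would invoke the learning-with-rounding refinement of Alwen--Krenn--Pietrzak--Wichs to keep the modulus polynomial while preserving enough lossiness, citing their parameter regime rather than re-deriving the tail bounds. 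Finally, I would note that the family so obtained has public, efficiently sampleable evaluation keys, which is exactly the input consumed by \cref{thm:naor_bit_com} to produce the injective bit-commitment with setup asserted in \cref{thm:injective_com_lwe}.
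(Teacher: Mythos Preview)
The paper does not actually prove this theorem: it is stated as a citation (``Adapted'' from Peikert--Waters and Alwen--Krenn--Pietrzak--Wichs) with no proof, followed only by a remark explaining why the evaluation-key-generation caveat is needed and why it suffices for \cref{thm:injective_com_lwe}. Your proposal is therefore not competing with a proof in the paper but rather unpacking what the cited references provide; the route you sketch --- LWE-based lossy trapdoor functions, discard the trapdoor, prove one-wayness by reducing an inverter to a mode distinguisher via the exact-preimage test, and invoke the LWR refinement to keep the modulus polynomial --- is precisely the content behind those citations and is correct. In short: there is nothing to compare against, and your expansion faithfully reconstructs the intended argument.
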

\begin{remark}
The injective OWFs achieved in~\cref{thm:lossy_lwe} needs evaluation key generation algorithms unlike the standard definition of OWFs. However, OWFs with evaluation key generation algorithms are sufficient for proving~\cref{thm:injective_com_lwe} by using~\cref{thm:naor_bit_com} since we use \emph{commitment key generation algorithm $\Gen$} (i.e., setup) in~\cref{def:injective_com}. Note that there is no post-quantum secure injective OWF \emph{without evaluation key generation algorithm} so far.
\end{remark}

\subsection{PE Scheme Description}\label{sec:pe_scheme_description}

We review the puncturable encryption scheme by Cohen et al.~\cite{SIAMCOMP:CHNVW18}.
We can see \cref{thm:CHNVW_PE_strong_ct_pseudorandom} holds by inspecting their PE scheme.
The scheme utilizes the following ingredients and the length $n$ of ciphertexts is $12$ times the length $\ell$ of plaintexts:
\begin{itemize}
\setlength{\parskip}{0.4mm}
\setlength{\itemsep}{0.4mm}
\item A length-doubling $\PRG:\zo{\ell} \ra \zo{2\ell}$
\item An injective PPRFs (See~\cref{def:injective_PPRF}) $F : \zo{3\ell} \ra \zo{9\ell}$.
\item A PPRF $G : \zo{9\ell} \ra \zo{\ell}$.
\item An injective bit-commitment with setup $(\Commit.\Gen,\Commit)$ using randomness in $\zo{9 \ell}$.  We only use this in our security proof.
\end{itemize}

\paragraph{Scheme.}
The scheme $\PE$ by Cohen et al.~\cite{SIAMCOMP:CHNVW18} is as follows.

\begin{description}
\item[$\Gen(1^\secp$):] Sample functions $F$ and $G$, generates $\peek$ as the obfuscated circuit $\iO(E)$ where $E$ is described in~\cref{fig:pe_enc_E}, and returns $(\peek, \pedk) \seteq (\iO(E),D)$, where $\pedk$ is the (un-obfuscated) program $D$ in \cref{fig:pe_dec_D}.

\item[$\Puncture(\pedk, c^\ast)$:] Output $\pedk_{\ne c^\ast}$, where $\pedk_{\ne c^\ast}$ is the obfuscated circuits $\iO(D_{\ne c^\ast})$ where $D_{\ne c^\ast}$ is  described in \cref{fig:pe_dec_D_punc}, that is, $\pedk_{\ne c^\ast} \seteq \iO(D_{\ne c^\ast})$.

\item[$\Enc(\peek, m)$:] Take $m \in \zo{\ell}$, sample $s\gets \zo{\ell}$, and outputs $c \gets \peek(m,s)$.

\item[$\Dec(\pedk, c)$:] Take $c \in \zo{12 \ell}$ and returns $m \seteq \pedk(c)$.
\end{description}
The size of the circuits is appropriately padded to be the maximum size of all modified circuits, which will appear in the security proof.


\protocol
{Circuit $E[F,G]$}
{Description of encryption circuit $E$}
{fig:pe_enc_E}
{
\begin{description}
\item[Constants]: Injective PPRF $F: \zo{3 \ell} \rightarrow \zo{9 \ell}$, PPRF $G : \zo{9 \ell} \rightarrow \zo{\ell}$
\item[Inputs:] $m \in \zo{\ell}, s \in \zo{\ell}$
\end{description}
\begin{enumerate}
\item Compute $\alpha = \PRG(s)$.
\item Compute $\beta = F(\alpha \| m)$.
\item Compute $\gamma = G(\beta) \xor m$.
\item Output $(\alpha, \beta, \gamma)$.
\end{enumerate}
}

\protocol
{Circuit $D[F,G]$}
{Description of decryption circuit $D$}
{fig:pe_dec_D}
{
\begin{description}
\item[Constants]: Injective PPRF $F: \zo{3 \ell} \rightarrow \zo{9 \ell}$, PPRF $G : \zo{9 \ell} \rightarrow \zo\ell$
\item[Inputs:] $c = (\alpha\|\beta\|\gamma)$, where $\alpha \in \zo{2 \ell}$, $\beta \in \zo{9 \ell}$, and $\gamma \in \zo\ell$.
\end{description}
\begin{enumerate}
\item Compute $m = G(\beta) \xor \gamma$.
\item If $\beta = F(\alpha \| m)$, output $m$.
\item Else output $\bot$.
\end{enumerate}
}

\protocol
{Circuit $D_{\ne c^\ast}[F,G,c^\ast]$}
{Description of punctured decryption circuit $D_{\ne c^\ast}$ at $c^\ast$}
{fig:pe_dec_D_punc}
{
\begin{description}
\item[Constants]: Point $c^\ast \in \zo{12\ell}$, injective PPRF $F: \zo{3 \ell} \rightarrow \zo{9 \ell}$, and PPRF $G : \zo{9 \ell} \rightarrow \zo\ell$
\item[Inputs:] $c = (\alpha\|\beta\|\gamma)$, where $\alpha \in \zo{2 \ell}$, $\beta \in \zo{9 \ell}$, and $\gamma \in \zo\ell$.
\end{description}
\begin{enumerate}
\item If $c = c^\ast$, output $\bot$.
\item Compute $m = G(\beta) \xor \gamma$.
\item If $\beta = F(\alpha \| m)$, output $m$.
\item Else output $\bot$.
\end{enumerate}
}

\subsection{PE Security Proof}\label{sec:pe_security_proof}

Cohen et al.~\cite{SIAMCOMP:CHNVW18} proved correctness, punctured correctness, and sparseness of $\PE$ above by using secure PRG $\PRG$, secure injective PPRF $F$, secure PPRF $G$, and secure IO $\iO$. Thus, we complete the proof of~\cref{thm:CHNVW_PE_strong_ct_pseudorandom} by combining~\cref{thm:statistical_injective_PPRF,thm:injective_com_lwe}, and~\cref{thm:post-quantum_PE_strong} below, which we prove in this section.

\begin{theorem}\label{thm:post-quantum_PE_strong}
If $\PRG$ is a secure PRG, $F$ is a secure injective PPRF, $G$ is a secure PPRF, $\Commit$ is a secure injective bit-commitment with setup, and $\iO$ is a secure IO, then $\PE$ is a secure PE that satisifes strong ciphertext pseudorandomness.
\end{theorem}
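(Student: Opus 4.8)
The plan is to follow the hybrid structure of Cohen et al.'s proof of (ordinary) ciphertext pseudorandomness, observing at each step that the single puncturing point used in the strong game suffices, and then to close a loop between the $\coin=0$ and $\coin=1$ distributions. Note first that correctness, punctured correctness, and sparseness of $\PE$ are exactly the properties already established by Cohen et al. and are insensitive to how we puncture, so the entire content is strong ciphertext pseudorandomness. I would start from the $\coin=0$ distribution $(c^*, \peek, \pedk_{\ne c^*})$ with $c^* = (\alpha^*, \beta^*, \gamma^*)$, $\alpha^* = \PRG(s^*)$, $\beta^* = F(\alpha^*\|m^*)$, $\gamma^* = G(\beta^*)\oplus m^*$, $\peek = \iO(E[F,G])$, $\pedk_{\ne c^*} = \iO(D_{\ne c^*}[F,G,c^*])$, and transform it as follows: (i) switch $\alpha^*$ to uniform over $\zo{2\ell}$ by PRG security, after which $\alpha^*\notin\mathrm{Im}(\PRG)$ except with probability $2^{-\ell}$; (ii) replace $F$ by $F_{\ne\alpha^*\|m^*}$ inside the obfuscated $E$ and $D_{\ne c^*}$ — functionally equivalent because $E$ only evaluates $F$ on inputs $\PRG(s)\|m$ and the only input on which $D_{\ne c^*}$ would need $F(\alpha^*\|m^*)$ is $c^\circ := (\alpha^*, F(\alpha^*\|m^*), G(F(\alpha^*\|m^*))\oplus m^*)$, which at this stage equals the excluded point $c^*$ — via IO security; (iii) replace $\beta^* = F(\alpha^*\|m^*)$ by a uniform value by pseudorandomness of $F$ at the punctured point, after which $\beta^*\notin\mathrm{Im}(F)$ except with probability $2^{-6\ell}$; (iv) replace $G$ by $G_{\ne\beta^*}$ in the obfuscated $E$ and $D_{\ne c^*}$ — functionally equivalent because $E$ only queries $G$ on $F$-images and $D_{\ne c^*}$ returns $\bot$ on any ciphertext whose middle block is $\beta^*\notin\mathrm{Im}(F)$ — via IO security; (v) replace $\gamma^* = G(\beta^*)\oplus m^*$ by a uniform value by pseudorandomness of $G$ at the punctured point. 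After this chain $c^* = (\alpha^*,\beta^*,\gamma^*)$ is uniform over $\zo{12\ell}$ and the two obfuscated circuits depend on $m^*$ only through the puncturing points.

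Symmetrically, I would start from the $\coin=1$ distribution $(r^*, \peek, \pedk_{\ne r^*})$ with $r^*\gets\zo{12\ell}$, write $r^* = (\alpha^*,\beta^*,\gamma^*)$, and run the analogues of steps (ii) and (iv) to puncture $F$ at $\alpha^*\|m^*$ and $G$ at $\beta^*$ inside the obfuscated $E$ and $D_{\ne r^*}$. Since $\alpha^*,\beta^*$ are now uniform we again have $\alpha^*\notin\mathrm{Im}(\PRG)$ and $\beta^*\notin\mathrm{Im}(F)$ except with negligible probability, and $F(\alpha^*\|m^*)\neq\beta^*$ except with probability $2^{-9\ell}$; combining these statistical facts with IO security and the commitment-based ``bad-point'' hybrids lets me conclude that $\iO(D_{\ne r^*}[F,G,r^*])$ is indistinguishable from $\iO(D_{\ne r^*}[F_{\ne\alpha^*\|m^*}, G_{\ne\beta^*}, r^*])$ even though the underlying circuits disagree on the single point $c^\circ$, whose middle block $F(\alpha^*\|m^*)$ is hidden by the obfuscation of $F$. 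After these steps the $\coin=1$ distribution coincides with the endpoint of the $\coin=0$ chain (with $c^*$ renamed to the uniform $r^*$), which closes the loop and gives $\advb{\qA}{s}{cpr}(\secp)=\negl(\secp)$.

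The main obstacle — the place where the argument genuinely exceeds a routine chain of IO/PPRF switches — is exactly the ``bad point'' $c^\circ$: it is a valid ciphertext of $m^*$ that honest encryption never produces (since $\alpha^*\notin\mathrm{Im}(\PRG)$), so the decryption circuit with full $F$ outputs $m^*$ on it while the circuit with $F$ punctured at $\alpha^*\|m^*$ does not. Handling this requires the extra Cohen et al. hybrids in which $c^\circ$ is first hard-wired into $D$ (a no-op under IO since the full-$F$ circuit already outputs $m^*$ there), and where the statistically-binding, computationally-hiding injective bit-commitment is placed inside the circuit so that swapping the behavior on $c^\circ$ to $\bot$, or hard-wiring $c^\circ$ without leaking $F(\alpha^*\|m^*)$, can be justified by its hiding property while binding guarantees functional correctness; the scheme's generous parameters ($n=12\ell$, $F$ injective from $3\ell$ to $9\ell$ bits) are chosen precisely so that all the statistical error terms are $2^{-\Omega(\ell)}$. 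I expect the bulk of the write-up to be the bookkeeping: verifying for each pair of (possibly punctured, possibly $c^\circ$-patched) circuits $E$, $D$, $D_{\ne c^*}$ that they are padded to a common size and agree on every input conditioned on the stated events.
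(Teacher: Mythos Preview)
Your proposal is correct and follows essentially the same route as the paper: the forward chain $\sfreal\to\hybi{5}$ via PRG, IO/puncture-$F$, PPRF-$F$, IO/puncture-$G$, PPRF-$G$, and the backward chain from $\sfrand$ that first hard-wires the ``bad point'' $c^\circ=(\alpha^*,F(\alpha^*\|m^*),G(F(\alpha^*\|m^*))\oplus m^*)$, punctures $F$, randomizes $\hat\beta$, and then uses the injective bit-commitment (replace ``$\beta=\hat\beta$'' by ``$\Commit_\ck(0;\beta)=\hat z$'', swap to $\Commit_\ck(1;\hat\beta)$ by hiding, collapse to \textsc{False} by binding) to excise the exception before puncturing $G$. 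The only point where your sketch is looser than the paper is that each IO step actually requires an explicit circuit modification (e.g.\ inserting ``if $(\alpha,m)=(\alpha^*,m^*)$ output $\bot$'' or ``if $\beta=\beta^*$ output $\bot$'') rather than merely swapping $F$ for $F_{\ne\alpha^*\|m^*}$ or $G$ for $G_{\ne\beta^*}$; make sure your write-up includes those patched circuits so that functional equivalence genuinely holds.
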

\begin{proof}[Proof of~\cref{thm:post-quantum_PE_strong}]
To prove $x_0 \seteq c^\ast \gets \Enc(\peek,m^\ast)$ is indistinguishable from $x_1 \seteq r^\ast \chosen \zo{\ell}$, we define a sequence hybrid games.
\begin{description}
\item[$\sfreal$:] This is the same as the real game with $b=0$. That is, for queried $m^\ast$ the challenger does the following.
\begin{enumerate}
\item Choose an injective PPRF $F: \zo{3\ell} \ra \zo{9\ell}$ and PPRF $G: \zo{9\ell} \ra \zo{\ell}$.
\item Choose $s\chosen \zo{\ell}$ and compute $\alpha_0 \seteq \PRG(s)$, $\beta_0 \seteq F(\alpha_0\concat m^\ast)$, and $\gamma_0 \seteq G(\beta_0) \xor m^\ast$.
\item Set $x_0 \seteq \alpha_0\concat \beta_0\concat \gamma_0$ and computes $\peek \seteq \iO(E)$ and $\pedk_{\ne x_0} \seteq \iO(D_{\ne x_0})$.
\item Send $(x_0,\peek,\pedk_{\ne x_0})$ to the adversary.
\end{enumerate}
\item[$\hybi{1}$:] This is the same as $\hybi{0}(0)$ except that $\alpha_0$ is uniformly random.
\item[$\hybi{2}$:] This is the same as $\hybi{1}$ except that we use punctured $F_{\ne \alpha_0\concat m^\ast}$ and modified circuits $E_{\ne \alpha_0\concat m^\ast}$ and $D_{\ne \alpha_0\concat m^\ast}^2$ described in~\cref{fig:pe_enc_E2_punc,fig:pe_dec_D2_punc}. Intuitively, these modified circuits are punctured at input $\alpha_0\concat m^\ast$ and use exceptional handling for this input.
\item[$\hybi{3}$:] This is the same as $\hybi{2}$ except that $\beta_0 \chosen \zo{9\ell}$.
\item[$\hybi{4}$:] This is the same as $\hybi{3}$ except that we use punctured $G_{\ne \beta_0}$ and modified circuits $E_{\ne \alpha_0\concat m^\ast,\ne \beta_0}$ and $D_{\ne \alpha_0 \concat m^\ast,\ne\beta_0}^4$ described in~\cref{fig:pe_enc_E4_punc,fig:pe_dec_D4_punc}.
Intuitively, these modified circuits are punctured at input $\beta_0$ and use $F_{\ne \alpha_0\concat m^\ast}$ and exceptional handling for $\beta_0$.
\item[$\hybi{5}=\sfrand_2$:] This is the same as $\hybi{4}$ except that $\gamma_0$ is uniformly random. Now, $\alpha_0$, $\beta_0$, $\gamma_0$ are uniformly random and we rewrite them into $\alpha_1$, $\beta_1$, $\gamma_1$, respectively. For ease of notation, we also denote this game by $\sfrand_2$.
\item[$\sfrand_{1}$:] This is the same as $\hybi{5}=\sfrand_2$ except that we use un-punctured $G$, circuit $E_{\ne \alpha_0\concat m^\ast,\ne\beta_0}$ reverts to $E_{\ne \alpha_1\concat m^\ast}$ described in~\cref{fig:pe_enc_E2_punc}, and we change circuit $D_{\ne \alpha_1 \concat m^\ast,\ne \beta_1}^4$ into $D_{\ne \alpha_1 \concat m^\ast}^{\mathsf{r}}$ described in~\cref{fig:pe_dec_Dr_punc}.
\item[$\sfrand$:] This is the same as the real game with $b=1$. That is, for queried $m^\ast$ the challenger does the following.
\begin{enumerate}
\item Choose an injective PPRF $F: \zo{3\ell} \ra \zo{9\ell}$ and PPRF $G: \zo{9\ell} \ra \zo{\ell}$.
\item Choose $\alpha_1 \chosen \zo{2\ell}$, $\beta_1\chosen \zo{9\ell}$, and $\gamma_1\chosen \zo{\ell}$.
\item Set $x_1 \seteq \alpha_1\concat \beta_1\concat \gamma_1$ and computes $\peek \seteq \iO(E)$ and $\pedk_{\ne x_1} \seteq \iO(D_{\ne x_1})$.
\item Send $(x_1,\peek,\pedk_{\ne x_1})$ to the adversary.
\end{enumerate}
\end{description}
We described the overview of these hybrid games in~\cref{fig:pe_strong_ct_pseudorandom}.
\begin{figure}[ht]
    \centering
    \begin{tabular}{l lllll}\toprule
          & $\alpha^\ast$ & $\beta^\ast$ & $\gamma^\ast$ & $\peek\seteq \iO(\cdot)$  & $\pedk\seteq \iO(\cdot)$ \\\midrule
$\sfreal$ & $\PRG(s)$ &  $F(\alpha_0\concat m^\ast)$ & $G(\beta_0)\xor m^\ast$ & $E$ & $D_{\ne x_0}$ \\
$\hybi{1}$ & \redline{$\$$} &  $F(\alpha_0\concat m^\ast)$ & $G(\beta_0)\xor m^\ast$ & $E$ & $D_{\ne x_0}$ \\
$\hybi{2}$ & $\$$ &  $F(\alpha_0\concat m^\ast)$ & $G(\beta_0)\xor m^\ast$ & \redline{$E_{\ne \alpha_0\concat m^\ast}$} & \redline{$D_{\ne \alpha_0\concat m^\ast}^2[F_{\ne \alpha_0\concat m^\ast}]$} \\
$\hybi{3}$ & $\$$ &  \redline{$\$$} & $G(\beta_0)\xor m^\ast$ & $E_{\ne \alpha_0\concat m^\ast}$ & $D_{\ne \alpha_0\concat m^\ast}^2[F_{\ne \alpha_0\concat m^\ast}]$ \\
$\hybi{4}$ & $\$$ &  $\$$ & $G(\beta_0)\xor m^\ast$ & \redline{$E_{\ne \alpha_0\concat m^\ast, \ne \beta_0}$} & \redline{$D_{\ne \alpha_0\concat m^\ast, \ne \beta_0}^4[F_{\ne \alpha_0\concat m^\ast},G_{\ne \beta_0}]$} \\
$\hybi{5}$ & $\$$ &  $\$$ & \redline{$\$$} & $E_{\ne \alpha_1\concat m^\ast, \ne \beta_1}$ & $D_{\ne \alpha_1\concat m^\ast, \ne \beta_1}^4[F_{\ne \alpha_1\concat m^\ast},G_{\ne \beta_1}]$ \\
$\sfrand_1$ & $\$$ &  $\$$ & $\$$ &  \redline{$E_{\ne \alpha_1\concat m^\ast}$} & \redline{$D_{\ne \alpha_1\concat m^\ast}^{\mathsf{r}}[F_{\ne \alpha_1\concat m^\ast}]$} \\
$\sfrand$ & $\$$ &  $\$$ & $\$$ & \redline{$E$} & \redline{$D_{\ne x_1}$} \\\bottomrule
    \end{tabular}
\caption{High-level overview of hybrid games from $\sfreal$ to $\sfrand$. Recall that $\hybi{5}=\sfrand_2$.
Transitions from $\sfrand_2$ to $\sfrand$ are baiscally the reverse transitions from $\hybi{0}$ to $\hybi{4}$, but there are subtle differences.}
\label{fig:pe_strong_ct_pseudorandom}
\end{figure}
If we prove these hybrid games are indistinguishable, we complete the proof of~\cref{thm:post-quantum_PE_strong}.
\end{proof}
We prove that those hybrid games in~\cref{fig:pe_strong_ct_pseudorandom} are indistinguishable by~\cref{lem:pe_zero_one,lem:pe_one_two,lem:pe_two_three,lem:pe_three_four,lem:pe_four_five,lem:pe_rand_zero_one,lem:pe_rand_one_two}.

\paragraph{From $\sfreal$ to $\hybi{5}$.} We first move from $\sfreal$ to $\hybi{5}$.
\begin{lemma}\label{lem:pe_zero_one}
If $\PRG$ is a secure PRG, it holds that $\abs{\Pr[\hybi{0}(0)=1] - \Pr[\hybi{1}=1]} \le \negl(\secp)$.
\end{lemma}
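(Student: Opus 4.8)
\textbf{Proof plan for Lemma~\ref{lem:pe_zero_one}.}
The plan is to reduce the indistinguishability of $\sfreal$ and $\hybi{1}$ directly to the pseudorandomness of $\PRG$. The only difference between the two games is that in $\sfreal$ the first component $\alpha_0$ of the challenge ciphertext $x_0$ is computed as $\PRG(s)$ for a fresh seed $s\chosen\zo{\ell}$, whereas in $\hybi{1}$ it is drawn uniformly from $\zo{2\ell}$. Crucially, in both games the seed $s$ is used \emph{only} to produce $\alpha_0$: the values $\beta_0 = F(\alpha_0\concat m^\ast)$, $\gamma_0 = G(\beta_0)\xor m^\ast$, the public key $\peek = \iO(E)$, and the punctured decryption key $\pedk_{\ne x_0} = \iO(D_{\ne x_0})$ all depend on $\alpha_0$ but not on $s$ directly. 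Hence a PRG distinguisher can simulate everything else on its own.

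Concretely, I would construct a QPT adversary $\qB$ against $\PRG$ as follows. On input a challenge string $t\in\zo{2\ell}$ (which is either $\PRG(s)$ for uniform $s$ or uniform), $\qB$ runs $\qA$ and receives the challenge query $m^\ast$. It samples an injective PPRF $F:\zo{3\ell}\ra\zo{9\ell}$ and a PPRF $G:\zo{9\ell}\ra\zo{\ell}$, sets $\alpha_0 \seteq t$, $\beta_0 \seteq F(\alpha_0\concat m^\ast)$, $\gamma_0 \seteq G(\beta_0)\xor m^\ast$, and $x_0 \seteq \alpha_0\concat\beta_0\concat\gamma_0$. It then computes $\peek\seteq\iO(E[F,G])$ and $\pedk_{\ne x_0}\seteq\iO(D_{\ne x_0}[F,G,x_0])$ and hands $(x_0,\peek,\pedk_{\ne x_0})$ to $\qA$. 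When $\qA$ outputs a bit $\coin^\ast$, $\qB$ outputs $\coin^\ast$. When $t=\PRG(s)$, $\qB$ perfectly simulates $\sfreal$; when $t$ is uniform, it perfectly simulates $\hybi{1}$. Therefore $\abs{\Pr[\hybi{0}(0)=1]-\Pr[\hybi{1}=1]} = \advt{\qB}{prg}(\secp) \leq \negl(\secp)$ by the security of $\PRG$ (Definition~\ref{def:prg}).

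This step is essentially routine; there is no real obstacle. The only minor point worth checking is that $\qB$ runs in quantum polynomial time and makes no oracle queries beyond what it can do itself, which is immediate since $F$, $G$, and $\iO$ are all efficiently sampleable/computable and $\qA$ is QPT. The genuinely delicate hybrid steps come later (the ones involving puncturing the obfuscated circuits at $\alpha_0\concat m^\ast$ and $\beta_0$ and invoking IO security, i.e.\ the transitions to $\hybi{2}$ and $\hybi{4}$, where one must argue functional equivalence of the modified circuits), but for the present lemma nothing beyond a one-line reduction is needed.
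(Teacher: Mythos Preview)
Your proposal is correct and matches the paper's approach exactly: the paper's proof is a one-liner observing that the seed $s$ is used nowhere except in $\alpha_0 = \PRG(s)$, so PRG security applies directly. Your explicit reduction $\qB$ simply spells out what the paper leaves implicit.
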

\begin{proof}[Proof of~\cref{lem:pe_zero_one}]
The randomness $s$ for encryption is never used anywhere except $\alpha_0 \seteq \PRG(s)$. 
We can apply the PRG security and immediately obtain the lemma.
\end{proof}

\begin{lemma}\label{lem:pe_one_two}
If $\iO$ is a secure IO and $F$ is a secure injective PPRF, it holds that
\ifnum\submission=1
$\abs{\Pr[\hybi{1}=1] - \Pr[\hybi{2}=1]} \le \negl(\secp)$.
\else
\[\abs{\Pr[\hybi{1}=1] - \Pr[\hybi{2}=1]} \le \negl(\secp).\]
\fi
\end{lemma}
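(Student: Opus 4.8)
The plan is to follow the standard ``puncturing plus $\iO$'' template of Sahai--Waters. First I would condition on the good event that $\alpha_0$, which is now sampled uniformly from $\zo{2\ell}$ in $\hybi{1}$, lies outside the image $\PRG(\zo{\ell})$; since $\abs{\PRG(\zo{\ell})}\le 2^\ell$, this fails with probability at most $2^{-\ell}=\negl(\secp)$, and all functional-equivalence claims below are made conditioned on this event. The high-level point is that once $\alpha_0$ has no $\PRG$-preimage, neither the honest encryption circuit nor the honest (punctured) decryption circuit ever needs to evaluate $F$ at the point $\alpha_0\concat m^\ast$, so replacing $F$ inside those circuits by the punctured key $F_{\ne\alpha_0\concat m^\ast}$ (with appropriate hard-coded exceptional handling) changes nothing about their input/output behavior.

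Concretely, I would insert two intermediate sub-hybrids between $\hybi{1}$ and $\hybi{2}$, switching the two obfuscated programs one at a time and invoking $\iO$ security for each. In the first, I replace $\peek=\iO(E)$ by $\iO(E_{\ne\alpha_0\concat m^\ast})$: the circuit $E$ evaluates $F$ only at points $\PRG(s)\concat m$, and since $\PRG(s)\ne\alpha_0$ under the good event, the punctured point is never queried, so $E$ and $E_{\ne\alpha_0\concat m^\ast}$ (which uses $F_{\ne\alpha_0\concat m^\ast}$ and hard-coded handling that is never triggered) compute the same function, and $\iO$ security applies. In the second, I replace $\pedk_{\ne x_0}=\iO(D_{\ne x_0})$ by $\iO(D^2_{\ne\alpha_0\concat m^\ast})$: on input $c=\alpha\concat\beta\concat\gamma$ the circuit $D_{\ne x_0}$ computes $m\seteq G(\beta)\xor\gamma$ and evaluates $F$ at $\alpha_0\concat m^\ast$ only when $\alpha=\alpha_0$ and $m=m^\ast$, in which case a non-$\bot$ output requires $\beta=F(\alpha_0\concat m^\ast)=\beta_0$, which forces $\gamma=G(\beta_0)\xor m^\ast=\gamma_0$, i.e.\ $c=x_0$ — on which $D_{\ne x_0}$ already returns $\bot$. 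Hence $D^2_{\ne\alpha_0\concat m^\ast}$, which short-circuits on $c=x_0$, hard-codes $\beta_0$ for the comparison when $(\alpha,m)=(\alpha_0,m^\ast)$, and otherwise uses $F_{\ne\alpha_0\concat m^\ast}$, is functionally equivalent to $D_{\ne x_0}$; here the injectivity of $F$ is exactly what pins $\beta_0$ down as the unique $F$-value consistent with the punctured input. A final application of $\iO$ security, together with the sparseness bound on $\PRG$, sums to the claimed negligible gap. (The pseudorandomness-at-punctured-point property of $F$ is not needed here; it is reserved for the adjacent transition $\hybi{2}\to\hybi{3}$ where $\beta_0$ is randomized.)

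The main obstacle is purely the bookkeeping in the functional-equivalence argument for the decryption circuits: one must check that $D^2_{\ne\alpha_0\concat m^\ast}$ matches $D_{\ne x_0}$ on \emph{every} input, including the corner case $c=c^\ast=x_0$ and all inputs on which decryption aborts with $\bot$, and one must confirm that after padding all circuit variants to a common size the $\iO$ indistinguishability hypothesis is literally applicable. I expect no genuinely new difficulty beyond this care with cases; the quantum setting is handled automatically, since the $\iO$ indistinguishability notion in the preliminaries already quantifies over QPT distinguishers and injectivity of $F$ is a statistical property independent of the adversary's computational power.
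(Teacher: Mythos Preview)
Your proposal is correct and follows essentially the same Sahai--Waters puncturing-plus-$\iO$ template as the paper. The only minor difference is granularity: the paper inserts \emph{two} intermediate sub-hybrids (three $\iO$ applications), first swapping $E$ for $E_{\ne\alpha_0\concat m^\ast}$, then modifying the structure of the decryption circuit to $D^2_{\ne\alpha_0\concat m^\ast}$ while still using the un-punctured $F$, and finally replacing $F$ by $F_{\ne\alpha_0\concat m^\ast}$ inside $D^2$; you collapse the last two of these into a single $\iO$ step. Since the paper's argument already establishes that all three decryption circuits are functionally equivalent, your compressed version is valid, and the functional-equivalence reasoning you give (using injectivity of $F$ to pin down $\beta_0$ and force $c=x_0$) matches the paper's Proposition for $\hybij{1}{1}\approx\hybij{1}{2}$ exactly. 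Also note the paper's $D^2$ simply outputs $\bot$ on the exceptional branch $(\alpha,m)=(\alpha_0,m^\ast)$ rather than hard-coding $\beta_0$ for comparison, but as your own case analysis shows these two variants coincide after the $c=x_0$ short-circuit.
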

\begin{proof}[Proof of~\cref{lem:pe_one_two}]
We change $E$ and $D_{\ne x_0}$ into $E_{\ne \alpha_0\concat m^\ast}$ and $D_{\ne \alpha_0\concat m^\ast}^2$, respectively.
\protocol
{Circuit $E_{\ne \alpha^\ast\concat m^\ast}[F^\prime,G]$}
{Description of encryption circuit $E_{\ne \alpha^\ast\concat m^\ast}$}
{fig:pe_enc_E2_punc}
{
\begin{description}
\item[Constants]: Injective PPRF \redline{$F^\prime$}, PPRF $G$
\item[Inputs:] $m \in \zo{\ell}, s \in \zo{\ell}$
\end{description}
\begin{enumerate}
\item Compute $\alpha = \PRG(s)$.
\item Compute $\beta = F^\prime(\alpha \| m)$.
\item Compute $\gamma = G(\beta) \xor m$.
\item Output $(\alpha, \beta, \gamma)$.
\end{enumerate}
}

\protocol
{Circuit $D_{\ne \alpha^\ast\concat m^\ast}^2[F^\prime,G,\alpha^\ast,\beta^\ast,\gamma^\ast,m^\ast]$}
{Description of punctured decryption circuit $D_{\ne \alpha^\ast\concat m^\ast}^2$}
{fig:pe_dec_D2_punc}
{
\begin{description}
\item[Constants]: Point $x^\ast = \alpha^\ast\concat \beta^\ast \concat \gamma^\ast \in \zo{12\ell}$, injective PPRF \redline{$F^\prime$}, PPRF $G$, \redline{$m^\ast$}.
\item[Inputs:] $c = (\alpha\|\beta\|\gamma)$, where $\alpha \in \zo{2 \ell}$, $\beta \in \zo{9 \ell}$, and $\gamma \in \zo\ell$.
\end{description}
\begin{enumerate}
\item If $c = x^\ast$, output $\bot$.
\item Compute $m = G(\beta) \xor \gamma$.
\item \redline{If $(\alpha,m) = (\alpha^\ast,m^\ast)$, output $\bot$.}
\item If $\beta = \redline{F^\prime}(\alpha \| m)$, output $m$.
\item Else output $\bot$.
\end{enumerate}
}
We define a sequence of sub-hybrid games.
\begin{description}
\item[$\hybij{1}{1}$:] This is the same as $\hybi{1}$ except that we generate $F_{\ne \alpha_0\concat m^\ast}$ and set $F^\prime \seteq F_{\ne \alpha_0\concat m^\ast}$ and $\peek \seteq \iO(E_{\ne \alpha_0\concat m^\ast})$ described in~\cref{fig:pe_enc_E2_punc}.
\item[$\hybij{1}{2}$:] This is the same as $\hybij{1}{1}$ except that we set $\pedk_{\ne x_0} \seteq \iO(D_{\ne \alpha_0\concat m^\ast}^2[F,G,\alpha_0,\beta_0,\gamma_0,m^\ast])$ described in~\cref{fig:pe_dec_D2_punc}. That is, we still use $F$, but modify the circuit.
\end{description}

\begin{proposition}\label{prop:pe_sub_one_oneone}
If $\iO$ is a secure IO, it holds that $\abs{\Pr[\hybi{1}=1] - \Pr[\hybij{1}{1}=1]} \le \negl(\secp)$.
\end{proposition}
\begin{proof}[Proof of~\cref{prop:pe_sub_one_oneone}]
In these games, value $\alpha_0 \chosen \zo{2\ell}$ is not in the image of $\PRG$ except with negligible probability. The only difference between the two games is that $F_{\ne \alpha_0\concat m^\ast}$ is used in $\hybij{1}{1}$. Thus, $E$ and $E_{\ne \alpha_0\concat m^\ast}$ are functionally equivalent except with negligible probability. We can obtain the proposition by applying the IO security.
\end{proof}

\ifnum\submission=0
\begin{proposition}\label{prop:pe_sub_oneone_onetwo}
If $\iO$ is a secure IO and $F$ is injective, it holds that $\abs{\Pr[\hybij{1}{1}=1] - \Pr[\hybij{1}{2}=1]} \le \negl(\secp)$.
\end{proposition}
\else
\begin{proposition}\label{prop:pe_sub_oneone_onetwo}
$\abs{\Pr[\hybij{1}{1}=1] - \Pr[\hybij{1}{2}=1]} \le \negl(\secp)$ holds if $\iO$ is a secure IO and $F$ is injective.
\end{proposition}
\fi
\begin{proof}[Proof of~\cref{prop:pe_sub_oneone_onetwo}]
We analyze the case where $(\alpha,m) = (\alpha_0,m^\ast)$ since it is the only difference between $D_{\ne x_0}$ and $D_{\ne \alpha_0\concat m^\ast}^2$.
\begin{itemize}
\item If $c = x_0$, $D_{\ne \alpha_0\concat m^\ast}^2$ outputs $\bot$ by the first line of the description. Thus, the output of $D_{\ne \alpha_0\concat m^\ast}^2(x_0)$ is the same as that of $D_{\ne x_0}(x_0)$.
\item If $c \ne x_0$, it holds $(\beta_0,\gamma_0)\ne (\beta,\gamma)$ in this case. However, it should be $\beta_0 = \beta$ due to the injectivity of $F$ and $\beta_0 =F(\alpha_0\concat m^\ast)$. Thus, both $D_{\ne x_0}(c)$ and $D_{\ne \alpha_0\concat m^\ast}^2(c)$ output $\bot$ in this case ($D_{\ne x_0}(c)$ outputs $\bot$ at the first line).
\end{itemize}
 Therefore, $D_{\ne x_0}$ and $D_{\ne \alpha_0\concat m^\ast}^2$ are functionally equivalent. We can obtain the proposition by applying the IO security.
\end{proof}

\begin{proposition}\label{prop:pe_sub_onetwo_two}
If $\iO$ is a secure IO, it holds that $\abs{\Pr[\hybij{1}{2}=1] - \Pr[\hybi{2}=1]} \le \negl(\secp)$.
\end{proposition}
\begin{proof}[Proof of~\cref{prop:pe_sub_onetwo_two}]
Due to the exceptional handling in the third item of $D_{\ne \alpha_0\concat m^\ast}^2$, $F(\alpha\concat m)$ is never computed for input $(\alpha_0,m^\ast)$. Thus, even if we use $F_{\ne \alpha_0\concat m^\ast}$ instead of $F$, $D_{\ne \alpha_0\concat m^\ast}^2[F]$ and $D_{\ne \alpha_0\concat m^\ast}^2[F_{\ne \alpha_0\concat m^\ast}]$ are functionally equivalent. We can obtain the proposition by the IO security.
\end{proof}
We complete the proof of~\cref{lem:pe_one_two}.
\end{proof}

\begin{lemma}\label{lem:pe_two_three}
If $F$ is a secure injective PPRF, it holds that $\abs{\Pr[\hybi{2}=1] - \Pr[\hybi{3}=1]} \le \negl(\secp)$.
\end{lemma}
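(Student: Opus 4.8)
\textbf{Proof plan for \cref{lem:pe_two_three}.}
The goal is to show that replacing $\beta_0 = F(\alpha_0 \concat m^\ast)$ by a uniformly random $\beta_0 \chosen \zo{9\ell}$ is undetectable, given that all circuits involved ($\peek = \iO(E_{\ne \alpha_0\concat m^\ast})$ and $\pedk_{\ne x_0} = \iO(D^2_{\ne \alpha_0\concat m^\ast}[F_{\ne \alpha_0\concat m^\ast},\ldots])$) already use only the punctured key $F_{\ne \alpha_0\concat m^\ast}$. The plan is to construct a reduction $\qB$ against the pseudorandomness-at-punctured-point property of the injective PPRF $F$ (\cref{def:pprf}). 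First, I would observe that in both $\hybi{2}$ and $\hybi{3}$ the value $\alpha_0$ is sampled uniformly (this was already arranged in $\hybi{1}$) and is independent of everything else, so $\qB$ can sample $\alpha_0 \chosen \zo{2\ell}$ itself and submit the single point $S = \{\alpha_0 \concat m^\ast\}$ to its own challenger \emph{after} receiving $m^\ast$ from the adversary $\qA$.

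The reduction proceeds as follows: $\qB$ receives $m^\ast$ from $\qA$, samples $\alpha_0$, queries its PPRF challenger on $S = \{\alpha_0\concat m^\ast\}$ to obtain a punctured key $F_{\ne \alpha_0\concat m^\ast}$ together with a challenge value $\beta_0^\dagger$, which is either $F(\alpha_0\concat m^\ast)$ (real) or uniformly random. Then $\qB$ samples $G$ itself, sets $\beta_0 \seteq \beta_0^\dagger$, computes $\gamma_0 \seteq G(\beta_0)\xor m^\ast$, sets $x_0 \seteq \alpha_0\concat\beta_0\concat\gamma_0$, builds $\peek \seteq \iO(E_{\ne \alpha_0\concat m^\ast}[F_{\ne\alpha_0\concat m^\ast},G])$ and $\pedk_{\ne x_0} \seteq \iO(D^2_{\ne\alpha_0\concat m^\ast}[F_{\ne\alpha_0\concat m^\ast},G,\alpha_0,\beta_0,\gamma_0,m^\ast])$, and hands $(x_0,\peek,\pedk_{\ne x_0})$ to $\qA$. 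Finally $\qB$ echoes $\qA$'s output bit. When $\beta_0^\dagger = F(\alpha_0\concat m^\ast)$, $\qB$ perfectly simulates $\hybi{2}$; when $\beta_0^\dagger$ is random, it perfectly simulates $\hybi{3}$. Hence $\abs{\Pr[\hybi{2}=1]-\Pr[\hybi{3}=1]} \le \advt{\qB}{pprf}(\secp) = \negl(\secp)$.

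The key point that makes the reduction go through — and the step I would be most careful about — is that neither $E_{\ne\alpha_0\concat m^\ast}$ nor $D^2_{\ne\alpha_0\concat m^\ast}$ needs the un-punctured key $F$: the encryption circuit only evaluates $F$ on inputs other than $\alpha_0\concat m^\ast$ (since for an honest encryption of $m^\ast$ with randomness giving $\alpha = \alpha_0$, the circuit would need $F(\alpha_0\concat m^\ast)$, but $\alpha_0$ is outside the image of $\PRG$ except with negligible probability, exactly as argued in \cref{prop:pe_sub_one_oneone}), and the decryption circuit has the explicit exceptional branch that returns $\bot$ on $(\alpha,m)=(\alpha_0,m^\ast)$ before ever invoking $F$ (this was the purpose of the change in \cref{lem:pe_one_two}). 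So $\qB$ can genuinely run using only $F_{\ne\alpha_0\concat m^\ast}$. A minor technical wrinkle to flag: the negligible statistical gap from $\alpha_0$ possibly landing in the image of $\PRG$ is absorbed into the final bound, and the $2q$-wise / superposition-access subtleties do not arise here since the distinguisher $\qA$ is given only the classical tuple $(x_0,\peek,\pedk_{\ne x_0})$ and the reduction's use of the PPRF challenger is classical and non-adaptive (a single punctured point chosen up front). This completes the plan.
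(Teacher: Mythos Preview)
Your proposal is correct and follows exactly the same approach as the paper: the paper's proof is a single sentence noting that the only difference between the hybrids is whether $\beta_0 = F(\alpha_0\concat m^\ast)$ or is uniformly random, and that punctured pseudorandomness of $F$ applies because both $E_{\ne\alpha_0\concat m^\ast}$ and $D^2_{\ne\alpha_0\concat m^\ast}$ already use only $F_{\ne\alpha_0\concat m^\ast}$. You have simply spelled out the reduction $\qB$ that the paper leaves implicit; one minor over-caution is that the ``$\alpha_0$ outside the image of $\PRG$'' gap was already absorbed in the $\hybi{1}\to\hybi{2}$ transition and plays no role here.
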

\begin{proof}[Proof of~\cref{lem:pe_two_three}]
The difference between these two games is that $\beta_0$ is $F(\alpha_0\concat m^\ast)$ or random. We can immediately obtain the lemma by applying punctured pseudorandomness of $F$ since we use $F_{\ne \alpha_0 \concat m^\ast}$ in these games.
\end{proof}

\begin{lemma}\label{lem:pe_three_four}
If $\iO$ is a secure IO and $F$ is a secure injective PPRF, it holds that
\ifnum\submission=1
$\abs{\Pr[\hybi{3}=1] - \Pr[\hybi{4}=1]} \le \negl(\secp)$.
\else
\[\abs{\Pr[\hybi{3}=1] - \Pr[\hybi{4}=1]} \le \negl(\secp).\]
\fi
\end{lemma}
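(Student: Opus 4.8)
\textbf{Proof proposal for Lemma~\ref{lem:pe_three_four}.}

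The plan is to mirror the structure of the proof of Lemma~\ref{lem:pe_one_two}: the two games differ in that $\hybi{4}$ punctures $G$ at the point $\beta_0$ and switches the obfuscated encryption/decryption circuits to their $\ne\beta_0$ variants $E_{\ne \alpha_0\concat m^\ast,\ne \beta_0}$ and $D_{\ne \alpha_0\concat m^\ast,\ne \beta_0}^4$ (Figures~\ref{fig:pe_enc_E4_punc} and \ref{fig:pe_dec_D4_punc}), which carry $G_{\ne \beta_0}$ plus exceptional handling for $\beta_0$. I would interpolate through a short sequence of sub-hybrids, each step justified either by IO security (functional-equivalence of the circuits) or by the fact that the relevant point is already unreachable after a prior hard-coding step. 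Crucially, $\beta_0$ is now a uniformly random string in $\zo{9\ell}$ (this was arranged in $\hybi{3}$), so it is outside the image of the injective PPRF $F_{\ne\alpha_0\concat m^\ast}$ except with negligible probability; this is the key structural fact that lets us argue that no honest input $(\alpha,m)$ causes the decryption circuit to invoke $G(\beta_0)$.

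The sub-hybrid chain I would use is as follows. First, change $\peek$ from $\iO(E_{\ne \alpha_0\concat m^\ast})$ to $\iO(E_{\ne \alpha_0\concat m^\ast,\ne\beta_0})$: the only difference is that $G_{\ne\beta_0}$ replaces $G$, and since $E$ computes $\gamma = G(\beta)\xor m$ with $\beta = F_{\ne\alpha_0\concat m^\ast}(\alpha\|m)$, which is injective and misses $\beta_0$ with overwhelming probability (as $\beta_0$ is random), the two circuits are functionally equivalent except with negligible probability; apply IO security. Second, change $\pedk$ from $\iO(D_{\ne \alpha_0\concat m^\ast}^2[F_{\ne\alpha_0\concat m^\ast},G])$ to $\iO(D_{\ne \alpha_0\concat m^\ast}^4[F_{\ne\alpha_0\concat m^\ast},G,\ldots])$ while still keeping the \emph{un-punctured} $G$ hard-wired in the new circuit shape: here one checks that the added exceptional handling for $\beta = \beta_0$ never changes the output. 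The point is that if $\beta = \beta_0$ then, since $\beta_0$ is not in the (injective) image of $F_{\ne\alpha_0\concat m^\ast}$, the equality test $\beta = F_{\ne\alpha_0\concat m^\ast}(\alpha\|m)$ fails for every $(\alpha,m)$, so the original circuit already outputs $\bot$ on such inputs—exactly what the exceptional handling forces. (One must also recheck the $c=x_0$ line and the $(\alpha,m)=(\alpha_0,m^\ast)$ line behave as before; both are inherited unchanged.) Apply IO security. Third, replace the hard-wired $G$ inside $D^4$ by $G_{\ne\beta_0}$: by the exceptional handling introduced in the previous step, $G$ is never evaluated at $\beta_0$, so $D^4[G]$ and $D^4[G_{\ne\beta_0}]$ are functionally equivalent; apply IO security once more. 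Composing the three sub-steps yields $\abs{\Pr[\hybi{3}=1]-\Pr[\hybi{4}=1]}\le\negl(\secp)$.

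The main obstacle—and the place to be careful—is the second sub-step: arguing that hard-coding the exceptional behavior at $\beta = \beta_0$ does not alter the function computed by $D$. This requires using injectivity of $F_{\ne\alpha_0\concat m^\ast}$ together with the fact that $\beta_0$ is a uniformly random $9\ell$-bit string and hence lies outside the $3\ell\to 9\ell$ image of $F$ with probability at least $1 - 2^{3\ell}/2^{9\ell} = 1 - 2^{-6\ell}$, which is overwhelming. I would also double-check the interaction with the already-present puncturing at $\alpha_0\concat m^\ast$ and at $x_0$, so that the case analysis on an arbitrary input $c = (\alpha\|\beta\|\gamma)$ is genuinely exhaustive: the cases are $c = x_0$; $(\alpha,m) = (\alpha_0,m^\ast)$ with $c\ne x_0$; $\beta = \beta_0$ with the previous two failing; and the generic case, where $F_{\ne\alpha_0\concat m^\ast}$ agrees with $F$. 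Everything else is a routine IO invocation in the style already used repeatedly in this section.

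\begin{figure}[ht]
\centering
\fbox{\begin{minipage}{0.9\textwidth}
\textbf{Remark (proof strategy).} The argument for $\hybi{3}\approx\hybi{4}$ is the ``puncture $G$ at the random point $\beta_0$'' analogue of the ``puncture $F$ at $\alpha_0\|m^\ast$'' argument of Lemma~\ref{lem:pe_one_two}; it goes through three IO hops, the delicate one being the insertion of exceptional handling for $\beta=\beta_0$, which is justified by injectivity of $F$ and the fact that a random $9\ell$-bit string misses the image of $F$ with overwhelming probability.
\end{minipage}}
\end{figure}
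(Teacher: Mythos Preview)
Your proposal is correct and follows essentially the same three-sub-hybrid decomposition as the paper: first swap $E_{\ne\alpha_0\|m^\ast}$ for $E_{\ne\alpha_0\|m^\ast,\ne\beta_0}$ (IO, using that a random $\beta_0\in\zo{9\ell}$ misses the image of $F$), then change the shape of the decryption circuit from $D^2$ to $D^4$ while still hard-wiring the unpunctured $G$ (IO, functional equivalence), and finally replace $G$ by $G_{\ne\beta_0}$ inside $D^4$ (IO, since the first line of $D^4$ prevents $G$ from ever being evaluated at $\beta_0$).

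One small imprecision worth flagging: you write that in the second sub-step ``the $c=x_0$ line \ldots [is] inherited unchanged.'' In fact it is not inherited---in $D^4$ the line ``if $c=x_0$ output $\bot$'' is \emph{replaced} by ``if $\beta=\beta_0$ output $\bot$''. The paper's argument for this step is exactly that this replacement preserves functionality: whenever $\beta=\beta_0$ but $(\alpha,\gamma)\ne(\alpha_0,\gamma_0)$, the old circuit $D^2$ already outputs $\bot$ anyway because $\beta_0$ is (whp) not in the image of $F$, so the test $\beta=F'(\alpha\|m)$ fails. Your case analysis at the end does cover this correctly, so the argument goes through; just be careful not to describe the change as a pure addition.
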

\begin{proof}[Proof of~\cref{lem:pe_three_four}]
We change $E_{\ne \alpha^\ast\concat m^\ast}$ and $D_{\ne \alpha^\ast\concat m^\ast}^2$ into $E_{\ne \alpha^\ast\concat m^\ast,\ne\beta^\ast}$ and $D_{\ne \alpha^\ast\concat m^\ast,\ne\beta^\ast}^4$, respectively.
\protocol
{Circuit $E_{\ne \alpha^\ast\concat m^\ast,\ne \beta^\ast}[F^\prime,G^\prime]$}
{Description of encryption circuit $E_{\ne \alpha^\ast\concat m^\ast,\ne \beta^\ast}$}
{fig:pe_enc_E4_punc}
{
\begin{description}
\item[Constants]: Injective PPRF $F^\prime$, PPRF \redline{$G^\prime$}
\item[Inputs:] $m \in \zo{\ell}, s \in \zo{\ell}$
\end{description}
\begin{enumerate}
\item Compute $\alpha = \PRG(s)$.
\item Compute $\beta = F^\prime(\alpha \| m)$.
\item Compute $\gamma = \redline{G^\prime}(\beta) \xor m$.
\item Output $(\alpha, \beta, \gamma)$.
\end{enumerate}
}

\protocol
{Circuit $D_{\ne \alpha^\ast\concat m^\ast,\ne \beta^\ast}^4[F^\prime,G^\prime,\alpha^\ast,\beta^\ast,\gamma^\ast,m^\ast]$}
{Description of punctured decryption circuit $D_{\ne \alpha^\ast \concat m^\ast,\ne \beta^\ast}^4$}
{fig:pe_dec_D4_punc}
{
\begin{description}
\item[Constants]: Point $x^\ast \seteq \alpha^\ast \concat \beta^\ast \concat \gamma^\ast \in \zo{12\ell}$, injective PPRF $F^\prime$, PPRF \redline{$G^\prime$}, $m^\ast$.
\item[Inputs:] $c = (\alpha\|\beta\|\gamma)$, where $\alpha \in \zo{2 \ell}$, $\beta \in \zo{9 \ell}$, and $\gamma \in \zo\ell$.
\end{description}
\begin{enumerate}
\item If \redline{$\beta = \beta^\ast$}, output $\bot$.
\item Compute $m = \redline{G^\prime}(\beta) \xor \gamma$.
\item If $(\alpha,m) = (\alpha^\ast,m^\ast)$, output $\bot$.
\item If $\beta = F^\prime(\alpha \| m)$, output $m$.
\item Else output $\bot$.
\end{enumerate}
}

We define a sequence of sub-hybrid games.
\begin{description}
\item[$\hybij{3}{1}$:] This is the same as $\hybi{3}$ except that we use punctured $G_{\ne \beta_0}$ and set $\peek \seteq \iO(E_{\ne \alpha_0\concat m^\ast,\ne \beta_0}[F_{\ne \alpha_0\concat m^\ast},G_{\ne \beta_0}])$.
\item[$\hybij{3}{2}$:] This is the same as $\hybij{3}{1}$ except that we still use $G$ but set $\pedk_{\ne c_0} \seteq \iO(D_{\ne \alpha_0\concat m^\ast,\ne \beta_0}^4[F_{\ne \alpha_0\concat m^\ast},G])$.
\end{description}

\begin{proposition}\label{prop:pe_sub_three_threeone}
If $\iO$ is a secure IO, it holds that $\abs{\Pr[\hybi{3}=1] - \Pr[\hybij{3}{1}=1]} \le \negl(\secp)$.
\end{proposition}
\begin{proof}[Proof of~\cref{prop:pe_sub_three_threeone}]
In these games $\beta_0 \chosen \zo{9\ell}$ is uniformly random. By the sparsity of $F$, $\beta_0$ is not in the image of $F$ except with negligible probability. Thus, $E_{\ne \alpha_0\concat m^\ast}$ and $E_{\ne \alpha_0\concat m^\ast, \ne \beta_0}$ are functionally equivalent except with negligible probability.
We obtain the proposition by the IO security.
\end{proof}
\begin{proposition}\label{prop:pe_sub_threeone_threetwo}
If $\iO$ is a secure IO, it holds that $\abs{\Pr[\hybij{3}{1}=1] - \Pr[\hybij{3}{2}=1]} \le \negl(\secp)$.
\end{proposition}
\begin{proof}[Proof of~\cref{prop:pe_sub_threeone_threetwo}]
The difference between $D_{\ne \alpha_0\concat m^\ast}^2$ and $D_{\ne \alpha_0\concat m^\ast,\ne \beta_0}^4$ is that we replace ``If $c=x_0$, outputs $\bot$.'' with ``If $\beta= \beta_0$, outputs $\bot$.''. In these games, $\beta_0 \chosen \zo{9\ell}$ is not in the image of $F$ except with negligible probability. Recall that $c=x_0$ means $c = \alpha_0\concat \beta_0\concat \gamma_0$. Thus, those two circuits may differ when $\beta= \beta_0$ but $(\alpha,\gamma) \ne (\alpha_0,\gamma_0)$. However, it does not happen $\beta = F^\prime(\alpha \concat (G(\beta)\xor \gamma))$ in this case due to the injectivity of $F$. Thus, $D_{\ne \alpha_0\concat m^\ast}^2$ and $D_{\ne \alpha_0\concat m^\ast,\ne \beta_0}^4$ are functionally equivalent and we obtain the proposition by applying the IO security.
\end{proof}
\begin{proposition}\label{prop:pe_sub_threetwo_four}
If $\iO$ is a secure IO, it holds that $\abs{\Pr[\hybij{3}{2}=1] - \Pr[\hybi{4}=1]} \le \negl(\secp)$.
\end{proposition}
\begin{proof}[Proof of~\cref{prop:pe_sub_threetwo_four}]
The difference between these two games that we use $D_{\ne \alpha_0\concat m^\ast,\ne \beta_0}^4[F_{\ne \alpha_0\concat m^\ast},G_{\ne \beta_0}]$ instead of $D_{\ne \alpha_0\concat m^\ast,\ne \beta_0}^4[F_{\ne \alpha_0\concat m^\ast},G]$. However, $G_{\ne \beta_0}(\beta_0)$ is never computed by the first item of $D_{\ne \alpha_0\concat m^\ast,\ne \beta_0}^4$. We obtain the proposition by the IO security.
\end{proof}
We complete the proof of~\cref{lem:pe_three_four}.
\end{proof}

\begin{lemma}\label{lem:pe_four_five}
If $G$ is a secure PPRF, it holds that $\abs{\Pr[\hybi{4}=1] - \Pr[\hybi{5}=1]} \le \negl(\secp)$.
\end{lemma}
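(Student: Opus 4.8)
The plan is to reduce \cref{lem:pe_four_five} directly to the pseudorandomness at punctured points of the PPRF $G$. The sole difference between $\hybi{4}$ and $\hybi{5}=\sfrand_2$ is how the third component $\gamma_0$ of the challenge string is produced: in $\hybi{4}$ it is $G(\beta_0)\xor m^\ast$, whereas in $\hybi{5}$ it is drawn uniformly from $\zo{\ell}$. The first thing to note is that in \emph{both} games $\beta_0$ is already uniform in $\zo{9\ell}$, and that the only $G$-dependent data that $\qA$ receives are the obfuscated circuits $\peek=\iO(E_{\ne\alpha_0\concat m^\ast,\ne\beta_0})$ and $\pedk_{\ne x_0}=\iO(D_{\ne\alpha_0\concat m^\ast,\ne\beta_0}^4)$ (see \cref{fig:pe_enc_E4_punc,fig:pe_dec_D4_punc}), both of which hard-code only the \emph{punctured} key $G_{\ne\beta_0}$. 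Hence the value $G(\beta_0)$ never appears anywhere except inside the one-time mask that forms $\gamma_0$, which is precisely the situation in which punctured pseudorandomness applies.

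Concretely, I would build an adversary $\qB$ against the punctured pseudorandomness of $G$ as follows. After receiving the challenge message $m^\ast$ from $\qA$, $\qB$ samples $\beta_0\chosen\zo{9\ell}$, submits $\beta_0$ as its punctured point, and receives $G_{\ne\beta_0}$ together with a value $t^\ast$ that is either $G(\beta_0)$ or a fresh uniform string. Then $\qB$ samples the injective PPRF $F$ and its puncturing $F_{\ne\alpha_0\concat m^\ast}$ for a uniform $\alpha_0\chosen\zo{2\ell}$, sets $\gamma_0\seteq t^\ast\xor m^\ast$ and $x_0\seteq\alpha_0\concat\beta_0\concat\gamma_0$, assembles $\peek$ and $\pedk_{\ne x_0}$ as the obfuscations of the modified circuits using $F_{\ne\alpha_0\concat m^\ast}$ and $G_{\ne\beta_0}$, sends $(x_0,\peek,\pedk_{\ne x_0})$ to $\qA$, and outputs $\qA$'s guess.

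Finally I would check that the simulation is perfect in each branch: when $t^\ast=G(\beta_0)$, $\qB$ reproduces $\hybi{4}$ verbatim, and when $t^\ast$ is uniform, $\gamma_0=t^\ast\xor m^\ast$ is uniform and independent of the rest of $\qA$'s view (XOR by $m^\ast$ is a bijection), so $\qB$ reproduces $\hybi{5}$; hence $\abs{\Pr[\hybi{4}=1]-\Pr[\hybi{5}=1]}$ equals $\qB$'s distinguishing advantage, which is $\negl(\secp)$. I do not anticipate a real obstacle here; the one point demanding care is the bookkeeping that both obfuscated circuits in $\hybi{4}$ already embed $G_{\ne\beta_0}$ rather than $G$ — this was arranged in the $\hybi{3}\to\hybi{4}$ transition — so that $\qB$, holding only the punctured key, can indeed reconstruct $\qA$'s view faithfully.
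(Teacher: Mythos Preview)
Your proposal is correct and takes essentially the same approach as the paper: the paper's proof simply observes that the only difference is whether $\gamma_0$ comes from $G(\beta_0)\xor m^\ast$ or is uniform, and that punctured pseudorandomness of $G$ applies because both games already use $G_{\ne\beta_0}$. Your write-up spells out the explicit reduction $\qB$ that the paper leaves implicit, but the argument is the same.
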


\begin{proof}[Proof of~\cref{lem:pe_four_five}]
The difference between these two games is that $\gamma_0$ is $G(\beta_0)$ or random. We can immediately obtain the lemma by applying punctured pseudorandomness of $G$ since we use $G_{\ne \beta_0}$ in these games.
\end{proof}

In $\hybi{5}$, $\alpha_0$, $\beta_0$, and $\gamma_0$ are uniformly random strings as $\alpha_1$, $\beta_1$, and $\gamma_1$.

\paragraph{From $\sfrand$ to $\hybi{5}$.}
We leap to $\sfrand$ and move from $\sfrand$ to $\sfrand_2 = \hybi{5}$ instead of directly moving from $\hybi{5}=\sfrand_2$ to $\sfrand$ since $\sfreal \approx \hybi{5}$ and $\sfrand_{2} \approx \sfrand$ is almost symmetric (but not perfectly symmetric).

\begin{lemma}\label{lem:pe_rand_zero_one}
If $\iO$ is a secure IO and $F$ is a secure injective PPRF, it holds that
\ifnum\submission=1
$\abs{\Pr[\sfrand=1] - \Pr[\sfrand_1 =1]} \le \negl(\secp)$.
\else
\[\abs{\Pr[\sfrand=1] - \Pr[\sfrand_1 =1]} \le \negl(\secp).\]
\fi
\end{lemma}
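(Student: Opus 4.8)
\textbf{Proof plan for Lemma~\ref{lem:pe_rand_zero_one}.} The plan is to mirror the argument used in the transition from $\hybi{1}$ to $\hybi{2}$ (Lemma~\ref{lem:pe_one_two}), now run in the direction going \emph{away} from the real-random endpoint. In $\sfrand$ the public key is $\iO(E)$ and the punctured decryption key is $\iO(D_{\ne x_1})$, where $x_1 = \alpha_1\concat\beta_1\concat\gamma_1$ is a uniformly random string; in $\sfrand_1$ these become $\iO(E_{\ne \alpha_1\concat m^\ast})$ and $\iO(D_{\ne \alpha_1\concat m^\ast}^{\mathsf{r}})$, where $D_{\ne \alpha_1\concat m^\ast}^{\mathsf{r}}$ (Figure~\ref{fig:pe_dec_Dr_punc}) does exceptional handling on the input prefix $(\alpha_1,m^\ast)$ using the punctured PPRF $F_{\ne \alpha_1\concat m^\ast}$. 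First I would introduce the two sub-hybrids analogous to $\hybij{1}{1}$ and $\hybij{1}{2}$: one in which $F$ is replaced by $F_{\ne \alpha_1\concat m^\ast}$ inside $E$ (legitimate since a uniformly random $\alpha_1\in\zo{2\ell}$ lies outside the image of $\PRG$ except with negligible probability, so $E$ and $E_{\ne\alpha_1\concat m^\ast}$ are functionally equivalent, and $\iO$ security closes the gap), and then one in which the punctured decryption circuit is rewritten to the new form while still using the unpunctured $F$.

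The key step is to argue that $D_{\ne x_1}$ and $D_{\ne \alpha_1\concat m^\ast}^{\mathsf{r}}$ are functionally equivalent, using the injectivity of $F$. The only inputs on which the two circuits could conceivably differ are those $c=(\alpha\|\beta\|\gamma)$ with $(\alpha, G(\beta)\xor\gamma) = (\alpha_1, m^\ast)$, on which the new circuit short-circuits to $\bot$; I would split into the cases $c = x_1$ and $c\neq x_1$. If $c = x_1$ then $D_{\ne x_1}$ also returns $\bot$ by its first line. If $c\neq x_1$ but the decoded pair equals $(\alpha_1,m^\ast)$, then since $F$ is injective and $\beta_1$ would have to equal $F(\alpha_1\concat m^\ast)$ for $D_{\ne x_1}$ to accept, while $\beta_1$ is a uniformly random string and hence (by sparseness of $F$) almost surely \emph{not} in the image of $F$, $D_{\ne x_1}$ rejects anyway; so both output $\bot$ except with negligible probability. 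Hence the two circuits agree except with negligible probability, and $\iO$ security gives indistinguishability of the obfuscations. A final $\iO$ step replaces $F$ by $F_{\ne\alpha_1\concat m^\ast}$ inside the decryption circuit, which is valid because the exceptional handling guarantees $F(\alpha\concat m)$ is never evaluated at the punctured point; this brings us to $\sfrand_1$.

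The main subtlety — and the place I expect to spend the most care — is that this transition is \emph{not} perfectly symmetric to the $\sfreal$-side transition, because in $\sfrand$ the triple $(\alpha_1,\beta_1,\gamma_1)$ is fully random rather than of the form $(\PRG(s), F(\cdot), G(\cdot)\xor m^\ast)$, so some equalities that held by construction on the real side (e.g.\ $\beta_0 = F(\alpha_0\concat m^\ast)$) now hold only \emph{statistically}, via sparseness/injectivity of $F$ applied to a random string. I would therefore make sure each "functionally equivalent" claim is explicitly qualified as "except with negligible probability over the choice of the random string," and that the bad events are collected into a single negligible term before invoking $\iO$ security. No genuinely new technique is needed beyond what Lemma~\ref{lem:pe_one_two} already uses; the proof is a sub-hybrid chain of the form $\sfrand \approx \sfrand_1$ closed by PRG sparseness, IO security (twice), and injectivity of $F$.
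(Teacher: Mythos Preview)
Your plan has a genuine gap in the ``key step'' case analysis. You write that for $c\ne x_1$ with decoded pair $(\alpha_1,m^\ast)$, ``$\beta_1$ would have to equal $F(\alpha_1\concat m^\ast)$ for $D_{\ne x_1}$ to accept,'' and then argue this fails since $\beta_1$ is random. But the acceptance condition in $D_{\ne x_1}$ is $\beta = F(\alpha\concat m)$, involving the \emph{input} component $\beta$, not the constant $\beta_1$. Concretely, consider the input $\hat c \seteq \alpha_1\concat\hat\beta\concat\hat\gamma$ with $\hat\beta = F(\alpha_1\concat m^\ast)$ and $\hat\gamma = G(\hat\beta)\xor m^\ast$. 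This is a perfectly valid ciphertext with prefix $\alpha_1$, it is distinct from $x_1$ except with negligible probability (since $\beta_1$ is uniform), and $D_{\ne x_1}(\hat c) = m^\ast$. So $D_{\ne x_1}$ does \emph{not} reject on all inputs decoding to $(\alpha_1,m^\ast)$; your functional-equivalence claim fails at $\hat c$ if you treat the new circuit as simply ``short-circuit to $\bot$ on $(\alpha_1,m^\ast)$.''

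This is exactly the asymmetry you anticipated but did not resolve: on the real side $\beta_0 = F(\alpha_0\concat m^\ast)$ by construction, so the unique accepting input with that prefix is $x_0$ itself, and puncturing at $x_0$ already kills it. On the random side $\beta_1$ is unrelated to $F(\alpha_1\concat m^\ast)$, so $\hat c$ survives. The paper handles this by giving $D_{\ne\alpha_1\concat m^\ast}^{\mathsf r}$ an additional first line ``if $c=\hat c$ output $m^\ast$'' (the constants $\hat\beta,\hat\gamma$ are hardwired), and inserts an extra intermediate circuit $D_{\ne\alpha_1\concat m^\ast}^{\mathsf r\textrm{-}2}$ that adds \emph{only} this line before later adding the ``$(\alpha,m)=(\alpha_1,m^\ast)\Rightarrow\bot$'' line. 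With that first line present, the functional-equivalence argument goes through (the unique would-be discrepancy at $\hat c$ is patched), and only then can you safely puncture $F$ in the decryption circuit. Your two-sub-hybrid chain needs to become three (or you need to explicitly argue equivalence with the full $D^{\mathsf r}$ of Figure~\ref{fig:pe_dec_Dr_punc}, correctly treating its first line).
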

\begin{proof}[Proof of~\cref{lem:pe_rand_zero_one}]
We change $E$ and $D_{\ne x_1}$ into $E_{\ne \alpha_1\concat m^\ast}$ and $D_{\ne \alpha_1\concat m^\ast}^{\mathsf{r}}$, respectively.

We define a sequence of sub-hybrid games.
\begin{description}
\item[$\rhybij{}{1}$:] This is the same as $\sfrand$ except that we generate $F_{\ne \alpha_1\concat m^\ast}$ and set $F^\prime \seteq F_{\ne \alpha_1\concat m^\ast}$ and $\peek \seteq \iO(E_{\ne \alpha_1\concat m^\ast})$ described in~\cref{fig:pe_enc_E2_punc}.
\item[$\rhybij{}{2}$:] This is the same as $\rhybij{}{1}$ except that we set $\pedk_{\ne x_1} \seteq \iO(D_{\ne \alpha_1\concat m^\ast}^{\mathsf{r}\textrm{-}2}[F,G])$ described in~\cref{fig:pe_dec_Dr-2_punc}. That is, we still use $F$, but the modified circuit that outputs $m^\ast$ for input $\alpha_1\concat \hat{\beta}\concat \hat{\gamma}$, where $\hat{\beta} \seteq F(\alpha_1\concat m^\ast)$ and $\hat{\gamma} \seteq G(\hat{\beta})\xor m^\ast$.
\item[$\rhybij{}{3}$:] This is the same as $\rhybij{}{2}$ except that we set $\pedk_{\ne x_1} \seteq \iO(D_{\ne \alpha_1\concat m^\ast}^{\mathsf{r}}[F,G])$ described in~\cref{fig:pe_dec_Dr_punc}. That is, we still use $F$, but the modified circuit outputs $\bot$ for an input such that $(\alpha,m)=(\alpha_1, m^\ast)$.
\end{description}

\protocol
{Circuit $D_{\ne \alpha^\ast\concat m^\ast}^{\mathsf{r}\textrm{-}2}[F^\prime,G,\alpha^\ast,\beta^\ast,\gamma^\ast,\hat{\beta},\hat{\gamma},m^\ast]$}
{Description of punctured decryption circuit $D_{\ne \alpha^\ast\concat m^\ast}^{\mathsf{r}\textrm{-}2}$}
{fig:pe_dec_Dr-2_punc}
{
\begin{description}
\item[Constants]: Point $x^\ast = \alpha^\ast\concat \beta^\ast \concat \gamma^\ast \in \zo{12\ell}$, injective PPRF \redline{$F^\prime$}, PPRF $G$, \redline{$\hat{\beta}$, $\hat{\gamma}$, $m^\ast$}.
\item[Inputs:] $c = (\alpha\|\beta\|\gamma)$, where $\alpha \in \zo{2 \ell}$, $\beta \in \zo{9 \ell}$, and $\gamma \in \zo\ell$.
\end{description}
\begin{enumerate}
  \item \redline{If $\alpha = \alpha^\ast$ and $\beta = \hat{\beta}$ and $\gamma = \hat{\gamma}$, output $m^\ast$.}
\item If $c = x^\ast$, output $\bot$.
\item Compute $m = G(\beta) \xor \gamma$.
\item If $\beta = \redline{F^\prime}(\alpha \| m)$, output $m$.
\item Else output $\bot$.
\end{enumerate}
}

\protocol
{Circuit $D_{\ne \alpha^\ast\concat m^\ast}^{\mathsf{r}}[F^\prime,G,\alpha^\ast,\beta^\ast,\gamma^\ast,\hat{\beta},\hat{\gamma},m^\ast]$}
{Description of punctured decryption circuit $D_{\ne \alpha^\ast\concat m^\ast}^{\mathsf{r}}$}
{fig:pe_dec_Dr_punc}
{
\begin{description}
\item[Constants]: Point $x^\ast = \alpha^\ast\concat \beta^\ast \concat \gamma^\ast \in \zo{12\ell}$, injective PPRF \redline{$F^\prime$}, PPRF $G$, \redline{$\hat{\beta}$, $\hat{\gamma}$, $m^\ast$}.
\item[Inputs:] $c = (\alpha\|\beta\|\gamma)$, where $\alpha \in \zo{2 \ell}$, $\beta \in \zo{9 \ell}$, and $\gamma \in \zo\ell$.
\end{description}
\begin{enumerate}
  \item \redline{If $\alpha = \alpha^\ast$ and $\beta = \hat{\beta}$ and $\gamma = \hat{\gamma}$, output $m^\ast$.}
\item If $c = x^\ast$, output $\bot$.
\item Compute $m = G(\beta) \xor \gamma$.
\item \redline{If $(\alpha,m) = (\alpha^\ast,m^\ast)$, output $\bot$.}
\item If $\beta = \redline{F^\prime}(\alpha \| m)$, output $m$.
\item Else output $\bot$.
\end{enumerate}
}

\begin{proposition}\label{prop:pe_sub_rand_zero_one}
If $\iO$ is a secure IO, it holds that $\abs{\Pr[\sfrand=1] - \Pr[\rhybij{}{1}=1]} \le \negl(\secp)$.
\end{proposition}
\begin{proof}[Proof of~\cref{prop:pe_sub_rand_zero_one}]
In these games, value $\alpha_1 \chosen \zo{2\ell}$ is not in the image of $\PRG$ except with negligible probability. Thus, $E$ and $E_{\ne \alpha_1\concat m^\ast}$ are functionally equivalent except with negligible probability. We can obtain the proposition by applying the IO security.
\end{proof}

\ifnum\submission=0
\begin{proposition}\label{prop:pe_sub_rand_one_two}
If $\iO$ is a secure IO and $F$ is injective, it holds that $\abs{\Pr[\rhybij{}{1}=1] - \Pr[\rhybij{}{2}=1]} \le \negl(\secp)$.
\end{proposition}
\else
\begin{proposition}\label{prop:pe_sub_rand_one_two}
$\abs{\Pr[\rhybij{}{1}=1] - \Pr[\rhybij{}{2}=1]} \le \negl(\secp)$ holds if $\iO$ is a secure IO and $F$ is injective.
\end{proposition}
\fi
\begin{proof}[Proof of~\cref{prop:pe_sub_rand_one_two}]
The difference between $D_{\ne x_1}$ and $D_{\ne \alpha_1\concat m^\ast}^{\mathsf{r}\textrm{-}2}$ is ``If $\alpha = \alpha^\ast$ and $\beta = \hat{\beta}$ and $\gamma = \hat{\gamma}$, output $m^\ast$.''. Although $\alpha_1\concat \hat{\beta}\concat \hat{\gamma}$ is a valid encryption, $\hat{\beta} = F(\alpha_1\concat m^\ast)$ is not equal to $\beta_1$ except with negligible probability since $\beta_1$ is uniformly random. Similarly, $\hat{\gamma}$ is not equal to $\gamma_1$ except with negligible probability. Thus, $D_{\ne x_1}(\alpha_1\concat \hat{\beta}\concat \hat{\gamma})$ outputs $m^\ast$. That is, $D_{\ne x_1}$ and $D_{\ne \alpha_1\concat m^\ast}^{\mathsf{r}\textrm{-}2}$ are functionally equivalent. We can obtain the proposition by applying the IO security.
\end{proof}
\ifnum\submission=0
\begin{proposition}\label{prop:pe_sub_rand_two_three}
If $\iO$ is a secure IO and $F$ is injective, it holds that $\abs{\Pr[\rhybij{}{2}=1] - \Pr[\rhybij{}{3}=1]} \le \negl(\secp)$.
\end{proposition}
\else
\begin{proposition}\label{prop:pe_sub_rand_two_three}
$\abs{\Pr[\rhybij{}{2}=1] - \Pr[\rhybij{}{3}=1]} \le \negl(\secp)$ holds if $\iO$ is a secure IO and $F$ is injective.
\end{proposition}
\fi
\begin{proof}[Proof of~\cref{prop:pe_sub_rand_two_three}]
We analyze the case where $(\alpha,m) = (\alpha_1,m^\ast)$. We can reach the forth line of $D_{\ne \alpha_1\concat m^\ast}^{\mathsf{r}}$ if $c \ne x_1$. If $c\ne x_1$ and $(\alpha,m) = (\alpha_1,m^\ast)$, it holds that $(\beta,\gamma)\ne (\beta_1,\gamma_1)$. However, it should be $\beta_1 = \beta$ in this case due to the injectivity of $F$. That is, if $D_{\ne \alpha_1\concat m^\ast}^{\mathsf{r}}(c)$ outputs $\bot$ at the fourth line, $D_{\ne \alpha_1\concat m^\ast}^{\mathsf{r}\textrm{-}2}(c)$ also outputs $\bot$ at the second line. Therefore, $D_{\ne \alpha_1\concat m^\ast}^{\mathsf{r}\textrm{-}2}$ and $D_{\ne \alpha_1\concat m^\ast}^{\mathsf{r}}$ are functionally equivalent. We can obtain the proposition by applying the IO security.
\end{proof}

\begin{proposition}\label{prop:pe_sub_rand_three_one}
If $\iO$ is a secure IO, it holds that $\abs{\Pr[\rhybij{}{3}=1] - \Pr[\sfrand_1 =1]} \le \negl(\secp)$.
\end{proposition}
\begin{proof}[Proof of~\cref{prop:pe_sub_rand_three_one}]
Due to the exceptional handling in the fourth line of $D_{\ne \alpha_1\concat m^\ast}^{\mathsf{r}}$, $F(\alpha\concat m)$ is never computed for input $(\alpha_1,m^\ast)$. Thus, even if we use $F_{\ne \alpha_1\concat m^\ast}$ instead of $F$, $D_{\ne \alpha_1\concat m^\ast}^{\mathsf{r}}[F]$ and $D_{\ne \alpha_1\concat m^\ast}^{\mathsf{r}}[F_{\ne \alpha_1\concat m^\ast}]$ are functionally equivalent. We can obtain the proposition by the IO security.
\end{proof}
We complete the proof of~\cref{lem:pe_rand_zero_one}.
\end{proof}

\begin{lemma}\label{lem:pe_rand_one_two}
If $\iO$ is a secure IO, $F$ is a secure injective PPRF, and $(\Commit.\Gen,\Commit)$ is a secure injective bit-commitment with setup, it holds that $\abs{\Pr[\sfrand_{1} =1] - \Pr[\sfrand_2 =1]} \le \negl(\secp)$.
\end{lemma}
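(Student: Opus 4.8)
\textbf{Proof proposal for \cref{lem:pe_rand_one_two}.}
The plan is to move from $\sfrand_1$ to $\sfrand_2=\hybi{5}$ by essentially retracing the steps that took us from $\sfreal$ to $\hybi{5}$, but in reverse. Recall that in $\sfrand_1$ the challenge triple $(\alpha_1,\beta_1,\gamma_1)$ is already uniformly random, $\peek=\iO(E_{\ne \alpha_1\concat m^\ast})$, $\pedk_{\ne x_1}=\iO(D^{\mathsf r}_{\ne\alpha_1\concat m^\ast}[F_{\ne\alpha_1\concat m^\ast},G])$, while in $\sfrand_2$ we have $\peek=\iO(E_{\ne\alpha_1\concat m^\ast,\ne\beta_1})$ and $\pedk_{\ne x_1}=\iO(D^4_{\ne\alpha_1\concat m^\ast,\ne\beta_1}[F_{\ne\alpha_1\concat m^\ast},G_{\ne\beta_1}])$. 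The two differences are (i) the encryption/decryption circuits additionally hard-code a $\beta_1$-puncture on $G$, and (ii) $D^{\mathsf r}$ carries the extra ``$\alpha=\alpha^\ast,\beta=\hat\beta,\gamma=\hat\gamma\Rightarrow m^\ast$'' shortcut line while $D^4$ does not. So the core task is to argue these circuits are functionally equivalent up to negligible probability (invoking IO each time) and that puncturing $G$ at the random point $\beta_1$ is invisible.

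The key steps, in order, would be: First, puncture $G$ at $\beta_1$ inside $\peek$, i.e. replace $E_{\ne\alpha_1\concat m^\ast}$ by $E_{\ne\alpha_1\concat m^\ast,\ne\beta_1}$; since $\beta_1\getsr\zo{9\ell}$ is not in the image of the injective PPRF $F$ except with negligible probability (sparsity of $F$), the honest encryption circuit never evaluates $G$ at $\beta_1$, so the two circuits agree on all inputs and IO security gives indistinguishability --- this mirrors \cref{prop:pe_sub_three_threeone}. Second, modify the decryption circuit from $D^{\mathsf r}_{\ne\alpha_1\concat m^\ast}$ to the analogue of $D^2$ with the ``$\beta=\beta_1\Rightarrow\bot$'' line replacing the ``$c=x_1\Rightarrow\bot$'' line and dropping the $\hat\beta,\hat\gamma$ shortcut: here one checks that when $\beta=\beta_1$ the original $D^{\mathsf r}$ output was already $\bot$ (again because $\beta_1$ is not in the image of $F$, so $\beta_1=F(\alpha\|m)$ fails, and the shortcut line's guard $\beta=\hat\beta$ also fails since $\hat\beta=F(\alpha_1\|m^\ast)\ne\beta_1$ w.h.p.), and when $\beta\ne\beta_1$ the behaviours coincide; apply IO. This is the reverse analogue of \cref{prop:pe_sub_threeone_threetwo,prop:pe_sub_oneone_onetwo}. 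Third, switch the decryption circuit from using $G$ to using $G_{\ne\beta_1}$, which is invisible since the first line already short-circuits on $\beta=\beta_1$ (reverse of \cref{prop:pe_sub_threetwo_four,prop:pe_sub_onetwo_two}).

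The main subtlety --- and the reason the lemma statement cites injective bit-commitment with setup, unlike its forward counterpart --- is the removal of the $\hat\beta,\hat\gamma$ shortcut line from $D^{\mathsf r}$. In $\sfrand_1$ the pair $(\hat\beta,\hat\gamma)$ with $\hat\beta=F(\alpha_1\|m^\ast)$, $\hat\gamma=G(\hat\beta)\oplus m^\ast$ is a bona fide valid encryption of $m^\ast$ under randomness whose $\PRG$-image is $\alpha_1$; but $\alpha_1$ is uniformly random and hence (w.h.p.) not in the image of $\PRG$, so there is no seed $s$ with $\PRG(s)=\alpha_1$, and $D^4$ (which has no shortcut) would output $\bot$ on $\alpha_1\|\hat\beta\|\hat\gamma$ because the check $\hat\beta=F_{\ne\alpha_1\|m^\ast}(\alpha_1\|m^\ast)$ is on a punctured point. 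The worry is that an adversary could nonetheless detect the difference by somehow producing this input; the commitment scheme is presumably used (as in the Cohen et al. proof) to argue that the hard-coded value $\alpha_1$ statistically hides $m^\ast$ so that no efficient adversary can find an input distinguishing $D^{\mathsf r}$ from $D^4$, i.e. the shortcut line is ``dead code'' relative to what the adversary can reach. I expect the bulk of the work, and the main obstacle, to be formalizing this dead-code argument: setting up a sequence of sub-hybrids $\rhybij{2}{k}$ that first replaces $\alpha_1$ by a commitment-based encoding, uses hiding of $\Commit$ to swap $m^\ast$ for $0$ inside that encoding, then removes the now-harmless shortcut line, and reverses the commitment steps --- with every circuit rewrite justified by functional equivalence plus IO security. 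The remaining transitions are routine and parallel the already-proved propositions.
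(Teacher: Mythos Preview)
You correctly locate the obstacle --- the shortcut line ``If $\alpha=\alpha^\ast\land\beta=\hat\beta\land\gamma=\hat\gamma$, output $m^\ast$'' in $D^{\mathsf r}$ --- and rightly flag that the injective commitment is what kills it. But your proposed mechanism is off in two ways. First, the commitment is not applied to $\alpha_1$: $\alpha_1$ is handed to the adversary in the clear as part of $x_1$, so nothing about it can be hidden, and in any case $\alpha_1$ is already uniformly random and carries no information about $m^\ast$. Second, the goal is not to ``swap $m^\ast$ for $0$'' inside any encoding; it is to make the \emph{guard} of the shortcut line unsatisfiable, after which the hard-coded output $m^\ast$ is unreachable dead code and can be dropped under IO.

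The paper's argument works on the $\beta=\hat\beta$ clause instead. Since in $\sfrand_1$ the circuit already uses $F_{\ne\alpha_1\concat m^\ast}$ while $\hat\beta=F(\alpha_1\concat m^\ast)$ appears only as a hard-coded constant, one first invokes punctured pseudorandomness of $F$ to replace $\hat\beta$ by a fresh uniform string in $\zo{9\ell}$; you did not mention this step, and it is essential for what follows. With $\hat\beta$ fresh, injectivity of $\Commit$ lets one rewrite ``$\beta=\hat\beta$'' as ``$\Commit_\ck(0;\beta)=\hat z$'' for $\hat z\seteq\Commit_\ck(0;\hat\beta)$ (IO), then computational hiding lets one swap to $\hat z\seteq\Commit_\ck(1;\hat\beta)$ (this is exactly where $\hat\beta$ must be unused as anything but commitment randomness), and finally statistical binding makes ``$\Commit_\ck(0;\beta)=\Commit_\ck(1;\hat\beta)$'' vacuously false, so the shortcut never fires (IO again). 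Only after this does the paper carry out the $\beta_1$-puncture of $G$ in both $\peek$ and $\pedk$; those final transitions are indeed the routine analogues of \cref{prop:pe_sub_three_threeone,prop:pe_sub_threeone_threetwo,prop:pe_sub_threetwo_four} that you sketched.
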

\begin{proof}[Proof of~\cref{lem:pe_rand_one_two}]
We change $E_{\ne \alpha^\ast\concat m^\ast}$ and $D_{\ne \alpha^\ast\concat m^\ast}^{\mathsf{r}}$ into $E_{\ne \alpha^\ast\concat m^\ast,\ne\beta^\ast}$ and $D_{\ne \alpha^\ast\concat m^\ast,\ne\beta^\ast}^4$, respectively.

\protocol
{Circuit $D_{\ne \alpha^\ast\concat m^\ast}^{\com}[F^\prime,G,\alpha^\ast,\beta^\ast,\gamma^\ast,\hat{z},\ck,\hat{\gamma},m^\ast]$}
{Description of punctured decryption circuit $D_{\ne \alpha^\ast \concat m^\ast}^{\com}$}
{fig:pe_dec_Dcom_punc}
{
\begin{description}
\item[Constants]: Point $x^\ast =\alpha^\ast\concat \beta^\ast\concat \gamma^\ast \in \zo{12\ell}$, injective PPRF $F^\prime$, PPRF $G$, $m^\ast$, \redline{$\hat{z}$, $\ck$}, $\hat{\gamma}$.
\item[Inputs:] $c = (\alpha\|\beta\|\gamma)$, where $\alpha \in \zo{2 \ell}$, $\beta \in \zo{9 \ell}$, and $\gamma \in \zo{\ell}$.
\end{description}
\begin{enumerate}
\item If $\alpha=\alpha^\ast$ and \redline{$\Commit_{\ck}(0;\beta)=\hat{z}$} and $\gamma=\hat{\gamma}$, output $m^\ast$.
\item If $c = x^\ast$, output $\bot$.
\item Compute $m = G(\beta) \xor \gamma$.
\item If $(\alpha,m) = (\alpha^\ast,m^\ast)$, output $\bot$.
\item If $\beta = F^\prime(\alpha \| m)$, output $m$.
\item Else output $\bot$.
\end{enumerate}
}

\protocol
{Circuit $D_{\ne \alpha^\ast\concat m^\ast}^{\sfF}[F^\prime,G,\alpha^\ast,\beta^\ast,\gamma^\ast,m^\ast,\hat{z},\hat{\gamma}]$}
{Description of punctured decryption circuit $D_{\ne \alpha^\ast \concat m^\ast}^{\sfF}$}
{fig:pe_dec_Df_punc}
{
\begin{description}
\item[Constants]: Point $x^\ast =\alpha^\ast\concat \beta^\ast\concat \gamma^\ast \in \zo{12\ell}$, injective PPRF $F^\prime$, PPRF $G$, $m^\ast$, $\hat{z}$, $\hat{\gamma}$.
\item[Inputs:] $c = (\alpha\|\beta\|\gamma)$, where $\alpha \in \zo{2 \ell}$, $\beta \in \zo{9 \ell}$, and $\gamma \in \zo{\ell}$.
\end{description}
\begin{enumerate}
\item If $\alpha=\alpha^\ast$ and \redline{$\textsc{False}$} and $\gamma=\hat{\gamma}$, output $m^\ast$.~~~~~~// Never triggered
\item If $c = x^\ast$, output $\bot$.
\item Compute $m = G(\beta) \xor \gamma$.
\item If $(\alpha,m) = (\alpha^\ast,m^\ast)$, output $\bot$.
\item If $\beta = F^\prime(\alpha \| m)$, output $m$.
\item Else output $\bot$.
\end{enumerate}
}

We define a sequence of sub-hybrid games.
\begin{description}
 \item[$\rhybij{1}{1}$:] This is the same as $\sfrand_1$ except that we use $\hat{\beta}\chosen \zo{9\ell}$ instead of $F(\alpha_1\concat m^\ast)$.
 \item[$\rhybij{1}{2}$:] This is the same as $\rhybij{1}{1}$ except that we use $D_{\ne \alpha_1\concat m^\ast}^{\com}$ described in~\cref{fig:pe_dec_Dcom_punc}, where $\ck \gets \Commit.\Gen(1^\secp)$ and $\hat{z} = \Commit_{\ck}(0;\hat{\beta})$ are hardwired, instead of $D_{\ne \alpha_1\concat m^\ast}^{\mathsf{r}}$.
 \item[$\rhybij{1}{3}$:] This is the same as $\rhybij{1}{2}$ except that we hard-code $\hat{z}= \Commit_{\ck}(1;\hat{\beta})$ into $D_{\ne \alpha_1\concat m^\ast}^{\com}$ instead of $\Commit_{\ck}(0;\hat{\beta})$.
  \item[$\rhybij{1}{4}$:] This is the same as $\rhybij{1}{3}$ except that we use $D_{\ne \alpha_1\concat m^\ast}^{\sfF}$ described in~\cref{fig:pe_dec_Df_punc}
\item[$\rhybij{1}{5}$:] This is the same as $\rhybij{1}{4}$ except that we use punctured $G_{\ne \beta_1}$ and set $\peek \seteq \iO(E_{\ne \alpha_1\concat m^\ast,\ne \beta_1}[F_{\ne \alpha_1\concat m^\ast},G_{\ne \beta_1}])$.
\item[$\rhybij{1}{6}$:] This is the same as $\rhybij{1}{5}$ except that we still use $G$ but set $\pedk_{\ne c_1} \seteq \iO(D_{\ne \alpha_1\concat m^\ast,\ne \beta_1}^4[F_{\ne \alpha_1\concat m^\ast},G])$.
\end{description}

\begin{proposition}\label{prop:pe_sub_rand_one_oneone}
If $F$ is a secure PPRF, it holds that $\abs{\Pr[\sfrand_{1}=1] - \Pr[\rhybij{1}{1}=1]} \le \negl(\secp)$.
\end{proposition}
\begin{proof}[Proof of~\cref{prop:pe_sub_rand_one_oneone}]
In these games, we use $F_{\ne \alpha_1\concat m^\ast}$ in $E_{\ne \alpha^1 \concat m^\ast}$ and $D_{\ne \alpha_1\concat m^\ast}^{\mathsf{r}}$. Thus, we can apply the punctured pseudorandomness and immediately obtain the proposition.
\end{proof}

\begin{proposition}\label{prop:pe_sub_rand_oneone_onetwo}
If $\iO$ is a secure IO and $\Commit_\ck$ is injective, it holds that
\ifnum\submission=1
$\abs{\Pr[\rhybij{1}{1}=1] - \Pr[\rhybij{1}{2}=1]} \le \negl(\secp)$.
\else
\[\abs{\Pr[\rhybij{1}{1}=1] - \Pr[\rhybij{1}{2}=1]} \le \negl(\secp).\]
\fi
\end{proposition}
\begin{proof}[Proof of~\cref{prop:pe_sub_rand_oneone_onetwo}]
The difference between $D_{\ne \alpha_1\concat m^\ast}^{\com}$ and $D_{\ne \alpha_1\concat m^\ast}^{\mathsf{r}}$ is whether we use ``$\Commit_{\ck}(0;\beta)= \hat{z}$'' or ``$\beta=\hat{\beta}$'', where $\hat{z} = \Commit_{\ck} (0;\hat{\beta})$ and $\ck \gets \Commit.\Gen(1^\secp)$. Since $\Commit$ is injective, these two conditions are equivalent. Therefore, those two circuits are functionally equivalent. We obtain the proposition by applying the IO security.
\end{proof}

\begin{proposition}\label{prop:pe_sub_rand_onetwo_onethree}
If $(\Commit.\Gen,\Commit)$ is computationally hiding, it holds that
\ifnum\submission=1
$\abs{\Pr[\rhybij{1}{2}=1] - \Pr[\rhybij{1}{3}=1]} \le \negl(\secp)$.
\else
\[\abs{\Pr[\rhybij{1}{2}=1] - \Pr[\rhybij{1}{3}=1]} \le \negl(\secp).\]
\fi
\end{proposition}
\begin{proof}[Proof of~\cref{prop:pe_sub_rand_onetwo_onethree}]
The only difference between these two games is that $\hat{z}=\Commit_{\ck}(0;\hat{\beta})$ or $\hat{z}=\Commit_{\ck}(1;\hat{\beta})$. Note that $\hat{\beta}$ is never used anywhere else. We can obtain the proposition by the hiding property of $\Commit$.
\end{proof}

\begin{proposition}\label{prop:pe_sub_rand_onethree_onefour}
If $\iO$ is a secure IO and $(\Commit.\Gen,\Commit)$ is statistically binding, it holds that $\abs{\Pr[\rhybij{1}{3}=1] - \Pr[\rhybij{1}{4}=1]} \le \negl(\secp)$.
\end{proposition}
\begin{proof}[Proof of~\cref{prop:pe_sub_rand_onethree_onefour}]
The difference between $D_{\ne \alpha_1\concat m^\ast}^{\sfF}$ and $D_{\ne \alpha_1\concat m^\ast}^{\com}$ is that the first line of $D_{\ne \alpha_1\concat m^\ast}^{\sfF}$ is never executed. However, $\hat{z} = \Commit_{\ck}(1;\hat{\beta})$ is hardwired in $D_{\ne \alpha_1\concat m^\ast}^{\com}$. Thus, the first line of $D_{\ne \alpha_1\concat m^\ast}^{\com}$, in particular, condition ``$\Commit_{\ck}(0;\beta) = \hat{z} = \Commit_{\ck}(1;\hat{\beta})$'' is also never true except negligible probability due to the statistical binding property of $\Commit$. That is, these two circuits are functionally equivalent except negligible probability. We obtain the proposition by applying the IO security.
\end{proof}

\begin{proposition}\label{prop:pe_sub_rand_onefour_onefive}
If $\iO$ is a secure IO, it holds that $\abs{\Pr[\rhybij{1}{4}=1] - \Pr[\rhybij{1}{5}=1]} \le \negl(\secp)$.
\end{proposition}
\begin{proof}[Proof of~\cref{prop:pe_sub_rand_onefour_onefive}]
In these games $\beta_1 \chosen \zo{9\ell}$ is uniformly random. By the sparsity of $F$, $\beta_1$ is not in the image of $F$ except with negligible probability. Thus, $E_{\ne \alpha_1\concat m^\ast}$ and $E_{\ne \alpha_1\concat m^\ast, \ne \beta_1}$ are functionally equivalent except with negligible probability.
We obtain the proposition by the IO security.
\end{proof}

\begin{proposition}\label{prop:pe_sub_rand_onefive_onesix}
If $\iO$ is a secure IO, it holds that $\abs{\Pr[\rhybij{1}{5}=1] - \Pr[\rhybij{1}{6}=1]} \le \negl(\secp)$.
\end{proposition}
\begin{proof}[Proof of~\cref{prop:pe_sub_rand_onefive_onesix}]
The difference between $D_{\ne \alpha_1\concat m^\ast}^{\sfF}$ in~\cref{fig:pe_dec_Df_punc} and $D_{\ne \alpha_1\concat m^\ast,\ne \beta_1}^4$ in~\cref{fig:pe_dec_D4_punc} is that we replace ``If $c=x_1$, outputs $\bot$.'' with ``If $\beta= \beta_1$, outputs $\bot$.'' since the first line of $D_{\ne \alpha_1\concat m^\ast}^{\sfF}$ is never triggered. In these games, $\beta_1 \chosen \zo{9\ell}$ is not in the image of $F$ except with negligible probability. Recall that $c=x_1$ means $c = \alpha_1\concat \beta_1\concat \gamma_1$. Thus, those two circuits may differ when $\beta= \beta_1$ but $(\alpha,\gamma) \ne (\alpha_ 1,\gamma_1)$. However, it does not happen $\beta = F^\prime(\alpha \concat (G(\beta)\xor \gamma))$ in this case due to the injectivity of $F$. Thus, $D_{\ne \alpha_1\concat m^\ast}^{\sfF}$ and $D_{\ne \alpha_1\concat m^\ast,\ne \beta_1}^4$ are functionally equivalent and we obtain the proposition by applying the IO security.
\end{proof}
\begin{proposition}\label{prop:pe_sub_rand_onesix_two}
If $\iO$ is a secure IO, it holds that $\abs{\Pr[\rhybij{1}{6}=1] - \Pr[\sfrand_2 =1]} \le \negl(\secp)$.
\end{proposition}
\begin{proof}[Proof of~\cref{prop:pe_sub_rand_onesix_two}]
The difference between these two games that we use $D_{\ne \alpha_1\concat m^\ast,\ne \beta_1}^4[F_{\ne \alpha_1\concat m^\ast},G_{\ne \beta_1}]$ instead of $D_{\ne \alpha_1\concat m^\ast,\ne \beta_1}^4[F_{\ne \alpha_1\concat m^\ast},G]$. However, $G_{\ne \beta_1}(\beta_1)$ is never computed by the first line of $D_{\ne \alpha_1\concat m^\ast,\ne \beta_1}^4$. We obtain the proposition by the IO security.
\end{proof}
We complete the proof of~\cref{lem:pe_rand_one_two}.
\end{proof}

\subsection{Original Ciphertext Pseudorandomness of PE}\label{sec:original_pe_ct_pseudorandomness}
We describe the original ciphertext pseudorandomness of PE defined by Cohen et al.~\cite{SIAMCOMP:CHNVW18} in this section for reference.

\begin{definition}[Ciphertext Pseudorandomness]\label{def:original_ct_pseudorandomness}
We define the following experiment $\expt{\cA}{cpr}(\secp)$ for PE.
\begin{enumerate}
\item $\cA$ sends a message $m^\ast \in \zo{\ptxtlen}$ to the challenger.
\item The challenger does the following:

\begin{itemize}
\item Generate $(\ek,\dk) \lrun \Gen(1^\secp)$

\item Compute encryption $c^\ast \lrun \Enc(\ek, m^\ast)$. 

\item Choose $r^\ast \chosen \zo{\ctlen}$. 

\item Generate the punctured key $\dk_{\notin \setbk{c^\ast,r^\ast}} \lrun \Puncture(\dk, \{c^\ast,r^\ast\})$

\item Choose $\coin \chosen \zo{}$ and sends the following to $\cA$:
\begin{align}
(c^\ast, r^\ast ,\ek, \dk_{\notin \setbk{c^\ast,r^\ast}}) & \text{ if } \coin=0 \\
(r^\ast,c^\ast, \ek, \dk_{\notin \setbk{c^\ast,r^\ast}}) & \text{ if } \coin=1 
\end{align}
\end{itemize}
\item $\cA$ outputs $\coin^\ast$ and the experiment outputs $1$ if $\coin = \coin^\ast$; otherwise $0$.
\end{enumerate}
We say that $\PE$ has ciphertext pseudorandomness if for every QPT adversary $\cA$, it holds that
\[
\adva{\cA}{cpr}(\secp)\seteq 2\cdot \Pr[\expt{\cA}{cpr}(\secp) \out 1] -1 \leq \negl(\secp).
\]
\end{definition}

\paragraph{Issue in the proof by Cohen et al.}
In the watermarking PRF by Cohen et al.~\cite{SIAMCOMP:CHNVW18}, we use $x_0 \lrun \PE.\Enc(\peek,a\concat b\concat c \concat i)$ to extract an embedded message. They replace $x_0 \lrun \PE.\Enc(\peek,a\concat b\concat c \concat i)$ with $x_1 \chosen \zo{\ctlen}$ in their proof of unremovability~\cite[Lemma 6.7]{SIAMCOMP:CHNVW18}. Then, they use PRG security~\cite[Lemma 6.8]{SIAMCOMP:CHNVW18} to replace $\PRG(c)$ with a uniformly random string since the information about $c$ disappears from the PE ciphertext. However, there is a subtle issue here.
The information about $c$ remains in the punctured decryption key $\dk_{\notin \setbk{x_0,x_1}} \lrun \Puncture(\pedk,\setbk{x_0,x_1})$, which is punctured both at $x_0$ and $x_1$, since they use ciphertext pseudorandomness in~\cref{def:original_ct_pseudorandomness} and need to use the punctured decryption key. Thus, we cannot apply PRG security even after we apply the ciphertext pseudorandomness in~\cref{def:original_ct_pseudorandomness}. This is the reason why we introduce the strong ciphertext pseudorandomness in~\cref{def:pe_strong_pseudorandomness}.
\else
	\newpage
	 	\appendix
	 	\setcounter{page}{1}
 	{
	\noindent
 	\begin{center}
	{\Large SUPPLEMENTAL MATERIALS}
	\end{center}
 	}
	\setcounter{tocdepth}{2}
	 	\ifnum\noaux=1
 	\else
{\color{red}{We attached the full version of this paper as a separated file (auxiliary supplemental material) for readability. It is available from the program committee members.}}
\fi

	\setcounter{tocdepth}{1}
	\tableofcontents

	\fi
\fi
	
\end{document}